\documentclass[11pt,letterpaper]{article}
\usepackage[margin=1in]{geometry}
\usepackage[T1]{fontenc}
\usepackage[utf8]{inputenc}
\usepackage{lmodern}
\usepackage{microtype}
\usepackage{amsmath,amssymb,amsthm,mathtools}
\usepackage{enumitem}
\usepackage{cite}
\usepackage{xcolor}
\usepackage[colorlinks=true,linkcolor=blue!45!black,citecolor=blue!45!black,urlcolor=blue!45!black]{hyperref}
\usepackage{doi}
\hypersetup{pdftitle={Quantum Advantage of Permutation-Invariant Functions in Communication Complexity},pdfauthor={Yunqi Huang and Zekun Ye}}
\numberwithin{equation}{section}
\newtheorem{theorem}{Theorem}[section]
\newtheorem{lemma}[theorem]{Lemma}
\newtheorem{proposition}[theorem]{Proposition}
\newtheorem{corollary}[theorem]{Corollary}
\newtheorem*{profilelemma}{Lemma~\ref{lem:profile}}
\newtheorem*{mainrestatement}{Theorem~\ref{thm:main}}
\newtheorem*{inverserestatement}{Theorem~\ref{thm:inverse}}

\newtheorem*{separationsrestatement}{Theorem~\ref{thm:separations}}
\newtheorem*{logrestatement}{Theorem~\ref{thm:log-obstruction}}
\newtheorem*{polynomialrestatement}{Corollary~\ref{cor:no-pure-polynomial}}
\newtheorem*{scalerestatement}{Lemma~\ref{lem:scale}}
\newtheorem*{adjacentrestatement}{Lemma~\ref{lem:adjacent-grid}}
\newtheorem*{alphabetrestatement}{Proposition~\ref{prop:alphabet-upper}}
\newtheorem*{quantumupperrestatement}{Theorem~\ref{thm:quantum-upper}}
\newtheorem*{quantumrestatement}{Theorem~\ref{thm:quantum-characterization}}
\theoremstyle{definition}
\newtheorem{definition}[theorem]{Definition}
\newcommand{\Rpub}{R^{\mathrm{pub}}}
\newcommand{\Qstar}{Q^{\ast}}
\newcommand{\E}{\mathbb E}
\newcommand{\Prb}{\mathbb P}
\newcommand{\F}{\mathbb F}
\newcommand{\calE}{\mathcal E}
\newcommand{\calD}{\mathcal D}
\DeclareMathOperator{\supp}{supp}
\DeclareMathOperator{\Var}{Var}
\DeclareMathOperator{\IP}{IP}
\DeclareMathOperator{\DISJ}{DISJ}
\DeclareMathOperator{\bin}{bin}
\DeclareMathOperator{\Maj}{Maj}
\newcommand{\fall}[2]{(#1)_{\underline{#2}}}

\title{Quantum Advantage of Permutation-Invariant Functions
in Communication Complexity\thanks{The authors are listed in alphabetical order.}}
\author{Yunqi Huang\thanks{Email: \texttt{huangyunqi29@gmail.com}}\qquad
Zekun Ye\thanks{Email: \texttt{yezekun@fzu.edu.cn}}\\[6pt]
\small $^{\dagger}$Quantum Science Center of Guangdong-Hong Kong-Macao Greater Bay Area,\\
\small Shenzhen 518045, China\\[5pt]
\small $^{\dagger}$College of Physics and Optoelectronic Engineering, Shenzhen University,\\
\small Shenzhen 518060, China\\[5pt]
\small $^{\ddagger}$College of Computer and Data Science, Fuzhou University,\\
\small Fuzhou 350108, China}
\date{}

\begin{document}
\setcounter{footnote}{0}
\maketitle
\begin{abstract}
We study how symmetry affects quantum advantage in two-party communication complexity. For any partial function $f$ over a fixed alphabet of size $q$ that is invariant under simultaneous coordinate permutations, we prove that public-coin randomized communication complexity $R^{\mathrm{pub}}(f)$ is at most quadratic in entanglement-assisted quantum communication complexity $Q^{\ast}(f)$, up to a logarithmic factor in the input length $n$: $R^{\mathrm{pub}}(f)=O_q\!\left(Q^{\ast}(f)^2\log n\right)$, where the implied constant depends only on $q$. The logarithmic dependence is necessary up to arbitrarily small losses in the exponent: for every fixed $\varepsilon\in(0,1)$, there are binary permutation-invariant partial functions with quantum communication complexity $O_\varepsilon(\log\log n)$ and randomized communication complexity $\Omega_\varepsilon((\log n)^{1-\varepsilon})$. We also characterize quantum communication complexity by a combinatorial parameter up to a single logarithmic factor.
These results extend the binary-alphabet result of Guan et al.~\cite{GHYY} and reduce its logarithmic overhead.

Furthermore, growing alphabets and graph symmetries admit exponential quantum advantages. For every fixed $\varepsilon\in(0,1)$, we exhibit permutation-invariant partial functions on length-$n$ strings over an $n$-symbol alphabet whose quantum communication complexity is $O_\varepsilon(\log n)$, while their randomized communication complexity is $\Omega_\varepsilon(n^{1-\varepsilon})$. We also exhibit graph-invariant partial functions on $v$-vertex graphs whose quantum communication complexity is $O_\varepsilon(\log v)$, while their randomized communication complexity is $\Omega_\varepsilon(v^{2-\varepsilon})$. All separation protocols require no prior entanglement. Thus, permutation-invariant functions over fixed alphabets admit at most quadratic quantum speedup, up to a logarithmic factor in the input
length, whereas growing alphabets and graph symmetries permit exponential quantum--classical separations.
\end{abstract}

\clearpage
\begingroup
\normalsize
\setlength{\parskip}{0pt}
\tableofcontents
\endgroup
\clearpage

\section{Introduction}\label{sec:introduction}
Understanding which properties of a computational problem allow a large quantum advantage is a central question in quantum complexity theory. The query model provides several structural restrictions on such advantages. Beals, Buhrman, Cleve, Mosca, and de Wolf~\cite{BBCMW01} proved that deterministic and bounded-error quantum query complexities are polynomially related for total Boolean functions. For partial functions, where this conclusion need not hold, symmetry supplies another restriction. Aaronson and Ambainis~\cite{AA14} established a polynomial relation between randomized and quantum query complexities under invariance with respect to both coordinate permutations and alphabet relabellings. Chailloux~\cite{Chailloux19} removed the alphabet-relabelling requirement and proved a cubic relation under coordinate-permutation invariance alone. Ben-David, Childs, Gily\'en, Kretschmer, Podder, and Wang~\cite{BDCGKPW} extended polynomial simulation results to graph and hypergraph symmetries in the adjacency-matrix query model.

Communication complexity asks a related question about distributed inputs. In the classical and quantum models introduced by Yao~\cite{Yao79,Yao93}, Alice and Bob have unrestricted local access to their respective inputs, and the resource being measured is the number of bits or qubits exchanged. Large quantum advantages are possible in this setting: Raz~\cite{Raz99} gave an exponential separation for a partial function, and Klartag and Regev~\cite{KR11} obtained such a separation even when the quantum protocol sends only one message and the randomized protocol may interact. These separations, together with the query simulation results above, motivate the question: \textbf{\emph{how does symmetry shape the quantum advantage in the communication complexity model?}}

We study partial functions $f:[q]^n\times[q]^n\to\{-1,+1,*\}$ satisfying
$$
f(\pi x,\pi y)=f(x,y) \qquad\text{for every }\pi\in S_n.
$$
The same coordinate permutation acts on both inputs. The equality includes the undefined value $*$, so the promise must also be invariant. Computing such a function may still require communication, because its value can depend on joint counts that neither party knows locally. The information preserved by simultaneous coordinate permutations is exactly the \emph{joint type}
$$
T_{ab}\coloneqq \bigl|\{i\in[n]:(x_i,y_i)=(a,b)\}\bigr|, \qquad a,b\in[q].
$$
Alice knows the row sums of this table and Bob knows its column sums. After exchanging these marginal histograms, they must still learn enough about the joint type to determine its label.

The binary case was studied by Ghazi, Kamath, and Sudan~\cite{GKS16} and, for quantum communication, by Guan, Huang, Yao, and Ye~\cite{GHYY}. Moving beyond binary inputs changes the structure of the problem. Once the two Hamming weights are fixed, a binary joint type has only one free parameter, the intersection count. Over a $q$-symbol alphabet, fixed-margin tables can instead have $(q-1)^2$ degrees of freedom. Our positive results (Theorems~\ref{thm:main}
and~\ref{thm:quantum-characterization}) extend the results in~\cite{GHYY} from the binary case to every fixed alphabet, with improved
logarithmic factors. The subsequent separation constructions and exponent obstructions (Theorems~\ref{thm:separations} and~\ref{thm:log-obstruction} and Corollary~\ref{cor:no-pure-polynomial}) further clarify the scope and limitations of these fixed-alphabet results.

\subsection{Related Work}
\paragraph{Symmetric Communication Problems.}
Several subclasses of permutation-invariant functions have long served as examples of quantum communication advantages and lower-bound methods. Set disjointness has linear randomized communication complexity~\cite{Razborov92}, whereas quantum search gives a quadratic improvement~\cite{BCW98,AASpatial05}. Razborov~\cite{Razborov03} characterized, up to logarithmic factors, the quantum communication complexity of all total predicates depending only on the intersection size $|x\cap y|$, with lower bounds that allow prior entanglement. Suruga~\cite{Suruga} subsequently obtained tight upper bounds for this class, treating the models with and without prior entanglement separately.

Hamming-distance predicates provide another important subclass. Huang, Shi, Zhang, and Zhu~\cite{HSZZ06} studied threshold predicates, giving lower bounds for interactive quantum protocols with prior entanglement and upper bounds for public-coin randomized simultaneous-message protocols. Zhang and Shi~\cite{ZS09} characterized, up to polylogarithmic factors, the randomized and entanglement-assisted quantum communication complexities of total functions of the form $D(|x\oplus y|)$. For the partial Gap-Hamming-Distance problem, Chakrabarti and Regev~\cite{CR12} proved an $\Omega(n)$ randomized lower bound when the distance is promised to be at most $n/2-\sqrt n$ or at least $n/2+\sqrt n$. 

\paragraph{General Permutation Invariance.}
Ghazi, Kamath, and Sudan~\cite{GKS16} introduced the general class of binary permutation-invariant communication functions and gave a combinatorial measure describing their randomized complexity up to polynomial factors and an additive logarithmic term. They also observed that arbitrary communication inputs can be stored locally in Hamming weights, at the cost of an exponential increase in input length. This observation is relevant to the scope of any simulation: permutation invariance alone cannot eliminate all dependence on the input length.

Guan, Huang, Yao, and Ye~\cite{GHYY} gave a nearly tight characterization of quantum communication complexity and a nearly quadratic quantum--classical separation for binary permutation-invariant partial functions. Questions about larger alphabets appear in both~\cite{GKS16,GHYY}; Guan et al.~\cite{GHYY} also propose graph inputs. We address these directions by separating fixed alphabets from growing alphabets and by constructing graph-invariant communication problems. The precise binary bounds and parameter comparison appear in Section~\ref{sec:comparison}.

\paragraph{Polynomial and Matrix Methods.}
Our lower bound draws on a line of work relating communication complexity to polynomial approximation. Paturi~\cite{Paturi92} determined the approximate degree of symmetric total Boolean functions, and Nayak and Wu~\cite{NW} established degree lower bounds for distinguishing two promised Hamming weights. Razborov's analysis~\cite{Razborov03} uses symmetry and polynomial approximation to prove quantum communication lower bounds. Sherstov's pattern matrix method~\cite{Sherstov11} gives a general way to transfer approximate-degree lower bounds to quantum communication, including protocols with prior entanglement. For larger-alphabet joint types, we combine the multislice contraction theorem of Braverman, Khot, Lifshitz, and Minzer~\cite{BKLM} with acceptance-matrix factorization bounds from the factorization-norm approach~\cite{LinialShraibman09,SS}. The resulting approximation is uniform over a region of fixed-margin types; this uniformity is needed for the subsequent one-variable degree argument.

\paragraph{Forrelation and Communication Separations.}
Forrelation and its extensions supply the communication problems used in our counterexamples. Aaronson and Ambainis~\cite{AAForrelation18} proved a nearly optimal randomized query lower bound for basic Forrelation. Nearly optimal query separations were obtained independently by Bansal and Sinha~\cite{BS}, using explicit $k$-Forrelation, and by Sherstov, Storozhenko, and Wu~\cite{SSW}, using $k$-Rorrelation. Both Bansal and Sinha~\cite[Corollary~1.6(a)]{BS} and Sherstov, Storozhenko, and Wu~\cite{SSW} also obtain quantum versus randomized \emph{communication} separations of $O_\varepsilon(\log m)$ versus $\Omega_\varepsilon(m^{1-\varepsilon})$ for each fixed $\varepsilon\in(0,1)$. Query-to-communication lifting, including the inner-product lifting theorem of Chattopadhyay, Filmus, Koroth, Meir, and Pitassi~\cite{CFKMP19}, is a tool for passing between these models.

Girish, Raz, and Tal~\cite{GRT} gave a communication separation for an XOR lift of Forrelation, using an entanglement-assisted quantum simultaneous-message protocol with efficient players. Girish, Sinha, Tal, and Wu~\cite{GSTW} strengthened its randomized lower bound through improved Fourier-growth estimates. Our constructions use local encodings to impose permutation or graph invariance on hard communication problems. The contribution of these encodings is the symmetry and the control of local histograms or graph isomorphism types; the underlying communication separations come from the cited work.

\subsection{Our Results}

Let $\Rpub(f)$ be bounded-error public-coin randomized communication complexity, $Q(f)$ be quantum communication complexity without prior entanglement, and $\Qstar(f)$ be quantum communication complexity with arbitrary prior entanglement. The unassisted model $Q$ does not include free shared randomness. All errors are at most $1/3$, and local computation is unrestricted. The precise conventions are given in Section~\ref{sec:prelim}. We take $n\ge2$ throughout. For comparisons of global communication complexities, we assume $\Qstar(f)\ge1$. In the positive results, $q$ is an arbitrary fixed constant; constants indexed by $q$ may depend on $q$, but not on the function or the input length. 

\paragraph{Classical Simulation for Fixed Alphabets.}
Our main theorem gives a classical simulation whose quadratic dependence is on the entanglement-assisted quantum complexity.

\begin{theorem}[Fixed-Alphabet Simulation]\label{thm:main}
For each $q\ge2$, there is a constant $C_q>0$ such that every partial function $f:[q]^n\times[q]^n\to\{-1,+1,*\}$ invariant under simultaneous coordinate permutations and satisfying $\Qstar(f)\ge1$ obeys
\begin{equation}\label{eq:main}
\Rpub(f)\le C_q\,t\log(2+n/t), \qquad t\coloneqq\min\{n,\Qstar(f)^2\}.
\end{equation}
\end{theorem}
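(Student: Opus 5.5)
The proof follows the binary strategy of Guan et al.~\cite{GHYY}: find a combinatorial parameter that lower-bounds $\Qstar(f)$ and upper-bounds $\Rpub(f)$. Since $f$ is permutation invariant, $f(x,y)$ depends only on the joint type $T=(T_{ab})_{a,b\in[q]}$. Alice knows the row sums $r=T\mathbf 1$ and Bob knows the column sums $c=T^{\top}\mathbf 1$.

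\textbf{Step 1 (classical protocol).} The parties first exchange their histograms, at a cost of $O(q\log n)$ bits. Fixing $(r,c)$ restricts attention to the transportation polytope of tables with these margins. We then define a gap parameter $g$: roughly, the minimum $\ell_1$ distance, measured relative to the local scale, between a $+1$-type and a $-1$-type sharing the same margins. At scale $\sigma$ the relevant fluctuations of the entries are of order $\sqrt{\sigma}$.

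The classical protocol estimates every entry $T_{ab}$ by public-coin coordinate sampling. The parties sample $m$ random coordinates $i$ and exchange $(x_i,y_i)$, which costs $O(\log q)$ bits per sample. The empirical counts then determine $T$ to within the gap, with boosting handled by a Chernoff bound.

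To get the bound $t\log(2+n/t)$ rather than $t\log n$, sampling is done hierarchically. The parties first restrict to the coordinates where $x_i=a$; Alice can describe this set implicitly using public randomness and hashing. They then sample inside it, so the cost is $O(t)$ samples. The extra $\log(2+n/t)$ factor pays for locating which of about $n/t$ scales is relevant, for instance by a binary search on the magnitude of the deviation.

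\textbf{Step 2 (quantum lower bound).} The aim is to show that the classical cost is $O(\Qstar(f)^2)$ up to these logarithms; equivalently, that $\Qstar(f)=\Omega(\sqrt{t_0})$, where $t_0$ is the number of samples the classical protocol needs.

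Fix a hard pair of types $T^+,T^-$ with common margins that are close at scale $t_0$. By the triangle inequality, and at the cost of constant factors depending on $q$, we may assume they differ by a single elementary move: $\pm1$ on a $2\times2$ cycle $\{(a,b),(a,b'),(a',b),(a',b')\}$, repeated $k$ times.

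We then embed a one-parameter family of types $T(s)=T^-+s\Delta$. Here $\Delta$ is the elementary cycle and $s$ ranges over integers, restricted to those $s$ for which $T(s)$ stays inside the polytope.

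Next, average the protocol's acceptance probability over uniformly random inputs of type $T(s)$. By the acceptance-matrix factorization bounds~\cite{LinialShraibman09,SS}, combined with the multislice contraction theorem of~\cite{BKLM}, this average is close to a polynomial in $s$ of degree $O(\Qstar(f))$. Crucially, the approximation is uniform over the whole region of fixed-margin types.

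That polynomial must separate $s=0$ from $s=k$ while staying bounded on all integers in a range of length about $N$. Here $N$ is determined by the smallest entry involved, since the range ends where $T(s)$ leaves the polytope. The Nayak--Wu/Paturi degree bound~\cite{NW,Paturi92} then gives
\[
\deg=\Omega\bigl(\sqrt{N/k}+\sqrt{k(N-k)}/k\bigr).
\]
This matches the square root of the sampling cost $t_0\approx N/k$ needed to detect a shift of $k$ in an entry of size about $N$. Summing the $O(q^2)$ entries only costs a factor depending on $q$.

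\textbf{Main obstacle.} The hardest step is the uniform polynomial approximation over multislices. For $q=2$ the fixed-margin family is already one-dimensional and Razborov-style symmetrization suffices. For $q>2$ the uniform distribution over a $q\times q$ type class does not factor nicely. My first attempt would be to condition on all coordinates outside the four symbols $\{a,a'\}\times\{b,b'\}$. This reduces the problem to a binary-like sub-multislice on which the contraction theorem of~\cite{BKLM} applies.

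The remaining difficulty is to control the error of this reduction uniformly in $s$. The contraction bound must also avoid any dependence on $n$, since an $n$-dependence would spoil the $\log(2+n/t)$ form of \eqref{eq:main}. Finally, the case $t=n$, where $\Qstar(f)^2\ge n$, is trivial: Alice sends her whole input using $O(n\log q)$ bits.
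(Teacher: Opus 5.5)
Your overall architecture matches the paper's: exchange histograms, then use one separation parameter that both bounds the sampling cost and lower-bounds $\Qstar$ via symmetrization, the $\gamma_2$ trace-norm bound, multislice contraction and a Nayak--Wu degree bound. However, both halves have a genuine gap.

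\textbf{Classical side.} Sampling coordinates uniformly, or inside Alice's classes $\{i:x_i=a\}$, cannot achieve \eqref{eq:main}. Take binary inputs with $|x|=|y|=n/2$, and distinguish $T_{10}=T_{01}=1$ from $T_{10}=T_{01}=2$. Every row and column class has size $n/2$, so your sampler needs $\Omega(n)$ samples. Yet recovering the at most four disagreement positions by hashing costs $O(\log n)$ bits. Hence $\Qstar(f)\le\Rpub(f)=O(\log n)$, and the right-hand side of \eqref{eq:main} is polylogarithmic. The missing idea is to measure separation within every cut event $E=\{(a,b):\alpha(a)\ne\beta(b)\}$. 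These events include the disagreement set of the encoded strings $\alpha(x)$ and $\beta(y)$, which neither party can sample locally. The protocol then works inside each $E$:
\begin{enumerate}[label=(\alph*)]
\item estimate $T(E)$ up to a constant factor using nested samples and linear fingerprints;
\item subsample at rate about $h^{-2}/T(E)$;
\item recover the retained disagreement positions in one batch from a random linear sketch.
\end{enumerate}
The factor $\log(2+n/t)$ is the cost $\log\sum_{j\le O(t)}\binom nj$ of naming $O(t)$ positions among $n$. It does not come from a binary search over scales.

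\textbf{Quantum side.} Reducing to one $2\times2$ move and using the line $T^-+s\Delta$ fails when the selected cycle has small counts on edges of both signs. Then only $O(k)$ points of the line are legal types, and the degree bound is vacuous, even though the pair can be hard. For example, take $q=3$ with $T^-_{ab}=0$ and $T^-_{ab'}=k$, let $T^+$ shift $k$ units around $\{a,a'\}\times\{b,b'\}$, and let every other entry be of order $N$. Every cut through $(a,b)$ or $(a,b')$ then has mass $\Omega(N)$, so $h(T^-,T^+)^{-1}=\Theta(\sqrt{N/k})$. A matching lower bound of order $\sqrt{N/k}$ is therefore required, yet the legal range of $s$ is exactly $[0,k]$.

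The paper handles this in three steps:
\begin{enumerate}[label=(\arabic*)]
\item decompose $U-T$ conformally into alternating cycles of length up to $2q$;
\item weight edges by the common counts and take a maximum-weight spanning tree;
\item re-decompose the selected move into fundamental cycles, each with only one possibly small edge.
\end{enumerate}
The cut obtained by deleting the lightest edge on the tree path gives $(B+g)\lambda\ge g^2/(16q^2h^2)$. This inequality is what ties the degree bound to the same cut parameter the protocol uses. A second telescoping then selects a fundamental cycle with a constant acceptance gap. These cycles can have length $2r$ with $r$ up to $q$, so conditioning away the other coordinates leaves an $r\times r$ multislice, not a binary one. You therefore need an approximation that is uniform over all types with large counts only on a connected spanning subgraph (the cycle minus its distinguished edge), as in Lemma~\ref{lem:profile}.
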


The bound is tight for binary set disjointness. Let $\DISJ_n(x,y)=+1$ when $x,y\in\{0,1\}^n$ have disjoint supports, and $-1$ otherwise. This function is permutation-invariant and satisfies $\Qstar(\DISJ_n)=\Theta(\sqrt n)$ and $\Rpub(\DISJ_n)=\Theta(n)$~\cite{Razborov92,Razborov03,AASpatial05}. Thus $t=\Theta(n)$, and Theorem~\ref{thm:main} gives the optimal $O(n)$ randomized communication bound for this family.

For arbitrary permutation-invariant functions, the theorem implies $\Rpub(f)=O_q(\Qstar(f)^2\log n)$. Thus it both extends the binary simulation to arbitrary fixed alphabets and reduces the logarithmic overhead.

\paragraph{Quantum Characterization.}
For a fixed-margin restriction $g=f_{\rho,\sigma}$, let $h_g$ be the separation parameter of Definition~\ref{def:separation}. When both output labels occur, this parameter measures the minimum separation between oppositely labelled joint types, using square-root count differences normalized within each cut event. Our count-estimation protocols use finer accuracy when $h_g$ is smaller. The direct quantum protocol in Appendix~\ref{sec:quantum-upper} complements the lower bound used in the classical simulation.

\begin{theorem}[Quantum Characterization]\label{thm:quantum-characterization}
For a permutation-invariant partial function $f$, define
$$
M(f)\coloneqq\max_{\rho,\sigma}h_{f_{\rho,\sigma}}^{-1},
$$
where a fixed-margin restriction with at most one promised label has $h_{f_{\rho,\sigma}}=1$. For constants $c_q,C_q>0$ depending only on $q$,
\begin{equation}\label{eq:quantum-characterization}
c_qM(f)\le\Qstar(f)+1\le Q(f)+1 \le C_q\min\{n,M(f)\log n\}.
\end{equation}
\end{theorem}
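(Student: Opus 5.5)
The chain $c_qM(f)\le\Qstar(f)+1\le Q(f)+1\le C_q\min\{n,M(f)\log n\}$ has three parts, and I would treat them separately.

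\textbf{The middle inequality.} $\Qstar(f)\le Q(f)$ is immediate, since an unassisted protocol is a special case of an entanglement-assisted one.

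\textbf{The upper bound.} The bound $Q(f)+1\le C_q n$ is trivial: Alice sends her input with $n\lceil\log_2 q\rceil$ qubits. For $Q(f)=O_q(M(f)\log n)$ I would use the direct protocol from Appendix~\ref{sec:quantum-upper}, which I outline here.
\begin{enumerate}[label=(\roman*)]
\item Alice and Bob exchange their marginal histograms $\rho,\sigma$ with $O_q(\log n)$ bits. This fixes the restriction $g=f_{\rho,\sigma}$ and hence $h_g$.
\item They estimate the joint type $T$ to the accuracy demanded by $h_g$, namely square-root count differences of order $h_g$ inside each cut event. For each cut event, meaning a pair of subsets $A,B\subseteq[q]$ whose count $|\{i:x_i\in A,y_i\in B\}|$ is an intersection size, they run amplitude estimation on the distributed indicator. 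The accuracy scaling in $\sqrt{\text{count}}$ matches the amplitude-estimation error $\sqrt{\text{count}}/k$ after $k$ rounds, so $k=O(h_g^{-1})$ rounds suffice.
\item Each round costs $O(\log n)$ qubits for the index register. Boosting the success probability over the $O_q(1)$ cut events adds only a constant factor.
\end{enumerate}
The total is $O_q(M(f)\log n)$. Since $M(f)\ge1$, the additive $\log n$ from step (i) is absorbed. The one point needing care is that, by the definition of $h_g$, an estimate of this accuracy determines the label: two oppositely labelled types cannot both be consistent with the estimates.

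\textbf{The lower bound.} I expect $c_qM(f)\le\Qstar(f)+1$ to be the main obstacle. Fix $(\rho,\sigma)$ attaining the maximum, and two oppositely labelled types $T,T'$ in $g$ attaining $h_g$. Inside the cut event realizing the separation, I would restrict to a one-parameter family of fixed-margin types interpolating between $T$ and $T'$. Along this family the relevant intersection count moves from $a$ to $b$, with $|\sqrt a-\sqrt b|$ normalized equal to $h_g$.
\begin{enumerate}[label=(\roman*)]
\item Take an entanglement-assisted protocol of cost $Q$ and use the factorization-norm bound (\cite{LinialShraibman09,SS}). This writes the acceptance matrix as an inner product of vectors of dimension $2^{O(Q)}$, with norm controlled by $2^{O(Q)}$.
\item Symmetrize over $S_n$ to obtain the acceptance probability as a function of the joint type.
\item Apply the multislice contraction theorem of \cite{BKLM} to conclude that this function is uniformly approximated, over the region of fixed-margin types, by a polynomial in the free count of degree $O_q(Q)$. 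This uniformity is exactly what the restriction to a region of types needs.
\item Finish with a one-variable Paturi/Nayak--Wu type bound. A bounded polynomial on the integer points that separates $a$ from $b$ needs degree $\Omega(\sqrt{m}/|\sqrt a-\sqrt b|)$-type, which after normalization is $\Omega(h_g^{-1})$.
\end{enumerate}
Together these give $Q=\Omega_q(h_g^{-1})-O(1)$. The main risk is the degree accounting in step (iii): ensuring the degree bound depends only on $Q$ and not on $\log n$, since otherwise the lower bound would carry a logarithmic loss.
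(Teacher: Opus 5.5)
Your middle inequality and upper bound are fine and agree with the paper, which simply exchanges margins and invokes Theorem~\ref{thm:quantum-upper}. Your outline of that protocol is off in one respect. Amplitude estimation over a uniform coordinate of $[n]$ determines $T_e$ only to within about $\sqrt{nT_e}/k$, not $\sqrt{T_e}/k$. That is why the appendix first estimates the scale $T(E)$, subsamples at rate $p\approx1/T(E)$, and then estimates empty/singleton probabilities whose ratio gives $T_e$.

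The genuine gap is in the lower bound. The paper obtains it by applying Theorem~\ref{thm:inverse} to the pair attaining $h_g$. You instead sketch a proof of that theorem, and your sketch omits the geometric reduction that is its main content. For $q\ge3$ there is no one-parameter family along which a single ``intersection count'' moves from $a$ to $b$. The difference $T'-T$ is an integer circulation on the bipartite symbol graph with up to $(q-1)^2$ degrees of freedom, and $h_E$ aggregates square-root differences over all cells of a cut. Moreover, the affine line through $T,T'$ may contain no legal types besides its endpoints. The paper decomposes $T'-T$ conformally into at most $q^2$ alternating-cycle moves. It then telescopes to find one move $S^-\to S^+$ with acceptance gap at least $1/(3q^2)$, between types that are generally unpromised.

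Even for a single cycle move $gv$, the degree bound along $z\mapsto C+gv_-+zgv$ is only as strong as the integer interval on which these types stay legal. That interval is controlled by the counts on \emph{every} cycle edge. A selected cycle may have several small edges, leaving essentially two points and no bound. What you need is a cycle with one possibly small edge (common count $B$) and all other common counts at least $\Lambda$, with $(B+g)\Lambda\gtrsim g^2/h^2$. The paper gets this in three steps:
\begin{itemize}
\item it takes a maximum spanning tree weighted by the common counts;
\item it telescopes a second time over fundamental cycles;
\item it applies Lemma~\ref{lem:tree-counts}.
\end{itemize}
In that lemma, deleting a minimum-weight edge of the tree path gives a cut $E\in\calE$. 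Maximality bounds every common count on $E$ by $\lambda_e$, so $D_E\le4q^2\lambda_e$ for small $h$. The distinguished edge forces $N_E\ge g^2/(4(C_e+g))$, and $N_E\le h^2D_E$ closes the argument. This is where the cut definition of $h$ enters, and it uses $h_E\le h$ for a cut chosen from the tree, not the cut realizing the maximum. Your phrase ``after normalization is $\Omega(h_g^{-1})$'' skips exactly this link.

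Finally, the real risk in your step (iii) is balance, not $\log n$. The contraction theorem needs counts at least $\beta N$ on a connected spanning subgraph. This fails at length $n$ when the relevant counts are $o(n)$, as already happens in the binary examples of Proposition~\ref{prop:membership}. Lemma~\ref{lem:single-cycle} handles this by removing a common table through local padding and a shared random permutation. That reduces to length $N'\le3qH$ with $\beta=1/(4q)$, after which Lemmas~\ref{lem:profile} and~\ref{lem:adjacent-grid} give $c+1\gtrsim\sqrt{(B+g)\Lambda}/g$.
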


An immediate consequence, after increasing $C_q$ if necessary, is
$$
Q(f)\le C_q\Qstar(f)\log n.
$$
Thus, for fixed-alphabet permutation-invariant partial functions, prior entanglement can be removed at an $O_q(\log n)$ multiplicative communication overhead. The resulting protocol uses neither prior entanglement nor free shared randomness. On each fixed-margin restriction, the upper bound is $Q(g)=O_q(\min\{n,h_g^{-1}\log n\})$. Proposition~\ref{prop:membership} shows that the logarithmic factor relative to $h_g^{-1}$ is necessary over a range of binary fixed-margin examples. The theorem follows from Theorems~\ref{thm:quantum-upper} and~\ref{thm:inverse}; the bounds and the displayed consequence are proved in Section~\ref{sec:consequences}.

\paragraph{Necessity of the Input-Length Dependence.}
The input-length factor in Theorem~\ref{thm:main} cannot simply be discarded. Applying the encoding observation of~\cite{GKS16} to a known communication separation gives the explicit families in Corollary~\ref{cor:no-pure-polynomial} below. Their randomized communication complexity grows faster than any fixed polynomial in their quantum communication complexity. Hence some dependence on the input length is necessary: there is no simulation polynomial in quantum communication complexity alone, even when the quantum protocol uses neither prior entanglement nor shared randomness. This shows that the polynomial relation between randomized and quantum communication complexity conjectured by Guan et al.~\cite{GHYY} requires an additional dependence on the input length, even for binary permutation-invariant partial functions. The construction stores the underlying communication inputs in the two Hamming weights; every fixed-margin restriction is constant.

\paragraph{Growing Alphabets and Graph Invariance.}
The next constructions give polynomial randomized lower bounds in
the string length or vertex count, while keeping the quantum cost
logarithmic. Their structural feature is that the local histograms or
graph isomorphism types can be fixed. A \emph{rigid} graph is one whose
only automorphism is the identity.

\begin{theorem}[Growing Alphabets and Graph Invariance]
\label{thm:separations}
For every fixed $\varepsilon\in(0,1)$, there are infinitely many sizes with the following properties.
\begin{enumerate}[label=(\roman*)]
\item There is a permutation-invariant partial function $f_n:[n]^n\times[n]^n\to\{-1,+1,*\}$ such that every promised local input contains every symbol exactly once, and
$$
Q(f_n)=O_\varepsilon(\log n), \qquad \Rpub(f_n)=\Omega_\varepsilon(n^{1-\varepsilon}).
$$
\item There is a partial function $g_v$ of two labelled graphs on $v$ vertices, invariant under simultaneous vertex relabelling, such that
$$
Q(g_v)=O_\varepsilon(\log v), \qquad \Rpub(g_v)=\Omega_\varepsilon(v^{1-\varepsilon}).
$$
Every promised local graph is a labelled copy of one fixed rigid tree of maximum degree three.
\item There is a partial function $\widetilde g_v$ of two connected simple graphs on $v$ vertices, invariant under independent relabelling of the two local graphs, such that
$$
Q(\widetilde g_v)=O_\varepsilon(\log v), \qquad \Rpub(\widetilde g_v)=\Omega_\varepsilon(v^{2-\varepsilon}).
$$
\end{enumerate}
\end{theorem}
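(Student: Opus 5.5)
The plan is to start from a known quantum--classical communication separation and embed it through a local encoding that forces the required symmetry. Fix $\varepsilon$ and take the Forrelation-type separation of Bansal--Sinha~\cite[Corollary~1.6(a)]{BS} (or the $k$-Rorrelation separation of~\cite{SSW}): a partial function $F_m:\{0,1\}^m\times\{0,1\}^m\to\{-1,+1,*\}$ with $Q(F_m)=O_\varepsilon(\log m)$, with no prior entanglement, and $\Rpub(F_m)=\Omega_\varepsilon(m^{1-\varepsilon})$. Each part then needs an encoding $E$ of Alice's and Bob's strings into symmetric objects with three properties. First, the encoding is injective modulo the symmetry group acting simultaneously, so that $f(E(x),E(y))\coloneqq F_m(x,y)$ is well defined and invariant. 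Second, a party holding a symmetric object can locally decode it relative to a canonical labelling, so the quantum protocol for $F_m$ runs after local decoding, giving $Q(f)\le Q(F_m)$. Third, the decoding is faithful under the joint action, so that a randomized protocol for $f$ yields one for $F_m$ and $\Rpub(f)\ge\Rpub(F_m)$.

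\textbf{Part (i).} I would use permutations. Alice holds $x\in[n]^n$ in which every symbol appears once, that is, a permutation $\alpha$; Bob holds $\beta$. The joint type under simultaneous coordinate permutation is exactly $\beta\circ\alpha^{-1}$, viewed as the multiset of pairs $(\alpha(i),\beta(i))$. I would therefore choose a family of permutations $P(z)$, $z\in\{0,1\}^m$ with $m=\Theta(n)$, such that $P(y)\circ P(x)^{-1}$ determines $(x,y)$, and define $f$ as $F_m$ on inputs whose joint type lies in the promised image. The direct choice is to have Alice canonicalise $\alpha$ to the identity, which is impossible since she does not know $\beta$. The fix is to split $[n]$ into $n/2$ pairs $\{2j-1,2j\}$. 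Alice's permutation is the product of transpositions $(2j-1\;2j)^{x_j}$, and Bob's is the analogous product for $y_j$. The joint type then shows, in each pair block, whether $x_j\oplus y_j=1$.

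The drawback is that this only recovers $x\oplus y$. I would therefore take $F$ to be an XOR-lifted separation, that is, one of the form $F(x,y)=G(x\oplus y)$. The separation of Girish--Raz--Tal~\cite{GRT} is of this form but uses entanglement, so instead I would use the $\IP$-lifted Forrelation lower bounds via~\cite{CFKMP19}. Inner-product gadgets need $x\wedge y$ as well as $x\oplus y$. To get that, each bit should be encoded on a block of $4$ symbols, with Alice and Bob each choosing one of several local permutations so that the joint permutation restricted to the block identifies the gadget value. The quantum upper bound is then $O(\log n)$ by querying the block values, and the randomized lower bound comes from lifting.

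\textbf{Parts (ii) and (iii).} For (ii), fix a rigid tree $T$ of maximum degree three on $v$ vertices. Each local graph is a labelled copy $\phi(T)$, and because $T$ is rigid, $\phi$ is recoverable from the labelled graph. The pair of copies modulo simultaneous relabelling is then the permutation $\psi^{-1}\phi$, which puts us back in the setting of part (i), with $T$'s vertices playing the role of symbols. For (iii), each party's input is invariant under independent relabelling, so only the isomorphism type matters. I would encode $z\in\{0,1\}^{m}$ with $m=\Theta(v^2)$ as a connected graph: a rigid connected backbone on $\Theta(v)$ vertices that canonically identifies vertices, plus a free edge set on those vertices given by the bits of $z$. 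Each party canonicalises locally and runs the protocol for $F_m$. The quantum cost is $O(\log m)=O(\log v)$ and the randomized cost is $\Omega(m^{1-\varepsilon})=\Omega(v^{2-\varepsilon})$.

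\textbf{Main obstacle.} The hardest step is parts (i) and (ii). There the joint invariance exposes only relative information, namely the joint type or the relative permutation, so I have to find a base separation that depends only on this relative gadget value and still keeps a logarithmic quantum cost without entanglement. I would first try an $\IP$- or $\mathrm{XOR}$-lifted $k$-Forrelation with block-permutation gadgets, and check that the gadget has low enough discrepancy for the lifting theorem to apply.
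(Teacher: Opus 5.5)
Your outline is correct and is essentially the paper's proof. Take your block-of-four gadget with Alice transposing one pair of symbols according to $x_{ij}$ and Bob the other pair according to $y_{ij}$: this is exactly the paper's XOR extension $F^{\oplus}_s(A,B)=\psi(A\oplus B)$ composed with its pair-transposition encoding. Restricting to $A=(x,0)$, $B=(0,y)$ then gives $r_d\circ\IP_\ell$ exactly, so no new discrepancy or lifting argument is needed. The quantum bound comes from simulating queries to the gadget values over the whole promise.

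Part~(ii) reduces to part~(i) through rigidity as you say; the paper phrases this via the relative permutation $\sigma^{-1}\tau\in H$ and a canonical coset representative. For part~(iii), the one thing to add is that the backbone must stay recognisable after the data edges are added. The paper ensures this by hanging separate data vertices $d_i$ off a backbone path, whose vertices are exactly those with at least two degree-one neighbours.
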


Part~(i) rules out a uniform upper bound polynomial in $\Qstar(f)$, $\log n$, and $\log q$ when the alphabet may grow with $n$. Part~(ii) fixes even the entire local graph isomorphism type: the hard input is stored in the relative labelling of two copies of the same rigid tree. These two constructions show that fixing each party's local structure does not remove the communication difficulty of their relative alignment.

Part~(iii) has a different role. Theorem~\ref{thm:general-graphs} embeds \emph{any} $m$-bit communication problem into connected graphs on $\Theta(\sqrt m)$ vertices, preserving all three communication models. The separation follows by applying this embedding to the communication problem of Theorem~\ref{thm:forrelation}. Its larger exponent reflects the $\Theta(v^2)$ bits available in varying graph isomorphism types; the fixed-tree construction instead isolates the relative-labelling mechanism. 


\paragraph{Separate Obstructions to Improving the Exponents.}
Set disjointness already rules out replacing the power two by $2-\eta$, for any fixed $\eta>0$, while retaining any fixed polylogarithmic factor in $n$~\cite{Razborov92,AASpatial05}. We establish a complementary obstruction for the input-length logarithm.

\begin{theorem}[Logarithmic Exponent Obstruction]
\label{thm:log-obstruction}
For every fixed $p\ge0$ and $\delta\in(0,1]$, there is no constant $C$ such that every binary permutation-invariant partial function $f$ with $\Qstar(f)\ge1$ satisfies
$$
\Rpub(f)\le C\Qstar(f)^p(\log n)^{1-\delta}.
$$
\end{theorem}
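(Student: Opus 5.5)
We argue by contradiction. Assume some constant $C$ gives $\Rpub(f)\le C\Qstar(f)^p(\log n)^{1-\delta}$ for every binary permutation-invariant partial function. We will exhibit a family where $\Qstar$ is tiny but $\Rpub$ exceeds $(\log n)^{1-\delta/2}$. This is the mechanism behind Corollary~\ref{cor:no-pure-polynomial}, but we now tune the parameters so that the quantum cost is $O(\log\log n)$. Then the right-hand side is $(\log\log n)^p(\log n)^{1-\delta}=o\bigl((\log n)^{1-\delta/2}\bigr)$, which gives the contradiction.

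The construction follows the encoding observation of~\cite{GKS16}. Start from a communication problem $F:\{0,1\}^m\times\{0,1\}^m\to\{-1,+1,*\}$ taken from the Forrelation-type separations of~\cite{BS,SSW}, as packaged in Theorem~\ref{thm:forrelation}. For each fixed $\varepsilon$, it has $Q(F)=O_\varepsilon(\log m)$ with no entanglement or shared randomness, and $\Rpub(F)=\Omega_\varepsilon(m^{1-\varepsilon})$. Set $n=2^m$. Define $f$ on $x,y\in\{0,1\}^n$ by reading Alice's input $a\in\{0,1\}^m$ as the binary expansion of $|x|$ and Bob's input $b$ as that of $|y|$, with all-ones handled by a harmless padding convention. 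Put $f(x,y)=F(a,b)$. This $f$ is permutation-invariant, since it depends only on the two Hamming weights.

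\textbf{Quantum upper bound.} Each party computes its weight locally, so $\Qstar(f)\le Q(f)\le Q(F)=O_\varepsilon(\log m)=O_\varepsilon(\log\log n)$.

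\textbf{Randomized lower bound.} Any protocol for $f$ gives one for $F$ at the same cost. Alice maps $a$ to the string $1^{\bin(a)}0^{n-\bin(a)}$, and Bob does the same with $b$. This map is a local, deterministic embedding, so $\Rpub(f)\ge\Rpub(F)=\Omega_\varepsilon(m^{1-\varepsilon})=\Omega_\varepsilon\bigl((\log n)^{1-\varepsilon}\bigr)$.

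Choose $\varepsilon=\delta/2$. The assumed inequality then gives $(\log n)^{1-\delta/2}\le C'(\log\log n)^p(\log n)^{1-\delta}$ with $C'$ depending on $C,p,\delta$. Equivalently, $(\log n)^{\delta/2}\le C'(\log\log n)^p$. This fails for large $n$, and the family provides infinitely many such $n$ because $m$ ranges over infinitely many values.

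The only real work is pinning down the external separation with both required features: a quantum protocol without prior entanglement (so that $\Qstar\ge1$ holds and is logarithmic), and a near-linear public-coin randomized lower bound. The statement's $\Qstar(f)\ge1$ needs only that $F$ is nonconstant. One must also check that the padding convention leaves the promise set well defined and invariant. Both points are routine, so I expect no genuine obstacle beyond citing Theorem~\ref{thm:forrelation} correctly.
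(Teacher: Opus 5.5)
Your proposal is correct and follows the paper's proof essentially verbatim. Both apply the Hamming-weight encoding of Lemma~\ref{lem:weight-encoding} to $F^{\oplus}_m$ from Theorem~\ref{thm:forrelation}, choose $\varepsilon=\delta/2$, and observe that $(\log n)^{\delta/2}/(\log\log n)^p$ diverges.

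Two small points. First, the paper takes $n=2^m-1$ so that weights correspond exactly to $m$-bit strings, which removes the need for your padding convention. Second, $\Qstar(f)\ge1$ follows from $\Rpub(f)>0$ for large $m$, since $\Qstar(f)=0$ forces $\Rpub(f)=0$ by no-signalling. Nonconstancy of $F$ alone would not suffice in general, because a function depending only on Bob's input is nonconstant yet needs no communication.
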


The construction used to prove Theorem~\ref{thm:log-obstruction} also yields the following explicit separation. Its proof is given in Section~\ref{sec:weight-obstruction}.

\begin{corollary}
\label{cor:no-pure-polynomial}
For every fixed $\varepsilon\in(0,1)$, there is a family of binary partial functions, invariant under independent coordinate permutations, such that along an unbounded sequence of lengths,
$$
Q(f_n)=O_\varepsilon(\log\log n),\qquad \Rpub(f_n)=\Omega_\varepsilon((\log n)^{1-\varepsilon}).
$$
Consequently, for every fixed $p\ge0$,
$$
\frac{\Rpub(f_n)}{(Q(f_n)+1)^p}\longrightarrow\infty.
$$
The same conclusion holds with $Q$ replaced by $\Qstar$.
\end{corollary}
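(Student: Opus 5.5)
Take a known communication separation and store its inputs in Hamming weights, following the encoding observation of~\cite{GKS16}. Start from a problem $P$ on $m$-bit inputs $(a,b)$ with $Q(P)=O_\varepsilon(\log m)$ and $\Rpub(P)=\Omega_\varepsilon(m^{1-\varepsilon})$. The $k$-Forrelation or $k$-Rorrelation communication problems of~\cite{BS,SSW} (Theorem~\ref{thm:forrelation} in the paper) serve for this, with $k$ chosen depending on $\varepsilon$.

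Set $n=2^m$, so $\log n=m$ and $\log\log n=\log m$. Identify $[0,2^m)$ with $\{0,1\}^m$ by binary expansion, padding to $n=2^m$ with one weight value excluded. Define $f_n(x,y)=P(\bin^{-1}|x|,\bin^{-1}|y|)$ whenever both weights encode promised inputs of $P$, and $*$ otherwise. Each party's input is determined up to permutation by its own weight, so $f_n$ is invariant under independent coordinate permutations, and in particular under simultaneous ones.

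For the upper bound, each party locally computes its weight, decodes its $m$-bit string, and runs the quantum protocol for $P$. This gives $Q(f_n)\le Q(P)=O_\varepsilon(\log m)=O_\varepsilon(\log\log n)$, with no entanglement or shared randomness used. Since $\Qstar\le Q$, the same upper bound holds for $\Qstar$.

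For the lower bound, any public-coin protocol for $f_n$ yields one for $P$ at the same cost. Alice maps $a$ to the string $1^{\bin(a)}0^{n-\bin(a)}$, and Bob maps $b$ in the same way. Therefore $\Rpub(f_n)\ge\Rpub(P)=\Omega_\varepsilon(m^{1-\varepsilon})=\Omega_\varepsilon((\log n)^{1-\varepsilon})$.

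For the consequence, set $Q(f_n)+1=O_\varepsilon(\log\log n)$. Then $\Rpub(f_n)/(Q(f_n)+1)^p\ge c(\log n)^{1-\varepsilon}/(C\log\log n)^p\to\infty$, and the same holds for $\Qstar$. A single fixed $\varepsilon$ suffices for every $p$.

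The main obstacle is the choice of the source separation. It must be a communication separation, not merely a query one, with a quantum upper bound that is logarithmic in $m$ and needs no entanglement. It must also be defined on an unbounded sequence of input sizes $m$, so that $n=2^m$ ranges over an unbounded sequence. I would take this directly from the cited Forrelation results, taking care that their quantum protocol uses neither shared entanglement nor public coins. Everything else is the routine encoding bookkeeping described above.
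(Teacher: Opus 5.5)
Your proposal is correct and follows the paper's route: the paper applies the Hamming-weight encoding of Lemma~\ref{lem:weight-encoding} to the XOR separation of Theorem~\ref{thm:forrelation} and then divides the bounds exactly as you do. The only cosmetic difference is that the paper takes $n=2^m-1$, so that every weight is a valid $m$-bit string, instead of your $n=2^m$ with one weight excluded.
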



\subsection{Comparison with Prior Binary-Alphabet Results}
\label{sec:comparison}

For binary permutation-invariant partial functions, Guan, Huang, Yao, and Ye~\cite{GHYY} prove, in our notation,
$$
\Rpub(f)=O\!\left(\Qstar(f)^2\log^2 n\log\log n+\log n \right).
$$
Theorem~\ref{thm:main} extends this simulation to every fixed alphabet and sharpens its overhead to~\eqref{eq:main}. The input-length dependence remains necessary, as shown by Corollary~\ref{cor:no-pure-polynomial}.

On a binary fixed-margin restriction $g$, Proposition~\ref{prop:binary-parameter-comparison} shows that our parameter $h_g^{-1}$ is equivalent within a factor of $\sqrt2$ to $\max\{1,m_{\mathrm{GHYY}}(g)\}$, where $m_{\mathrm{GHYY}}$ is the jump measure of~\cite{GHYY}. Thus the improved binary bounds arise from the estimation protocols; the cut-based parameter supplies an extension to multidimensional types. Guan et al.~\cite{GHYY} combine Set-Inclusion tests, sampling or amplitude estimation, and binary search. We instead estimate joint-type counts directly. Recovering sampled disagreements in one batch gives the refined classical bound, while coherent sampling gives the quantum upper bound. The cut normalization preserves information on rare symbols.

The main new lower-bound difficulty is geometric. With larger alphabets, equal-margin tables have several degrees of freedom, and an alternating cycle can contain several low-count edges. The reduction to Exact Set-Inclusion and the pattern-matrix argument in~\cite{GHYY} do not supply the interval needed for such a move. Our maximum-weight spanning tree reduction produces fundamental cycles with controlled counts, after which Lemma~\ref{lem:profile} supplies one polynomial on the whole required region, including types outside the promise. The following overview explains the two steps of this reduction.

\subsection{Proof Overview}

The positive results use a common separation parameter for joint types. Classical sampling and coherent quantum estimation give two protocols for learning the relevant counts. The quantum lower bound requires a geometric reduction from multidimensional tables to an integer interval, together with a polynomial approximation that is uniform on that interval.

\paragraph{A Separation Parameter for Joint Types.}
View a joint type as a weighted complete bipartite graph, with one vertex for each row symbol and one for each column symbol. Each bipartition of these vertices defines an event
$$
E\coloneqq\{(a,b):\alpha(a)\ne\beta(b)\}, \qquad \alpha,\beta:[q]\to\{0,1\}.
$$
The coordinates whose symbol pairs belong to $E$ are exactly the disagreements between the locally encoded strings $\alpha(x)$ and $\beta(y)$. Let $\calE$ denote this family of $O_q(1)$ events. For equal-margin types $T,U$, define
$$
h(T,U)\coloneqq\max_{E\in\calE} \left(\frac{\sum_{e\in E}(\sqrt{T_e}-\sqrt{U_e})^2}{T(E)+U(E)}\right)^{1/2},
\qquad
T(E)\coloneqq\sum_{e\in E}T_e,
$$
with a zero denominator contributing zero. On a fixed-margin restriction, $h_f$ is the minimum of this quantity over oppositely labelled promised types, or one if no such pair exists (Definition~\ref{def:separation}). For a fixed-margin function $f$, we prove
\begin{align}
\Rpub(f)&\le O_q\!\left(u\log(2+n/u)\right), &u&\coloneqq\min\{n,h_f^{-2}\},
\label{eq:intro-classical}\\
\Qstar(f)+1&\ge\Omega_q(h_f^{-1}).
\label{eq:intro-quantum}
\end{align}
For general inputs, the parties first exchange their histograms at cost $O_q(\log n)$, which is absorbed by the bound in Theorem~\ref{thm:main}. Combining the two inequalities on each resulting restriction proves the simulation theorem.

\paragraph{Classical Count Estimation by Batch Recovery.}
For each cut event, the parties estimate its disagreement count and subsample coordinates, aiming for $s=\Theta_q(h_f^{-2})$ retained disagreements. A public random binary linear map lets them recover the disagreement vector from their local sketches. In the sparse regime, a fixed recovery capacity of $O_q(s)$ gives communication
$$
O\!\left(\log\sum_{j\le O_q(s)}\binom nj\right) =O_q\!\left(s\log(2+n/s)\right).
$$
For larger $s$, sending the full input suffices. The parties then learn the symbol pairs at the recovered positions. Rescaled ideal sample counts have expected squared Hellinger error $O_q(1/s)$. A union bound over the constant-size cut family gives estimates that separate the true type from every opposite-label type. Recovery failures and overflows are accounted for separately, while the fixed capacity bounds communication on every execution. This proves~\eqref{eq:intro-classical}.

\paragraph{Quantum Count Estimation by Coherent Sampling.}
For each cut event with a positive disagreement count, the quantum protocol samples coordinates at a rate that makes the expected number of retained disagreements a constant. The probabilities of an empty disagreement set and of a singleton disagreement with ordered symbol pair $e$ determine $T_e$ through their ratio. Amplitude estimation estimates these probabilities in square-root distance using $O_q(h_f^{-1})$ coherent calls. Finite tables of random tapes provide a seed of length $O_q(\log n)$, and a reversible implementation accounts for all seed, message, and control communication. Between coherent calls, the required reflections act only on Alice's registers. This gives an unassisted protocol with cost $O_q(h_f^{-1}\log n)$, requiring neither prior entanglement nor free shared randomness.

\paragraph{Reducing the Lower Bound to One Alternating Cycle.}
For the quantum lower bound, fix oppositely labelled types $T,U$ and symmetrize a protocol, so its acceptance probability depends only on the joint type. Since $T$ and $U$ have the same margins, $U-T$ is an integer circulation on the bipartite symbol graph. We decompose $U-T$ into $O_q(1)$ alternating cycle moves conformally, meaning that each affected entry changes in the direction prescribed by $U-T$. The partial sums therefore stay entrywise between $T$ and $U$, giving a path of legal integer types between the endpoints. Telescoping acceptance probabilities selects one cycle move of size $g$ with a constant acceptance gap depending only on $q$.

The selected cycle can have several low-count edges, so its affine line may contain only the two target integer points. To obtain a longer interval, weight each edge by the minimum of its counts in the selected pair of types, and choose a maximum spanning tree. For a non-tree edge of the selected cycle with common count $B$, let $\lambda$ be the smallest common count on its tree path. The cut created by deleting a minimum-weight path edge yields
$$
(B+g)\lambda=\Omega_q\!\left(g^2/h(T,U)^2\right).
$$
This product retains both count scales needed by the degree argument. Decomposing the selected move into fundamental cycles of the tree and telescoping again produces a cycle with a constant acceptance gap and large counts on its remaining edges. Removing a fixed common table then gives an integer interval of reduced types, with the two target types at adjacent points. Throughout this interval, a connected spanning path has entries bounded below by a constant fraction of the reduced length.

\paragraph{Uniform Approximation on Fixed-Margin Types.}
Lemma~\ref{lem:profile}, stated in Section~\ref{sec:quantum} and proved in Appendix~\ref{sec:polynomial}, approximates the type-averaged acceptance probability of a $c$-qubit protocol by one polynomial of degree $O_q(c+\log(1/\varepsilon))$ throughout the required region. We combine multislice contraction~\cite{BKLM} with the acceptance-matrix factorization underlying the $\gamma_2$ lower-bound method for entanglement-assisted communication~\cite{LinialShraibman09,SS}. The factorization bounds the normalized trace norm by $4^c$, and exponential contraction then controls the discarded high degrees; averaging the remaining coordinate juntas gives polynomials through falling-factorial formulas.

Only a connected spanning subgraph needs large counts. The projections and expansion coefficients depend on the fixed margins and protocol, so the same polynomial works as the type varies, including outside the promise. Substitution along the constructed integer interval and the degree bound of Nayak and Wu~\cite{NW} give
$$
c+1=\Omega_q\!\left(\frac{\sqrt{(B+g)\lambda}}{g}\right) =\Omega_q\!\left(h(T,U)^{-1}\right),
$$
proving~\eqref{eq:intro-quantum}.

\paragraph{Local Encodings for Permutation and Graph Symmetries.}
Theorem~\ref{thm:forrelation} combines the explicit query separation of Bansal and Sinha~\cite{BS}, inner-product lifting~\cite{CFKMP19}, and an XOR extension to construct the base XOR communication problem. Local encodings then preserve its communication complexity while imposing the required symmetries. In the permutation encoding, each bit is the ordering of two distinct symbols. The promise makes the position pairs coincide; a common permutation complements both decoded bits whenever it reverses a pair, preserving their XOR. For the rigid-tree encoding, each party recovers its unique local labelling, and simultaneous relabelling preserves their relative permutation. Both constructions have local inverses on their promises. The general-graph encoding instead orders data vertices through an isomorphism-invariant backbone and stores bits in their mutual edges. Finally, the Hamming-weight encoding of Ghazi, Kamath, and Sudan~\cite{GKS16} transfers the same communication seed to binary permutation-invariant inputs of exponentially larger length. This yields the input-length obstructions in Section~\ref{sec:weight-obstruction}.

\subsection{Organization}
Section~\ref{sec:prelim} introduces the communication models, joint types, and their separation parameter. Section~\ref{sec:classical} proves the classical upper bound and states its quantum counterpart. Section~\ref{sec:quantum} develops the quantum lower bound and combines these bounds to prove the classical simulation and quantum characterization. Section~\ref{sec:counterexamples} gives the symmetry-preserving encodings, and Section~\ref{sec:limits} establishes the exponent obstructions and alphabet-dependence bounds. Section~\ref{sec:conclusion} summarizes the results and remaining simulation questions. Appendix~\ref{sec:quantum-upper} gives the complete quantum upper-bound proof. Appendix~\ref{sec:polynomial} proves the uniform polynomial approximation lemma. Appendix~\ref{sec:auxiliary} contains scale estimation, batch recovery, sampled-count error, the adjacent-integer degree estimate, alphabet-constant tracking, and the binary-parameter comparison.

\section{Preliminaries}\label{sec:prelim}

Throughout, $n\ge2$ and $q\ge2$ are integers. The symbol $\coloneqq$ denotes a definition. All logarithms are base two, except that $\ln$ denotes the natural logarithm. Write $[m]\coloneqq\{1,\ldots,m\}$ for a positive integer $m$, and let $S_n$ denote the symmetric group on $[n]$. A \emph{string} over $[q]$ is a finite sequence of symbols from $[q]$; a length-$n$ input string is an element of $[q]^n$. For a binary string $w\in\{0,1\}^L$, write $|w|\coloneqq\sum_{i=1}^L w_i$ for its Hamming weight. The field with two elements is denoted by $\mathbb F_2$. The symbol $\oplus$ denotes bitwise XOR (addition modulo two); indexed occurrences denote repeated XOR, with the empty XOR equal to zero. The symbol $*$ denotes an undefined value of a partial function.

\begin{definition}[Communication Models]\label{def:communication-models}
The complexity $\Rpub(f)$ is the minimum worst-case number of communicated classical bits in an interactive public-coin protocol that computes $f$ with error at most $1/3$ on each promised input. The quantities $Q(f)$ and $\Qstar(f)$ are the corresponding quantum communication complexities in qubits, without prior entanglement and with arbitrary prior entanglement, respectively. Local computation is unrestricted. We take Bob to produce the output; requiring both parties to output changes the costs by at most one bit.
\end{definition}

In particular, $\Qstar(f)\le Q(f)$ and $\Qstar(f)\le\Rpub(f)$.
For the latter inequality, shared entanglement can generate shared random bits, and quantum messages can transmit classical bits. We do not assume free shared randomness in the definition of $Q$. If $\Qstar(f)=0$, no-signalling implies that the output on promised inputs is determined by Bob's input alone. Hence $\Rpub(f)=Q(f)=0$.

\begin{definition}[Permutation Invariance]\label{def:permutation-invariance}
For $\pi\in S_n$, the string $\pi x$ is obtained by moving the symbol at position $i$ to position $\pi(i)$:
$$
(\pi x)_{\pi(i)}\coloneqq x_i\qquad(i\in[n]).
$$
A partial function $f:[q]^n\times[q]^n\to\{-1,+1,*\}$ is \emph{permutation-invariant} if
$$
f(\pi x,\pi y)=f(x,y) \qquad\text{for every }x,y,\pi.
$$
The equality includes the value $*$, so the promise is invariant as well.
\end{definition}

\begin{definition}[Joint Types and Margins]\label{def:types}
The joint type of $(x,y)$ is the table
$$
T_{ab}\coloneqq|\{i:(x_i,y_i)=(a,b)\}|.
$$
Its row margins are $\rho_a\coloneqq\sum_bT_{ab}$, and its column margins are $\sigma_b\coloneqq\sum_aT_{ab}$. The vectors $\rho\coloneqq(\rho_a)_{a\in[q]}$ and $\sigma\coloneqq(\sigma_b)_{b\in[q]}$ are the histograms of $x$ and $y$, respectively. Equivalently,
$$
\rho_a=|\{i\in[n]:x_i=a\}|, \qquad \sigma_b=|\{i\in[n]:y_i=b\}|.
$$
Each histogram has total $n$. Alice can compute $\rho$ from $x$, and Bob can compute $\sigma$ from $y$, without communication. Let $f_{\rho,\sigma}$ denote the restriction of $f$ to inputs with histograms $\rho$ and $\sigma$, leaving its value undefined on all other inputs.
\end{definition}

Two input pairs have the same type if and only if a simultaneous permutation maps one pair to the other. To prove the nontrivial direction, match positions within each ordered-symbol category and combine the matchings into a permutation. Consequently a permutation-invariant function is well-defined on its joint types. The margins are not assumed fixed in Theorem~\ref{thm:main}; they will first be exchanged.

Consider the complete bipartite graph whose vertices are row symbols $R_1,\ldots,R_q$ and column symbols $K_1,\ldots,K_q$. A cell $(a,b)$ is the edge $R_aK_b$. For an edge $e=(a,b)$, we also let $T_e\coloneqq T_{ab}$, and use the same convention for other tables. For a bipartition of these $2q$ vertices, let $E$ be its crossing edges. Equivalently,
\begin{equation}\label{eq:cuts}
E\coloneqq\{(a,b):\alpha(a)\ne\beta(b)\}, \qquad \alpha,\beta:[q]\to\{0,1\}.
\end{equation}
Let $\calE$ consist of the nonempty events so obtained. Complementing both binary labellings does not change $E$. Fixing $\alpha(1)=0$ therefore gives
\begin{equation}\label{eq:cut-count}
L\coloneqq|\calE|\le2^{2q-1}-1.
\end{equation}
This does not identify $E$ with its complement.

\begin{definition}[Type-Separation Parameter]\label{def:separation}
For tables $T,U$ with the same margins, define the nonnegative quantities
\begin{equation}\label{eq:h}
h_E(T,U)^2\coloneqq\frac{\sum_{e\in E}(\sqrt{T_e}-\sqrt{U_e})^2}{T(E)+U(E)},
\quad T(E)\coloneqq\sum_{e\in E}T_e,
\qquad h(T,U)\coloneqq\max_{E\in\calE}h_E(T,U).
\end{equation}
A ratio with zero denominator is assigned value zero. On a fixed-margin restriction of $f$, let $h_f$ be the minimum of $h(T,U)$ over oppositely labelled promised types. If there is no such pair, set $h_f\coloneqq1$.
\end{definition}

Since $(\sqrt t-\sqrt u)^2\le t+u$, we have $h(T,U)\le1$. If $T\ne U$, the event containing all cells gives $h(T,U)>0$. The set of types is finite, so $0<h_f\le1$.

We next represent each cut separation $h_E(T,U)$ as the Hellinger distance between two auxiliary probability distributions. This representation lets us use the triangle inequality to show that sufficiently accurate count estimates determine the correct output label.

\begin{definition}[Hellinger Distance]\label{def:hellinger}
Let $E$ be finite. Consider distributions whose outcomes are $\bot$ or pairs $(z,e)$ with $z\ge0$ and $e\in E$. Let $p_\bot$ denote $P$'s probability of $\bot$, and let $p_e(z)$ denote its density at label $e$: the probability of $z\in[a,b]$ with label $e$ is $\int_a^b p_e(z)\,dz$. Use $q_\bot,q_e$ analogously for $Q$. Their Hellinger distance is defined by
$$
H(P,Q)^2\coloneqq\frac12\left[(\sqrt{p_\bot}-\sqrt{q_\bot})^2 +\sum_{e\in E}\int_0^\infty(\sqrt{p_e(z)}-\sqrt{q_e(z)})^2\,dz\right].
$$
The integrals are ordinary integrals of densities; the separate $\bot$ term accounts for its point probability.
\end{definition}
The triangle inequality for $H$ is the usual $L^2$ triangle inequality applied to the square-root probabilities and densities.

Let $W=(W_e)_{e\in E}$ be a nonnegative real count vector of total $w$. If $w>0$, define $J_W$ to have probability zero at $\bot$ and density $W_e e^{-wz}$ at $(z,e)$, for $z\ge0$ and $e\in E$. Equivalently, sample an exponential random variable of rate $w$, whose density is $we^{-wz}$, and independently choose label $e$ with probability $W_e/w$. If $w=0$, set $J_W\coloneqq\delta_\bot$, the distribution that outputs $\bot$ with probability one. For a table $T$, set $J_T^E\coloneqq J_{(T_e)_{e\in E}}$.

\begin{lemma}[Count Formula for Hellinger Distance]\label{lem:hellinger}
For nonnegative count vectors $W,V$ of totals $w,v$,
\begin{equation}\label{eq:hellinger}
H(J_W,J_V)^2=\frac{\sum_e(\sqrt{W_e}-\sqrt{V_e})^2}{w+v},
\end{equation}
with the same zero-denominator convention. In particular, $H(J_T^E,J_U^E)=h_E(T,U)$.
\end{lemma}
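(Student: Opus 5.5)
The plan is to compute $H(J_W,J_V)^2$ directly from Definition~\ref{def:hellinger}, splitting into cases by whether the totals $w,v$ vanish. This isolates the $\bot$ term from the density terms. The final claim will then follow by specializing to $W=(T_e)_{e\in E}$ and $V=(U_e)_{e\in E}$, since then $w=T(E)$ and $v=U(E)$.

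\emph{Case $w=v=0$.} Both distributions are $\delta_\bot$, so $H=0$. The right-hand side has zero denominator and is assigned zero by convention, so the two sides agree.

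\emph{Case exactly one total vanishes,} say $v=0<w$. The $\bot$ term is $(0-1)^2=1$. The density terms are $\int_0^\infty W_e e^{-wz}\,dz=W_e/w$, and these sum to $1$. Hence $H^2=\tfrac12(1+1)=1$. The right-hand side is $\sum_e W_e/w=1$, since all $V_e=0$.

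\emph{Main case $w,v>0$.} The $\bot$ term vanishes. For each $e$, expanding the square gives
\[
\int_0^\infty\Bigl(\sqrt{W_e}\,e^{-wz/2}-\sqrt{V_e}\,e^{-vz/2}\Bigr)^2dz
=\frac{W_e}{w}+\frac{V_e}{v}-2\sqrt{W_eV_e}\int_0^\infty e^{-(w+v)z/2}\,dz .
\]
The last integral equals $2/(w+v)$. Summing over $e$ and using $\sum_eW_e=w$ and $\sum_eV_e=v$, we get
\[
2H^2=2-\frac{4\sum_e\sqrt{W_eV_e}}{w+v}.
\]
It remains to match this with the target. Since $w+v=\sum_e(W_e+V_e)$,
\[
\frac{\sum_e(\sqrt{W_e}-\sqrt{V_e})^2}{w+v}=\frac{w+v-2\sum_e\sqrt{W_eV_e}}{w+v}=1-\frac{2\sum_e\sqrt{W_eV_e}}{w+v},
\]
which is exactly $H^2$.

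The only step needing care is keeping track of the cross term's normalization $2/(w+v)$ together with the factor $\tfrac12$ in the definition of $H$. The degenerate cases are handled by the separate $\bot$ term and the zero-denominator convention, and they confirm the formula there as well.
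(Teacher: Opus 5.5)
Your proof is correct and follows the same approach as the paper. It uses the same three-way case split on whether $w$ and $v$ vanish, and in the main case the same cross-term integral $\int_0^\infty e^{-(w+v)z/2}\,dz=2/(w+v)$, giving $H(J_W,J_V)^2=1-2\sum_e\sqrt{W_eV_e}/(w+v)$ before expanding the squares; the only difference is that you expand each squared density difference explicitly, where the paper directly uses that both distributions have total mass one.
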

\begin{proof}
If $w,v>0$, integration gives
$$
H(J_W,J_V)^2 =1-\sum_e\sqrt{W_eV_e}\int_0^\infty e^{-(w+v)z/2}\,dz =1-\frac{2\sum_e\sqrt{W_eV_e}}{w+v}.
$$
Expanding the squares gives~\eqref{eq:hellinger}. If exactly one total is zero, the two distributions have disjoint support and both sides of~\eqref{eq:hellinger} equal one. If both totals are zero, both distributions equal $\delta_\bot$.
\end{proof}

These auxiliary distributions are used only in the analysis. The protocol below does not sample an exponential random variable.

\begin{lemma}[A Communication Cost Function]\label{lem:phi}
Let $\Phi_n(s)\coloneqq s\log(2+n/s)$ for $s>0$. Then $\Phi_n$ is increasing, $\Phi_n(Ds)\le D\Phi_n(s)$ for $D\ge1$, and $\Phi_n(s)\ge\max\{s,\log(n+2)\}$ for $s\ge1$.
\end{lemma}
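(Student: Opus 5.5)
The lemma has three elementary claims about $\Phi_n(s)=s\log(2+n/s)$, and I will prove each directly from calculus and monotonicity, with $n\ge2$ fixed throughout.

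\emph{Monotonicity.} I will differentiate in $s$, using natural logarithms and rescaling at the end. Write $u\coloneqq n/s>0$. Then
$$
\frac{d}{ds}\bigl[s\ln(2+n/s)\bigr]=\ln(2+u)-\frac{u}{2+u}.
$$
Since $\ln(2+u)\ge\ln 2>1/2$ is false in general ($\ln2\approx0.69$), I instead compare $\ln(2+u)$ with $\ln(1+u)$. The elementary bound $\ln(1+u)\ge u/(1+u)$ gives
$$
\ln(2+u)>\ln(1+u)\ge \frac{u}{1+u}\ge \frac{u}{2+u}.
$$
So the derivative is positive and $\Phi_n$ is strictly increasing.

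\emph{Sub-homogeneity.} For $D\ge1$ we have $n/(Ds)\le n/s$, and the logarithm is increasing, so
$$
\Phi_n(Ds)=Ds\log\bigl(2+n/(Ds)\bigr)\le Ds\log(2+n/s)=D\,\Phi_n(s).
$$

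\emph{Lower bound.} Fix $s\ge1$. First, $\log(2+n/s)\ge\log 2=1$, which gives $\Phi_n(s)\ge s$. For the second part I split into cases.
\begin{itemize}
\item If $s\ge n+2$, then $\Phi_n(s)\ge s\ge n+2\ge\log(n+2)$.
\item If $1\le s\le n+2$, monotonicity gives $\Phi_n(s)\ge\Phi_n(1)=\log(n+2)$.
\end{itemize}
Since $\Phi_n$ is increasing, the second case is immediate, and the argument is in fact simpler than the case split suggests: for every $s\ge1$, $\Phi_n(s)\ge\Phi_n(1)=\log(2+n)$.

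The only step that needs any care is the sign of the derivative, and it is settled by the inequality $\ln(1+u)\ge u/(1+u)$. Everything else is routine.
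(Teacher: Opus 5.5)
Your proof is correct and follows the paper's argument essentially step for step. The only difference is how you show the derivative $\ln(2+u)-u/(2+u)$ is positive: you use the standard bound $\ln(1+u)\ge u/(1+u)$, whereas the paper observes that this expression equals $\ln 2$ at $u=0$ and has derivative $u/(2+u)^2\ge0$.

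You can drop the garbled aside about $\ln 2>1/2$; that inequality is true, and your actual point is only that $\ln 2$ alone does not dominate $u/(2+u)<1$. You can also drop the redundant case split, since $\Phi_n(s)\ge\Phi_n(1)$ already covers all $s\ge1$.
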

\begin{proof}
The derivative multiplied by $\ln2$ is $\ln(2+n/s)-(n/s)/(2+n/s)$. The function $r\mapsto\ln(2+r)-r/(2+r)$ has value $\ln2$ at zero and derivative $r/(2+r)^2\ge0$. This proves monotonicity. The scaling inequality follows by decreasing $n/s$ to $n/(Ds)$. Finally, $\log(2+n/s)\ge1$ and $\Phi_n(s)\ge\Phi_n(1)=\log(n+2)$.
\end{proof}

\section{Classical and Quantum Upper Bounds}\label{sec:classical}

\subsection{Classical Upper Bound}

We first work with known margins. The type-separation parameter then controls how accurately the parties must estimate the joint type to determine its label. The following bound is obtained by estimating the size of each cut event, sampling its disagreement positions, and recovering the sampled positions in one batch.

\begin{theorem}[Classical Upper Bound]\label{thm:classical}
Suppose both parties know the margins $\rho,\sigma$. For a permutation-invariant partial function $f$ restricted to these margins, let $h_f$ be as in Definition~\ref{def:separation}. Then
$$
\Rpub(f)\le C_q\,u\log(2+n/u), \qquad u\coloneqq\min\{n,h_f^{-2}\}.
$$
\end{theorem}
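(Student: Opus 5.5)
Set $s\coloneqq\lceil C h_f^{-2}\rceil$ for a large constant $C=C_q$. If $s\ge n$, Alice sends $x$ at cost $O_q(n)$. This is $O_q(u\log(2+n/u))$ because $u\ge n/C$ in that case, so assume $s<n$. The protocol estimates the joint type $T$ well enough that the Hellinger geometry of Lemma~\ref{lem:hellinger} singles out the correct label. Bob will compute, for every cut event $E\in\calE$, an estimated count vector $\widehat W^E$ whose total is at most $O_q(s)$ disagreements. The requirement is that, with probability at least $2/3$ simultaneously for all $E$, $H(J_{\widehat W^E},J_T^E)\le h_f/3$. He then outputs the label of any promised type $U$ with $\max_E H(J_{\widehat W^E},J_U^E)\le h_f/3$. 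Suppose some $U$ with the opposite label also passed this test. The triangle inequality for $H$ would give $h_E(T,U)\le 2h_f/3<h_f$ for every $E$, so $h(T,U)<h_f$, contradicting Definition~\ref{def:separation}. Since $L=|\calE|=O_q(1)$ by~\eqref{eq:cut-count}, a union bound reduces everything to a single event $E$ with failure probability $1/(3L)$.

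Fix $E$ given by $\alpha,\beta$. The disagreement set is $D_E=\{i:\alpha(x_i)\ne\beta(y_i)\}$, and $|D_E|=T(E)$. The first step is a scale estimate: find $\tilde d$ with $\tilde d\le T(E)\le 4\tilde d$ (or certify $T(E)=0$) at cost $O(\log n)$, for instance by hashing on random subsamples at geometric rates. Both parties then keep each coordinate independently with public probability $p=\min\{1,s/\tilde d\}$. The retained disagreements form a binomial-thinned count vector $Z_e\sim\mathrm{Bin}(T_e,p)$ with expected total at most $s$. To find them, Alice and Bob apply a public random $\F_2$-linear map $A$ with $O(\log\sum_{j\le Ks}\binom nj)=O_q(s\log(2+n/s))$ output bits to $\alpha(x)$ restricted to the sample, and likewise for $\beta(y)$. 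Bob computes $A(\alpha(x)\oplus\beta(y))$ and decodes the unique vector of weight at most $Ks$ in the corresponding coset. A union bound over low-weight differences makes this correct with high probability, and the capacity $Ks$ is fixed, so the worst-case cost is bounded on every run. If the decoded set exceeds capacity, the protocol declares overflow; Markov's inequality makes this rare. Alice then sends her symbols at the recovered positions, costing $O_q(s)$ bits, and Bob sets $\widehat W^E_e\coloneqq Z_e/p$. By Lemma~\ref{lem:phi}, the cost summed over the $L$ events is $O_q(\Phi_n(s))=O_q(u\log(2+n/u))$.

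The main obstacle is the sampling error bound: showing that $\E\,H(J_{Z/p},J_T^E)^2=O(1/(p\,T(E)))=O_q(1/s)$ uniformly over the counts. Via~\eqref{eq:hellinger}, this reduces to bounding $\E\sum_e(\sqrt{Z_e/p}-\sqrt{T_e})^2/(T(E)+Z(E)/p)$. The term-by-term estimate I would try first is $\E(\sqrt{Z_e}-\sqrt{pT_e})^2\le \Var(Z_e)/(pT_e)\le 1$, which uses $(\sqrt a-\sqrt b)^2\le (a-b)^2/b$. Dividing by $p\,T(E)\gtrsim s$ gives $O_q(1/s)$ in total, since there are at most $q^2$ cells. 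The random denominator can only help, because it is at least $T(E)$. The degenerate case $Z=0$ with $T(E)>0$ has probability $e^{-\Omega(s)}$. With $C$ large, Markov's inequality then gives $H\le h_f/3$ except with probability $1/(6L)$. Recovery failures and overflows account for the remaining error budget.
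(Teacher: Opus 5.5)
Your proposal is correct and follows the paper's proof essentially step for step. The shared steps are: per-cut scale estimation; public subsampling at rate $\min\{1,s/\widetilde k\}$; fixed-capacity recovery of the retained disagreements by a random linear sketch; the binomial-variance bound $\E H(J_T^E,\widehat J_E)^2\le q^2(1-p)/(p\,T(E))$ of Lemma~\ref{lem:count-error}; and a Hellinger decoder justified by the triangle inequality. Your threshold test at $h_f/3$ works as well as the paper's minimum-score rule in Lemma~\ref{lem:type-decoder}.

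The remaining slips are cosmetic and do not affect the bound. The expected number of retained disagreements lies between $s$ and $4s$, not below $s$. Bob must send the decoded index back to Alice before she can transmit her symbols, which costs the same order. An overflow cannot be detected by a decoder that searches only low-weight candidates, so it should simply be charged to the Markov bound, as the paper does.
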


\paragraph{Estimating the Number of Disagreements}

Both estimation and recovery use the following collision bound for random linear maps, the binary universal hash family of Carter and Wegman~\cite{CW79}.

\begin{lemma}[Random Linear Hashing, {\cite{CW79}}]
\label{lem:linear-hash}
Let $m\ge1$ be an integer, and let $H\in\F_2^{m\times n}$ have independent uniform binary entries. For every fixed pair of distinct vectors $u,v\in\F_2^n$,
$$
\Prb[Hu=Hv]=2^{-m}.
$$
In particular, setting $v=0$ gives $\Prb[Hu=0]=2^{-m}$ for every fixed nonzero $u\in\F_2^n$.
\end{lemma}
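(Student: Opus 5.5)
The plan is a direct computation. For fixed distinct $u,v\in\F_2^n$, set $w\coloneqq u-v=u\oplus v\neq0$. By linearity over $\F_2$, the event $Hu=Hv$ is exactly the event $Hw=0$. So it suffices to prove $\Prb[Hw=0]=2^{-m}$ for every fixed nonzero $w$; the second sentence of the lemma is then just the case $v=0$.

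The first step is to look at one row at a time. Write $h_1,\ldots,h_m$ for the rows of $H$. These are independent, and each is uniform on $\F_2^n$. Since $Hw=0$ means $\langle h_j,w\rangle=0$ for all $j$, independence of the rows gives
\[
\Prb[Hw=0]=\prod_{j=1}^m\Prb[\langle h_j,w\rangle=0].
\]
The product form requires only that the row events are independent, which follows because they depend on disjoint sets of entries.

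The second step is to show that each factor equals $\tfrac12$. Pick an index $k$ with $w_k=1$, and condition on all entries of $h_j$ other than its $k$-th entry. Then
\[
\langle h_j,w\rangle=(h_j)_k\oplus c,
\]
where $c$ is determined by the conditioned entries. Because $(h_j)_k$ is uniform and independent of $c$, the inner product equals zero with probability exactly $\tfrac12$. Multiplying the $m$ factors gives $2^{-m}$.

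There is no real obstacle here. The only point requiring care is choosing the coordinate $k$ in the support of $w$, which is exactly where the hypothesis $u\neq v$ is used.
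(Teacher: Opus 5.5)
Your proof is correct: reducing to $Hw=0$ with $w=u\oplus v\neq 0$, factoring over the independent rows, and using a coordinate $k$ with $w_k=1$ to show that each row's inner product with $w$ is a uniform bit is the standard verification. The paper gives no proof of its own and simply cites the Carter--Wegman universal hash family, so your argument supplies the routine computation behind that citation.
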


The next protocol combines this fingerprint with nested coordinate samples. We include its analysis because we need a one-sided zero case and the same scale estimate at both parties.

\begin{lemma}[Estimating the Disagreement Count]\label{lem:scale}
For each fixed $\delta\in(0,1)$, there is a constant $\kappa\ge1$ and a public-coin protocol using $O_\delta(\log n)$ bits with the following property. On binary strings with $k$ disagreements it outputs the same value $\widetilde k$ to both parties. If $k=0$, the output is always zero. If $k>0$, then
$$
\Prb[k/\kappa\le\widetilde k\le\kappa k]\ge1-\delta.
$$
Each nonzero output is an integer power of two.
\end{lemma}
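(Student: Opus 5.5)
We plan to use nested public random coordinate subsampling combined with the linear fingerprint of Lemma~\ref{lem:linear-hash}. Let $d\coloneqq x\oplus y\in\F_2^n$, so that $k=|d|$. Using public randomness, the parties draw independent uniform reals $r_i\in[0,1]$ for $i\in[n]$; discretizing to $O(\log n)$ bits suffices. For $j=0,1,\ldots,\lceil\log n\rceil$, they set $S_j\coloneqq\{i:r_i\le2^{-j}\}$. These sets are nested, $S_0\supseteq S_1\supseteq\cdots$, and $S_0=[n]$. For each level $j$ and each repetition $\ell\in[r]$, with $r=O_\delta(1)$, the parties use a public random matrix $H_{j,\ell}\in\F_2^{m\times n}$ with $m=O_\delta(1)$. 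Alice sends $H_{j,\ell}(x|_{S_j})$, where $x|_{S_j}$ is $x$ with the coordinates outside $S_j$ set to zero. Bob compares this with $H_{j,\ell}(y|_{S_j})$. By linearity, the two fingerprints agree exactly when $H_{j,\ell}(d|_{S_j})=0$.

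Bob declares level $j$ \emph{nonempty} if some repetition $\ell$ shows a mismatch. The total cost is $O(\log n)\cdot r\cdot m=O_\delta(\log n)$ bits. Bob then computes $j^\ast$, the largest nonempty level. He outputs $\widetilde k\coloneqq2^{j^\ast}$ if some level is nonempty, and $\widetilde k\coloneqq0$ otherwise. He sends $j^\ast$, or a flag meaning that no level is nonempty, using $O(\log\log n)$ bits. Both parties therefore hold the same value, and every nonzero output is a power of two. The one-sided zero case is automatic. If $k=0$, then $d|_{S_j}=0$ for every $j$, so no mismatch can occur and the output is always zero.

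For $k>0$, the analysis has two error directions. \emph{Overestimate:} suppose $2^j>\kappa k$. Then $\E|d|_{S_j}|=k2^{-j}<1/\kappa$. By Markov's inequality, level $j$ is nonempty with probability at most $k2^{-j}$. Summing this geometric series over all $j$ with $2^j\ge\kappa k$ bounds the probability that $j^\ast$ is too large by $O(1/\kappa)$. \emph{Underestimate:} let $j_0$ satisfy $2^{j_0}\in[k/\kappa',2k/\kappa']$, where $\kappa'\ge2$; such a level exists when $k\ge2$, and the small cases are handled by level $0$. Then $|d|_{S_{j_0}}|$ has mean at least $\kappa'$. By Chernoff or Chebyshev bounds it is nonzero except with probability $e^{-\Omega(\kappa')}$. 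Given that it is nonzero, the probability that all $r$ repetitions see no mismatch is $2^{-mr}$. Because $j^\ast\ge j_0$, the bound $\widetilde k\ge k/\kappa'$ then holds.

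The main step is the parameter choice. We take $\kappa'$ and $\kappa$ large enough, depending on $\delta$, that each failure probability is at most $\delta/3$, and choose $mr\ge\log(3/\delta)$. A union bound then gives $k/\kappa\le\widetilde k\le\kappa k$ with probability at least $1-\delta$. The only subtle point is that monotonicity fails for $j^\ast$: a high level can fire spuriously even when intermediate levels do not. The geometric sum over all high levels handles this directly, so independence across levels is not needed.
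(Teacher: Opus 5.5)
Your proposal is correct and is essentially the paper's proof: nested dyadic subsampling, a linear fingerprint at each level, output $2^{j^\ast}$ for the largest firing level, a single-level lower-tail bound, and a union bound for the upper tail. The only difference is that you sum the geometric series over all high levels, whereas the paper uses nesting to reduce them to the single level $j_+$; both give an $O(1/\kappa)$-type bound. There are two harmless slips. The mean at level $j_0$ is only at least $\kappa'/2$, and such a level exists once $k\ge\kappa'/2$ rather than $k\ge2$, with smaller $k$ covered by level $0$ exactly as you indicate.
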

The nested-sampling argument, including its one-sided zero case, is proved in Appendix~\ref{sec:aux-scale}.

\paragraph{Recovering the Sampled Disagreement Set}

The following lemma recovers all disagreement positions in one batch by applying Lemma~\ref{lem:linear-hash} to the disagreement vector. It applies to arbitrary binary strings; the sampling step will be specified in the proof of Theorem~\ref{thm:classical}.

\begin{lemma}[Recovering the Disagreement Set]
\label{lem:batch-recovery}
Let $n\ge2$, let $B$ be an integer with $1\le B\le n/2$, and let $\varepsilon\in(0,1)$. There is a public-coin protocol with the following properties. Alice holds $\xi\in\F_2^n$, and Bob holds $\eta\in\F_2^n$. The protocol either reports failure to both parties or gives both parties the same set $\widehat D\subseteq[n]$ of size at most $B$. If the actual disagreement set
$$
D\coloneqq\{i\in[n]:\xi_i\ne\eta_i\}
$$
has size at most $B$, then
$$
\Prb[\text{the protocol returns }\widehat D=D]\ge1-\varepsilon.
$$
For every input and every execution, the communication is at most
$$
O\bigl(B\log(n/B)+\log(1/\varepsilon)\bigr)
$$
bits. The protocol uses only finite public randomness.
\end{lemma}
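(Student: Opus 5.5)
The plan is to reduce recovery to the task of identifying the difference vector $d\coloneqq\xi\oplus\eta$, which is a vector of Hamming weight at most $B$ when $|D|\le B$. The tool is a random linear sketch together with a decoding-by-enumeration step. Set $m\coloneqq\lceil\log_2 N_B\rceil+\lceil\log_2(1/\varepsilon)\rceil$, where
$$
N_B\coloneqq\sum_{j\le B}\binom nj .
$$
Using public randomness, the parties sample a uniform $H\in\F_2^{m\times n}$. Alice sends $H\xi$ using $m$ bits. Bob computes $s\coloneqq H\xi\oplus H\eta=Hd$, which holds by linearity.

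Bob then searches over all vectors $w$ of weight at most $B$. If there is a unique $w$ with $Hw=s$, he returns $\widehat D\coloneqq\supp(w)$ and sends this set to Alice. Otherwise he reports failure, which costs one bit. Sending the set costs at most $\lceil\log_2 N_B\rceil+1$ bits, since it is an index into the list of all subsets of size at most $B$. Both messages have fixed length, so the communication bound holds on every input and every execution, regardless of whether $|D|\le B$.

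\textbf{Correctness.} Suppose $|D|\le B$. Then $w=d$ is always a solution, so the protocol errs only if some other weight-$\le B$ vector $w\ne d$ also satisfies $Hw=Hd$. Lemma~\ref{lem:linear-hash} gives probability $2^{-m}$ for each fixed such $w$. A union bound over the at most $N_B$ alternatives gives failure probability at most $N_B2^{-m}\le\varepsilon$. When $|D|>B$ we make no correctness claim. The output is still either failure or a common set of size at most $B$, which is all the lemma requires. The public randomness consists of $mn$ bits, which is finite.

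\textbf{Main step: the volume estimate.} The main computation is to show $\log N_B=O(B\log(n/B))$ for $1\le B\le n/2$. I would use the standard bound
$$
N_B\le(en/B)^B,
$$
which follows from $\sum_{j\le B}\binom nj (B/n)^B\le\sum_j\binom nj(B/n)^j\le(1+B/n)^n\le e^B$. This gives $\log N_B\le B\log(en/B)$. Because $n/B\ge2$, we have $\log(en/B)\le\log(n/B)+\log e\le(1+\log e)\log(n/B)$. Hence the total communication is $O(B\log(n/B)+\log(1/\varepsilon))$, with the additive constants absorbed since $B\log(n/B)\ge1$.
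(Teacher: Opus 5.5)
Your proposal is correct and follows the paper's proof essentially step for step. It uses the same uniform linear sketch with $m\approx\log N_B+\log(1/\varepsilon)$ rows, Bob's exhaustive search for a unique weight-at-most-$B$ preimage followed by a success/failure bit and an enumeration index, the union bound $(N_B-1)2^{-m}\le\varepsilon$, and the volume estimate $N_B\le(en/B)^B$ obtained from $(B/n)^BN_B\le(1+B/n)^n\le e^B$; your remark that $n/B\ge2$ lets $\log(en/B)$ and the additive constants be absorbed into $O(B\log(n/B))$ makes explicit a step the paper leaves implicit.
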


The protocol sends a linear sketch and returns the index of the unique weight-at-most-$B$ vector consistent with it. The collision analysis and worst-case cost bound, including overflow inputs, are proved in Appendix~\ref{sec:aux-batch}.

\paragraph{Error of Sampled Counts}

This subsection analyzes the counts obtained by ideal sampling: every retained position of a category in $E$ is counted. These counts are defined whether or not a communication protocol successfully recovers the retained positions. We will apply the lemma below before accounting for recovery failures.

\begin{lemma}[Error of Sampled Counts]
\label{lem:count-error}
Fix an input pair $(x,y)$ with joint type $T$, an event $E\in\calE$, and a sampling probability $p\in(0,1]$. Let $k\coloneqq T(E)=\sum_{e\in E}T_e$, and suppose $k>0$. Retain each coordinate independently with probability $p$, and let $A$ be the random set of retained coordinates. For each category $e\in E$, define the ideal sample count
$$
Z_e\coloneqq\bigl|\{i\in A:(x_i,y_i)=e\}\bigr|.
$$
Define also
$$
\widehat T_e\coloneqq Z_e/p, \qquad \widehat k\coloneqq\sum_{e\in E}\widehat T_e, \qquad \widehat J_E\coloneqq J_{(\widehat T_e)_{e\in E}}.
$$
Then
$$
\E H(J_T^E,\widehat J_E)^2 \le\frac{|E|(1-p)}{pk} \le\frac{q^2(1-p)}{pk}.
$$
Here the expectation is over the coordinate sampling for the fixed input and the fixed value of $p$. If $\widehat k=0$, the definition of $J_W$ gives $\widehat J_E=\delta_\bot$. If $p=1$, the estimate is exact.
\end{lemma}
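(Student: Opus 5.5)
The plan is to compute the Hellinger distance exactly with the count formula of Lemma~\ref{lem:hellinger} and then bound it termwise by a variance. For the fixed input, the vectors $W=(T_e)_{e\in E}$ and $V=(\widehat T_e)_{e\in E}$ are nonnegative count vectors with totals $k>0$ and $\widehat k\ge0$. Applying~\eqref{eq:hellinger} gives
$$
H(J_T^E,\widehat J_E)^2=\frac{\sum_{e\in E}\bigl(\sqrt{T_e}-\sqrt{\widehat T_e}\bigr)^2}{k+\widehat k}.
$$
This formula also covers the case $\widehat k=0$, where $\widehat J_E=\delta_\bot$: the lemma handles exactly one zero total, and both sides then equal one. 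Since $k>0$, the denominator is never zero. Because $\widehat k\ge0$, I drop it and obtain the deterministic bound $H^2\le k^{-1}\sum_{e}(\sqrt{T_e}-\sqrt{\widehat T_e})^2$.

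Next I bound each summand by a squared relative error. If $T_e=0$, then no coordinate has category $e$, so $Z_e=0$, $\widehat T_e=0$, and the term vanishes. If $T_e>0$, I use
$$
\bigl(\sqrt a-\sqrt b\bigr)^2=\frac{(a-b)^2}{(\sqrt a+\sqrt b)^2}\le\frac{(a-b)^2}{a}\qquad(a>0,\ b\ge0),
$$
which gives $(\sqrt{T_e}-\sqrt{\widehat T_e})^2\le(\widehat T_e-T_e)^2/T_e$.

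Then I take expectations over the independent coordinate sampling. The variable $Z_e$ is binomial with parameters $T_e$ and $p$, so $\E\widehat T_e=T_e$ and
$$
\E(\widehat T_e-T_e)^2=\frac{\Var Z_e}{p^2}=\frac{T_e(1-p)}{p}.
$$
Hence each category with $T_e>0$ contributes at most $(1-p)/(pk)$. Summing over at most $|E|$ such categories gives $\E H^2\le|E|(1-p)/(pk)$, and $|E|\le q^2$ gives the second inequality. When $p=1$, $\widehat T=T$ exactly, so the distance is zero, consistent with the bound.

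I expect no real obstacle. The only points needing care are the degenerate cases: $\widehat k=0$, which the zero-total convention of Lemma~\ref{lem:hellinger} handles, and categories with $T_e=0$, which must be excluded before dividing by $T_e$.
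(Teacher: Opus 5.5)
Your proposal is correct and follows the paper's proof essentially step for step. Both use the count formula of Lemma~\ref{lem:hellinger} with $\widehat k\ge0$ dropped from the denominator, then the rationalization bound $(\sqrt{T_e}-\sqrt{\widehat T_e})^2\le(\widehat T_e-T_e)^2/T_e$ for $T_e>0$, and then the binomial variance $\Var(\widehat T_e)=T_e(1-p)/p$; the degenerate cases $T_e=0$ and $\widehat k=0$ are handled in the same way.
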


The proof in Appendix~\ref{sec:aux-count} uses the binomial variance of each ideal count and the Hellinger count formula. Its expectation is taken before accounting for recovery failures.

\paragraph{Classical Protocol and Communication Cost}

The two protocols use the same decoder: choose a promised type whose auxiliary distributions are closest to the estimates. The following lemma isolates the deterministic guarantee needed for both protocols.

\begin{lemma}[Decoding from Hellinger Estimates]\label{lem:type-decoder}
Fix a permutation-invariant function $f$ with known margins and both output labels, and let $h\coloneqq h_f$. Let $T$ be the true promised type. Suppose probability distributions $\widehat J_E$, one for each $E\in\calE$, satisfy
$$
\max_{E\in\calE}H(J_T^E,\widehat J_E)\le\varepsilon.
$$
For every promised type $U$ with these margins, set
$$
D(U)\coloneqq\max_{E\in\calE}H(J_U^E,\widehat J_E).
$$
Then $D(T)\le\varepsilon$, whereas $D(U)\ge h-\varepsilon$ for every oppositely labelled $U$. In particular, minimizing $D$ returns the correct label if $\varepsilon<h/2$. More generally, if scores $\widetilde D(U)$ satisfy $|\widetilde D(U)-D(U)|\le\zeta$ for every candidate, where $\zeta\ge0$, minimizing $\widetilde D$ returns the correct label whenever $\varepsilon+\zeta<h/2$.
\end{lemma}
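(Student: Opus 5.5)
The argument is a direct application of the triangle inequality for the Hellinger distance $H$, together with Lemma~\ref{lem:hellinger}. The bound $D(T)\le\varepsilon$ is simply the hypothesis, since $D(T)=\max_{E}H(J_T^E,\widehat J_E)$.

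For the lower bound, fix an oppositely labelled promised type $U$. By Definition~\ref{def:separation}, $h(T,U)\ge h_f=h$. Choose an event $E\in\calE$ attaining the maximum in $h(T,U)$. Lemma~\ref{lem:hellinger} gives $H(J_T^E,J_U^E)=h_E(T,U)\ge h$. The triangle inequality then yields
$$
h\le H(J_T^E,J_U^E)\le H(J_T^E,\widehat J_E)+H(\widehat J_E,J_U^E)\le\varepsilon+D(U),
$$
so $D(U)\ge h-\varepsilon$.

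We now show that minimizing $D$ returns the correct label. Let $U^\ast$ be any minimizer of $D$ among promised types with these margins. Then $D(U^\ast)\le D(T)\le\varepsilon$. If $\varepsilon<h/2$, then $\varepsilon<h-\varepsilon$. Hence $U^\ast$ cannot carry the opposite label, because every such $U$ has $D(U)\ge h-\varepsilon>\varepsilon$. So $U^\ast$ carries the same label as $T$. Ties are harmless: every minimizer satisfies the same bound, so whichever one is chosen has the correct label.

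For the perturbed version, let $U^\ast$ minimize $\widetilde D$. Then
$$
\widetilde D(U^\ast)\le\widetilde D(T)\le D(T)+\zeta\le\varepsilon+\zeta.
$$
For an oppositely labelled $U$,
$$
\widetilde D(U)\ge D(U)-\zeta\ge h-\varepsilon-\zeta.
$$
The condition $\varepsilon+\zeta<h/2$ gives $h-\varepsilon-\zeta>\varepsilon+\zeta$, so no oppositely labelled type can be a minimizer, and the label is correct.

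There is no substantial obstacle here. The only point that needs care is that $H$ is a genuine metric on these mixed atom-plus-density distributions, which the paper notes via the $L^2$ embedding of square-root probabilities and densities. It also helps to observe that the argument never uses that $\widehat J_E$ has the special form $J_W$; it works for arbitrary probability distributions.
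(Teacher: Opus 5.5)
Your proof is correct and follows the paper's argument: $D(T)\le\varepsilon$ is the hypothesis, the separating cut together with the Hellinger triangle inequality gives $D(U)\ge h-\varepsilon$, and comparing scores at the minimizer settles the label. The paper compares $D$-values ($D(U_*)\le\varepsilon+2\zeta<h-\varepsilon$) where you compare $\widetilde D$-values, which is equivalent. The paper also notes explicitly that a minimizer exists because the set of candidate types with fixed margins is finite; you leave this implicit.
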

\begin{proof}
The assertion for $T$ is the assumed accuracy bound. For an oppositely labelled $U$, Definition~\ref{def:separation} and Lemma~\ref{lem:hellinger} give an event $E$ with $H(J_T^E,J_U^E)\ge h$. The triangle inequality gives
$$
D(U)\ge H(J_U^E,\widehat J_E) \ge H(J_U^E,J_T^E)-H(J_T^E,\widehat J_E) \ge h-\varepsilon.
$$
If $U_*$ minimizes the approximate score, then
$$
D(U_*)\le\widetilde D(U_*)+\zeta \le\widetilde D(T)+\zeta \le D(T)+2\zeta \le\varepsilon+2\zeta<h-\varepsilon.
$$
Thus $U_*$ cannot have the opposite label. Taking $\zeta=0$ proves the exact-score case as well. The set of candidate types is finite, so a minimizer exists.
\end{proof}

We now combine these ingredients to prove Theorem~\ref{thm:classical}.

\begin{proof}[Proof of Theorem~\ref{thm:classical}]
A restriction with at most one promised label needs no communication. Otherwise put $h\coloneqq h_f\in(0,1]$. The function and its known margins fix all parameters below in advance.

\textbf{Parameters and Sampling.}
Let $L\coloneqq|\calE|$, set $\delta\coloneqq1/(1000L)$, and take $\kappa\ge1$ from Lemma~\ref{lem:scale} for this $\delta$. Choose
$$
A_q\coloneqq6400L\kappa q^2,\qquad s\coloneqq\lceil A_qh^{-2}\rceil,\qquad B\coloneqq\lceil\kappa s/\delta\rceil.
$$
The constants $L,\delta,\kappa,A_q$ depend only on $q$. If $B>n/2$, Alice sends her full input and Bob computes the answer exactly; its communication cost is analyzed below. Henceforth assume $B\le n/2$.

For each $E\in\calE$, choose encodings $\alpha,\beta$ as in~\eqref{eq:cuts}. The strings $\xi_i=\alpha(x_i)$ and $\eta_i=\beta(y_i)$ disagree at exactly $k=T(E)$ positions. Run Lemma~\ref{lem:scale} for every event, obtaining estimates $\widetilde k$ known to both parties. An event with $\widetilde k=0$ receives estimate $\widehat J_E=\delta_\bot$ and is skipped. Otherwise set
$$
p\coloneqq\min\{1,s/\widetilde k\}
$$
and use fresh public randomness to retain coordinates independently at rate $p$, obtaining $A_E$. This uses finite randomness: when $p<1$, draw an independent uniform integer from $\{1,\ldots,\widetilde k\}$ for each coordinate and retain it if the integer is at most $s$. The positive scale output is a power of two. When $p=1$, retain every coordinate.

For any fixed correct scale outputs, $k/\kappa\le\widetilde k\le\kappa k$ whenever $k>0$. Thus
$$
pk\le\kappa s,\qquad p<1\ \Longrightarrow\ pk\ge s/\kappa.
$$
Indeed, for $p<1$ these follow by substituting $p=s/\widetilde k$; for $p=1$, use $k\le\kappa\widetilde k\le\kappa s$. Only retained disagreements need to be recovered; the total number of retained coordinates can be much larger.

\textbf{Recovering the Retained Disagreements.}
Mask both encoded strings to zero outside $A_E$, obtaining $\xi',\eta'$. Their difference $d_0=\xi'+\eta'$ has support $\{i\in A_E:(x_i,y_i)\in E\}$. Apply Lemma~\ref{lem:batch-recovery} with capacity $B$ and error $\delta$, using a fresh matrix independent of all scale outputs and samples. Use fresh sampling and recovery randomness for each event. A reported failure makes both parties halt with a fixed output. Otherwise both know a returned set $\widehat D_E$ of size at most $B$. Alice sends the symbols $x_i$ at its positions in increasing order. Bob forms
$$
Z_e^{\mathrm{rec}}\coloneqq \bigl|\{i\in\widehat D_E:(x_i,y_i)=e\}\bigr|,
\qquad
\widehat J_E\coloneqq J_{(Z_e^{\mathrm{rec}}/p)_{e\in E}}.
$$
These estimates are defined even after an incorrect recovery; zero counts produce $\delta_\bot$.

Conditional on any fixed correct scale outputs, Markov's inequality gives
$$
\Prb[|d_0|>B]\le\frac{pk}{B}\le\frac{\kappa s}{B}\le\delta.
$$
Every such overflow is counted as a bad event, whether or not the protocol detects it. For each fixed sample with $|d_0|\le B$, the independent recovery matrix gives probability at most $\delta$ of a reported failure or an incorrect set, by Lemma~\ref{lem:batch-recovery}.

\textbf{Accuracy of the Ideal Counts.}
Fix correct scale outputs, without conditioning on recovery success. The subsequent coordinate samples remain independent at their specified rates. For $k>0$, define the ideal counts and distributions
$$
Z_e\coloneqq\bigl|\{i\in A_E:(x_i,y_i)=e\}\bigr|,
\qquad
\widehat J_E^{\mathrm{ideal}}\coloneqq J_{(Z_e/p)_{e\in E}}.
$$
On correct recovery they equal the recovered counts and distributions. For $k=0$, set $\widehat J_E^{\mathrm{ideal}}=\delta_\bot=J_T^E$. For analysis, all samples can be drawn in advance, even if the protocol halts early.

When $p<1$, Lemma~\ref{lem:count-error} and $pk\ge s/\kappa$ give
$$
\E H(J_T^E,\widehat J_E^{\mathrm{ideal}})^2 \le\frac{q^2(1-p)}{pk}\le\frac{\kappa q^2}{s}.
$$
For $p=1$ or $k=0$ the ideal estimate is exact, so the same bound holds. Markov's inequality at squared distance $h^2/64$, followed by a union bound over $L$ events, yields
$$
\begin{aligned}
&\Prb\left[\max_{E\in\calE}H(J_T^E,\widehat J_E^{\mathrm{ideal}})>h/8 \,\middle|\,\text{all scales correct}\right]\\
&\hspace{2em}\le\frac{64L\kappa q^2}{sh^2}\le\frac1{100}.
\end{aligned}
$$
The last inequality uses $s\ge6400L\kappa q^2h^{-2}$. The bound holds for every fixed correct collection of scale outputs, and therefore after averaging over them.

\textbf{Decoding the Label and Bounding the Error.}
After all recovery steps, Bob minimizes the approximate score of Lemma~\ref{lem:type-decoder} over promised types. He approximates every Hellinger distance to additive error $h/16$; taking a maximum preserves this error. If scales and recoveries are correct and the ideal estimates are within $h/8$, the lemma applies with $\varepsilon=h/8$ and $\zeta=h/16$ and returns the correct label. Counts and sampling rates are rational, and the distance formula uses arithmetic and square roots, so finite local computation achieves the prescribed precision without exact real comparisons. No union bound over candidate types is needed.

A union bound gives probability at most $L\delta$ of an incorrect scale. Conditional on correct scales, overflow has probability at most $L\delta$; for each fixed sample collection without overflow, recovery failure or incorrect recovery has probability at most $L\delta$. The preceding ideal-count bound is at most $1/100$, so total error is at
most
$$
3L\delta+\frac1{100}=\frac{13}{1000}<\frac13.
$$
This combines conditional bounds and union bounds only: the bad events need not be independent, and no binomial claim is made after conditioning on successful recovery.

\textbf{Worst-Case Communication.}
Let $D_q\coloneqq\kappa/\delta+1$, so $s\le B\le D_qs$. In the sparse branch, each event costs
$$
O\bigl(B\log(n/B)+B\log q+\log(1/\delta)\bigr) =O_q(\Phi_n(B)),
$$
where $\Phi_n(r)=r\log(2+n/r)$. Lemma~\ref{lem:phi} gives $\Phi_n(B)\le D_q\Phi_n(s)$ and $\log(n+2)\le\Phi_n(s)$, absorbing both the $L=O_q(1)$ events and the $O_q(\log n)$ scale-estimation cost. These bounds hold on every execution, including incorrect scales, overflows, and recovery failures. In the full-input branch, $n<2B\le2D_qs$, so sending $n\lceil\log q\rceil=O_q(s)$ bits also costs $O_q(\Phi_n(s))$. If both parties must output, one additional bit suffices.

Finally, $1\le h^{-2}$ and $s\le(A_q+1)h^{-2}$. A further application of Lemma~\ref{lem:phi} gives $\Phi_n(s)\le(A_q+1)\Phi_n(h^{-2})$. If $h^{-2}\le n$, use the constructed protocol; otherwise send the full input and use $\Phi_n(n)=n\log3$. Both parties know $h$ and can make this choice in advance. In either case, with $u=\min\{n,h^{-2}\}$,
$$
\Rpub(f)\le C_q\Phi_n(u)=C_q\,u\log(2+n/u).
$$
All randomness is finite and every branch has bounded communication.
\end{proof}

\subsection{Quantum Upper Bound}\label{sec:quantum-upper-summary}
The same separation parameter admits a quantum upper bound with linear rather than quadratic dependence on $h_g^{-1}$.

\begin{theorem}[Quantum Upper Bound]\label{thm:quantum-upper}
For every fixed $q\ge2$, there is a constant $C_q>0$ such that every fixed-margin restriction $g\coloneqq f_{\rho,\sigma}$ of a permutation-invariant partial function $f:[q]^n\times[q]^n\to\{-1,+1,*\}$ satisfies
\begin{equation}
Q(g)\le C_q\min\{n,h_g^{-1}\log n\}.
\label{eq:qupper-20}
\end{equation}
\end{theorem}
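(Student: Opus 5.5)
The plan is to handle the two terms of the minimum separately. The bound $Q(g)\le C_q n$ is trivial: Alice sends her whole input in $n\lceil\log q\rceil$ qubits and Bob evaluates $g$. If $g$ has at most one promised label, no communication is needed. So the work is a protocol of cost $O_q(h^{-1}\log n)$, where $h\coloneqq h_g$. This protocol must produce, for every $E\in\calE$, an estimate $\widehat J_E$ with $H(J_T^E,\widehat J_E)\le h/8$ except with probability $1/(100L)$. Bob then applies the decoder of Lemma~\ref{lem:type-decoder} with $\varepsilon=h/8$ and $\zeta=h/16$, exactly as in the proof of Theorem~\ref{thm:classical}. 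The overall error is controlled by a union bound over the $L=O_q(1)$ cut events.

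For a fixed event $E$, encode $\xi=\alpha(x)$ and $\eta=\beta(y)$, so that the disagreements are the coordinates with pairs in $E$, and $k=T(E)$. I will proceed in four steps.

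\textbf{Step 1: scale.} Run Lemma~\ref{lem:scale} at cost $O_q(\log n)$. Its randomness must come from a seed that Alice generates and sends; it cannot be free public randomness. By Newman-style replacement with a finite table of tapes, a seed of $O_q(\log n)$ bits suffices. If $\widetilde k=0$, output $\delta_\bot$.

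\textbf{Step 2: a constant-rate sampling experiment.} With $p\coloneqq\min\{1,1/\widetilde k\}$, retain coordinates at rate $p$, using a seed of length $O(\log n)$ again drawn from a finite table. Let $P_\emptyset$ be the probability that no retained coordinate is a disagreement. For $e\in E$, let $P_e$ be the probability that exactly one is, with pair $e$. For independent sampling, $P_\emptyset=(1-p)^k$ and $P_e=T_e p(1-p)^{k-1}$. Hence
$$
T_e=\frac{(1-p)P_e}{p\,P_\emptyset},
$$
and $P_\emptyset=\Theta_q(1)$ because $pk=\Theta_q(1)$. A single run of this experiment is a two-party computation. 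Bob sends the seed index, both parties compute the sparse retained disagreement pattern, and singleton detection plus reporting of the pair is done by hashing as in Lemma~\ref{lem:batch-recovery} with constant capacity. This costs $O(\log n)$ qubits per run. Made reversible, the run becomes a unitary $V$ that prepares a state whose measured outcome registers $\emptyset$, $e$, or ``other'' have probabilities $P_\emptyset$, $P_e$, and the remainder.

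\textbf{Step 3: amplitude estimation.} Apply amplitude estimation to $V$ with $O_q(h^{-1})$ applications of $V$ and $V^{-1}$. This estimates each $\sqrt{P_\emptyset}$ and $\sqrt{P_e}$ to additive error $c_q h$. Each application costs $O(\log n)$ communication, for a total of $O_q(h^{-1}\log n)$. The reflections about the initial state and the marked outcome are arranged so that, between calls, they act only on Alice's registers, after the messages are uncomputed back to her.

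\textbf{Step 4: converting to Hellinger accuracy.} Set $\widehat T_e\coloneqq(1-p)\widehat P_e/(p\widehat P_\emptyset)$ and $\widehat J_E\coloneqq J_{\widehat T}$. I must show $H(J_T^E,\widehat J_E)=O_q(h)$. The required estimate is that square-root errors in $P_e$ of size $\epsilon$, relative to the scale $k$ and after division by $P_\emptyset=\Theta(1)$, give
$$
\sum_e\bigl(\sqrt{\widehat T_e}-\sqrt{T_e}\bigr)^2=O_q(\epsilon^2k).
$$
One must also check that the total $\widehat k$ is within a constant factor of $k$, which controls the denominator in Lemma~\ref{lem:hellinger}. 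Then \eqref{eq:hellinger} gives squared Hellinger distance $O_q(\epsilon^2)$.

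The main obstacle is Step 3. Amplitude estimation needs a coherent implementation in which all seed, message, and control qubits are counted and uncomputed, and in which the reflections need no additional communication. Getting the per-call cost down to $O(\log n)$, rather than $O(\log^2 n)$, requires the sampled disagreement pattern to be sparse. So I will truncate runs whose retained-disagreement count exceeds a constant to the outcome ``other'', and bound that probability by Markov's inequality. A minor subtlety is the boundary case $p=1$ with small $k$, where the ratio formula degenerates. There I will instead recover all $k=O(1)$ disagreements exactly, which costs only $O_q(\log n)$.
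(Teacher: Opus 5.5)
Your outline matches the paper's proof step for step: a scale estimate whose seed comes from a communicated Newman table, the empty-or-singleton experiment with $T_e=\frac{1-p}{p}\frac{P_e}{P_\emptyset}$, a reversible implementation whose reflections act only on Alice's registers, $O_q(h^{-1})$ amplitude-estimation calls of cost $O_q(\log n)$ each, and the decoder of Lemma~\ref{lem:type-decoder}. The gap is in the error accounting for the experiment you actually implement, and the Markov truncation you propose for your ``main obstacle'' does not address it. Amplitude estimation estimates the probabilities $\bar P_z$ of the implemented experiment (finite tape table plus hashing), not the ideal $P_z$ in your ratio formula. The transfer $|\sqrt{\bar P_z}-\sqrt{P_z}|\le\sqrt{|\bar P_z-P_z|}$ cannot be improved in general when $P_e$ is small, and $P_e$ can be $\Theta(1/k)$ (e.g.\ $T_e=1$). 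To get square-root accuracy $c_qh$, you therefore need $|\bar P_z-P_z|=O_q(h^2)$ for every relevant outcome. Your proposal never imposes this. Moreover, with constant $B$, Markov bounds $\Prb[|D|>B]$ only by $pk/B=\Theta_q(1)$. Lemma~\ref{lem:batch-recovery} as stated gives no guarantee when $|D|>B$, so misreported overflow runs could bias $\bar P_\emptyset,\bar P_e$ by a constant, which destroys the $O(h)$ Hellinger accuracy.

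The paper's remedy needs neither sparsity nor Markov. Alice sends a syndrome with $b=\lceil\log((n+1)/\eta)\rceil$ rows, and Bob accepts only a unique match among the $n+1$ candidates $0,e_1,\dots,e_n$. Consider any retained set, including $|D|\ge2$, where $\mathbf 1_D$ is not a candidate. Lemma~\ref{lem:linear-hash} and a union bound then bound the probability of a false empty or singleton report by $\eta$. The tape table is likewise built with accuracy $\eta$, using $K=O(\eta^{-2}n\log q)$ tapes. With $\eta=c_qh^2$, both cost $O(\log n+\log(1/h))$ per call. This is $O_q(\log n)$ because two distinct integer types give $h\ge1/(\sqrt8\,n)$ through the all-cells event; alternatively, use the trivial protocol when $h^{-1}>n$. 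So the real constraint is this $O(h^2)$ precision budget; there is no $O(\log^2 n)$ recovery cost to avoid. Two smaller remarks. First, the check that $\widehat k$ is within a constant factor of $k$ is unnecessary, since the Hellinger denominator $k+\widehat k$ is at least $k$. Second, choosing a dyadic $p$ with $pk\le1/4$ on correct scales, as the paper does, gives $P_\emptyset\ge3/4$ and removes your separate $p=1$ case.
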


The protocol uses neither prior entanglement nor free shared randomness. It estimates the probabilities of empty and singleton disagreement samples coherently; their ratios recover the joint-type counts. Amplitude estimation and the decoder of Lemma~\ref{lem:type-decoder} yield the bound. Appendix~\ref{sec:quantum-upper} gives the full proof, including finite random-tape tables, seed communication, controlled calls, and workspace cleanup. The construction is independent of the quantum lower bound in Section~\ref{sec:quantum}.

\section{Quantum Lower Bound}\label{sec:quantum}

We first prove a lower bound for distinguishing two joint types with the same margins. We then combine it with the classical and quantum upper bounds to prove Theorem~\ref{thm:main} and Theorem~\ref{thm:quantum-characterization}, respectively. The analytic input for the lower bound is Lemma~\ref{lem:profile}, proved in Appendix~\ref{sec:polynomial}.

\begin{theorem}[Quantum Lower Bound for Two Types]\label{thm:inverse}
For each $q\ge2$, there is $a_q>0$ with the following property. Let $T\ne U$ be two types of total $n$ with the same row and column margins. If a protocol using $c$ qubits and arbitrary prior entanglement distinguishes all inputs of type $T$ from all inputs of type $U$ with error at most $1/3$, then
\begin{equation}\label{eq:quantum-inverse}
c+1\ge a_q/h(T,U).
\end{equation}
\end{theorem}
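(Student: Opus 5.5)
We follow the route sketched in the proof overview: symmetrize, reduce to one alternating cycle with controlled counts, and then apply a degree lower bound along an integer interval.

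\textbf{Symmetrization.} Average the protocol over a uniformly random simultaneous permutation $\pi\in S_n$. Alice and Bob can implement this with shared randomness drawn from the prior entanglement, so the cost is still $c$ qubits. The acceptance probability $P(V)$ then depends only on the joint type $V$. It is defined for every type with margins $\rho,\sigma$, including types outside the promise, and satisfies $|P(T)-P(U)|\ge1/3$.

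\textbf{Reduction to one cycle.} The difference $U-T$ is an integer circulation on the bipartite symbol graph $K_{q,q}$. Decompose it conformally into $O_q(1)$ alternating cycle moves. Each partial sum then lies entrywise between $T$ and $U$, so it is a legal type. Telescoping gives one move of size $g$ between consecutive types $T',U'$ with an acceptance gap $\Omega_q(1)$. Since the partial sums are sandwiched between $T$ and $U$, we should have $h(T',U')\gtrsim_q h(T,U)$ up to the conformal structure; I would verify this comparison directly from the cut formula.

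Next, weight each edge $e$ by $\min(T'_e,U'_e)$ and choose a maximum-weight spanning tree. Rewrite the selected move as a sum of fundamental cycles of non-tree edges, and telescope again to isolate one fundamental cycle with gap $\Omega_q(1)$. Let $B$ be the common count on its non-tree edge and $\lambda$ the minimum common count on its tree path. Deleting the minimum path edge splits the tree into two sides, which defines a cut event $E\in\calE$. Tree maximality forces every crossing edge to have common count at most $\lambda$. The cut quantity $h_E(T,U)^2$ then gives
\[
(B+g)\lambda=\Omega_q\bigl(g^2/h(T,U)^2\bigr):
\]
the numerator is at least about $g^2/(B+g)$ on the moved edge, and the denominator is $O_q(\lambda+g)$ plus the moved mass. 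Subtract a fixed common table so that the remaining cycle edges carry counts at least $\Omega(\lambda)$ and $\Omega(B)$. This yields an integer interval of reduced types of length $\Theta(\min\{B,\lambda\}/g)$ in units of $g$, or the appropriate product scale, with the two targets at adjacent points. Along this interval, a connected spanning path of edges has counts comparable to the interval length.

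\textbf{Polynomial step.} Apply Lemma~\ref{lem:profile}: the type-averaged acceptance probability agrees with a single polynomial of degree $O_q(c+1)$ to within a small constant on the whole interval. Substituting the one-parameter line produces a univariate polynomial that is bounded on a long integer interval and jumps by a constant between two adjacent points. The Nayak--Wu estimate, in its adjacent-integer form from the appendix, gives degree $\Omega(\sqrt{(B+g)\lambda}/g)$. Combining this with the cut inequality yields $c+1\ge a_q/h(T,U)$.

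\textbf{Main obstacle.} The hard step is the spanning-tree cut inequality. It must keep both count scales $B+g$ and $\lambda$ through the product, rather than only their minimum, and it must control the cut denominator so that the result matches $h_E$ for the specific $E$ chosen. The second risk is making sure Lemma~\ref{lem:profile}'s hypotheses, namely large counts on a connected spanning subgraph, hold on the whole interval.
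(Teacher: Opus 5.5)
Your outline follows the paper's route: symmetrization, conformal decomposition of $U-T$, a maximum spanning tree weighted by the common counts $C$, fundamental cycles, padding down to a single reduced cycle, Lemma~\ref{lem:profile}, and Lemma~\ref{lem:adjacent-grid}. However, the step you flag as the main obstacle is where an idea is missing, and as written it does not go through.

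For the cut $E$ obtained by deleting the minimum path edge, tree maximality and $\min\{T_{e'},U_{e'}\}\le C_{e'}$ control only the common part of the denominator. With $D_E\coloneqq T(E)+U(E)$ and $V_E\coloneqq\sum_{e'\in E}|T_{e'}-U_{e'}|$, you get $D_E\le 2q^2\lambda+V_E$. The moved mass $V_E$ is the difference of $T$ and $U$ on $E$ accumulated over all cycles of the decomposition, not $O(g)$. Nothing you wrote bounds it by $O_q(\lambda)$, so the cut quantity does not yet give the product inequality.

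The fix is to absorb $V_E$ using $h$ itself. Set $N_E\coloneqq\sum_{e'\in E}(\sqrt{T_{e'}}-\sqrt{U_{e'}})^2\le h^2D_E$. Cauchy--Schwarz gives $V_E\le\sqrt{N_E}\sqrt{2D_E}\le\sqrt2\,hD_E$, hence $D_E\le4q^2\lambda$ once $h$ is below a constant threshold. The complementary case is trivial by choosing $a_q$ small. Combine this with $N_E\ge(\sqrt{B+g}-\sqrt B)^2\ge g^2/(4(B+g))$, which holds because $B$ and $B+g$ lie between $T_e$ and $U_e$. This yields $(B+g)\lambda\ge g^2/(16q^2h^2)$.

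Your aside $h(T',U')\gtrsim_q h(T,U)$ points the wrong way. To transfer a bound proved for $(T',U')$ you would need $h(T',U')\lesssim h(T,U)$. That also follows only from the same absorption, via $T'(E)+U'(E)\ge D_E-V_E$.

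Two consequences of this inequality are also needed and are missing from your sketch.

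First, the second telescoping passes through the tables $S^-+g\sum_{i\le j}v_{e_i}F_{e_i}$. A tree edge can lose up to $2qg$ before compensating moves arrive. These tables are legal types only if $\lambda_e\ge 8qg$ on every path used. The paper extracts this from the product bound, $B\le\lambda$ (tree maximality), and $h<1/(64q^2)$.

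Second, the interval cannot have length $\Theta(\min\{B,\lambda\}/g)$ on both sides. The distinguished edge may have $B=0$, which is exactly the rare-symbol regime, and then a symmetric interval gives nothing. The interval must be asymmetric: about $B/(4g)$ steps in the direction that decreases the distinguished edge, and about $\lambda/(4g)$ steps in the other. Applying Lemma~\ref{lem:adjacent-grid} to these unequal arms is what produces $\sqrt{(a+1)(b+1)}\gtrsim\sqrt{(B+g)\lambda}/g$.

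Finally, for Lemma~\ref{lem:profile} you must keep the reduced total $N'=O(q\lambda)$, so that the remaining path edges carry at least $N'/(4q)$. This again uses $B\le\lambda$, with the distinguished entry capped at $\lfloor\lambda/2\rfloor$.
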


\subsection{Polynomial Degree Bounds for Adjacent Integers}

We first state the univariate degree estimate needed at the end of the argument. Boundedness is required only at integer points.

\begin{definition}[Polynomial Conventions]\label{def:polynomial-conventions}
The \emph{total degree} of a polynomial is the largest sum of exponents in a monomial with a nonzero coefficient. A polynomial is \emph{multilinear} if each variable has exponent at most one in every monomial. The \emph{multilinearization} $\operatorname{ml}(A)$ of a polynomial $A$ replaces every positive power $w_i^k$ by $w_i$ and combines equal monomials. Thus
$$
\operatorname{ml}(A)(w)=A(w)\quad(w\in\{0,1\}^L),
\qquad \deg\operatorname{ml}(A)\le\deg A.
$$
For an odd positive integer $\ell$, define the polynomial
$$
\Maj_\ell(z)\coloneqq\sum_{k=(\ell+1)/2}^{\ell}\binom\ell k z^k(1-z)^{\ell-k}.
$$
For $z\in[0,1]$, this is the probability that a strict majority of $\ell$ independent Bernoulli trials, each with success probability $z$, succeed.
\end{definition}

\begin{lemma}[Polynomial Degree Lower Bound, {\cite{NW}}]\label{lem:nayak-wu}
Let $L\ge1$ and $0\le r<s\le L$ be integers. Suppose a real polynomial $A$ in $L$ variables satisfies
$$
-\frac13\le A(w)\le\frac43\qquad(w\in\{0,1\}^L),
$$
and
$$
|A(w)|\le\frac13\quad\text{when }|w|=r,
\qquad
|A(w)-1|\le\frac13\quad\text{when }|w|=s.
$$
Choose $u\in\{r,s\}$ maximizing $|u-L/2|$. Then, for an absolute constant $c_{\mathrm{NW}}>0$,
$$
\deg A\ge c_{\mathrm{NW}}\left(\sqrt{\frac L{s-r}}+\frac{\sqrt{u(L-u)}}{s-r}\right).
$$
\end{lemma}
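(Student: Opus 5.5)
The plan is to reduce Lemma~\ref{lem:nayak-wu} to a univariate statement by symmetrization, and then combine Markov-type and Bernstein-type derivative bounds with the mean value theorem. First, replace $A$ by its multilinearization $\operatorname{ml}(A)$. This agrees with $A$ on $\{0,1\}^L$ and does not increase the degree. Then average it over $S_L$ (Minsky--Papert symmetrization). The result is a univariate polynomial $p$ with $\deg p\le d\coloneqq\deg A$, and for every integer $k\in\{0,\ldots,L\}$ the value $p(k)$ is the average of $A$ over the weight-$k$ slice. Averaging preserves the pointwise hypotheses, so $p(k)\in[-1/3,4/3]$ for all integers $0\le k\le L$, while $p(r)\le1/3$ and $p(s)\ge2/3$.

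Next, I would pass from boundedness at integers to boundedness on the real interval. By the Ehlich--Zeller/Coppersmith--Rivlin inequality, a degree-$d$ polynomial bounded by $M$ at the integers of $[0,L]$ satisfies $|p|\le M\exp(O(d^2/L))$ on all of $[0,L]$. The first term of the claimed bound then follows directly. If $d^2\ge L/C$, that term is trivially at most $O(d)$, since $s-r\ge1$. Otherwise $|p|=O(1)$ on $[0,L]$. The mean value theorem gives some $\xi\in[r,s]$ with $p'(\xi)\ge\frac{1}{3(s-r)}$, and Markov's inequality, $|p'|\le 2d^2\max|p|/L$, yields $d^2\gtrsim L/(s-r)$.

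For the second term I would use Bernstein's inequality, $|p'(x)|\le d\max|p|/\sqrt{x(L-x)}$ on $(0,L)$, at the same point $\xi$. The map $x\mapsto x(L-x)$ is concave, so its minimum over $[r,s]$ is attained at the endpoint farther from $L/2$, namely $u$. Hence $\sqrt{\xi(L-\xi)}\ge\sqrt{u(L-u)}$, which gives $d\gtrsim\sqrt{u(L-u)}/(s-r)$, provided $p$ is $O(1)$ on the whole interval.

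The main obstacle is that the second term can exceed $\sqrt L$, namely when $s-r$ is small, and in that regime the global Coppersmith--Rivlin bound is useless because $d^2\gg L$. I would first try a localization. Choose a window $I\supseteq[r,s]$ of length $\ell\asymp\min\{u,L-u\}$ that stays at distance $\gtrsim\ell$ from both ends of $[0,L]$. A degree-$d$ polynomial bounded at the integers of $I$ with $d\lesssim\ell$ is bounded on the central part of $I$; this is the interior version of the Coppersmith--Rivlin argument. On that central part, the Markov inequality for $I$ gives $|p'|\lesssim d/\ell$ there, and hence $d\gtrsim \ell/(s-r)\gtrsim\sqrt{u(L-u)}/(s-r)$ up to constants. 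If $d\gtrsim\ell$ the bound holds trivially, since $\sqrt{u(L-u)}/(s-r)\le\sqrt{u(L-u)}$ and this is comparable to $\ell$ only up to the factor $\sqrt{L/\ell}$. That mismatch is handled by the first term, or by instead taking $\ell\asymp\sqrt{u(L-u)}$ and checking the boundary distance.

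If this bookkeeping becomes delicate, I would fall back on Nayak and Wu's univariate theorem~\cite{NW}, which states exactly this bound for polynomials bounded at integer points. The symmetrization step above reduces the multivariate statement to that theorem.
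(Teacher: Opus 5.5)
Your fallback, symmetrizing and then invoking the univariate theorem of \cite{NW}, is exactly the paper's treatment: the paper gives no proof of this lemma and simply cites \cite{NW} as its error-$1/3$ specialization. Within your self-contained attempt, three parts are correct: the symmetrization, the first term (Coppersmith--Rivlin plus Markov), and the second term when $d^2\lesssim L$ (Bernstein at the mean-value point $\xi$, with concavity giving $\xi(L-\xi)\ge u(L-u)$).

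The remaining regime, $d^2\gtrsim L$, is not handled by your localization. This regime matters only when $s-r\ll\sqrt{\min\{u,L-u\}}$, since that is when the second term exceeds $\sqrt L$.

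\emph{The window has the wrong scale.} Take $u\le L/2$ by symmetry, so $u=r$, $\ell\asymp u$, and $\sqrt{u(L-u)}\asymp\sqrt{uL}$. An argument that uses only boundedness of $p$ on a window of length $\ell$ gives at best $d\gtrsim\ell/(s-r)$, because a rescaled Chebyshev polynomial has derivative $\asymp d/\ell$ at the window's centre. That bound is \emph{smaller} than the target $\sqrt{u(L-u)}/(s-r)$ by a factor $\asymp\sqrt{L/\ell}$. So the inequality $\ell/(s-r)\gtrsim\sqrt{u(L-u)}/(s-r)$ in your sketch runs the wrong way.

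\emph{Neither patch repairs this.}
For $r=\lfloor\sqrt L\rfloor$ and $s=r+1$, the claimed bound is of order $L^{3/4}$. The window argument and the first term each give only $\Omega(\sqrt L)$. Also, when $u\ll L$, a window of length $\asymp\sqrt{u(L-u)}\gg u$ that contains $u$ cannot keep distance comparable to its length from $0$.

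\emph{What is actually missing.} The scale $\sqrt{u(L-u)}$ comes from Bernstein's inequality on essentially the whole interval $[0,L]$. You therefore need a global interior estimate: if $|p(k)|\le M$ at all integers $0\le k\le L$ and $\deg p=d$, then $|p(x)|=O(M)$ whenever $x(L-x)\ge Cd^2$, even though $p$ can be exponentially large in $d^2/L$ near the endpoints. This is a Paturi/Rakhmanov-type theorem, not a routine interior variant of Coppersmith--Rivlin (even your short-window estimate with $d\lesssim\ell$ is of this type).

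Given that estimate, the proof closes. Either $d\ge c\sqrt{u(L-u)}$, which suffices because $s-r\ge1$. Or $\xi$ lies well inside the region where $p$ is bounded, and Bernstein there yields $d\gtrsim\sqrt{\xi(L-\xi)}/(s-r)\ge\sqrt{u(L-u)}/(s-r)$. Without proving that estimate, your argument, like the paper's, rests on the citation to \cite{NW}.
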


This is the error-$1/3$ specialization of the cited theorem. We use it below at consecutive weights; the amplification and change of variables needed for our integer interval are proved separately.

\begin{lemma}[Degree Needed to Separate Adjacent Integers]\label{lem:adjacent-grid}
Let $a,b$ be nonnegative integers. Suppose a real polynomial $F$ satisfies $|F(j)|\le M$ for all integers $-a\le j\le b+1$, and $|F(1)-F(0)|\ge\Delta>0$. Then
$$
\deg F\ge c_{M/\Delta}\sqrt{(a+1)(b+1)},
$$
where $c_{M/\Delta}>0$ depends only on the indicated ratio.
\end{lemma}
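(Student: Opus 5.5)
We will reduce the statement to Lemma~\ref{lem:nayak-wu} by a change of variables that turns the integer interval $\{-a,\ldots,b+1\}$ into Hamming weights on a cube. Set $L\coloneqq a+b+1$ and, for $w\in\{0,1\}^L$, put
$$
G(w)\coloneqq F(|w|-a),
$$
so that $|w|=a$ corresponds to $j=0$ and $|w|=a+1$ to $j=1$. Then $G$ is a polynomial of degree at most $\deg F$. Since $|w|$ ranges over $\{0,\ldots,L\}$, we have $j=|w|-a\in\{-a,\ldots,b+1\}$, and hence $|G|\le M$ on the cube.

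The gap $\Delta$ is a difference of values rather than a threshold, so we normalize next. Without loss of generality $F(1)-F(0)\ge\Delta$; otherwise replace $F$ by $-F$. Consider the affine rescaling
$$
P\coloneqq\frac{G-F(0)}{F(1)-F(0)}.
$$
It equals $0$ at weight $a$ and $1$ at weight $a+1$, and it is bounded by $2M/\Delta$ on the cube. We then amplify the boundedness to the range $[-1/3,4/3]$ required by Lemma~\ref{lem:nayak-wu}. Map values into $[0,1]$ via $z=(P+K)/(2K+1)$ with $K\coloneqq 2M/\Delta$. After this map, weight $a$ goes to $z_0=K/(2K+1)$ and weight $a+1$ goes to $z_1=(K+1)/(2K+1)$, whose gap is $1/(2K+1)$. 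Composing with a shifted and scaled majority polynomial, $\Maj_\ell$ applied to $\tfrac12+(z-\tfrac{z_0+z_1}2)$, separates these points: by a Chernoff bound, $\ell=O(K^2)$ pushes the two values within $1/3$ of $0$ and $1$ respectively, while keeping all cube values in $[0,1]$.

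The one delicate point is that the argument of $\Maj_\ell$ must stay in $[0,1]$. This holds because shifting by $\tfrac12-\tfrac{z_0+z_1}{2}=0$ is trivial here: the midpoint of $z_0$ and $z_1$ is exactly $1/2$. If needed, we would use an additional affine compression, which the Chernoff bound still tolerates at the cost of a constant depending only on $M/\Delta$. The resulting polynomial $A$ has degree at most $\ell\deg F=O_{M/\Delta}(\deg F)$ and satisfies the hypotheses of Lemma~\ref{lem:nayak-wu} with $r=a$ and $s=a+1$.

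Applying Lemma~\ref{lem:nayak-wu} with $s-r=1$ gives
$$
\deg A\ge c_{\mathrm{NW}}\sqrt{u(L-u)},
$$
where $u\in\{a,a+1\}$ is the one farther from $L/2$. Since $L-u\in\{b+1,b\}$, we obtain $u(L-u)\ge\min\{a(b+1),(a+1)b\}$. When $a,b\ge1$ this is at least $(a+1)(b+1)/4$. In the edge cases $a=0$ or $b=0$ the target $\sqrt{(a+1)(b+1)}$ has to be covered differently: the term $\sqrt{L/(s-r)}=\sqrt{L}\ge\sqrt{\max(a,b)+1}$ handles these, and when $a=b=0$ the bound is just $\deg F\ge1$, which holds since $F$ is nonconstant. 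Dividing by $\ell$ yields $\deg F\ge c_{M/\Delta}\sqrt{(a+1)(b+1)}$.

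The main obstacle is the amplification step, specifically checking that composing with $\Maj_\ell$ preserves boundedness on the entire cube and not only at the two target weights. This is straightforward once the values are affinely placed inside $[0,1]$.
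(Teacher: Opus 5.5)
Your proposal is correct and follows essentially the same route as the paper: place the two target values symmetrically about $1/2$ inside $[0,1]$ by an affine map, amplify with $\Maj_\ell$ for some $\ell=O((M/\Delta)^2)$, substitute the Hamming weight shifted by $a$ on $L=a+b+1$ variables, and apply Lemma~\ref{lem:nayak-wu} at the consecutive weights $a,a+1$. The differences are cosmetic: you use a Chernoff bound where the paper uses Chebyshev, and you split into the cases $a,b\ge1$ versus $a=0$ or $b=0$, whereas the paper uses the single estimate $\sqrt L+\sqrt{u(L-u)}\ge\sqrt{L+ab}=\sqrt{(a+1)(b+1)}$.
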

The amplification and change of variables reducing this statement to Lemma~\ref{lem:nayak-wu} are given in Appendix~\ref{sec:aux-grid}.

\begin{samepage}
\subsection{Quantum Lower Bound for One Alternating Cycle}

The following uniform polynomial approximation lemma is the analytic input to the single-cycle lower bound. Its proof appears in Appendix~\ref{sec:polynomial}. A row or column symbol is active if its corresponding margin is positive.

\begin{lemma}[Uniform Polynomial Approximation]\label{lem:profile}
Fix an integer $q\ge2$, $\beta>0$, total length $N\ge1$, and row and column margins with at most $q$ active symbols per side. Let $G$ be a connected spanning bipartite subgraph on the active symbols. For any $c$-qubit protocol with prior entanglement, let $p(S)$ be its average acceptance probability on type $S$. For each $\varepsilon\in(0,1)$, there is a single real polynomial $P$ in the entries of $S$, of total degree at most
$$
K_{q,\beta}\bigl(c+\log(1/\varepsilon)\bigr),
$$
such that $|p(S)-P(S)|\le\varepsilon$ for every integer type with the fixed margins satisfying
$$
S_e\ge\beta N\qquad\text{for every }e\in G.
$$
Entries outside $G$ may be zero. Here $K_{q,\beta}>0$ depends only on $q$ and $\beta$.
\end{lemma}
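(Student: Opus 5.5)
Our plan follows the overview: control the acceptance matrix through the $\gamma_2$ factorization, average over the symmetric group to get a function of the joint type, and then control the high-degree part by multislice contraction.

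\textbf{Step 1: factorize the acceptance matrix.} Fix the margins and let $X_\rho\subseteq[q]^N$ and $Y_\sigma\subseteq[q]^N$ be the strings with these histograms. Let $A(x,y)$ be the protocol's acceptance probability on $(x,y)\in X_\rho\times Y_\sigma$. By the Linial--Shraibman factorization for $c$-qubit protocols with entanglement~\cite{LinialShraibman09,SS}, we can write $A(x,y)=\langle u_x,v_y\rangle$ with $\|u_x\|,\|v_y\|\le 2^c$. Averaging over $\pi\in S_N$, we get
$$
p(S)=\E_{(x,y)\text{ of type }S}A(x,y).
$$
Equivalently, $p(S)$ is the matrix $A$ weighted against the uniform coupling of type $S$, that is, against the orbit indicator of the diagonal $S_N$-action.

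\textbf{Step 2: multislice decomposition.} Decompose functions on the multislice $X_\rho$ (and on $Y_\sigma$) into the $S_N$-isotypic components $V_\lambda$, indexed by partitions $\lambda$ dominated by $\rho$. Let $d$ be the number of boxes below the first row of $\lambda$; this is the degree. Expand $p(S)$ as $\sum_d p_d(S)$, where $p_d$ collects the pairing of the degree-$d$ components of $u$ and $v$.

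\emph{Low-degree part.} Each degree-$d$ component is spanned by $d$-juntas. The average of a $d$-junta product over a uniformly random pair of type $S$ is a polynomial of degree $\le d$ in the entries $S_e$. This follows from falling-factorial formulas: the probability that a fixed set of $d$ positions carries prescribed symbol pairs equals $\prod_e \fall{S_e}{k_e}/\fall{N}{d}$. The denominator is a constant because the margins are fixed, so the degree-$d$ part is a polynomial of degree $\le d$ in $S$. Crucially, its coefficients depend only on the protocol and the margins, not on $S$.

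\emph{High-degree part.} The operator mapping $v_y$ to the type-$S$ conditional average of $u_x$ is a Markov operator between multislices. We will bound its restriction to degree $>D$ using the multislice contraction theorem of~\cite{BKLM}. That theorem gives a bound $\theta^{D}$ with $\theta=\theta_{q,\beta}<1$ whenever the bipartite coupling graph has every edge in $G$ carrying mass at least $\beta N$. The intuition is that connectivity of $G$ with linear edge masses makes the coupling "mix" every coordinate. Combining this with $\|u\|\|v\|\le 4^c$ and Cauchy--Schwarz, the tail is at most $4^c\theta^D$. Choosing $D=K_{q,\beta}(c+\log(1/\varepsilon))$ makes the tail at most $\varepsilon$. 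We then take $P=\sum_{d\le D}p_d$.

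\textbf{Main obstacle.} We expect the hardest part to be making the contraction uniform over all $S$ in the region while keeping the truncation independent of $S$. We would truncate by the $S$-independent isotypic projections on each side separately, $\Pi_{\le D}u$ and $\Pi_{\le D}v$. We would then bound the three cross terms using the operator norm on degree-$>D$ components. This requires the BKLM bound in the form "the correlation between the degree-$>d$ parts of the two marginals under the coupling $S$ decays like $\theta^{d}$." We would also need to verify that the hypotheses hold under only the connected-spanning-subgraph condition, with entries off $G$ allowed to be zero. If the direct statement demands all entries large, we would first reduce by grouping: condition on the coordinates carrying edges outside $G$, which costs at most a change of $\beta$ and $q$-dependent constants.
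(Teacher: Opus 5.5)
Your overall architecture matches the paper's. It uses a $\gamma_2$/trace-norm bound of $4^c$ on the normalized acceptance matrix and truncation by the $S$-independent degree projections $P_{X,d},P_{Y,d}$; these commute with $B_S$, so the cross terms actually vanish rather than needing bounds. The low-degree part becomes a polynomial via falling-factorial formulas, and multislice contraction handles the tail. The gap is in the tail step, which is where the lemma's content lies.

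\textbf{The contraction theorem does not apply as you state it.} You describe the BKLM theorem as giving decay whenever the edges of $G$ carry mass at least $\beta N$, but that is not its hypothesis. It requires a measure on $X\times X$ (not $X\times Y$) with uniform marginals that is $\alpha$-admissible, meaning every \emph{nonzero} atom of the one-coordinate marginal has probability at least $\alpha$. It also requires connectivity at the level of whole strings and an $(\alpha,\zeta)$-coupling to the product analogue. The one-step type-$S$ law has one-coordinate marginal $S/N$. The lemma allows off-$G$ entries with $0<S_e<\alpha N$ for every fixed $\alpha$, e.g.\ $S_e=1$ or $S_e=\sqrt N$. So admissibility fails, and the theorem cannot be invoked uniformly over eligible $S$.

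\textbf{Your fallback does not repair this.} Conditioning on the coordinates carrying off-$G$ edges fails because those entries can have any size up to $\beta N$. You may have to fix superconstantly or even linearly many coordinates. Fixing $m$ coordinates of a function in $V_{X,>d}$ only guarantees degree $>d-m$ on the smaller multislice, which destroys the decay. There is also no $S$-independent threshold separating ``small'' from ``large'' entries, so this is not merely a change of $\beta$ and $q$-dependent constants.

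\textbf{The missing idea is to contract a fixed power of $L_S\coloneqq B_S^*B_S$ rather than $B_S$ itself.} This operator lives on $X\times X$. Its one-coordinate transition matrix is $H(a,a')=\sum_b S_{ab}S_{a'b}/(\rho_a\sigma_b)$. Its entries are at least $\beta^2$ on the diagonal and between row symbols sharing a $G$-neighbour. By connectivity of $G$, every entry of $H^q$ is at least $\beta^{2q}$. Hence the stationary endpoint law of $L_S^q$ is $\beta^{2q+1}$-admissible, however small the off-$G$ entries are.

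You would still need to verify the other hypotheses, which your proposal does not mention.
\begin{itemize}
\item String-level connectivity: swaps of two tokens whose symbols share a $G$-neighbour generate the whole multislice.
\item The coupling to the product law: the paper applies Azuma to the endpoint counts, obtaining $\zeta=O_q(N^{-1/2})$ with exponential tails.
\end{itemize}
Finally, $V_{X,>d}$ is invariant under the positive self-adjoint $L_S$. This gives $\|B_S|_{V_{X,>d}}\|=\|L_S^q|_{V_{X,>d}}\|^{1/(2q)}$, which is the uniform bound $K e^{-\gamma d}$ you need.

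A minor point: the product of two $d$-juntas yields a polynomial of degree at most $2d$, not $d$. This is harmless.
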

\end{samepage}

The polynomial is chosen for the fixed protocol and margins, rather than separately for each type. Consequently it controls an entire interval of types, including intermediate types outside the function's promise. Only a connected spanning subgraph needs large counts, which will allow one edge of the cycle below to have a small or zero count.

A simple cycle in the bipartite symbol graph has even length $2r$, with $2\le r\le q$. Assigning alternating signs to its edges gives a table $v=v_+-v_-$, where $v_+,v_-$ are disjoint $0/1$ tables. Every row and column sum of $v$ is zero. Adding $gv$ to a type therefore preserves its margins whenever the resulting entries are nonnegative.

\begin{lemma}[Quantum Lower Bound for an Alternating Cycle]\label{lem:single-cycle}
Suppose two types differ by $g$ moves around a simple alternating cycle, where $g\ge1$ is an integer. Define their common counts to be their entrywise minima. Their common count at one distinguished edge is $B$, and their common counts at all other cycle edges are at least $\Lambda$. Assume $B\le2\Lambda$ and $\Lambda\ge2g$. If a $c$-qubit protocol has a type-averaged acceptance gap of at least $1/(6q^3)$ between the two types, then
\begin{equation}\label{eq:single-cycle-bound}
c+1\ge b_q\frac{\sqrt{(B+g)\Lambda}}{g}
\end{equation}
for a constant $b_q>0$.
\end{lemma}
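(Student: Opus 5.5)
We prove Lemma~\ref{lem:single-cycle} by placing the two types on a one-parameter integer line through the cycle and applying Lemma~\ref{lem:profile} along it, followed by Lemma~\ref{lem:adjacent-grid}.

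\textbf{Setup.} Let the cycle have alternating table $v=v_+-v_-$. Let the two types be $S_0$ and $S_0+gv$, and let $m$ be their common table (entrywise minimum). Orient $v$ so that the distinguished edge $e_0$ lies in $v_-$. Consider the family
$$
S(t)\coloneqq m+\tfrac{g}{2}\bigl(|v|\bigr)+t\,v\qquad\text{(suitably shifted)},
$$
or more concretely the points $S(j)=S_0+jv$ for integers $j$. This family has the fixed margins. Along it, entries on $v_+$ edges increase and entries on $v_-$ edges decrease. It contains both types at $j=0$ and $j=g$, and stays nonnegative for $-a\le j\le g+b$, with $a,b$ governed by the common counts. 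Since the type at $e_0$ is at least $B$ at both endpoints, the line extends about $B$ steps beyond the endpoint where $e_0$ is smallest. Every other cycle edge has count at least $\Lambda$, so the line extends about $\Lambda/2$ steps in the other direction while keeping those edges at least $\Lambda/2$.

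\textbf{Polynomial on the line.} Take $G$ to be the cycle minus $e_0$, which is a path on the cycle's symbols. Enlarge it to a connected spanning subgraph on the active symbols using edges that have large counts. This requires a lower bound of the form $\beta N$ with $N$ comparable to $\Lambda$. I would apply Lemma~\ref{lem:profile} to a reduced instance: subtract a fixed common table outside the cycle, which corresponds to symmetrizing a protocol over the remaining coordinates with those entries hard-wired. This produces margins of total $N=\Theta_q(\Lambda+B)$ in which only cycle edges remain, and in which the path $G$ has entries at least $\Lambda/2\ge\beta N$. The restricted protocol still uses $c$ qubits with entanglement, since Alice and Bob can locally fix the extra coordinates.

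Lemma~\ref{lem:profile} with $\varepsilon=1/(100q^3)$ gives a polynomial $P$ of degree $O_q(c+1)$ in the entries. Composing with $S(j)$ yields a univariate $F(j)=P(S(j))$ of the same degree. It satisfies $|F|\le 2$ on the integer range where the path edges stay at least $\Lambda/2$, and $|F(g)-F(0)|\ge 1/(6q^3)-2\varepsilon$.

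\textbf{Reduction to adjacent integers.} By telescoping, some consecutive pair $j_0,j_0+1$ in $[0,g]$ has $|F(j_0+1)-F(j_0)|\ge\Omega_q(1/g)$, which is too weak. Instead, rescale. Substitute $j=j_0+g\,i$ for integer $i$, where $j_0$ is chosen so the endpoints correspond to $i=0,1$. The function $F(j_0+gi)$ is bounded on $-a/g\le i\le b/g+1$ and has a gap of $\Omega_q(1)$ at $i=0,1$. Here the admissible range extends about $(B+g)/g$ steps on the $e_0$ side and about $\Lambda/(2g)$ steps on the other side; the hypothesis $\Lambda\ge2g$ guarantees the latter is at least $1$. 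Lemma~\ref{lem:adjacent-grid} then gives
$$
\deg F\ge c_q\sqrt{\tfrac{B+g}{g}\cdot\tfrac{\Lambda}{g}},
$$
which is exactly~\eqref{eq:single-cycle-bound} after dividing by the degree constant. The hypothesis $B\le2\Lambda$ keeps $N=O_q(\Lambda)$, so that $\beta$ depends only on $q$.

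\textbf{Main obstacle.} The delicate step is the reduction to a fixed-margin instance of total $N=O_q(\Lambda)$ in which the spanning subgraph has entries at least $\beta N$ uniformly along the whole interval, so that Lemma~\ref{lem:profile} applies with $\beta$ depending only on $q$. This requires checking that fixing the common non-cycle coordinates preserves the type-averaged acceptance probabilities and the $c$-qubit cost. It also requires showing that the interval endpoints at distance $\Theta(B+g)$ and $\Theta(\Lambda)$ keep every path edge above $\Lambda/2$.
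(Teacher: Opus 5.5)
Your plan has the same structure as the paper's proof. You remove a common table by padding, take $G$ to be the cycle minus the distinguished edge, and apply Lemma~\ref{lem:profile} along the line through the two types with step $g$; the paper's $X_z=A_0+gv_-+zg(v_+-v_-)$ is exactly your rescaled line. You then finish with Lemma~\ref{lem:adjacent-grid}. But the reduction step, as you state it, fails.

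\textbf{The reduction.} Subtracting only the common table \emph{outside} the cycle leaves every cycle edge at its full common count. The hypothesis bounds those counts only from \emph{below} by $\Lambda$. In the application inside Theorem~\ref{thm:inverse}, tree-path edges other than the minimizing one can have counts of order $n$, while $\Lambda=\lambda_{e_\ast}/2$ is small. So your claim that the reduced total is $\Theta_q(\Lambda+B)$ is false in general. The requirement $S_e\ge\beta N$ on the path then fails for any $\beta$ depending only on $q$, and Lemma~\ref{lem:profile} gives no useful degree bound.

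The missing step is to pad away the excess on the cycle as well. The paper keeps only a table $A_0$, with $H=\lfloor\Lambda\rfloor$ on each non-distinguished cycle edge and $B'=\min\{B,H\}$ on the distinguished edge. It pads with fixed strings of type $C-A_0$ and then applies a uniform common permutation to \emph{all} coordinates, padding included. This gives $\widehat p(X)=p(X+C-A_0)$ for every reduced type $X$, and a reduced total $N'\le 3qH$. Hard-wiring the padding at fixed positions without permuting it would not reproduce the type average of a non-symmetrized protocol.

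\textbf{The interval.} On the side where the distinguished edge decreases, every cycle edge of the same sign decreases with it, and those edges belong to $G$. So you cannot extend ``about $B$ steps'' on that side. The extension must be capped at a constant fraction of $\min\{B,\Lambda\}$ so that the path entries stay $\Omega(\Lambda)$.

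This is where the paper uses $B\le 2\Lambda$: it gives $B'\ge B/4$. Hence $a=\lfloor B'/(4g)\rfloor$ still satisfies $a+1\ge (B+g)/(32g)$, while $b=\lfloor H/(4g)\rfloor$ gives $b+1\ge \Lambda/(8g)$. With these two corrections your argument goes through and yields the stated bound.
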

\begin{proof}
Let $e$ be the distinguished edge and orient the endpoints so that $S^+_e-S^-_e=g$. Writing $C$ for their entrywise minimum and $v=v_+-v_-$ for the alternating cycle gives
$$
S^-=C+gv_-,\qquad S^+=C+gv_+,\qquad C_e=B.
$$
Let $H\coloneqq\lfloor\Lambda\rfloor$ and $B'\coloneqq\min\{B,H\}$. Since $g$ is an integer and $\Lambda\ge2g$, we have $H\ge2g$ and $H\ge\Lambda/2$. Let $A_0$ have count $B'$ at $e$, count $H$ on every other cycle edge, and zero elsewhere. Then $A_0\le C$ entrywise, and
$$
\widehat S^-\coloneqq A_0+gv_-,\qquad
\widehat S^+\coloneqq A_0+gv_+
$$
are nonnegative integer types with identical reduced row and column margins.

To transfer the protocol to these reduced types, fix padding strings of type $C-A_0$: use $(C-A_0)_{ab}$ coordinates of pair $(a,b)$. Each party appends its padding string, the parties apply a common uniform coordinate permutation, and they run the original protocol. The padded endpoint types are $S^-,S^+$. A uniform permutation sends any fixed pair uniformly over its type class, because the action is transitive and all fibers have equal size. Let $p$ and $\widehat p$ denote the original and constructed protocols' type-averaged acceptance functions, respectively. Then
$$
\widehat p(\widehat S^-)=p(S^-),\qquad \widehat p(\widehat S^+)=p(S^+).
$$
The gap is preserved. Padding and permutations are local, and prior entanglement supplies the shared permutation, so communication remains at most $c$ qubits.

If the cycle has $2r$ edges, its reduced total length is
\begin{equation}\label{eq:reduced-length}
N'=B'+(2r-1)H+rg\le2rH+\frac{rH}{2}\le3qH,
\end{equation}
using $B'\le H$, $g\le H/2$, and $r\le q$. Set
\begin{equation}\label{eq:integer-types}
X_z\coloneqq A_0+gv_-+zg(v_+-v_-),\qquad -a\le z\le1+b,
\end{equation}
where $z$ is an integer and
$$
a\coloneqq\left\lfloor\frac{B'}{4g}\right\rfloor, \qquad b\coloneqq\left\lfloor\frac H{4g}\right\rfloor.
$$
These tables have fixed margins and total $N'$. The distinguished entry is at least $B'-ag\ge3B'/4$. Every other edge of $v_+$ has count at least $H-ag\ge H-B'/4\ge3H/4$, and every edge of $v_-$ has count at least $H-bg\ge3H/4$. Thus every $X_z$ is a legal type. For these reduced margins, the active symbols are exactly the vertices of the cycle. Deleting $e$ therefore leaves a connected spanning path $G$ on the active symbols, and throughout the interval
$$
(X_z)_{e'}\ge\frac{3H}{4}\ge\frac{N'}{4q} \qquad(e'\in G).
$$

Apply Lemma~\ref{lem:profile} with $N=N'$, $\beta=1/(4q)$, and $\varepsilon=1/(24q^3)$. It gives one polynomial $P$ satisfying
$$
\begin{gathered}
|\widehat p(X_z)-P(X_z)|\le\varepsilon
\quad(z=-a,\ldots,1+b),\\
\deg P\le K_{q,1/(4q)}\bigl(c+\log(24q^3)\bigr).
\end{gathered}
$$
Since the entries of $X_z$ are affine in $z$, the univariate polynomial $F(z)\coloneqq P(X_z)$ has degree at most $D_q(c+1)$ for a constant $D_q$ depending only on $q$. It satisfies $|F(z)|\le1+\varepsilon\le2$ at every integer in the interval, including types outside the promise. At $X_0=\widehat S^-$ and $X_1=\widehat S^+$,
$$
|F(1)-F(0)|\ge |\widehat p(X_1)-\widehat p(X_0)|-2\varepsilon \ge\frac1{6q^3}-2\varepsilon=\frac1{12q^3}.
$$
Lemma~\ref{lem:adjacent-grid}, with $M=2$ and $\Delta=1/(12q^3)$, therefore gives
\begin{equation}\label{eq:interval-degree}
D_q(c+1)\ge\deg F\ge c'_q\sqrt{(a+1)(b+1)}.
\end{equation}
Here $c'_q>0$ depends only on $M/\Delta=24q^3$.

Finally, $H\ge\Lambda/2$ and $B\le2\Lambda\le4H$ imply $B'\ge B/4$. Using $\lfloor t\rfloor+1\ge\max\{1,t\}$ and $\max\{1,B'/(4g)\}\ge(B'+g)/(8g)$ yields
\begin{align}
a+1&\ge\frac{B'+g}{8g}\ge\frac{B+g}{32g},\notag\\
b+1&\ge\frac H{4g}\ge\frac{\Lambda}{8g}.
\label{eq:interval-arms}
\end{align}
Thus $\sqrt{(a+1)(b+1)}\ge\sqrt{(B+g)\Lambda}/(16g)$. Substitution into~\eqref{eq:interval-degree} proves the claim.
\end{proof}

\subsection{From Alternating Cycles to General Types}

The next lemma gives the tree-path count bounds needed to keep the intermediate types nonnegative in the fundamental-cycle decomposition.

\begin{lemma}[Counts Along a Maximum Spanning Tree]\label{lem:tree-counts}
Let $T\ne U$ be types with the same margins, and write $h\coloneqq h(T,U)$. Assume $h<1/(64q^2)$. Let $v=v_+-v_-$ be the signed matrix of a simple alternating cycle, let $g\ge1$ be an integer, and suppose
$$
S^-=C+gv_-,\qquad S^+=C+gv_+,
$$
where $C$ is a nonnegative integer table and both $S^-$ and $S^+$ lie entrywise between $T$ and $U$. Give each edge $e$ of the complete bipartite graph on the $q$ row and $q$ column symbols weight $C_e$, and let $\tau$ be a spanning tree of maximum total weight. For each edge $e$ of the cycle outside $\tau$, let $P_e$ be the tree path joining its endpoints and define
\begin{equation}\label{eq:B-lambda}
\lambda_e\coloneqq\min_{e'\in P_e}C_{e'}.
\end{equation}
Then, for every such edge $e$,
$$
C_e\le\lambda_e,\qquad (C_e+g)\lambda_e\ge\frac{g^2}{16q^2h^2},\qquad \lambda_e\ge8qg.
$$
\end{lemma}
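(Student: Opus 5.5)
The plan is to combine the cycle and cut properties of a maximum-weight spanning tree with one well-chosen cut event $E\in\calE$, and then compare $h_E(T,U)\le h$ against the counts on that cut. Throughout, fix a cycle edge $e\notin\tau$ and write $\lambda\coloneqq\lambda_e$.

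\textbf{Step 1 (cycle property).} For every $e'\in P_e$ we have $C_{e'}\ge C_e$. Otherwise, exchanging $e'$ for $e$ in $\tau$ would yield a spanning tree of strictly larger weight. Hence $C_e\le\lambda$.

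\textbf{Step 2 (cut property).} Choose $e^\ast\in P_e$ with $C_{e^\ast}=\lambda$. Deleting $e^\ast$ from $\tau$ splits the $2q$ symbol vertices into two sides. The crossing edges form an event $E$ of the form~\eqref{eq:cuts}, and $E\in\calE$ because it contains $e^\ast$. Since $P_e$ crosses the cut exactly once, $e$ also crosses it, so $e\in E$. By maximality of $\tau$, every crossing edge $f$ satisfies $C_f\le\lambda$, since otherwise $f$ could replace $e^\ast$.

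Next I relate $T,U$ to $C$ on $E$. Put $m_f\coloneqq\min\{T_f,U_f\}$ and $d_f\coloneqq|T_f-U_f|$. Both $S^\pm$ lie entrywise between $T$ and $U$, and $C=\min(S^-,S^+)$ entrywise, so $m_f\le C_f\le\lambda$ for every $f\in E$. At $e$ the two tables $S^\pm$ differ by exactly $g$ while both lie between $T_e$ and $U_e$, so $d_e\ge g$ and $m_e\le C_e$.

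\textbf{Step 3 (estimating $h_E$).} I will use two elementary facts:
\[
(\sqrt t-\sqrt u)^2=\frac{(t-u)^2}{(\sqrt t+\sqrt u)^2}\ge\frac{d^2}{4(m+d)},
\]
and the monotonicity of $d\mapsto d^2/(m+d)$ in $d$. The denominator of $h_E^2$ is
\[
T(E)+U(E)=\sum_{f\in E}(2m_f+d_f)\le 2q^2\lambda+D,\qquad D\coloneqq\sum_{f\in E}d_f .
\]
First I show that $D$ is small. By Cauchy--Schwarz, $\sum_f d_f^2/(m_f+d_f)\ge D^2/(q^2\lambda+D)$, so
\[
h^2\ge\frac{D^2}{4(q^2\lambda+D)(2q^2\lambda+D)}.
\]
If $D\ge q^2\lambda$, the right side is bounded below by an absolute constant, which contradicts $h<1/(64q^2)$. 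Hence $D<q^2\lambda$. Feeding this back into the same inequality gives $h\ge D/(c\,q^2\lambda)$ for an absolute constant $c$. Since $D\ge d_e\ge g$, this yields $\lambda\ge g/(c\,q^2h)$. Using $h<1/(64q^2)$ then gives $\lambda\ge 8qg$, provided the constants are tracked carefully; this is where the threshold $1/(64q^2)$ gets used.

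For the product bound, I keep only the term for $e$ in the numerator:
\[
h^2\ge\frac{g^2/\bigl(4(C_e+g)\bigr)}{2q^2\lambda+D}\ge\frac{g^2}{4(C_e+g)\cdot 3q^2\lambda}.
\]
This uses monotonicity in $d$ together with $m_e\le C_e$, and the bound $D<q^2\lambda$ from above. Rearranging gives $(C_e+g)\lambda\ge g^2/(12q^2h^2)$, which is stronger than the claimed $g^2/(16q^2h^2)$.

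I expect the main obstacle to be bounding the denominator $T(E)+U(E)$. The difficulty is that $T$ and $U$ may be much larger than $C$ on crossing edges. My resolution is that only $\min\{T_f,U_f\}\le C_f\le\lambda$ is controlled, while the excess $d_f$ is charged back to the numerator through the bound $D\lesssim q^2\lambda$ forced by the smallness of $h$. The remaining work is bookkeeping of the absolute constants so that they match $16q^2$ and $8q$.
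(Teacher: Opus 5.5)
Steps 1 and 2 and the product bound in Step 3 are correct and essentially follow the paper's argument. Your case split showing $D<q^2\lambda$ is a valid substitute for the paper's one-line estimate $\sum_{f\in E}|T_f-U_f|\le\sqrt2\,h\,(T(E)+U(E))$. It even gives the constant $12$ in place of $16$.

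\textbf{The gap is in the third conclusion.} Your route gives $\lambda\ge g/(c\,q^2h)$. Substituting $h<1/(64q^2)$ then yields only $\lambda>64g/c$, because the two factors of $q^2$ cancel. The resulting bound does not grow with $q$. With your constant $c=\sqrt{24}$ it is about $13g$, already below $8qg\ge16g$ at $q=2$. No absolute constant $c$ can give $8qg$ for large $q$, so the claim that careful constant tracking yields $\lambda\ge8qg$ is false.

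The linear dependence on $q$ is genuinely needed. In the proof of Theorem~\ref{thm:inverse}, tree edges can lose up to $2qg$ during the fundamental-cycle moves, and $\lambda_{e_i}\ge8qg$ is what keeps them at least $\tfrac34\lambda_{e_i}$. The loss comes from lower-bounding the numerator by Cauchy--Schwarz over the whole cut, which pays $\sum_{f\in E}m_f\le q^2\lambda$. The single edge $e$ pays only $m_e\le C_e\le\lambda$.

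\textbf{The repair uses only ingredients you already have, and it is how the paper proceeds.} Your $q$-independent bound still gives $\lambda\ge g$. Together with Step 1, this yields $C_e+g\le2\lambda$. Substituting into your product bound gives $2\lambda^2\ge g^2/(12q^2h^2)$, that is,
\[
\lambda\ge\frac{g}{\sqrt{24}\,qh}>\frac{64\,q\,g}{\sqrt{24}}>8qg .
\]
Alternatively, as in the paper, $\lambda\ge g$ follows from the product bound itself. If $\lambda<g$, then $(C_e+g)\lambda<2g^2$. But $h<1/(64q^2)$ forces the product to exceed $g^2/(12q^2h^2)>300\,q^2g^2$, a contradiction.
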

\begin{proof}
A tree cannot contain the entire cycle. Fix any cycle edge $e\notin\tau$, choose $e_0\in P_e$ with $C_{e_0}=\lambda_e$, and let $E$ be the cut between the two components of $\tau-\{e_0\}$. The endpoints of $e$ lie in different components, so $e\in E$. Labelling the components $0,1$ expresses $E$ as in~\eqref{eq:cuts}; hence $E\in\calE$. For every $e'\in E$, replacing $e_0$ by $e'$ gives a spanning tree. Maximality of $\tau$ therefore gives $C_{e'}\le\lambda_e$, in particular $C_e\le\lambda_e$.

Define
$$
\begin{aligned}
D_E&\coloneqq\sum_{e'\in E}(T_{e'}+U_{e'}),\qquad
V_E\coloneqq\sum_{e'\in E}|T_{e'}-U_{e'}|,\\
N_E&\coloneqq\sum_{e'\in E}
(\sqrt{T_{e'}}-\sqrt{U_{e'}})^2.
\end{aligned}
$$
The entrywise hypothesis implies $\min\{T_{e'},U_{e'}\}\le\min\{S^-_{e'},S^+_{e'}\}=C_{e'}$. Using $t+u=2\min\{t,u\}+|t-u|$ and $|E|\le q^2$, we obtain
\begin{equation}\label{eq:cut-mass-first}
\begin{aligned}
D_E&=2\sum_{e'\in E}\min\{T_{e'},U_{e'}\}+V_E\\
&\le2\sum_{e'\in E}C_{e'}+V_E\le2q^2\lambda_e+V_E.
\end{aligned}
\end{equation}
Definition~\ref{def:separation} gives $N_E\le h^2D_E$. Cauchy--Schwarz and $(\sqrt t+\sqrt u)^2\le2(t+u)$ give
\begin{align}
V_E
&=\sum_{e'\in E}|\sqrt{T_{e'}}-\sqrt{U_{e'}}|
(\sqrt{T_{e'}}+\sqrt{U_{e'}})\notag\\
&\le\sqrt{N_E}\sqrt{2D_E}\le\sqrt2\,hD_E.
\label{eq:variation-bound}
\end{align}
These inequalities also hold when $D_E=0$. Substituting into~\eqref{eq:cut-mass-first} and using $1-\sqrt2h\ge1/2$ yields
\begin{equation}\label{eq:cut-mass}
D_E\le4q^2\lambda_e.
\end{equation}

The endpoint counts at $e$ are $C_e,C_e+g$, both between $T_e,U_e$. Monotonicity of the square root and rationalization therefore give
\begin{align}
N_E&\ge(\sqrt{T_e}-\sqrt{U_e})^2 \ge(\sqrt{C_e+g}-\sqrt{C_e})^2\notag\\
&=\frac{g^2}{(\sqrt{C_e+g}+\sqrt{C_e})^2} \ge\frac{g^2}{4(C_e+g)}.
\label{eq:anchor-difference}
\end{align}
Combining this with $N_E\le h^2D_E\le4q^2h^2\lambda_e$ proves
\begin{equation}\label{eq:capacity-product}
\boxed{(C_e+g)\lambda_e\ge\frac{g^2}{16q^2h^2}.}
\end{equation}
If $\lambda_e<g$, then $C_e\le\lambda_e$ gives $(C_e+g)\lambda_e<2g^2$, whereas~\eqref{eq:capacity-product} and $h<1/(64q^2)$ give $(C_e+g)\lambda_e>256q^2g^2$. Thus $\lambda_e\ge g$, so $C_e+g\le2\lambda_e$ and $2\lambda_e^2\ge g^2/(16q^2h^2)$. Consequently,
\begin{equation}\label{eq:large-lambda}
\lambda_e\ge\frac{g}{4\sqrt2\,qh}\ge\frac{g}{8qh}\ge8qg.
\end{equation}
Since $e$ was arbitrary, all three conclusions hold for every non-tree edge of the cycle.
\end{proof}

We now combine the spanning-tree count estimates with the single-cycle lower bound to handle arbitrary pairs of types with the same margins. The target is Theorem~\ref{thm:inverse}, restated below.

\begin{inverserestatement}[Quantum Lower Bound for Two Types, Restated]
For each $q\ge2$, there is $a_q>0$ such that the following holds. If $T\ne U$ are types of total $n$ with the same margins, and a $c$-qubit protocol with arbitrary prior entanglement distinguishes every input of type $T$ from every input of type $U$ with error at most $1/3$, then
$$
c+1\ge\frac{a_q}{h(T,U)}.
$$
\end{inverserestatement}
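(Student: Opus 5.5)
We plan to reduce to Lemma~\ref{lem:single-cycle} via two telescoping steps, using Lemma~\ref{lem:tree-counts} to supply the count hypotheses.

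\textbf{Preliminary reductions.} Symmetrize the protocol with a shared uniform permutation obtained from the prior entanglement. Its acceptance probability then depends only on the joint type, and the gap satisfies $|p(T)-p(U)|\ge1/3$. If $h\coloneqq h(T,U)\ge1/(64q^2)$, the claim is trivial for small $a_q$, since $c+1\ge1$. So assume $h<1/(64q^2)$.

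\textbf{First telescoping: conformal cycle decomposition.} The difference $U-T$ is an integer circulation on $K_{q,q}$. We decompose it conformally into at most $q^2$ moves $g_iv^{(i)}$ along simple alternating cycles; each move zeroes at least one entry of the remaining difference. Partial sums then remain entrywise between $T$ and $U$, so they are legal types. Telescoping gives one move with gap at least $1/(3q^2)$, between types $S^-=C+gv_-$ and $S^+=C+gv_+$, both of which lie between $T$ and $U$.

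\textbf{Second telescoping: fundamental cycles of a spanning tree.} Take a maximum-weight spanning tree $\tau$ for the weights $C_e$. The cycle vector $v$ equals the signed sum of the fundamental cycles $Z_e$ of its non-tree edges $e$; there are at most $2q$ of these. Move from $S^-$ to $S^+$ by adding $gZ_e$ one edge at a time. Along the way:
\begin{itemize}
\item non-tree cycle entries go from $C_e$ to $C_e+g$;
\item a tree edge $e'$ is touched by at most $2q$ fundamental cycles, each changing it by $\pm g$, so it stays at least $C_{e'}-2qg\ge\lambda_e/2$ on the relevant path, because Lemma~\ref{lem:tree-counts} gives $\lambda_e\ge8qg$.
\end{itemize}
Hence intermediate types are nonnegative, and telescoping again yields a step with gap at least $1/(6q^3)$ on a single fundamental cycle $Z_e$. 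For this cycle, the distinguished edge $e$ has common count $B=C_e\le\lambda_e$, and its other edges have common counts at least $\Lambda\coloneqq\lambda_e/2$ (tree-path edges have $C\ge\lambda_e$ before the perturbation). Thus $B\le2\Lambda$, and $\Lambda\ge4qg\ge2g$.

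\textbf{Conclusion and main obstacle.} Lemma~\ref{lem:single-cycle} now gives
\[
c+1\ge b_q\frac{\sqrt{(B+g)\Lambda}}{g}\ge b_q\frac{\sqrt{(C_e+g)\lambda_e/2}}{g}.
\]
Lemma~\ref{lem:tree-counts} bounds the right-hand side below by $b_q/(4\sqrt2\,qh)$. This yields $a_q$, after taking the minimum with the trivial case.

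The main obstacle is the bookkeeping in the second telescoping. We need the intermediate common counts to satisfy the single-cycle hypotheses, and these must be measured against that step's own entrywise minimum rather than $C$. Overlapping fundamental cycles also perturb tree edges, and the $8qg$ margin from Lemma~\ref{lem:tree-counts} has to absorb these perturbations. The constant $1/(6q^3)$ in Lemma~\ref{lem:single-cycle} also requires the number of steps to be controlled. We expect this to work by ordering the steps so that each non-tree entry changes exactly once.
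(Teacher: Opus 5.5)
Your proposal is correct and follows the paper's proof essentially step for step: symmetrization, then the conformal cycle decomposition with a first telescoping gap of $1/(3q^2)$, then Lemma~\ref{lem:tree-counts} on a maximum spanning tree weighted by the common counts $C$, then a second telescoping over at most $2q$ fundamental cycles with gap $1/(6q^3)$, and finally Lemma~\ref{lem:single-cycle} with $B=C_{e_\ast}$ and $\Lambda=\lambda_{e_\ast}/2$. The bookkeeping you flag as the main obstacle is already settled by your own bullet argument: each fundamental cycle contains exactly one non-tree edge, whose two values in its step are $C_e$ and $C_e+g$ in some order, so its common count is $C_e$; and each tree-path entry stays at least $C_{e'}-2qg\ge\tfrac34\lambda_e$ in both consecutive types.
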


\begin{proof}[Proof of Theorem~\ref{thm:inverse}]
Fix a protocol as in the theorem.

\paragraph{Conformal Cycles and the First Acceptance Gap.}
Suppose the output identifying $U$ accepting. Applying a common uniform coordinate permutation before the protocol makes its acceptance probability a function $p(S)$ of the type, at no extra communication in the entanglement-assisted model. Stop any execution that would exceed $c$ qubits and assign a fixed output. This leaves promised inputs unchanged and defines $p(S)\in[0,1]$ on every type, with
$$
p(T)\le\frac13,\qquad p(U)\ge\frac23.
$$

The difference $D\coloneqq U-T$ has zero row and column sums. Orient each positive entry $D_{ab}$ from $R_a$ to $K_b$, and each negative entry in the reverse direction, with flow $|D_{ab}|$; omit zero entries. The zero sums give flow conservation at every vertex, so this is an integer circulation. If an edge remains, follow positive-flow edges until a vertex first repeats. Conservation ensures that the walk can continue, and the repeated segment is a simple directed cycle of length at most $2q$.

Subtract the minimum flow $g_j$ on this cycle from each of its edges. This preserves nonnegative integer flows and conservation, deletes at least one edge, and changes no edge direction. Record the cycle by $v_j$, with signs $+1$ on row-to-column edges and $-1$ on column-to-row edges. Its signs alternate, so its row and column sums are zero. Starting from at most $q^2$ edges, the process therefore terminates with
\begin{equation}\label{eq:conformal-decomposition}
U-T=\sum_{j=1}^m g_jv_j, \qquad m\le q^2,\quad g_j\in\mathbb Z_{>0}.
\end{equation}
The decomposition is \emph{conformal}: each nonzero $(v_j)_{ab}$ has the sign of $U_{ab}-T_{ab}$, since directions never change.

Set $S^0\coloneqq T$ and $S^j\coloneqq T+\sum_{k=1}^jg_kv_k$. Then $S^m=U$, and conformality gives
$$
\min\{T_{ab},U_{ab}\}\le(S^j)_{ab}\le\max\{T_{ab},U_{ab}\}.
$$
Thus every $S^j$ is a nonnegative integer type with the original margins. It may lie outside the promise, but $p(S^j)$ is defined. Telescoping gives
$$
\frac13\le|p(U)-p(T)| \le\sum_{j=1}^m|p(S^j)-p(S^{j-1})|.
$$
Some consecutive pair $S^-,S^+$ therefore has gap at least $1/(3q^2)$. Denote its move by $gv$, let $v_+,v_-$ be the positive and negative $0/1$ parts of $v$, and let $C$ be the entrywise minimum of the pair.
Then
\begin{equation}\label{eq:selected-cycle}
S^-=C+gv_-,\qquad S^+=C+gv_+,\qquad |p(S^+)-p(S^-)|\ge\frac1{3q^2}.
\end{equation}

\paragraph{Count Bounds from a Maximum Spanning Tree.}
Set $h\coloneqq h(T,U)>0$ and $h_0\coloneqq1/(64q^2)$. If $h\ge h_0$, choosing $a_q\le h_0$ gives $a_q/h\le1\le c+1$. Hence assume $h<h_0$. The entrywise bounds above allow us to apply Lemma~\ref{lem:tree-counts} to the selected pair. For a maximum spanning tree $\tau$ weighted by $C$, every non-tree cycle edge has tree path $P_e$ and minimum $\lambda_e$ satisfying
$$
C_e\le\lambda_e,\qquad (C_e+g)\lambda_e\ge\frac{g^2}{16q^2h^2},\qquad \lambda_e\ge8qg.
$$

\paragraph{Fundamental Cycles and the Second Acceptance Gap.}
For each $e\notin\tau$ with $v_e\ne0$, let $F_e$ be the alternating signed matrix of the fundamental cycle $e\cup P_e$, with coefficient $+1$ at $e$. Its row and column sums are zero, and
\begin{equation}\label{eq:fundamental-decomposition}
v=\sum_{\substack{e\notin\tau\\v_e\ne0}}v_eF_e.
\end{equation}
Indeed, $F_e$ has no other non-tree edge, so the two sides agree off $\tau$. Their difference has zero margins and is supported on a tree. At a leaf its sole incident coefficient must be zero; removing that edge preserves the zero margins. Repeating on the remaining forest shows that the difference vanishes.

List these non-tree edges as $e_1,\ldots,e_r$, where $1\le r\le2q$, and set
$$
W^0\coloneqq S^-,\qquad
W^j\coloneqq S^-+g\sum_{i=1}^jv_{e_i}F_{e_i}
\quad(1\le j\le r).
$$
Equation~\eqref{eq:fundamental-decomposition} gives $W^r=S^+$. Each non-tree edge changes only in its own move, from its value in $S^-$ to its value in $S^+$, so remains nonnegative. Each tree edge loses at most $rg\le2qg$ at any stage. If $e'\in P_{e_i}$, then $C_{e'}\ge\lambda_{e_i}\ge8qg$, and hence
$$
(W^j)_{e'}\ge C_{e'}-2qg \ge\lambda_{e_i}-2qg\ge\frac34\lambda_{e_i}>0.
$$
Unused tree edges do not change. Every $W^j$ is therefore a legal integer type, and all margins are preserved by the fundamental cycles.

Telescoping again gives
$$
\frac1{3q^2}\le|p(S^+)-p(S^-)| \le\sum_{j=1}^r|p(W^j)-p(W^{j-1})|.
$$
Choose $j_\ast$ with gap at least $1/(3q^2r)\ge1/(6q^3)$, and put $e_\ast=e_{j_\ast}$. For the pair $W^{j_\ast-1},W^{j_\ast}$, the common count at $e_\ast$ is $C_{e_\ast}$, and every common count on $P_{e_\ast}$ is at least $3\lambda_{e_\ast}/4$. Thus Lemma~\ref{lem:single-cycle} applies with
$$
B=C_{e_\ast},\qquad \Lambda=\lambda_{e_\ast}/2: \qquad B\le2\Lambda,\quad \Lambda\ge4qg\ge2g.
$$
Together with~\eqref{eq:capacity-product}, it yields
$$
c+1\ge b_q\frac{\sqrt{(C_{e_\ast}+g)\lambda_{e_\ast}/2}}g \ge\frac{b_q}{4\sqrt2\,qh}.
$$
Taking $a_q\coloneqq\min\{h_0,b_q/(4\sqrt2\,q)\}$ covers both ranges of $h$ and proves the theorem.
\end{proof}

\subsection{Combining Classical and Quantum Bounds}
\label{sec:consequences}
We now combine the fixed-margin upper and lower bounds. Exchanging the marginal histograms reduces a general input to a fixed-margin restriction, yielding both the classical simulation and the quantum characterization.

\paragraph{Classical Simulation.}

We first prove Theorem~\ref{thm:main}, restated below, by expressing the classical sampling cost in terms of the quantum lower bound.

\begin{mainrestatement}[Fixed-Alphabet Simulation, Restated]
For each $q\ge2$, there is a constant $C_q>0$ such that every partial function $f:[q]^n\times[q]^n\to\{-1,+1,*\}$ invariant under simultaneous coordinate permutations and satisfying $\Qstar(f)\ge1$ obeys
$$
\Rpub(f)\le C_q\,t\log(2+n/t), \qquad t\coloneqq\min\{n,\Qstar(f)^2\}.
$$
\end{mainrestatement}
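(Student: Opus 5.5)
The plan is a two-phase protocol: first the parties exchange histograms, then they run the classical protocol of Theorem~\ref{thm:classical} on the resulting fixed-margin restriction. The quantum lower bound of Theorem~\ref{thm:inverse} will control its parameter $h$. Write $Q\coloneqq\Qstar(f)\ge1$.

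First, Alice sends her histogram $\rho$ and Bob sends $\sigma$. Each histogram is a vector of $q$ integers in $[0,n]$, so this costs $2q\lceil\log(n+1)\rceil=O_q(\log n)$ bits. Both parties now know the restriction $g\coloneqq f_{\rho,\sigma}$ and compute $h_g$ locally. If $g$ has at most one promised label, no further communication is needed.

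Otherwise, pick oppositely labelled promised types $T,U$ of $g$ with $h(T,U)=h_g$. An optimal $\Qstar$ protocol for $f$ distinguishes every input of type $T$ from every input of type $U$ with error at most $1/3$, using $Q$ qubits. Theorem~\ref{thm:inverse} gives $Q+1\ge a_q/h_g$, so $h_g^{-1}\le(Q+1)/a_q\le 2Q/a_q$. Hence $h_g^{-2}\le 4Q^2/a_q^2$.

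Theorem~\ref{thm:classical} then gives communication $C_q\Phi_n(u)$ with $u=\min\{n,h_g^{-2}\}$ and $\Phi_n(s)=s\log(2+n/s)$. Let $D\coloneqq 4/a_q^2\ge1$, which we may assume by shrinking $a_q$. Then $u\le\min\{n,DQ^2\}\le D\min\{n,Q^2\}=Dt$. By Lemma~\ref{lem:phi}, monotonicity and scaling give $\Phi_n(u)\le\Phi_n(Dt)\le D\Phi_n(t)$.

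The histogram cost is absorbed because $t\ge1$: Lemma~\ref{lem:phi} gives $\Phi_n(t)\ge\log(n+2)$. The total cost is therefore $O_q(\Phi_n(t))=C_q\,t\log(2+n/t)$. The error is at most that of the fixed-margin protocol, which is at most $1/3$. One small point: the Theorem~\ref{thm:classical} protocol is chosen after the margins are exchanged, so the public coins are used only in the second phase. This is legitimate, since the margins determine all parameters of that protocol.

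I do not expect a real obstacle, since the work is in Theorems~\ref{thm:classical} and~\ref{thm:inverse}. The step needing care is applying Theorem~\ref{thm:inverse} to a protocol for $f$ rather than for $g$. This is valid because inputs of types $T$ and $U$ lie in the promise of $f$ with opposite labels, so the protocol for $f$ separates them. It is also where the hypothesis $Q\ge1$ is used, to convert $Q+1$ into $2Q$ and to guarantee $t\ge1$.
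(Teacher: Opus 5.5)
Your proposal is correct and follows essentially the same route as the paper. Both exchange histograms, bound $h_g^{-2}\le 4a_q^{-2}\Qstar(f)^2$ via Theorem~\ref{thm:inverse}, apply Theorem~\ref{thm:classical} with the monotonicity and scaling of $\Phi_n$, and absorb the $O_q(\log n)$ histogram cost using $\Phi_n(t)\ge\log(n+2)$. The only differences are cosmetic: you send all $q$ histogram entries rather than $q-1$, and you take $D=4/a_q^2$ after shrinking $a_q$ instead of $\max\{1,4a_q^{-2}\}$.
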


\begin{proof}[Proof of Theorem~\ref{thm:main}]
Each party sends the first $q-1$ entries of its histogram; the last entry follows from their total $n$. Each count uses $\lceil\log(n+1)\rceil$ bits, so both margins become known at cost $2(q-1)\lceil\log(n+1)\rceil$.

Let $g=f_{\rho,\sigma}$ be the resulting restriction. If it has at most one label, no further communication is needed. Otherwise choose oppositely labelled types attaining $h_g$, which exist because the set of types is finite. Any protocol for $g$ distinguishes these types, so Theorem~\ref{thm:inverse} and $\Qstar(g)\le\Qstar(f)$ give
$$
h_g^{-2}\le a_q^{-2}(\Qstar(g)+1)^2 \le4a_q^{-2}\Qstar(f)^2.
$$
Let
$$
D_q\coloneqq\max\{1,4a_q^{-2}\},\qquad
u\coloneqq\min\{n,h_g^{-2}\},\qquad
t\coloneqq\min\{n,\Qstar(f)^2\}.
$$
Then $u\le D_qt$: use $u\le n=t$ if $\Qstar(f)^2\ge n$, and the preceding bound otherwise. Let $K_q$ denote the constant in Theorem~\ref{thm:classical}. By that theorem and Lemma~\ref{lem:phi},
$$
\Rpub(g)\le K_q\Phi_n(u) \le K_q\Phi_n(D_qt)\le K_qD_q\Phi_n(t).
$$
This bound is uniform over the possible restrictions. Since $t\ge1$, Lemma~\ref{lem:phi} also gives
$$
2(q-1)\lceil\log(n+1)\rceil \le4(q-1)\log(n+2)\le4(q-1)\Phi_n(t).
$$
Adding the histogram and restriction costs proves the claim with $C_q=4(q-1)+K_qD_q$, which depends only on $q$.
\end{proof}

\paragraph{Quantum Characterization.}

Applying the same lower bound together with the quantum protocol from Theorem~\ref{thm:quantum-upper} gives the characterization announced in Theorem~\ref{thm:quantum-characterization}, which we restate before giving the proof.

\begin{quantumrestatement}[Quantum Characterization, Restated]
For a permutation-invariant partial function $f$, let
$$
M(f)\coloneqq\max_{\rho,\sigma}h_{f_{\rho,\sigma}}^{-1},
$$
where a fixed-margin restriction with at most one promised label has $h_{f_{\rho,\sigma}}=1$. For constants $c_q,C_q>0$ depending only on $q$,
$$
c_qM(f)\le\Qstar(f)+1\le Q(f)+1 \le C_q\min\{n,M(f)\log n\}.
$$
\end{quantumrestatement}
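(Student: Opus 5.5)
The three inequalities in the chain will be proved separately, and each reduces to a fixed-margin restriction. The middle inequality $\Qstar(f)\le Q(f)$ is immediate from the definitions, since a protocol without entanglement is a special case of one with entanglement.

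For the lower bound $c_qM(f)\le\Qstar(f)+1$, I fix margins $(\rho,\sigma)$ attaining the maximum in $M(f)$; there are finitely many, so a maximizer exists. If $f_{\rho,\sigma}$ has at most one promised label, then $h=1$. The bound then holds once $c_q\le1$, because $\Qstar(f)+1\ge1$. Otherwise I choose oppositely labelled promised types $T,U$ with these margins that attain $h_{f_{\rho,\sigma}}=h(T,U)$. An optimal entanglement-assisted protocol for $f$ with $c=\Qstar(f)$ qubits distinguishes every input of type $T$ from every input of type $U$ with error at most $1/3$. Theorem~\ref{thm:inverse} then gives $\Qstar(f)+1\ge a_q/h(T,U)=a_qM(f)$. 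Taking $c_q\coloneqq\min\{1,a_q\}$ finishes this part.

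For the upper bound, the parties first exchange histograms without entanglement or shared randomness. Alice sends the first $q-1$ entries of $\rho$ classically, encoded in qubits, and Bob does the same for $\sigma$. This costs $2(q-1)\lceil\log(n+1)\rceil=O_q(\log n)$ qubits. They then run the protocol of Theorem~\ref{thm:quantum-upper} for $g=f_{\rho,\sigma}$. That protocol costs at most $K_q\min\{n,h_g^{-1}\log n\}\le K_q\min\{n,M(f)\log n\}$, since $h_g^{-1}\le M(f)$.

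Alternatively, Alice can simply send her whole input, at cost $n\lceil\log q\rceil=O_q(n)$. Both parties know $n$ and $M(f)$ in advance, so they can pick the cheaper branch. The histogram overhead $O_q(\log n)$ is dominated by $\min\{n,M(f)\log n\}$ up to a $q$-dependent constant, because $M(f)\ge1$ and $n\ge2$. So $Q(f)+1\le C_q\min\{n,M(f)\log n\}$ for a suitable $C_q$.

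The main point needing care is the bookkeeping in this last step. The protocol of Theorem~\ref{thm:quantum-upper} must be run for the restriction determined by the exchanged histograms, and the communication bound must hold uniformly over all restrictions. It does, because $h_g^{-1}\le M(f)$ for every restriction. All other ingredients are supplied directly by Theorems~\ref{thm:inverse} and~\ref{thm:quantum-upper}. The consequence $Q(f)\le C_q\Qstar(f)\log n$ then follows by combining the two outer inequalities and absorbing the additive $1$ using $\Qstar(f)\ge1$.
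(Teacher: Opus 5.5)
Your proposal is correct and follows the paper's proof essentially step for step. For the upper bound you exchange histograms at cost $O_q(\log n)$ and run Theorem~\ref{thm:quantum-upper}, or send the full input, absorbing the margin cost via $M(f)\ge1$. For the lower bound you apply Theorem~\ref{thm:inverse} to an oppositely labelled pair attaining $h_{f_{\rho,\sigma}}$, with $c_q\le1$ covering single-label restrictions.
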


\begin{proof}[Proof of Theorem~\ref{thm:quantum-characterization}]
Exchange the $q-1$ independent entries of each margin at cost $O_q(\log n)$ and apply Theorem~\ref{thm:quantum-upper} to the resulting restriction. Since $M(f)\ge1$, the margin exchange is absorbed by $M(f)\log n$. Alternatively, send a full input, obtaining the $O_q(n)$ cap. Increasing the constant to absorb the additive one
proves the upper bound in \eqref{eq:quantum-characterization}.

For every restriction containing both labels, choose oppositely labelled types attaining its separation parameter. Restrict any protocol for $f$ to these types and apply Theorem~\ref{thm:inverse}. Taking the maximum over margins gives $\Qstar(f)+1\ge c_qM(f)$. Restrictions with at most one label also satisfy this inequality after ensuring $c_q\le1$. The middle inequality follows because an unassisted quantum protocol is a permitted entanglement-assisted protocol. Finally, combining the bounds with $\Qstar(f)+1\le2\Qstar(f)$ gives $Q(f)\le C_q\Qstar(f)\log n$ after increasing $C_q$, as claimed in the introduction.
\end{proof}

\section{Symmetry-Preserving Encodings and Exponential Separations}\label{sec:counterexamples}

We first construct an XOR communication problem with a nearly linear randomized lower bound and logarithmic quantum communication. Two local encodings preserve this separation while fixing all local histograms, or even the isomorphism type of both local graphs. In these constructions, the information needed to compute the output is stored in the relative positions of symbols or the relative labelling of two copies of a rigid tree. This is the structural feature of the counterexamples: neither varying histograms nor varying local graph types are needed.

We then consider a different mechanism. When local graph isomorphism types may vary, a universal encoding stores an arbitrary communication input in a connected graph with quadratic capacity in its vertex count. The resulting separation inherits its communication lower bound from the encoded problem. Its larger exponent reflects this greater storage capacity, rather than a stronger restriction on the local inputs.

All functions in this section are partial, with output alphabet $\{-1,+1,*\}$. All quantum upper bounds hold in the unassisted model $Q$; no prior entanglement or free shared randomness is required. Local encoding and decoding are free in the communication model; the results do not assert efficient local implementations.

\subsection{A Nearly Optimal XOR Communication Separation}

For a partial Boolean function $r_d$, let $Q_{\mathrm{qry}}(r_d)$ and $R_{\mathrm{qry}}(r_d)$ denote its bounded-error quantum and randomized query complexities. For a Boolean gadget $g$ with $\ell$ input bits per party, $r_d\circ g$ applies $g$ to each of $d$ pairs of input blocks and then applies $r_d$ to the resulting bits. In particular, the inner-product gadget is
$$
\IP_{\ell}(x,y)\coloneqq\mathop{\oplus}\limits_{j=1}^{\ell}x_jy_j.
$$
We use the following two query-to-communication results.

\begin{lemma}[Quantum Simulation and Randomized Lifting,
{\cite{BCW98}}, {\cite{CFKMP19}}]
\label{lem:query-communication}
There are absolute constants $c,c'>0$ with the following property. For every partial Boolean function $r_d$ on $d\ge2$ bits, set $\ell\coloneqq\lceil c\log d\rceil$. Then every Boolean gadget $g:\{0,1\}^{\ell}\times\{0,1\}^{\ell}\to\{0,1\}$ satisfies
\begin{equation}\label{eq:counter-1}
Q(r_d\circ g) \le O\bigl(Q_{\mathrm{qry}}(r_d)(\ell+\log d)\bigr),
\qquad \Rpub(r_d\circ\IP_{\ell}) \ge c'\ell R_{\mathrm{qry}}(r_d).
\end{equation}
The quantum simulation requires neither prior entanglement nor shared randomness.
\end{lemma}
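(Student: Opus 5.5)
The lemma has two independent halves, and I would prove each by citation plus a short parameter check.

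\textbf{Quantum half.} I would use the Buhrman--Cleve--Wigderson simulation~\cite{BCW98}. Fix an optimal bounded-error quantum query algorithm for $r_d$ making $Q_{\mathrm{qry}}(r_d)$ queries. Alice runs it on her registers. Each oracle call $|i\rangle|b\rangle\mapsto|i\rangle|b\oplus g(x^i,y^i)\rangle$ is implemented with communication, in four steps:
\begin{enumerate}[label=(\arabic*)]
\item Alice appends her block $x^i$ coherently, computed from the index register.
\item She sends the index register ($\lceil\log d\rceil$ qubits) and the block register ($\ell$ qubits) to Bob.
\item Bob XORs $g(x^i,y^i)$ into the answer qubit, which he may also receive and return.
\item Bob returns all registers, and Alice uncomputes $x^i$.
\end{enumerate}
Each query therefore costs $O(\ell+\log d)$ qubits. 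The protocol is unitary apart from the final measurement, and it uses no entanglement or shared randomness. Bounded error is inherited directly from the query algorithm. This gives the first inequality of~\eqref{eq:counter-1} for every gadget $g$ of block length $\ell$.

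\textbf{Randomized half.} I would invoke the inner-product lifting theorem of Chattopadhyay, Filmus, Koroth, Meir, and Pitassi~\cite{CFKMP19}. It states that for some absolute constant $c$ and all $\ell\ge c\log d$,
\[
\Rpub(r_d\circ\IP_\ell)=\Omega\bigl(\ell\,R_{\mathrm{qry}}(r_d)\bigr)
\]
for every partial function $r_d$ on $d$ bits. Taking $\ell=\lceil c\log d\rceil$ meets the gadget-size hypothesis.

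\textbf{Main point of care.} The substantive check is that the cited theorem applies to partial functions, i.e.\ promise problems. It does: the simulation argument there works with query decision trees on the promise. I would also confirm that its error and public-coin conventions agree with Definition~\ref{def:communication-models}. Constant-error amplification changes the bound only by constants, which is absorbed into $c'$.

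I would make sure the constants $c,c'$ are taken uniformly over $d\ge2$, using $\ell\ge1$. Finally, I would note that the quantum bound holds for the same choice of $\ell$, so a single pair of absolute constants serves both inequalities.
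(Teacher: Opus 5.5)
Your proposal is correct and follows the same route as the paper. Both halves are citations: a Buhrman--Cleve--Wigderson-style query simulation at $O(\ell+\log d)$ qubits per query, and inner-product lifting applied to the partial function viewed as a search problem. For the lifting half, you restore the error below $1/3$ by constant-factor amplification; the paper does this by reducing the protocol's error to $1/10$ before lifting, since lifting adds at most $1/10$. One small detail is missing from the quantum half. Because Bob produces the output, Alice must send the one-bit answer at the end. This bit is absorbed when $Q_{\mathrm{qry}}(r_d)\ge1$, and a constant $r_d$ needs no communication at all.
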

For the randomized statement, the cited lifting theorem increases error by at most $1/10$. Its constant-error normalization first reduces the communication protocol's error to $1/10$ by a constant number of independent repetitions. The resulting decision tree has error at most $1/5<1/3$ and query cost $O(\Rpub(r_d\circ\IP_\ell)/\ell)$. A partial function is treated as a search problem that allows either Boolean output outside its promise, so no promise verification is needed.

These formulations, including partial outer functions, are also recorded in~\cite{SSW}; the general low-discrepancy lifting theorem appears in~\cite{CFKMP21}.

The query separations used below were obtained independently by Bansal--Sinha~\cite{BS} and Sherstov--Storozhenko--Wu~\cite{SSW}, using $k$-Forrelation and $k$-Rorrelation, respectively. We use the former for its explicit construction and dependence on $k$.

\begin{lemma}[Forrelation Query Separation, {\cite{BS}}]
\label{lem:forrelation-query}
For every integer $k\ge2$ and every power of two $N\ge2$, there is an explicit partial Boolean function $r_d$ on $d\coloneqq kN$ bits with
\begin{equation}\label{eq:counter-7}
Q_{\mathrm{qry}}(r_d)=O(k2^{10k}), \qquad R_{\mathrm{qry}}(r_d)
=\Omega\left(k^{-29} \left(\frac{N}{\log(kN)}\right)^{1-1/k}\right).
\end{equation}
The implied constants are absolute.
\end{lemma}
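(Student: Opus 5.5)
This statement is cited from Bansal and Sinha~\cite{BS} rather than proved here, so the plan is to recall how their argument goes and to check that its parameters give~\eqref{eq:counter-7}. The function $r_d$ is explicit $k$-Forrelation. The input consists of $k$ Boolean functions $f_1,\ldots,f_k:\{0,1\}^{\log N}\to\{\pm1\}$, which together give $d=kN$ bits. Define
$$
\phi(f_1,\ldots,f_k)\coloneqq\frac1{N^{(k+1)/2}}\sum_{x_1,\ldots,x_k}f_1(x_1)H_{x_1x_2}f_2(x_2)\cdots H_{x_{k-1}x_k}f_k(x_k),
$$
where $H$ is the unnormalized Hadamard matrix. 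The promise distinguishes $|\phi|\le 2^{-5k}$ from $\phi\ge 2^{-5k}$, with the thresholds normalized as in~\cite{BS}.

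\textbf{Quantum upper bound.} We use the standard circuit that prepares the uniform superposition and then alternately applies phase oracles for $f_1,\ldots,f_k$ and Hadamard layers. This uses $k$ queries, and Hadamard tests give acceptance bias proportional to $\phi$ (or at least $(1+\phi)/2$). The promise gap is $2^{-5k}$. Distinguishing it with constant error by sampling would take $O(2^{10k})$ repetitions; amplitude estimation would reduce this to $O(2^{5k})$. Either choice gives at most $O(k2^{10k})$ queries.

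\textbf{Randomized lower bound.} This is the main obstacle. The plan follows~\cite{BS}: construct a hard distribution by sampling correlated Gaussians whose covariance encodes the forrelation structure, then truncate and round them to $\pm1$. Under this distribution $\mathbb E\,\phi$ is about $2^{-5k}$, while the uniform distribution is a NO instance with high probability. For any depth-$D$ decision tree (which is multilinear and bounded), we then bound its advantage in distinguishing the two distributions by the level-$\ell$ Fourier weight $L_{1,\ell}\le \sqrt{\binom D\ell}\,O(\log(kN))^{\ell}$, using Tal's bounds. We next show that the Gaussian correlations are at most roughly $N^{-1/2}$ per block. The key step is that nonzero contributions only arise at levels that are multiples of $k$, with magnitude about $(D/N)^{\ell(k-1)/(2k)}$ up to $\mathrm{polylog}$ and $k^{O(1)}$ factors. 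This requires the ``Fourier growth'' estimate, and the delicate part is the random-walk argument that interpolates from the Gaussian to the Boolean distribution in small steps while controlling the truncation error.

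Summing over levels, the advantage is $o(2^{-5k})$ unless $D\ge k^{-O(1)}(N/\log(kN))^{1-1/k}$. Tracking the $k$-dependence gives $k^{-29}$. Since the statement explicitly cites~\cite{BS}, in the paper we will only verify this parameter translation ($d=kN$, absolute constants) rather than reprove the lower bound.
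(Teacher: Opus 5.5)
Your proposal takes the same route as the paper, which gives no independent proof but identifies $r_d$ with the $(2^{-5k},k)$-Forrelation function of~\cite{BS} (sign inputs converted to bits by $w\mapsto(-1)^w$). So only the bookkeeping $d=kN$ is needed, and your sketch of the Gaussian lower-bound argument is not load-bearing.

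One slip to fix: as you wrote them, the YES and NO thresholds ($\phi\ge2^{-5k}$ versus $|\phi|\le2^{-5k}$) overlap and leave no gap. The promise in~\cite{BS} separates $\phi\ge2^{-5k}$ from $|\phi|\le2^{-5k}/2$, and this gap of order $2^{-5k}$ is what makes $O(2^{10k})$ repetitions of the $O(k)$-query circuit suffice.
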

This is the $(2^{-5k},k)$-Forrelation function of~\cite{BS}, with sign inputs identified with bits by $w\mapsto(-1)^w$.

The strong communication separation obtained directly by inner-product lifting is not, in general, an XOR function.  The following extension supplies the XOR structure needed by both encodings below, while preserving the separation. Fix $\ell\coloneqq\lceil c\log d\rceil$ as in Lemma~\ref{lem:query-communication}, and set $s\coloneqq2d\ell$. Let $z\in\{0,1\}^{s}$ be $(z^0,z^1)$, with each half divided into $d$ blocks of length $\ell$, and define
\begin{equation}\label{eq:counter-2}
\psi(z^0,z^1) \coloneqq r_d\left(\left(\mathop{\oplus}\limits_{j=1}^{\ell}z^0_{ij}z^1_{ij} \right)_{i=1}^{d}\right),
\qquad F^{\oplus}_s(A,B)\coloneqq\psi(A\oplus B).
\end{equation}
The value is undefined precisely when the argument of $r_d$ is outside its promise.  Relabelling the two Boolean outputs as $-1,+1$ does not change any complexity.

\begin{theorem}[Nearly Optimal XOR Separation]\label{thm:forrelation}
For every fixed $\varepsilon\in(0,1)$, there are infinitely many even $s$ and explicit partial functions $\psi:\{0,1\}^{s}\to\{-1,+1,*\}$ such that
\begin{equation}\label{eq:counter-3}
Q(F^{\oplus}_s)=O_{\varepsilon}(\log s), \qquad \Rpub(F^{\oplus}_s)=\Omega_{\varepsilon}(s^{1-\varepsilon}).
\end{equation}
\end{theorem}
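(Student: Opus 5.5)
Fix $\varepsilon$ and choose $k\coloneqq\lceil 2/\varepsilon\rceil$, a fixed integer. Take $N$ ranging over powers of two, $d\coloneqq kN$, $\ell\coloneqq\lceil c\log d\rceil$, and $s=2d\ell$. Then $s=\Theta_\varepsilon(N\log N)$, so these give infinitely many even $s$, and $\log s=\Theta(\log d)$. The function $r_d$ is the Forrelation function of Lemma~\ref{lem:forrelation-query}.

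\textbf{Quantum upper bound.} I will view $F^{\oplus}_s$ as $r_d\circ g$ for the gadget
$$
g\bigl((A^0_i,A^1_i),(B^0_i,B^1_i)\bigr)\coloneqq\mathop{\oplus}_{j}(A^0_{ij}\oplus B^0_{ij})(A^1_{ij}\oplus B^1_{ij}),
$$
which acts on $2\ell$ bits per party. Lemma~\ref{lem:query-communication} is stated for gadgets on exactly $\ell$ bits, but the simulation of \cite{BCW98} only needs each gadget evaluated reversibly by exchanging the two blocks. So I would either rerun that simulation with block length $2\ell$, or note that the lemma's bound depends only on the gadget length through the $O(\ell+\log d)$ per-query cost. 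Either way,
$$
Q\le O\bigl(Q_{\mathrm{qry}}(r_d)(\ell+\log d)\bigr)=O(k2^{10k}\log d)=O_\varepsilon(\log s),
$$
with no entanglement and no shared randomness.

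\textbf{Randomized lower bound.} I will reduce $r_d\circ\IP_\ell$ to $F^{\oplus}_s$ using a purely local embedding. Alice, holding $x$, sets $A\coloneqq(x,0)$. Bob, holding $y$, sets $B\coloneqq(0,y)$. Then $A\oplus B=(x,y)$, so
$$
F^{\oplus}_s(A,B)=\psi(x,y)=r_d\bigl((\IP_\ell(x_i,y_i))_i\bigr),
$$
and the promises coincide. Hence $\Rpub(F^{\oplus}_s)\ge\Rpub(r_d\circ\IP_\ell)\ge c'\ell R_{\mathrm{qry}}(r_d)$. Plugging in Lemma~\ref{lem:forrelation-query} gives
$$
\Omega\bigl(k^{-29}\log d\,(N/\log(kN))^{1-1/k}\bigr)=\Omega_\varepsilon(N^{1-1/k}).
$$
Since $s=\Theta_\varepsilon(N\log N)$, we have $N^{1-1/k}\ge s^{1-1/k}/\mathrm{polylog}(s)\ge s^{1-\varepsilon}$ for large $s$, because $1/k\le\varepsilon/2$ leaves room to absorb the logarithms.

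\textbf{Main obstacle.} The delicate point is the quantum simulation for the composite XOR gadget. I expect it to be mild: each query is answered by Bob sending his $2\ell$-bit block in superposition, Alice computing $g$ into a phase, and the block being returned for uncomputation. This costs $O(\ell)$ qubits per query and no shared resources. The rest is exponent bookkeeping.
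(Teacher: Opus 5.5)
Your proposal is correct and follows the paper's proof essentially step for step: the local restriction $A=(x,0)$, $B=(0,y)$ reduces $r_d\circ\IP_\ell$ to $F^{\oplus}_s$ for the lifting lower bound, and the quantum bound comes from a direct BCW-style query simulation for the $2\ell$-bit XOR gadget, followed by the same parameter substitution with a fixed $k$ of order $1/\varepsilon$; this gadget-length mismatch is exactly why the paper writes out the simulation instead of citing Lemma~\ref{lem:query-communication}. One cosmetic slip: the per-query cost in your last paragraph should be $O(\ell+\log d)$ rather than $O(\ell)$, because the address register must also be communicated, but this is the same order since $\ell=\Theta(\log d)$.
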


\begin{proof}
For the classical lower bound, restrict Alice's input to $A=(x,0)$ and Bob's input to $B=(0,y)$.  On this restriction,
$$
F^{\oplus}_s((x,0),(0,y)) =(r_d\circ\IP_{\ell})(x,y).
$$
Consequently, Lemma~\ref{lem:query-communication} gives $\Rpub(F^{\oplus}_s)\ge c'\ell R_{\mathrm{qry}}(r_d)$.

For the quantum upper bound, we must handle the entire promise in \eqref{eq:counter-2}, not only this restriction.  If $A=(A^0,A^1)$ and $B=(B^0,B^1)$, the $i$th bit seen by the outer query algorithm is
\begin{equation}\label{eq:counter-5}
u_i\coloneqq\mathop{\oplus}\limits_{j=1}^{\ell} (A^0_{ij}\oplus B^0_{ij})(A^1_{ij}\oplus B^1_{ij}).
\end{equation}
Alice runs the outer quantum algorithm locally.  To simulate a query, she reversibly appends her two blocks indexed by the address register, and sends the address, the two blocks, and the query answer register to Bob.  Bob reversibly computes $u_i$, XORs it into the answer register, uncomputes his workspace, and returns the registers.  Alice then uncomputes the appended blocks.  This implements the exact clean query unitary, even on a superposition of addresses, using $O(\ell+\log d)$ communicated qubits.  For nonconstant $r_d$, Alice sends the one-bit answer to Bob after the final measurement; this additional bit is absorbed by the query cost.  A constant $r_d$ requires no communication.  Hence
\begin{equation}\label{eq:counter-6}
Q(F^{\oplus}_s) \le O\bigl(Q_{\mathrm{qry}}(r_d)(\ell+\log d)\bigr).
\end{equation}
The protocol requires no shared resources.

Now take $r_d$ from Lemma~\ref{lem:forrelation-query}. Combining~\eqref{eq:counter-7} with the preceding communication bounds and $s=2d\ell$ yields the estimates
\begin{equation}\label{eq:counter-8}
Q(F^{\oplus}_s)=O(k2^{10k}\log s), \qquad \Rpub(F^{\oplus}_s) =\Omega\left(\frac{s^{1-1/k}}{k^{30}\log s}\right).
\end{equation}
For the classical estimate, the parameter substitution can be written explicitly as
$$
\ell k^{-29}\left(\frac{N}{\log(kN)}\right)^{1-1/k} =\frac{s^{1-1/k}\ell^{1/k}} {2^{1-1/k}k^{30-1/k}(\log d)^{1-1/k}} \ge\frac{s^{1-1/k}}{2k^{30}\log s},
$$
using $d=kN$, $s=2d\ell$, $\ell\ge1$, and $1\le\log d\le\log s$. For fixed $\varepsilon$, choose a fixed integer $k>1/\varepsilon$; the remaining power of $s$ absorbs the logarithm in \eqref{eq:counter-8}, proving \eqref{eq:counter-3}.
\end{proof}

\subsection{Permutation Encoding with Fixed Local Histograms}\label{sec:counter-permutations}

We now encode the XOR separation of Theorem~\ref{thm:forrelation} into pairs of permutations while keeping both local histograms fixed. This proves part~(i) of Theorem~\ref{thm:separations}, which we restate before giving the construction.

\begin{separationsrestatement}[Part (i), Restated]
For every fixed $\varepsilon\in(0,1)$, there are infinitely many $n$ and permutation-invariant partial functions $f_n:[n]^n\times[n]^n\to\{-1,+1,*\}$ such that each local input contains every symbol exactly once and
$$
Q(f_n)=O_\varepsilon(\log n),\qquad \Rpub(f_n)=\Omega_\varepsilon(n^{1-\varepsilon}).
$$
\end{separationsrestatement}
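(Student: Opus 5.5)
We encode the XOR problem $F^{\oplus}_s$ of Theorem~\ref{thm:forrelation} into pairs of permutations, using the hint from the proof overview: each bit is stored as the relative order of two distinct symbols. Take $n\coloneqq2s$ and pair the symbols of $[n]$ into $s$ fixed pairs $\{2j-1,2j\}$. A local input $x\in[n]^n$ that uses every symbol exactly once is a bijection. For each $j$, Alice's decoded bit $a_j(x)$ is $0$ if symbol $2j-1$ appears before symbol $2j$ in $x$, and $1$ otherwise. Bob's bits $b_j(y)$ are defined in the same way.

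The promise of $f_n$ has two parts.
\begin{enumerate}[label=(\alph*)]
\item Both $x$ and $y$ are bijections.
\item For every $j$, the set of two positions occupied by symbols $2j-1,2j$ in $x$ equals the corresponding set in $y$.
\end{enumerate}
On such inputs, define $f_n(x,y)\coloneqq\psi(a(x)\oplus b(y))$, where $\psi$ and its promise are those of Theorem~\ref{thm:forrelation}. The function is undefined elsewhere.

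\textbf{Invariance.} Apply a common permutation $\pi$ to both strings. Conditions (a) and (b) are preserved, since $\pi$ maps the common position set of pair $j$ in $x$ and $y$ to a common position set in $\pi x$ and $\pi y$. For each $j$, $\pi$ either preserves or reverses the order of those two positions. Because the positions are the same for both parties, $a_j$ and $b_j$ are complemented together, so $a_j\oplus b_j$ is unchanged. Hence $f_n(\pi x,\pi y)=f_n(x,y)$, including the value $*$.

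\textbf{Quantum upper bound.} Each party computes its decoded bits locally and runs the protocol of Theorem~\ref{thm:forrelation}. This gives $Q(f_n)\le Q(F^{\oplus}_s)=O_\varepsilon(\log s)=O_\varepsilon(\log n)$, with no shared resources.

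\textbf{Classical lower bound.} We need a local inverse that embeds every $F^{\oplus}_s$ input into the promise of $f_n$. Fix the position pairs $\{2j-1,2j\}$. Given $A\in\{0,1\}^s$, Alice places symbols $2j-1,2j$ at positions $2j-1,2j$, in the order dictated by $A_j$. Bob does the same with $B$. Both strings are bijections, their position sets coincide, and the decoded bits are exactly $A$ and $B$. Therefore any protocol for $f_n$ yields a protocol for $F^{\oplus}_s$ at the same cost, and
$$\Rpub(f_n)\ge\Rpub(F^{\oplus}_s)=\Omega_\varepsilon(s^{1-\varepsilon})=\Omega_\varepsilon(n^{1-\varepsilon}).$$
Infinitely many valid $s$ give infinitely many $n$.

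\textbf{Main obstacle.} The step needing the most care is checking that the promise is exactly permutation-invariant. Condition (b) compares position sets across the two parties, and it must be stated so that the order reversals induced by $\pi$ occur simultaneously for Alice and Bob. That simultaneity is what makes the XOR structure essential: a general, non-XOR lifted function would not survive complementing both bits.
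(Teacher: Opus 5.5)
Your proposal is correct and matches the paper's proof essentially step for step. The correspondences are:
\begin{itemize}
\item the same pairing of symbols, with each bit read from the relative order inside a position pair common to both parties;
\item the same observation that a simultaneous permutation reverses a pair for both parties at once, so $a\oplus b$, and with it the value and the promise, is preserved;
\item the same local decoding together with local re-encoding at positions $2j-1,2j$, which gives $\mathsf{C}(f_n)=\mathsf{C}(F^{\oplus}_s)$ with $n=2s$.
\end{itemize}
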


\begin{proof}
Take the seed in Theorem~\ref{thm:forrelation}, put $n\coloneqq2s$, and identify the alphabet with
$$
\{(i,0),(i,1):i\in[s]\}.
$$
Each local input must be a permutation of all these symbols.  For every $i$, require that the set of two positions occupied by $(i,0),(i,1)$ is the same for Alice and Bob.  Each party reads the second component of the symbol at the smaller position of this pair, obtaining $A_i$ or $B_i$.  Define
\begin{equation}\label{eq:counter-9}
f_n(x,y)\coloneqq\psi(A\oplus B),
\end{equation}
and leave the value undefined if the local permutation condition, the common pair-position condition, or the promise of $\psi$ fails.

A simultaneous coordinate permutation maps each common pair of positions to another common pair.  It may reverse their order, but then it complements both decoded bits.  Thus $(A,B)$ changes to $(A\oplus z,B\oplus z)$ for one common mask $z$, leaving $A\oplus B$ unchanged.  The structural conditions are preserved as well, so the function and its promise are invariant under simultaneous coordinate permutations.

On promised inputs each party decodes its bit string locally and runs a protocol for $F^{\oplus}_s$.  No promise check is needed.  In the reverse direction, encode a bit $A_i$ as
$$
((i,A_i),(i,1-A_i)) \quad\text{at positions }(2i-1,2i),
$$
and encode Bob's input identically.  The encoded strings obey the structural promise and decode to the original inputs.  Both reductions are local and preserve the error probability and each resource model, so
\begin{equation}\label{eq:counter-10}
\mathsf C(f_n)=\mathsf C(F^{\oplus}_s),
\qquad \mathsf C\in\{\Rpub,Q,\Qstar\}.
\end{equation}
Theorem~\ref{thm:forrelation} and $n=2s$ give the bounds in Theorem~\ref{thm:separations}(i).

Every symbol appears exactly once on each side.  In the joint type, the $2\times2$ block for the two symbols of pair $i$ is either the identity matrix or the exchange matrix according as $A_i\oplus B_i$ is zero or one. The encoded information therefore resides entirely in the joint type, without changing either marginal histogram.
\end{proof}

\subsection{Graph Encoding with a Single Rigid Tree}\label{sec:counter-tree}

A graph communication input is an ordered pair $(A,C)$ of simple labelled graphs on $[v]$.  For $\pi\in S_v$, the graph $\pi A$ is obtained by applying $\pi$ to both endpoints of every edge.  Simultaneous relabelling invariance means $g(\pi A,\pi C)=g(A,C)$, including equality of undefined values. A graph $B$ is \emph{rigid} if $\operatorname{Aut}(B)=\{\mathrm{id}\}$.

We next encode the same XOR separation into labelled copies of a single rigid tree. This proves part~(ii) of Theorem~\ref{thm:separations}, restated below.

\begin{separationsrestatement}[Part (ii), Restated]
For every fixed $\varepsilon\in(0,1)$, there are infinitely many $v$ and partial graph functions $g_v$, invariant under simultaneous vertex relabelling, such that
$$
Q(g_v)=O_\varepsilon(\log v),\qquad \Rpub(g_v)=\Omega_\varepsilon(v^{1-\varepsilon}).
$$
Each promised local graph is a labelled copy of one fixed rigid tree on $v$ vertices of maximum degree three.
\end{separationsrestatement}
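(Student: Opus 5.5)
We plan to encode the XOR seed $F^{\oplus}_s$ of Theorem~\ref{thm:forrelation} into the relative labelling of two copies of one rigid tree, in the same spirit as the permutation encoding of part~(i).

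\textbf{Step 1: the tree.} Fix a rigid tree $B$ with maximum degree three, $v=\Theta(s)$ vertices, and $s$ distinguished vertices $w_1,\dots,w_s$, each carrying a pendant two-leaf gadget. Concretely, take a path (spine) and attach hanging paths of pairwise distinct lengths $1,2,\dots$ at successive spine vertices, so that no automorphism survives. Attach to each distinguished $w_i$ two leaves $\ell_i^0,\ell_i^1$ via a small symmetric cherry, so that $B$ minus the leaf labels has exactly one nontrivial symmetry per cherry. All degrees stay at most three, and the vertex count is $O(s)$ if the distinct hanging-path lengths are kept small. The cleanest choice is probably a caterpillar-style tree with logarithmically many distinct markers. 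If the hanging paths must all differ, the count becomes $\Theta(s^2)$ and would spoil the exponent. We therefore expect to use binary-tree or comb markers of size $O(\log s)$ encoding the index $i$ in binary, which gives $v=O(s\log s)$ and loses only a logarithmic factor, absorbed by $\varepsilon$. The two leaves $\ell_i^0,\ell_i^1$ at $w_i$ are then the only pair swapped by a local automorphism of the unlabelled shape. Rigidity is restored by making one leaf of each cherry slightly longer, for example a path of length two versus one.

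\textbf{Step 2: the encoding.} A promised local graph is a labelled copy of $B$. Since $B$ is rigid, each party recovers a unique isomorphism $\phi_A:B\to A$ (respectively $\phi_C$). The promise is that $\phi_A$ and $\phi_C$ agree on all vertices except possibly within cherry $i$, where the two labelled positions may be exchanged. Equivalently, $\phi_C^{-1}\phi_A$ is a product of cherry swaps indexed by a mask $z\in\{0,1\}^s$. We need $z=A'\oplus B'$ with locally decoded strings; the simplest way is to set $A'_i$ to be the indicator of whether $\phi_A(\ell^0_i)<\phi_A(\ell^1_i)$ as integers, and similarly for $B'_i$. Define $g_v=\psi(A'\oplus B')$ on this promise. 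Simultaneous relabelling by $\pi$ replaces both $\phi$ by $\pi\phi$, and either preserves or reverses each comparison for both parties simultaneously, because the two labelled vertex pairs coincide as sets. Thus both bits are complemented together and the XOR is invariant, exactly as in part~(i); the promise is also preserved.

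\textbf{Step 3: complexity transfer and the main obstacle.} Local decoding reduces $g_v$ to $F^{\oplus}_s$. Conversely, encoding $A'$ by the labelling that puts $\ell_i^{A'_i}$ on the smaller of two fixed labels gives the reverse reduction. Hence all three complexities coincide, and Theorem~\ref{thm:forrelation} with $v=s^{1+o(1)}$ yields both bounds. The main obstacle is the asymmetric cherry: once the two leaves differ in shape, the swapped labelling is no longer a copy of the same rigid tree. The fix we would try is to use the two labels themselves rather than the shape. The cherry is a symmetric pair of leaves inside an otherwise rigid labelled structure, so each party's graph is a copy of $B$ in which the cherry leaves are indistinguishable, and $\phi_A$ is unique only up to the cherry swaps. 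The decoded bit is defined via the label order, which is canonical. Consequently $B$ itself need not be rigid in the cherries, which conflicts with the statement's rigidity claim. If rigidity is required, we would instead let the bit select which of two rigid attachment sites a fixed asymmetric gadget occupies, relative to a common labelled pair of sites. Verifying that this keeps a single isomorphism type with the comparison trick is the step we expect to need the most care.
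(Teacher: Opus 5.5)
There is a genuine gap. The proposal never settles on a construction that meets the statement, and the obstacle that makes you abandon the workable version does not exist. Relabelling always preserves the isomorphism type. If $\phi_A\colon B\to A$ is an isomorphism and $t$ swaps $\ell_i^0,\ell_i^1$ inside $V(B)$, then $(\phi_A t)(B)$ is by definition a labelled copy of $B$, whatever the shapes of the two leaves. The shape only decides whether the bit is visible. When $t$ is an automorphism of $B$, as for your symmetric cherry, $(\phi_A t)(B)=\phi_A(B)$, so Bob's labelled graph equals Alice's for every $z$. Moreover, $\phi_A$ is then determined only up to $t$, so the bit $[\phi_A(\ell_i^0)<\phi_A(\ell_i^1)]$ is not a function of the graph. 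The symmetric-cherry fallback therefore does more than break rigidity: it encodes nothing. Your final suggestion, moving an asymmetric gadget between attachment sites, is left unverified.

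The missing idea is that the swapped pairs need no structural role. Since $B$ is rigid, a local graph $A=\sigma B$ determines $\sigma$ uniquely. You can then store $z$ in the relative permutation $\sigma^{-1}\tau=h_z=\prod_i(2i-1\ 2i)^{z_i}$, where the pairs are arbitrary disjoint vertex pairs of $B$. These transpositions must \emph{fail} to be automorphisms, and rigidity guarantees exactly that.

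Your Step~2 comparison decoding is the paper's canonical coset representative: sort each pair to obtain $r$, then write $\sigma=rh_{A'}$. Your Step~3 encoding is the paper's map $A'\mapsto h_{A'}B$. So the rigid asymmetric-cherry tree mentioned at the end of your Step~1 would already work with Steps~2--3 unchanged, once its rigidity and degree bound are checked. In fact no gadgets or index markers are needed. The paper joins three paths of edge lengths $1,2,v-4$ at a common endpoint, with $v=2s$. This tree is rigid, has maximum degree three, and avoids both the $O(s\log s)$ blow-up and the degree bookkeeping at the $w_i$.
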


\begin{proof}
Set $v\coloneqq2s$, taking seed sizes large enough that $v\ge8$, and fix a labelled tree $B$ formed by attaching three paths of edge lengths $1,2,v-4$ to one common endpoint.  The tree has $v$ vertices and maximum degree three.  Its common endpoint is the unique vertex of degree three, and its three arms have distinct lengths at the sizes under consideration.  Every automorphism therefore fixes the endpoint, each arm, and finally each vertex by its distance from the endpoint.  Thus $B$ is rigid, and every labelled copy has a unique representation $\sigma B$, with $\sigma\in S_v$.

Consider the subgroup
$$
H\coloneqq\langle(1\ 2),(3\ 4),\ldots,(2s-1\ 2s)\rangle,
\qquad
h_z\coloneqq\prod_{i=1}^{s}(2i-1\ 2i)^{z_i}.
$$
Its elements are indexed by $z\in\{0,1\}^{s}$, with $h_{A'}h_{B'}=h_{A'\oplus B'}$ and $h_{A'}^{-1}=h_{A'}$ for all $A',B'\in\{0,1\}^{s}$.  For local graphs $A=\sigma B$ and $C=\tau B$, define
\begin{equation}\label{eq:counter-11}
g_v(A,C)\coloneqq\psi(z)
\quad\text{if }\sigma^{-1}\tau=h_z\in H,
\end{equation}
leaving all other inputs undefined.  Values for which $\psi(z)=*$ are also undefined.  The unique representations make this definition unambiguous. Common relabelling preserves the relative permutation because $(\pi\sigma)^{-1}(\pi\tau)=\sigma^{-1}\tau$; hence it preserves the value and the promise.

For the forward reduction, Alice maps her seed input $A'$ to $h_{A'}B$ and Bob maps $B'$ to $h_{B'}B$.  The relative permutation is $h_{A'\oplus B'}$, so the output is $F^{\oplus}_s(A',B')$.

For the reverse reduction, the promise implies $\sigma H=\tau H$, or equivalently
$$
\{\sigma(2i-1),\sigma(2i)\} =\{\tau(2i-1),\tau(2i)\} \quad(i\in[s]).
$$
Each party recovers its unique permutation using unrestricted local computation.  Sorting the two images in every pair in increasing order defines a canonical representative $r$ of the common left coset.  Both parties obtain the same $r$ locally and let $\sigma=rh_{A'}$, $\tau=rh_{B'}$.  They then run the seed protocol on their decoded strings, since $\sigma^{-1}\tau=h_{A'\oplus B'}$.  Thus
\begin{equation}\label{eq:counter-12}
\mathsf C(g_v)=\mathsf C(F^{\oplus}_s), \qquad \mathsf C\in\{\Rpub,Q,\Qstar\}.
\end{equation}
Theorem~\ref{thm:forrelation} gives the desired bounds. Every local graph is a connected tree of maximum degree three with the same isomorphism type.
\end{proof}

There are only $n!$ local permutation inputs in Section~\ref{sec:counter-permutations} and $v!$ labelled copies of the rigid tree here.  Sending an enumeration index gives upper bounds $O(n\log n)$ and $O(v\log v)$, respectively. For every fixed $\varepsilon>0$, our randomized lower bounds are $\Omega_\varepsilon(n^{1-\varepsilon})$ and $\Omega_\varepsilon(v^{1-\varepsilon})$. Thus their polynomial exponents approach the largest possible value under these promises. This does not determine the optimal logarithmic factors.

\subsection{General Graph Encoding with Quadratic Capacity}

If the local isomorphism type may depend on the input, a graph on $v$ vertices can store $\Theta(v^2)$ locally recoverable bits.  The following explicit encoding also makes the graph function invariant under \emph{independent} relabelling of the two local graphs, a stronger condition than simultaneous relabelling. The result is a universal encoding theorem: it preserves every communication problem, and applying it to Theorem~\ref{thm:forrelation} gives part~(iii) of Theorem~\ref{thm:separations}. Its randomized lower bound is inherited from that communication problem. The distinction from the preceding constructions is that information can now be stored in each local isomorphism type, rather than only in the relative labelling of fixed local structures.

\begin{theorem}[Encoding Communication Problems by Graphs]\label{thm:general-graphs} 
Every partial communication function $F_m$ with $m\ge1$ input bits on each side has a complexity-preserving embedding into connected simple graphs on $v=\Theta(\sqrt m)$ vertices.  The resulting partial function $\widetilde F_v$ is invariant under independent relabelling, and
\begin{equation}\label{eq:counter-13}
\mathsf C(\widetilde F_v)=\mathsf C(F_m), \qquad \mathsf C\in\{\Rpub,Q,\Qstar\}.
\end{equation}
Consequently, for every fixed $\varepsilon\in(0,1)$ there are infinitely many $v$ and explicit graph functions satisfying
\begin{equation}\label{eq:counter-14}
Q(\widetilde F_v)=O_\varepsilon(\log v),
\qquad
\Rpub(\widetilde F_v)=\Omega_\varepsilon(v^{2-\varepsilon}).
\end{equation}
\end{theorem}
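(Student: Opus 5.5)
The plan is to build an explicit encoding $\mathrm{Enc}:\{0,1\}^m\to\{\text{connected simple graphs on }[v]\}$ together with a local decoder $\mathrm{Dec}$, and to define
$$
\widetilde F_v(G_A,G_B)\coloneqq F_m(\mathrm{Dec}(G_A),\mathrm{Dec}(G_B)).
$$
The function is undefined whenever either graph is not isomorphic to an encoded graph, or when the decoded pair lies outside the promise of $F_m$. We need two properties. First, $\mathrm{Dec}$ should be isomorphism-invariant, meaning $\mathrm{Dec}(\pi G)=\mathrm{Dec}(G)$. Second, $\mathrm{Dec}(\mathrm{Enc}(w))=w$. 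Together these give invariance under independent relabelling immediately. Equation~\eqref{eq:counter-13} then follows exactly as in the proofs of parts~(i) and~(ii). In the forward reduction, each party encodes its $F_m$ input locally. In the reverse reduction, each party decodes locally on promised inputs. No promise check is needed, and the error probability and resource model are unchanged.

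For the encoding, fix $k\coloneqq\lceil\sqrt{2m}\,\rceil+1$, so that $\binom k2\ge m$. Use $k$ data vertices $d_1,\ldots,d_k$ and an isomorphism-invariant backbone that names each data vertex. A concrete choice is a path $p_1p_2\cdots p_L$ with $L=\Theta(k)$, where $d_j$ is adjacent to $p_{2j}$. To make the end $p_1$ canonically distinguishable, add a pendant triangle or a long tail at $p_1$. Bit $w_{(i,j)}$, under a fixed enumeration of the pairs $i<j$, is stored as the presence of the edge $d_id_j$. Unused pairs are left empty. The total vertex count is $v=\Theta(k)=\Theta(\sqrt m)$, and the graph is connected through the backbone.

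The main obstacle is proving that the decoder is well defined. One must recover the backbone and the index of each data vertex from the isomorphism class alone, however the data edges are placed. The first approach I would try is a degree-based gadget. Give the backbone endpoint a private marker whose structure no data configuration can imitate, for example a pendant clique $K_5$ attached at $p_1$. Data vertices then have degree at most $k$, and one checks that no $K_5$ can appear among them together with the required attachments. If that argument becomes delicate, I would switch to a separating convention. Attach to $d_j$ a pendant path of length $j$ instead of edges to the backbone, so that $d_j$ is identified by the length of its unique hanging path. Each $d_j$ has only a bounded number of such pendant leaves, so path lengths are readable. The total vertex count is then $\Theta(k^2)$, which is too many. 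To stay at $\Theta(k)$ vertices, I would instead use distinct-degree tagging. Connect $d_j$ to a hub set of size $j+k$, which separates data degrees by the hub contribution no matter which data edges are present. The data edges contribute at most $k-1$ to each degree, so the residual intervals would then be taken disjoint by using hub contributions $2jk$ with $O(k)$ hubs shared via a binary-style attachment. I would fix whichever variant admits the cleanest proof that every isomorphism between two encoded graphs maps $d_j\mapsto d_j$. Once that holds, $\mathrm{Dec}$ reads the edge $d_id_j$ for each pair and is invariant.

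Finally, for~\eqref{eq:counter-14}, I apply the embedding to $F^\oplus_s$ from Theorem~\ref{thm:forrelation} with $m=s$. This gives $v=\Theta(\sqrt s)$, so $Q=O_\varepsilon(\log s)=O_\varepsilon(\log v)$ and $\Rpub=\Omega_\varepsilon(s^{1-\varepsilon})=\Omega_\varepsilon(v^{2-2\varepsilon})$. Renaming $\varepsilon$ gives the stated bound.
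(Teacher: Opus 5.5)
Your reduction framework is sound and matches the paper. Suppose you have an encoder $\mathrm{Enc}$ into connected graphs on $\Theta(\sqrt m)$ vertices and a decoder with $\mathrm{Dec}(\pi G)=\mathrm{Dec}(G)$ and $\mathrm{Dec}(\mathrm{Enc}(w))=w$. Then invariance under independent relabelling, the equalities \eqref{eq:counter-13}, and the final substitution $m=s$ follow as you describe. The gap is that the real content of the theorem is the existence of such a decoder for a concrete $\Theta(k)$-vertex construction. You do not establish it for any of your candidates.

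\textbf{Path construction.} In the path construction with $d_j\sim p_{2j}$, an end marker (triangle, $K_5$, or tail) only locates $p_1$. It does not separate backbone vertices from data vertices. A data vertex with one data edge has degree $2$, like an odd interior backbone vertex. A data vertex with two data edges has degree $3$, like an even one. So already at $p_2$ you cannot locally tell $p_3$ from $d_1$. You would need a global argument excluding alternative ``backbones'' that weave from the path into the data vertices and back, and you leave this open.

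\textbf{Other variants.} The pendant-path variant uses $\Theta(k^2)$ vertices, as you note. The degree-tagging variant is infeasible: hub contributions of order $jk$, up to $2k^2$, require that many distinct neighbours, and $O(k)$ hubs cannot supply them in a simple graph.

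\textbf{The missing idea.} You need a marker that identifies backbone vertices locally, no matter how the data edges are placed. The paper attaches two pendant leaves to every backbone vertex $b_i$, and two more to $b_1$. Then the backbone is exactly the set of vertices with at least two degree-one neighbours. A data vertex never has a leaf neighbour: its backbone neighbour is not a leaf, and any data neighbour $d_j$ is also adjacent to $b_j$. Counting leaf neighbours (four or five at $b_1$, two or three elsewhere) identifies $b_1$, and hence the order along the path. If $d_i$ has a data edge, it is the unique non-leaf neighbour of $b_i$ off the backbone. If it has none, its row of bits is zero and its identity is irrelevant. This gives $v=4r+2$ and exactly the isomorphism-invariant decoder your framework requires.
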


\begin{proof}
Let $r\ge3$.  For a string $x\in\{0,1\}^{\binom r2}$ indexed by pairs $1\le i<j\le r$, construct a graph $G(x)$ with a backbone path $b_1-\cdots-b_r$ and data vertices $d_1,\ldots,d_r$.  Add the edges $b_id_i$, two marker leaves adjacent to each $b_i$, and two additional marker leaves adjacent to $b_1$.  Finally, add $d_id_j$ for $i<j$ if and only if $x_{ij}=1$.  This is a connected simple graph with
\begin{equation}\label{eq:counter-15}
v=4r+2.
\end{equation}

We describe an isomorphism-invariant decoding.  The backbone vertices are exactly the vertices having at least two degree-one neighbours.  Indeed, each backbone vertex has its two marker leaves.  A data vertex has no degree-one neighbour: its backbone neighbour is not a leaf, and any adjacent data vertex is also adjacent to its own backbone vertex.  A marker leaf likewise has no degree-one neighbour.  Thus the backbone is recognisable without vertex labels.

The vertex $b_1$ has four or five degree-one neighbours, while every other backbone vertex has two or three.  This identifies $b_1$ uniquely, and the induced backbone path then orders $b_1,\ldots,b_r$.  If $d_i$ participates in a data edge, it is the unique nonleaf neighbour of $b_i$ outside the backbone.  Otherwise it may be indistinguishable from a marker leaf, but the $i$th data row is all zero, so its identity is unnecessary.  This recovers every bit $x_{ij}$ and proves
\begin{equation}\label{eq:counter-16}
G(x)\cong G(x')\quad\Longleftrightarrow\quad x=x'.
\end{equation}

Given $F_m$, choose the smallest $r\ge3$ with $\binom r2\ge m$.  For $x\in\{0,1\}^m$, write $\bar x$ for its extension to $\binom r2$ bits by zero padding.  Promise that each local graph is a labelled copy of an encoding $G(\bar x)$.  Decode each graph locally and discard the padding, then set $\widetilde F_v(G(\bar x),G(\bar y))\coloneqq F_m(x,y)$; if either structural promise or the promise of $F_m$ fails, leave the value undefined.  The decoding is invariant under each party's relabelling separately.  Forward encoding and reverse decoding are both zero-communication reductions, proving \eqref{eq:counter-13}, and minimality of $r$ gives $v=\Theta(\sqrt m)$.

Apply this embedding to Theorem~\ref{thm:forrelation} with exponent loss $\varepsilon/2$. The lower bound $\Omega_\varepsilon(m^{1-\varepsilon/2})$ becomes $\Omega_\varepsilon(v^{2-\varepsilon})$, while $\log m=\Theta(\log v)$. This proves Theorem~\ref{thm:separations}(iii).
\end{proof}

Sending a full adjacency matrix costs $O(v^2)$ bits. Thus the universal encoding uses the available local information capacity up to a constant factor, and its randomized lower bound $\Omega_\varepsilon(v^{2-\varepsilon})$ approaches the largest possible polynomial exponent as $\varepsilon\to0$. This capacity statement complements the fixed-tree construction of Section~\ref{sec:counter-tree}, whose local isomorphism type contains no varying information. In both cases, the communication separation is inherited from the seed; the encodings identify which information survives the specified symmetries.

All of these separations concern partial functions: assigning arbitrary values outside the promise does not establish the same quantum upper bounds for a total function.

\section{Exponent Obstructions and Remaining Limitations}\label{sec:limits}
We distinguish several questions about the bounds. Set disjointness prevents a fixed decrease in the quadratic exponent of quantum communication. Hamming-weight encodings show that a simulation must depend on the input length and prevent a fixed decrease in the logarithmic exponent. A different, fixed-margin construction shows that the logarithm in the quantum upper bound in terms of $h_f^{-1}$ is necessary. We then examine the dependence on the alphabet size and explain what these examples leave unresolved about classical simulation with fixed margins. The examples do not establish optimality of~\eqref{eq:main} throughout the full range of quantum communication complexities.

\subsection{Tightness of the Quadratic Exponent}

Binary set disjointness $\DISJ_n(x,y)$ is $+1$ when the supports of $x,y\in\{0,1\}^n$ are disjoint and $-1$ otherwise. \begin{lemma}[Communication Complexity of Disjointness, {\cite{Razborov92,Razborov03,AASpatial05}}]\label{lem:disjointness-bounds} The bounded-error complexities satisfy
$$
\Rpub(\DISJ_n)=\Theta(n),\qquad \Qstar(\DISJ_n)=\Theta(\sqrt n),\qquad Q(\DISJ_n)=\Theta(\sqrt n).
$$
The randomized lower bound is from~\cite{Razborov92}; its distributional bound also applies to public-coin protocols by averaging over their randomness. The entanglement-assisted quantum lower bound is \cite{Razborov03}, and the unassisted quantum upper bound is proved in~\cite{AASpatial05}.
\end{lemma}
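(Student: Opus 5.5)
This lemma collects known results, so the plan is to assemble the cited theorems and check that each one applies in the communication models of Definition~\ref{def:communication-models}.

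\textbf{Randomized bound.} The upper bound $\Rpub(\DISJ_n)=O(n)$ is trivial. Alice sends $x$ using $n$ bits, and Bob outputs the answer.

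For the lower bound, invoke Razborov's distributional theorem~\cite{Razborov92}. It gives an input distribution $\mu$ under which every deterministic protocol with error at most some constant $\epsilon_0$ must communicate $\Omega(n)$ bits. To transfer this to public-coin protocols, proceed as follows.
\begin{enumerate}[label=(\arabic*)]
\item Reduce the error of a public-coin protocol from $1/3$ to $\epsilon_0$ by a constant number of repetitions and a majority vote. This costs a constant factor in communication.
\item Average over the public randomness. By averaging, some fixing of the random string has distributional error at most $\epsilon_0$ under $\mu$.
\item That fixed string yields a deterministic protocol of the same cost, so Razborov's bound applies and gives $\Omega(n)$.
\end{enumerate}

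\textbf{Quantum bounds.} The chain $\Qstar\le Q$ holds by definition. It therefore suffices to prove $Q(\DISJ_n)=O(\sqrt n)$ and $\Qstar(\DISJ_n)=\Omega(\sqrt n)$.

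For the upper bound, cite the Aaronson--Ambainis spatial search protocol~\cite{AASpatial05}. It removes the $\log n$ overhead of the Buhrman--Cleve--Wigderson simulation of Grover search~\cite{BCW98}. It uses no prior entanglement, matching the unassisted model $Q$. The protocol searches for an index $i$ with $x_i=y_i=1$, so its output determines $\DISJ_n$ with bounded error.

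For the lower bound, cite Razborov's theorem~\cite{Razborov03}, which explicitly allows prior entanglement. Alternatively, derive it inside the paper as follows.
\begin{itemize}
\item $\DISJ_n$ is a binary permutation-invariant function.
\item Restrict to margins with $|x|=|y|=n/4$.
\item Compare the joint types with intersection $0$ and intersection $1$.
\item A direct computation with the cut event $E$ isolating the $(1,1)$ cell gives $h(T,U)=\Theta(1/\sqrt n)$.
\item Theorem~\ref{thm:inverse} then yields $\Qstar(\DISJ_n)+1=\Omega(\sqrt n)$.
\end{itemize}

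\textbf{Main obstacle.} The only delicate step is confirming that $h(T,U)=O(1/\sqrt n)$ for this pair of types. Here $h(T,U)$ is the maximum of $h_E(T,U)$ over all cut events $E\in\calE$, so every $E$ must be checked, not only the one isolating the $(1,1)$ cell.

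The two types differ only by a single unit alternating cycle. The entries change by one in cells whose counts are either $0\to1$ or of order $n$.
\begin{itemize}
\item \emph{Cells of count order $n$:} a unit change contributes $O(1/n)$ to the numerator of $h_E^2$.
\item \emph{The $0\to1$ cell:} any event $E$ containing it also contains a cell of count $\Theta(n)$. This holds because the cut separates $R_1$ from $K_1$, and every other edge has large count. Hence the denominator of $h_E^2$ is $\Theta(n)$.
\end{itemize}
Together these give $h_E^2=O(1/n)$ for every $E$, as required.

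Citing~\cite{Razborov03} directly avoids this verification.
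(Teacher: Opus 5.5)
Your proposal is correct and follows the paper, which justifies this lemma purely by citation. The ingredients are the same as yours: the trivial $O(n)$ upper bound, Razborov's distributional bound transferred to public-coin protocols by averaging (your error-amplification step just makes this explicit), $\Qstar\le Q$, the unassisted $O(\sqrt n)$ protocol of Aaronson--Ambainis, and Razborov's entanglement-assisted $\Omega(\sqrt n)$ lower bound.

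Your optional in-paper derivation via Theorem~\ref{thm:inverse} is also valid and non-circular, since that theorem's proof never uses disjointness. One slip: $\{(1,1)\}$ is not itself a cut event, because no binary cut isolates a single cell. The maximum $h(T_0,T_1)=\Theta(1/\sqrt n)$ is instead attained at the row-$1$ or column-$1$ event. What the upper bound actually needs is exactly your observation that every event containing $(1,1)$ also contains a cell of count $\Theta(n)$.
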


\begin{proposition}[Quadratic Exponent Obstruction]\label{prop:quadratic-obstruction} 

For every fixed $\eta>0$ and $K\ge0$, there is no constant $C$ such that all binary permutation-invariant partial functions $f$ with $\Qstar(f)\ge1$ satisfy
$$
\Rpub(f)\le C\Qstar(f)^{2-\eta}(\log n)^K.
$$
\end{proposition}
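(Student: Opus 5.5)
The plan is to argue by contradiction using the disjointness family $\DISJ_n$, which is binary and permutation-invariant, and whose bounds are given by Lemma~\ref{lem:disjointness-bounds}. Suppose that for some fixed $\eta>0$ and $K\ge0$ a constant $C$ existed with $\Rpub(f)\le C\Qstar(f)^{2-\eta}(\log n)^K$ for every binary permutation-invariant partial $f$ with $\Qstar(f)\ge1$. We specialize to $f=\DISJ_n$.

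First we check the hypotheses. $\DISJ_n(\pi x,\pi y)=\DISJ_n(x,y)$ holds because disjointness of supports is preserved under simultaneous coordinate permutation. By Lemma~\ref{lem:disjointness-bounds}, $\Qstar(\DISJ_n)=\Theta(\sqrt n)$, so $\Qstar(\DISJ_n)\ge1$ for all sufficiently large $n$. (Even for small $n$ the function is nonconstant in Alice's input given Bob's input, so $\Qstar\ge1$ by the no-signalling remark after Definition~\ref{def:communication-models}; only large $n$ are needed, however.)

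Next we substitute the bounds. There are absolute constants $c_1,c_2>0$ with $\Rpub(\DISJ_n)\ge c_1 n$ and $\Qstar(\DISJ_n)\le c_2\sqrt n$ for large $n$. The assumed inequality then gives
$$
c_1n\le C c_2^{2-\eta} n^{1-\eta/2}(\log n)^K,
$$
that is, $n^{\eta/2}\le (Cc_2^{2-\eta}/c_1)(\log n)^K$. The left side grows polynomially and the right side only polylogarithmically, so this fails for all large $n$. This contradiction proves the proposition.

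The only delicate point is that we need the upper bound on $\Qstar$ (from the unassisted protocol of~\cite{AASpatial05}, together with $\Qstar\le Q$) and the lower bound on $\Rpub$ (from~\cite{Razborov92}). The entanglement-assisted lower bound from~\cite{Razborov03} is not needed here. Both bounds are already packaged in Lemma~\ref{lem:disjointness-bounds}, so I expect no real obstacle beyond choosing $n$ large enough that $\Qstar(\DISJ_n)\ge1$ and the polynomial dominates the polylogarithm.
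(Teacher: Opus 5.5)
Your approach is the paper's own (set disjointness, with $\Rpub(\DISJ_n)=\Theta(n)$ and $\Qstar(\DISJ_n)=\Theta(\sqrt n)$), but one step fails on part of the parameter range. The statement quantifies over every $\eta>0$. When $\eta>2$, the exponent $2-\eta$ is negative, so $x\mapsto x^{2-\eta}$ is decreasing on $(0,\infty)$. Your upper bound $\Qstar(\DISJ_n)\le c_2\sqrt n$ then gives a \emph{lower} bound $\Qstar(\DISJ_n)^{2-\eta}\ge c_2^{2-\eta}n^{1-\eta/2}$. Hence the displayed inequality $c_1n\le Cc_2^{2-\eta}n^{1-\eta/2}(\log n)^K$ does not follow from the assumed bound. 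For $0<\eta\le2$ your argument is correct as written.

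The repair is easy. For $\eta\ge2$, use only $\Qstar(\DISJ_n)\ge1$, which you already verified. It gives $\Qstar(\DISJ_n)^{2-\eta}\le1$, so the assumed bound would force $c_1n\le C(\log n)^K$, which is again impossible for large $n$. Alternatively, as the paper does, use the two-sided estimate $\Qstar(\DISJ_n)^{2-\eta}=\Theta(n^{1-\eta/2})$, which holds whatever the sign of $2-\eta$. That route does use the entanglement-assisted lower bound of \cite{Razborov03}. So your remark that this bound is not needed is correct only if you treat $\eta\ge2$ separately via $\Qstar\ge1$.
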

\begin{proof}
Set disjointness is permutation-invariant. By Lemma~\ref{lem:disjointness-bounds}, for any fixed $\eta>0$, the two-sided quantum estimate gives $\Qstar(\DISJ_n)^{2-\eta}=\Theta(n^{1-\eta/2})$, including when $2-\eta\le0$.  Thus, for fixed $K\ge0$, a universal bound $O_q(\Qstar(f)^{2-\eta}(\log n)^K)$ would imply
$$
\Rpub(\DISJ_n) =O\bigl(n^{1-\eta/2}(\log n)^K\bigr) =o(n),
$$
contradicting the randomized lower bound, since $n^{-\eta/2}(\log n)^K\to0$.  Thus no fixed polylogarithmic factor in $n$ allows a fixed positive decrease in the quadratic exponent.
\end{proof}

\subsection{Necessity of the Input-Length Dependence}
\label{sec:weight-obstruction}

Ghazi, Kamath, and Sudan~\cite{GKS16} observed the universal expressiveness of binary permutation-invariant functions under exponential input-length expansion. Combined with the communication separations of Bansal and Sinha~\cite[Corollary~1.6(a)]{BS} and Sherstov, Storozhenko, and Wu~\cite{SSW}, this observation already yields a separation of the form \eqref{eq:weight-separation}. We give the reduction explicitly using the XOR communication function in Theorem~\ref{thm:forrelation}. The following lemma records the Hamming-weight encoding and its exact preservation of the three communication models used here.

\begin{lemma}[Hamming-Weight Encoding, following {\cite{GKS16}}]\label{lem:weight-encoding}
Let $m\ge2$, let $F_m:\{0,1\}^m\times\{0,1\}^m\to\{-1,+1,*\}$ be a partial communication function, and set $n\coloneqq2^m-1$.  Let $\bin_m(t)$ denote the $m$-bit representation of $t\in\{0,\ldots,n\}$.  The function
$$
f_n(x,y)\coloneqq F_m\bigl(\bin_m(|x|),\bin_m(|y|)\bigr)
$$
on binary inputs is invariant under independent coordinate permutations, and
$$
\mathsf C(f_n)=\mathsf C(F_m), \qquad\mathsf C\in\{\Rpub,Q,\Qstar\}.
$$
\end{lemma}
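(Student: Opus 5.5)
The plan is to establish invariance first, and then prove the complexity equality through two zero-communication reductions, one in each direction.

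\emph{Invariance.} For independent permutations $\pi,\pi'\in S_n$ we have $|\pi x|=|x|$ and $|\pi' y|=|y|$. Hence $f_n(\pi x,\pi' y)=F_m(\bin_m(|x|),\bin_m(|y|))=f_n(x,y)$, and this includes the undefined value $*$. The weight $|x|$ ranges over $\{0,\ldots,n\}=\{0,\ldots,2^m-1\}$, so $\bin_m$ is a bijection onto $\{0,1\}^m$, and every $m$-bit string is realized as the encoding of some weight.

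\emph{Direction $\mathsf C(f_n)\le\mathsf C(F_m)$.} Given inputs $x,y\in\{0,1\}^n$, Alice computes $a\coloneqq\bin_m(|x|)$ and Bob computes $b\coloneqq\bin_m(|y|)$, each without communication. They then run an optimal protocol for $F_m$ on $(a,b)$. Whenever $f_n(x,y)\ne *$, we have $F_m(a,b)=f_n(x,y)\ne *$, so the error bound carries over unchanged. No communication, shared randomness, or entanglement is added.

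\emph{Direction $\mathsf C(F_m)\le\mathsf C(f_n)$.} Given $a\in\{0,1\}^m$, Alice forms the integer $t_a$ with $\bin_m(t_a)=a$ and the canonical string $x=1^{t_a}0^{n-t_a}$; Bob does the same with $b$ to obtain $y$. Then $f_n(x,y)=F_m(a,b)$ on the promise, so any protocol for $f_n$ applied to $(x,y)$ computes $F_m$. Again this costs nothing in any of the three models.

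Combining the two directions gives equality for $\mathsf C\in\{\Rpub,Q,\Qstar\}$. Since local computation is unrestricted and the reductions act separately on each party's input, there is no real obstacle. The only points to check are that $\bin_m$ is exactly onto, which is why $n=2^m-1$ is chosen, and that the promise sets correspond exactly under these maps.
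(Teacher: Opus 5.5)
Your proposal is correct and follows essentially the same route as the paper: invariance because permutations preserve Hamming weight, the forward reduction by computing weights locally, and the reverse reduction via the canonical encoding $1^{t}0^{n-t}$, whose weight $t$ recovers the original $m$-bit string. Your remark that $n=2^m-1$ makes $\bin_m$ a bijection onto $\{0,1\}^m$ is the point that lets the promise sets correspond exactly under the two reductions.
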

\begin{proof}
Coordinate permutations preserve both weights, hence the output and promise.  Local weight computation reduces $f_n$ to $F_m$, giving $\mathsf C(f_n)\le\mathsf C(F_m)$.  Conversely, if an $m$-bit string $a$ represents the integer $t$, encode it as $1^t0^{n-t}$. The resulting string has weight $t$, so applying $\bin_m$ recovers $a$.  This local reduction gives the reverse inequality in every model.
\end{proof}

Combining Theorem~\ref{thm:forrelation} with Lemma~\ref{lem:weight-encoding} gives the following quantitative consequence.

\begin{logrestatement}[Logarithmic Exponent Obstruction, Restated]
For every fixed $p\ge0$ and $\delta\in(0,1]$, there is no constant $C$ such that all binary permutation-invariant partial functions $f$ with $\Qstar(f)\ge1$ satisfy
$$
\Rpub(f)\le C\Qstar(f)^p(\log n)^{1-\delta}.
$$
\end{logrestatement}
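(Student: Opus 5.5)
The plan is to combine the XOR separation of Theorem~\ref{thm:forrelation} with the Hamming-weight encoding of Lemma~\ref{lem:weight-encoding}, and then compare growth rates. Fix $p\ge0$ and $\delta\in(0,1]$, and suppose for contradiction that a constant $C$ exists. Choose $\varepsilon\coloneqq\delta/2$. Theorem~\ref{thm:forrelation} supplies infinitely many even $s$ and functions $F^{\oplus}_s$ on $s$-bit inputs per side with
$$
Q(F^{\oplus}_s)=O_\varepsilon(\log s),\qquad \Rpub(F^{\oplus}_s)=\Omega_\varepsilon(s^{1-\varepsilon}).
$$
Apply Lemma~\ref{lem:weight-encoding} with $m\coloneqq s$ and $n\coloneqq2^s-1$. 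This gives binary partial functions $f_n$ that are invariant under independent, hence also simultaneous, coordinate permutations. They have the same $\Rpub$, $Q$, and $\Qstar$ as $F^{\oplus}_s$. Since $\log n=\Theta(s)$, we obtain
$$
\Qstar(f_n)\le Q(f_n)=O_\varepsilon(\log\log n),\qquad \Rpub(f_n)=\Omega_\varepsilon((\log n)^{1-\varepsilon}).
$$
This also yields Corollary~\ref{cor:no-pure-polynomial}.

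Two technical points need care. First, the hypothesis $\Qstar(f_n)\ge1$ must hold. If $\Qstar(f_n)=0$, the preliminaries give $\Rpub(f_n)=0$, which contradicts the randomized lower bound for large $s$. So $\Qstar(f_n)\ge1$ along the sequence. Second, only an upper bound on $\Qstar$ is needed on the right-hand side, so using $\Qstar\le Q$ is enough.

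Substituting these bounds into the putative inequality gives
$$
\Omega_\varepsilon\bigl((\log n)^{1-\delta/2}\bigr)\le C\,O_\varepsilon\bigl((\log\log n)^p\bigr)(\log n)^{1-\delta}.
$$
Equivalently, $(\log n)^{\delta/2}=O\bigl((\log\log n)^p\bigr)$. This fails as $n\to\infty$ along the sequence, because any positive power of $\log n$ eventually dominates any fixed power of $\log\log n$. That is the contradiction.

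The argument is essentially bookkeeping, and I expect no real obstacle. The only delicate points are choosing $\varepsilon$ strictly smaller than $\delta$ so that a positive power of $\log n$ survives, and checking that every constant hidden in $O_\varepsilon$ and $\Omega_\varepsilon$ is independent of $n$. The case $\delta=1$ needs no separate treatment: the inequality then reads $\Rpub\le C\Qstar^p$, and the same comparison with $\varepsilon=1/2$ applies.
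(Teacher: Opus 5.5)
Your proposal is correct and follows the paper's proof essentially step for step: take $\varepsilon=\delta/2$ in Theorem~\ref{thm:forrelation}, transfer the separation to binary inputs of length $n=2^m-1$ via Lemma~\ref{lem:weight-encoding}, and derive the contradiction $(\log n)^{\delta/2}=O\bigl((\log\log n)^p\bigr)$. Your explicit check that $\Qstar(f_n)\ge1$ along the sequence, using the no-signalling remark in the preliminaries, is a small point the paper leaves implicit.
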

\begin{proof}[Proof of Theorem~\ref{thm:log-obstruction}]
Fix $\varepsilon\in(0,1)$ and take $F_m=F_m^{\oplus}$ from Theorem~\ref{thm:forrelation}, so $Q(F_m)=O_\varepsilon(\log m)$ and $\Rpub(F_m)=\Omega_\varepsilon(m^{1-\varepsilon})$. Since $n=2^m-1$, we have $m=\log(n+1)=\Theta(\log n)$ and $\log m=\Theta(\log\log n)$.  Lemma~\ref{lem:weight-encoding} gives
\begin{equation}\label{eq:weight-separation}
Q(f_n)=O_\varepsilon(\log\log n),\qquad \Rpub(f_n)=\Omega_\varepsilon((\log n)^{1-\varepsilon})
\end{equation}
along an unbounded sequence of lengths.  Fix $p\ge0$ and $\delta\in(0,1]$, and take $\varepsilon=\delta/2$. Using $\Qstar(f_n)\le Q(f_n)$ and dividing the lower bound by $O((\log\log n)^p(\log n)^{1-\delta})$ would give
$$
\frac{(\log n)^{\delta/2}}{(\log\log n)^p}=O(1).
$$
The ratio diverges as $n\to\infty$, giving a contradiction.
\end{proof}

The same construction gives the corollary stated in the introduction, which we restate before proving it.

\begin{polynomialrestatement}
For every fixed $\varepsilon\in(0,1)$, there is a family of binary partial functions, invariant under independent coordinate permutations, such that along an unbounded sequence of lengths,
$$
Q(f_n)=O_\varepsilon(\log\log n),\qquad \Rpub(f_n)=\Omega_\varepsilon((\log n)^{1-\varepsilon}).
$$
Consequently, for every fixed $p\ge0$,
$$
\frac{\Rpub(f_n)}{(Q(f_n)+1)^p}\longrightarrow\infty.
$$
The same conclusion holds with $Q$ replaced by $\Qstar$.
\end{polynomialrestatement}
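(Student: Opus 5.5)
The construction is the one already used for Theorem~\ref{thm:log-obstruction}. Fix $\varepsilon\in(0,1)$ and let $F_m\coloneqq F^{\oplus}_m$ be the XOR function of Theorem~\ref{thm:forrelation}. It is defined along infinitely many even lengths $m$ and satisfies
$$
Q(F_m)=O_\varepsilon(\log m),\qquad \Rpub(F_m)=\Omega_\varepsilon(m^{1-\varepsilon}).
$$
Set $n\coloneqq 2^m-1$ and let $f_n$ be the Hamming-weight encoding of Lemma~\ref{lem:weight-encoding}.

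That lemma gives three things. First, $f_n$ is binary and invariant under independent coordinate permutations, which in particular includes simultaneous ones. Second, $\mathsf C(f_n)=\mathsf C(F_m)$ for $\mathsf C\in\{\Rpub,Q,\Qstar\}$. Third, since $m=\log(n+1)$, we have $\log m=\Theta(\log\log n)$ and $m^{1-\varepsilon}=\Theta((\log n)^{1-\varepsilon})$. Together these give the displayed bounds along the unbounded sequence $n=2^m-1$. This is exactly \eqref{eq:weight-separation}.

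For the ratio statement, fix $p\ge0$ and apply the first part with the exponent loss $\varepsilon=1/2$. This gives
$$
\frac{\Rpub(f_n)}{(Q(f_n)+1)^p}\ \ge\ \Omega\!\left(\frac{(\log n)^{1/2}}{(C\log\log n+1)^p}\right)\longrightarrow\infty.
$$
To replace $Q$ by $\Qstar$, I use $\Qstar(f_n)\le Q(f_n)$. This makes the denominator only smaller, so the ratio with $\Qstar$ in place of $Q$ also diverges. Equivalently, one can invoke the equality $\Qstar(f_n)=\Qstar(F_m)$ from Lemma~\ref{lem:weight-encoding}.

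No step is a real obstacle. The only points needing care are these:
\begin{itemize}
\item The lengths must be unbounded. This holds because Theorem~\ref{thm:forrelation} supplies infinitely many $m$.
\item The constants must depend only on $\varepsilon$, which is fixed before $p$ is chosen.
\item The quantum upper bound needs no shared resources. This is inherited from Theorem~\ref{thm:forrelation} through the local, zero-communication reductions of Lemma~\ref{lem:weight-encoding}.
\end{itemize}
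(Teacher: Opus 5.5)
Your proof is correct and follows the paper's route: the Hamming-weight encoding of Lemma~\ref{lem:weight-encoding} applied to the XOR separation of Theorem~\ref{thm:forrelation}, followed by $\Qstar\le Q$ for the last assertion. One small point is that you need not specialize to $\varepsilon=1/2$ in the ratio step. Since $(\log n)^{1-\varepsilon}/(\log\log n)^p\to\infty$ for every fixed $\varepsilon\in(0,1)$, the divergence holds for the family attached to the given $\varepsilon$, which is what the word ``consequently'' in the statement refers to.
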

\begin{proof}[Proof of Corollary~\ref{cor:no-pure-polynomial}]
Fix $\varepsilon\in(0,1)$ and use the family in~\eqref{eq:weight-separation}. For every fixed $p\ge0$,
$$
\frac{\Rpub(f_n)}{(Q(f_n)+1)^p} \ge\Omega_{\varepsilon,p}\!\left( \frac{(\log n)^{1-\varepsilon}}{(\log\log n)^p}\right) \longrightarrow\infty.
$$
The inequality $\Qstar(f_n)\le Q(f_n)$ gives the last assertion.
\end{proof}

Corollary~\ref{cor:no-pure-polynomial} shows that input-length dependence is necessary: a bound polynomial in quantum communication complexity alone is not possible, even for a binary alphabet. Thus the polynomial relation between randomized and quantum communication complexity conjectured by Guan et al.~\cite{GHYY} requires an additional dependence on the input length, even for binary permutation-invariant partial functions. The quantum upper bound requires neither prior entanglement nor shared randomness, whereas the classical lower bound allows public randomness.

Sending Alice's weight gives an $O(\log n)$ upper bound, but does not supply a matching lower bound.  Since $\varepsilon$ is fixed and cannot be set to zero in~\eqref{eq:weight-separation}, the obstruction rules out fixed decreases in the logarithmic exponent, not all smaller logarithmic factors.

These examples are constant on every fixed-margin restriction. They therefore do not rule out a purely quadratic simulation for fixed-margin restrictions.

\subsection{Necessity of the Logarithmic Factor in the Quantum Upper Bound}
The logarithmic factor in Theorem~\ref{thm:quantum-upper} is necessary over a range of separation parameters, even with binary fixed margins.
\begin{lemma}[Inner-Product Lower Bound, {\cite{CVDNT}}]
\label{lem:inner-product-lower}
For the Boolean inner-product function on $m$ bits per party, $\Qstar(\IP_m)=\Omega(m)$.
\end{lemma}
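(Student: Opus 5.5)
This lemma is cited from Cleve, van Dam, Nielsen, and Tapp~\cite{CVDNT}. The plan is to reproduce their short reduction, which has two steps: an entanglement-assisted protocol for $\IP_m$ yields a protocol that transmits $m$ classical bits from Bob to Alice, and Holevo's bound with superdense coding shows such transmission needs $\Omega(m)$ qubits of communication even with prior entanglement.

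\textbf{Step 1: error reduction.} Take a $c$-qubit entanglement-assisted protocol for $\IP_m$ with error at most $1/3$. Repeat it $O(1)$ times and take the majority, so the error drops below a small constant $\epsilon$; this costs $O(c)$ qubits. Then make the protocol clean. Run it coherently, with Alice's input $x$ placed in superposition and the output written into a fresh register as a phase $(-1)^{\IP(x,y)}$. Afterwards run the protocol in reverse to uncompute all workspace and returned messages. Because the output is correct with probability at least $1-\epsilon$, the final state is within $O(\sqrt\epsilon)$ of $\sum_x (-1)^{x\cdot y}|x\rangle$ tensored with the untouched entanglement. The cost is $O(c)$ qubits.

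\textbf{Step 2: Hadamard readout.} Alice prepares the uniform superposition over $x\in\{0,1\}^m$ and runs the clean phase protocol with Bob holding $y$. She then applies $H^{\otimes m}$ and measures, obtaining $y$ with constant probability. So $O(c)$ qubits of two-way communication, with prior entanglement, transmit $m$ bits from Bob to Alice with constant success probability.

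\textbf{Step 3: capacity bound.} By the Cleve--van Dam--Nielsen--Tapp extension of Holevo's theorem, conveying $m$ bits from Bob to Alice requires about $m/2$ qubits of communication in total, even with interaction and arbitrary prior entanglement. Nayak's bound gives the constant-success version; the factor $1/2$ reflects the superdense coding limit. Hence $c=\Omega(m)$.

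\textbf{Main obstacle.} The delicate point is Step 3: the interactive, entanglement-assisted capacity bound with bounded error. In the paper I would simply cite~\cite{CVDNT} for it, as the statement itself does, rather than reprove it.
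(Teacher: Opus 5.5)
Your proposal is correct and follows the same route as the paper. The paper gives no proof of its own and simply cites Cleve, van Dam, Nielsen, and Tapp~\cite{CVDNT}, whose argument is your clean phase-kickback protocol with Hadamard readout followed by a Holevo-type bound for interactive, entanglement-assisted transmission; note only that this bound constrains the qubits Bob sends to Alice (which implies your total-communication form), and its interactive, entanglement-assisted bounded-error version is proved in~\cite{CVDNT} itself or by Nayak and Salzman, not by Nayak's one-way random-access-code bound.
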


\begin{proposition}[Intersection Parity with Fixed Margins]\label{prop:membership}
Let $n$ be even and $1\le a\le n/4$ be an integer.  On the domain $|x|=a$, $|y|=n/2$, define
$$
f_{n,a}(x,y)\coloneqq|x\wedge y|\pmod2,
$$
with Boolean outputs relabelled as $\{-1,+1\}$.  This partial function is permutation-invariant and satisfies
$$
\frac1{2a}\le h_{f_{n,a}}\le\frac3a.
$$
For $n=a2^{m+1}$ with an integer $m\ge1$,
$$
\mathsf C(f_{n,a}) =\Theta\!\left(a\log\frac na\right), \qquad \mathsf C\in\{\Qstar,Q,\Rpub\}.
$$
When $a\le\sqrt n$ on these lengths, this equals $\Theta(h_{f_{n,a}}^{-1}\log n)$.  At $a=1$, we have $h_{f_{n,1}}=1$ exactly and all three complexities are $\Theta(\log n)$ on these lengths.
\end{proposition}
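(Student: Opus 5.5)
The plan has three parts: describe the fixed-margin types explicitly, bound $h_{f_{n,a}}$ by direct computation over the cut family $\calE$, and then prove matching communication bounds, with the lower bound coming from an inner-product embedding.

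\textbf{The types.} With margins $|x|=a$ and $|y|=n/2$, a joint type is determined by $k=|x\wedge y|$:
$$
T_{11}=k,\quad T_{10}=a-k,\quad T_{01}=n/2-k,\quad T_{00}=n/2-a+k .
$$
Since $a\le n/4$, every $k\in\{0,\ldots,a\}$ is realised. Simultaneous permutations preserve $k$, so the function is permutation-invariant. Opposite labels correspond to $k\not\equiv k'\pmod 2$.

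\textbf{Lower bound $h_{f_{n,a}}\ge 1/(2a)$.} Use the cut $E=\{(1,0),(1,1)\}$, obtained from $\alpha(1)=1$, $\alpha(0)=0$, $\beta\equiv0$. For any two types this cut has denominator $T(E)+U(E)=2a$.
\begin{itemize}
\item For adjacent values $k,k+1$, rationalising gives numerator
$$
\frac1{(\sqrt k+\sqrt{k+1})^2}+\frac1{(\sqrt{a-k}+\sqrt{a-k-1})^2}\ \ge\ \frac1{4(k+1)}+\frac1{4(a-k)}.
$$
\item The two denominators $k+1$ and $a-k$ sum to $a+1$, so by convexity of $t\mapsto1/t$ the right-hand side is at least $1/(a+1)\ge1/(2a)$.
\item Hence $h_E^2\ge 1/(4a^2)$, i.e. $h_E\ge1/(2a)$.
\item For non-adjacent opposite-parity pairs, each square-root difference is at least as large as for some adjacent pair between them, because the entries move monotonically in $k$. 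So the same bound holds for all opposite-label pairs.
\end{itemize}

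\textbf{Upper bound $h_{f_{n,a}}\le 3/a$.} Exhibit the adjacent pair $k_0=\lfloor a/2\rfloor$, $k_0+1$, and bound $h_E$ for every $E\in\calE$.
\begin{itemize}
\item Each changed cell moves by $1$, and $(\sqrt{t+1}-\sqrt t)^2\le 1/(4t)$.
\item The entries $T_{11},T_{10}$ are at least about $a/2-1$. The entries $T_{01},T_{00}$ are at least $n/4\ge a$.
\item So for any nonempty $E$, the numerator is at most $\sum_{e\in E}1/(4\min_e)$, where $\min_e$ is the smaller count of cell $e$ in the pair. The denominator is at least $2\max_{e\in E}\min_e$.
\item A short case check over the four cells gives $h_E^2\le 9/a^2$ for $a\ge2$. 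I expect this to be the fiddliest step: the small-$a$ cases where $\lfloor a/2\rfloor-1$ is tiny must be checked by hand.
\item For $a=1$, the bound $h\le1$ is automatic. The cut above gives numerator $1+1=2$ over denominator $2$, hence $h_{f_{n,1}}=1$ exactly.
\end{itemize}

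\textbf{Communication upper bound.} Alice sends the index of her weight-$a$ string, costing $\lceil\log\binom na\rceil=O(a\log(n/a))$ bits. Bob then computes the parity. This bounds $\Rpub$ and $Q$, and hence $\Qstar$.

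\textbf{Communication lower bound.} It suffices to lower-bound $\Qstar$, since $\Qstar\le Q$ and $\Qstar\le\Rpub$. Reduce $\IP_{am}$ to $f_{n,a}$ with $n=a2^{m+1}$.
\begin{itemize}
\item Split $[n]$ into $a$ blocks of size $2^{m+1}$, one per $m$-bit input pair $(u,w)$.
\item In each block, Alice places a single $1$ in the first half, at the position indexed by $u\in\{0,1\}^m$.
\item Bob puts the indicator of $S_w=\{z:\langle z,w\rangle=1\}$ on the first half, and $2^m-|S_w|$ ones on the second half. This gives exactly $2^m$ ones per block, so $|y|=n/2$ in total, and it works even when $w=0$.
\item The block's intersection is $[u\in S_w]=\langle u,w\rangle \bmod 2$. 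Summing over blocks, $|x\wedge y|\bmod2=\IP_{am}$.
\item Lemma~\ref{lem:inner-product-lower} then gives $\Qstar(f_{n,a})=\Omega(am)=\Omega(a\log(n/a))$.
\end{itemize}

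\textbf{Final identifications.} If $a\le\sqrt n$, then $\log(n/a)=\Theta(\log n)$, and $h^{-1}_{f_{n,a}}=\Theta(a)$ by the bounds above, so the complexity is $\Theta(h^{-1}_{f_{n,a}}\log n)$. The case $a=1$ gives $\Theta(\log n)$ together with $h_{f_{n,1}}=1$.
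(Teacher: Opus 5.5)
Your proposal is correct and follows the paper's proof essentially step for step: the row-$1$ cut with denominator $2a$ gives $h_{f_{n,a}}\ge1/(2a)$, the pair $\lfloor a/2\rfloor,\lfloor a/2\rfloor+1$ gives the upper bound, the enumeration protocol gives the communication upper bound, and an $\IP_{am}$ embedding (one Alice one per block, Bob padded to $2^m$ ones per block) gives the lower bound. The only small differences are in the bounds on $h_{f_{n,a}}$. For the lower bound, the paper uses $(\sqrt u-\sqrt v)^2\ge(u-v)^2/(4a)$ for row-$1$ entries, so it needs no adjacent-pair monotonicity step. For the upper bound, it dispatches all $a\le3$ by $h\le1\le3/a$; this is also what rescues your $a=2$ case, where $T_{10}$ reaches $0$ and your $1/(4\min_e)$ bound degenerates.
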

\begin{proof}
Both the domain and intersection parity are permutation-invariant. For $c\coloneqq|x\wedge y|$, the joint type, indexed by $0,1$, is
$$
T_c\coloneqq\begin{pmatrix} 
n/2-a+c&n/2-c\\
a-c&c
\end{pmatrix},\qquad c=0,\ldots,a.
$$
Every listed $c$ is realizable, and opposite labels correspond exactly to intersection counts of different parity. 

\paragraph{Evaluating the Type-Separation Parameter.}
For opposite labels $T_c,T_d$, we have $|c-d|\ge1$.  The cut isolating row $1$ has denominator $2a$ in~\eqref{eq:h}.  Its entries are at most $a$, so each corresponding pair satisfies
$$
(\sqrt u-\sqrt v)^2 =\frac{(u-v)^2}{(\sqrt u+\sqrt v)^2} \ge\frac{(u-v)^2}{4a}.
$$
Both differences have magnitude $|c-d|$, giving
$$
h(T_c,T_d)^2\ge \frac{2(c-d)^2/(4a)}{2a} =\frac{(c-d)^2}{4a^2}.
$$
Minimizing over opposite labels yields $h_{f_{n,a}}\ge1/(2a)$. For the reverse bound, take $c\coloneqq\lfloor a/2\rfloor$ and $d\coloneqq c+1$. For $a\ge4$, every row-$1$ entry in these types is at least $a/4$, and every row-$0$ entry is at least $n/4\ge a$.  Each cell changes by one; the same square-root identity bounds its numerator contribution by $1/a$. The seven nonempty binary cut events are the two rows, two columns, two perfect matchings, and all cells.  The row counts are $a,n-a$; the column counts are $n/2$; and the matching counts in $T_j$ are
$$
n/2-a+2j,\qquad n/2+a-2j.
$$
Both matching counts are at least $a$ because $n/2\ge2a$ and $0\le j\le a$.  Thus every event has count at least $a$ in each type, so its denominator is at least $2a$ and its numerator at most $4/a$. Hence $h(T_c,T_d)\le\sqrt2/a$.  For $a\le3$, use $h_{f_{n,a}}\le1\le3/a$, proving the claimed upper bound. At $a=1$, the two types are
$$
T_0=
\begin{pmatrix}
n/2-1 & n/2\\
1 & 0
\end{pmatrix},
\qquad
T_1=
\begin{pmatrix}
n/2 & n/2-1\\
0 & 1
\end{pmatrix}.
$$
The row-$1$ cut has numerator and denominator both equal to $2$. Together with $h(T_0,T_1)\le1$, this gives $h_{f_{n,1}}=1$. \paragraph{Communication Bounds for Intersection Parity.} For the lower bound, take $n=a2^{m+1}$ and index coordinates by
$$
(i,u,z)\in[a]\times\{0,1\}^m\times\{0,1\}.
$$
Split each input to $\IP_{am}$ into blocks $A_i,B_i\in\{0,1\}^m$. Alice forms $x_A$ by putting a one at $(i,A_i,0)$ for each $i$ and zeros elsewhere, giving weight $a$.  Bob sets
$$
y_B(i,u,z)\coloneqq\IP_m(u,B_i)\oplus z.
$$
For each $(i,u)$, exactly one choice of $z$ gives a one, so Bob's weight is $n/2$.  In the Boolean output convention,
$$
f_{n,a}(x_A,y_B) =\mathop{\oplus}\limits_{i=1}^a\IP_m(A_i,B_i) =\IP_{am}(A,B).
$$
This local reduction preserves the full promise. Lemma~\ref{lem:inner-product-lower} gives an $\Omega(am)$ entanglement-assisted quantum lower bound. Since $\log(n/a)=m+1\le2m$ for $m\ge1$,
$$
\Qstar(f_{n,a})\ge\Omega(am) =\Omega\!\left(a\log\frac na\right).
$$
For the upper bound, Alice sends the index of her weight-$a$ input in a fixed enumeration, using
$$
\left\lceil\log\binom na\right\rceil \le a\log(en/a)+1 =O\!\left(a\log\frac na\right).
$$
Bob computes the output exactly.  This deterministic protocol works in all three models; together with $\Qstar(f_{n,a})\le Q(f_{n,a})$ and $\Qstar(f_{n,a})\le\Rpub(f_{n,a})$, it proves their common order.  Finally, for $a\le\sqrt n$, $\tfrac12\log n\le\log(n/a)\le\log n$.  Combining this with the parameter bounds gives $\Theta(h_{f_{n,a}}^{-1}\log n)$, including $\Theta(\log n)$ at $a=1$.
\end{proof} 

For example, take $a\coloneqq2^{m+1}$ and $n\coloneqq a^2$.  Then 
$$
h_{f_{n,a}}^{-1}=\Theta(\sqrt n),\qquad \Qstar(f_{n,a})=\Theta(\sqrt n\log n).
$$
Thus no universal bound $O_q(h_f^{-1}+\log n)$ is possible: $O_q(h_f^{-1}\log n)$ is attained throughout the proposition's range $a\le\sqrt n$.  This is consistent with the bounds for AND-symmetric predicates~\cite{Suruga}, which concern total functions and use different parameters. These examples establish the necessity of the logarithm relative to $h_f^{-1}$ in the quantum upper bound. Their randomized and quantum complexities have the same order, so they do not establish a logarithmic loss in a comparison of $\Rpub(f)$ with $\Qstar(f)^2$.

\subsection{Dependence on a Growing Alphabet}\label{sec:alphabet-dependence}

Theorem~\ref{thm:main} gives a simulation for each fixed alphabet, with a coefficient $C_q$ that may depend on $q$. Proposition~\ref{prop:alphabet-obstructions} shows that suitable subpolynomial alphabets rule out uniform bounds polynomial in $Q(f),\log n,\log q$, while polylogarithmic alphabets with exponent greater than one already rule out $O(\Qstar(f)^2\log n)$ with a constant independent of $q$. It also forces $C_q\ge q^{1-o(1)}$ along an unbounded sequence of alphabet sizes; all three conclusions hold even when both local histograms are fixed.

\begin{proposition}[Growing-Alphabet Obstructions]\label{prop:alphabet-obstructions}
The following statements hold even for fixed-margin partial functions.
\begin{enumerate}[label=(\roman*)]
\item There are permutation-invariant families with $q=n^{o(1)}$ for which $\Rpub(f)$ exceeds every fixed polynomial in $Q(f),\log n,\log q$.
\item For every fixed $c>1$, there are permutation-invariant families with $q=\Theta((\log n)^c)$ for which $\Rpub(f)$ is not $O(\Qstar(f)^2\log n)$ with an absolute implied constant.
\item Any choice of coefficients $C_q$ making the bound in Theorem~\ref{thm:main} valid must satisfy $C_q\ge q^{1-o(1)}$ along an unbounded sequence of alphabet sizes.
\end{enumerate}
\end{proposition}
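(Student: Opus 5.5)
The plan is to reuse the permutation encoding of Section~\ref{sec:counter-permutations}, but with a smaller alphabet obtained by blowing up each seed bit into a block of repeated symbols. Take the XOR seed $F^{\oplus}_s$ of Theorem~\ref{thm:forrelation}, with $Q=O_\varepsilon(\log s)$ and $\Rpub=\Omega_\varepsilon(s^{1-\varepsilon})$. Use an alphabet of $q=2k$ symbols $\{(j,0),(j,1):j\in[k]\}$. Split the $s$ seed bits into $k$ groups of $s/k$ bits. Encode group $j$ using the two symbols $(j,0),(j,1)$ on a private block of $2s/k$ positions: the positions form $s/k$ pairs, and each pair carries the two symbols in an order determined by the bit.

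I would first try a weaker encoding in which every coordinate pair $\{2i-1,2i\}$ holds one symbol $0$ and one symbol $1$ in order $A_i$. This would make the function binary and fixed-margin, but simultaneous permutations could then merge or mix pairs from different seed bits. That is why the symbol indices $j$ are needed to keep the pairs distinguishable.

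The structural promise is the one from Section~\ref{sec:counter-permutations}: both parties occupy the same pair sets. The pair sets within a group are recoverable only up to a common relabelling, so the decoded string is well defined only up to a common permutation of the bits within each group together with a common mask. To make the function well defined, I would compose the seed with a group-symmetrization. Concretely, I would take a seed $\psi$ that is invariant under the required within-block permutations, or else pad each group to a size where the pairing is canonical. This choice is uncertain, and it is the main obstacle, so I would first attempt $k=s$. In that case each group holds a single bit, the construction coincides exactly with Theorem~\ref{thm:separations}(i), and $q=2s=n$. That setting fails the $q=n^{o(1)}$ requirement.

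The realistic route is therefore the Hamming-weight trick of Lemma~\ref{lem:weight-encoding}, applied per symbol pair to compress length. I would store a seed of $m$ bits in $q$ symbol counts, each count being an integer below $n$. This stores about $q\log n$ bits, but only in the margins. That contradicts the fixed-margin requirement, so instead I would put the information into the off-diagonal joint counts, whose margins are fixed. Concretely, I would use a $q\times q$ permutation-like type: a common block of $n/q$ coordinates for each symbol pair $(a,b)$, where $T_{ab}$ encodes a number determined by $A\oplus B$ through a Lemma~\ref{lem:weight-encoding} style unary code. This gives $m\approx q^2\log(n/q^2)$ stored bits with fixed margins.

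With $m$ bits stored, the three parts follow from parameter choices:
\begin{itemize}
\item For (i), choose $m=q^{2}\log n$ with $q=2^{\sqrt{\log n}}$. Then $\Rpub\ge m^{1-\varepsilon}$ is superpolynomial in $Q=O(\log m)$, in $\log n$, and in $\log q$.
\item For (ii), choose $q=(\log n)^c$ with $c>1$. Then $m\ge q\log n$ gives $\Rpub\ge(\log n)^{(1+c)(1-\varepsilon)}\gg(\log m)^2\log n$.
\item For (iii), compare $m^{1-\varepsilon}$ with $C_q(\log m)^2\log n$ along $m\approx q\log n$, with $\varepsilon\to0$ slowly.
\end{itemize}
Checking that this count encoding remains permutation-invariant and preserves all three models, as in Lemma~\ref{lem:weight-encoding}, is the key verification.
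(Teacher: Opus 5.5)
There is a genuine gap. The three parameter computations all rest on the joint-count construction, and that construction is never built and cannot work as described. You never give local decoders, i.e., maps from $x$ alone and from $y$ alone back to seed inputs whose XOR determines $f$. You also never say how $A\oplus B$, rather than a genuinely joint quantity such as an intersection size, would be written into a count $T_{ab}$. Without these, no quantum upper bound transfers from the seed.

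Worse, fine-grained information in fixed-margin joint counts is intrinsically expensive. A unary code puts the low-order bits of each stored number into unit changes of a count of size about $N_0\approx n/q^2$. Suppose two oppositely labelled promised types differ only by such changes. Then every cut event containing a changed cell has numerator $O_q(1/N_0)$ and denominator $\Omega(N_0)$, so $h=O_q(1/N_0)$. Theorem~\ref{thm:inverse} then forces $\Qstar(f)=\Omega_q(N_0)$, which is polynomial in $n$, not $O(\log m)$; this is exact intersection at fixed margins in disguise. Your first idea, blocks of repeated symbol pairs, fails for the reason you noticed: within a group the pairing of $(j,0)$- and $(j,1)$-positions is not canonical, so the ``same pair sets'' promise is not even defined.

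The missing idea is simple and keeps the $k=s$ instance you discarded. You rejected Theorem~\ref{thm:separations}(i) because $q=n$ there, but $n$ can be decoupled from $q$ by padding. Take a length-$q_0$ instance and add one new dummy symbol. Place $n-q_0$ copies of it at positions that are required to coincide for the two parties, with the remaining positions obeying the original paired-position promise. Then the following hold for every $n\ge q_0$:
\begin{itemize}
\item the alphabet has size $q=q_0+1$ and both histograms are fixed;
\item common permutations preserve the promise and the decoded XOR;
\item deleting or inserting dummies is a local reduction in both directions, so $Q=O_\varepsilon(\log q)$ and $\Rpub=\Omega_\varepsilon(q^{1-\varepsilon})$.
\end{itemize}

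With this padding, each part follows by a choice of parameters.
\begin{itemize}
\item For (i), take $n=\lceil 2^{(\log q)^2}\rceil$. Then $q=n^{o(1)}$ and $\log q/\log\log n\to\infty$, so $q^{1-\varepsilon}$ beats every fixed polynomial in $Q$, $\log n$, $\log q$.
\item For (ii), take $n=\lceil\exp(q^{1/c})\rceil$ and $\varepsilon<1-1/c$. This gives $\Rpub=\Omega((\log n)^{c(1-\varepsilon)})$ against $\Qstar(f)^2\log n=O(\log n(\log\log n)^2)$.
\item For (iii), no padding is needed, and your own $k=s$ instance already suffices. With $q=n$ you have $t=O_\varepsilon((\log q)^2)$, hence $t\log(2+q/t)=O_\varepsilon((\log q)^3)$ and $C_q\ge\Omega_\varepsilon(q^{1-\varepsilon}/(\log q)^3)$.
\end{itemize}
In (iii) the implied constants depend on $\varepsilon$. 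So instead of letting $\varepsilon\to0$ ``slowly'' inside one family, take $\varepsilon=1/(2j)$ and choose increasing sizes $q_j$ from the corresponding families, large enough that $C_{q_j}\ge q_j^{1-1/j}$.
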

\begin{proof}
Fix $\varepsilon\in(0,1)$.  Start with a length-$q_0$ instance of Theorem~\ref{thm:separations}(i), with each of its $q_0$ symbols occurring once.  For any $n\ge q_0$, add $n-q_0$ copies of a new dummy symbol at common positions.  Require these positions to coincide and the remaining symbols to satisfy the original paired-position promise.  The alphabet has size $q\coloneqq q_0+1$, and both local histograms are fixed. Each party can delete the dummy positions and decode its symbol pairs. Conversely, use fixed positions for the original encoding and fill the rest with dummies.  Both reductions are local; common permutations preserve the promise and decoded XOR.  Thus, writing $Q,\Qstar,\Rpub$ for the padded function's complexities, we retain
\begin{equation}\label{eq:padded-separation}
Q=O_\varepsilon(\log q),\qquad \Rpub=\Omega_\varepsilon(q^{1-\varepsilon}).
\end{equation}
\paragraph{Superpolylogarithmic Alphabets.}
If $\log q/\log\log n\to\infty$, this lower bound exceeds every fixed polynomial in $Q,\log n,\log q$.  Indeed, the logarithm of such a polynomial is $O(\log\log q+\log\log n)$, whereas that of the lower bound is $(1-\varepsilon)\log q+O_\varepsilon(1)$.  To obtain such a family, let $q=q_0+1$ range over the unbounded sequence of alphabet sizes furnished by the construction above, and set
$$
n\coloneqq\left\lceil 2^{(\log q)^2}\right\rceil.
$$
For sufficiently large $q$, we have $n\ge q_0$ and $\log n=(\log q)^2+o(1)$. Hence
$$
\frac{\log q}{\log\log n}\longrightarrow\infty, \qquad q=\exp\!\bigl(\Theta(\sqrt{\log n})\bigr)=n^{o(1)}.
$$
This proves part~(i), with $\varepsilon$ fixed throughout.
\paragraph{Polylogarithmic Alphabets.}
For $q=\Theta((\log n)^c)$, padding does not exclude polynomial bounds of arbitrary fixed degree.  It does exclude $O((\Qstar)^2\log n)$ with an absolute constant whenever $c>1$. Choose $0<\varepsilon<1-1/c$.  Equation~\eqref{eq:padded-separation} gives
$$
Q=O_\varepsilon(\log\log n),\qquad \Rpub=\Omega_\varepsilon((\log n)^{c(1-\varepsilon)}).
$$
The proposed bound is $O_\varepsilon(\log n\,(\log\log n)^2)$.  Since $c(1-\varepsilon)>1$, the ratio of the lower bound to this quantity diverges as
$$
\frac{(\log n)^{c(1-\varepsilon)-1}}{(\log\log n)^2}\longrightarrow\infty.
$$
A compatible subsequence is obtained by taking the constructed alphabet sizes and setting $n\coloneqq\lceil\exp(q^{1/c})\rceil$.  Then $n\ge q_0$ for large sizes and $q=\Theta((\log n)^c)$.
\paragraph{Growth of the Simulation Constant.}
For each fixed $\varepsilon\in(0,1)$, the unpadded family in Theorem~\ref{thm:separations}(i) has $q=n$ and satisfies
$$
Q=O_\varepsilon(\log q),\qquad \Rpub=\Omega_\varepsilon(q^{1-\varepsilon}).
$$
For $t=\min\{q,(\Qstar)^2\}$ in~\eqref{eq:main}, the inequality $\Qstar\le Q$ gives $t=O_\varepsilon((\log q)^2)$. Since $t\ge1$,
$$
t\log(2+q/t)=O_\varepsilon((\log q)^3).
$$
Dividing the classical lower bound by this quantity yields
$$
C_q\ge\Omega_\varepsilon\!\left( \frac{q^{1-\varepsilon}}{(\log q)^3}\right)
$$
along an unbounded sequence of alphabet sizes for each fixed $\varepsilon$. For each integer $j\ge1$, take $\varepsilon=1/(2j)$ and successively choose increasing sizes $q_j$ from the corresponding sequences, large enough that $C_{q_j}\ge q_j^{1-1/j}$. Such a choice is possible because $q^{1/(2j)}/(\log q)^3\to\infty$ for fixed $j$. Thus $C_{q_j}\ge q_j^{1-o(1)}$, without requiring constants uniform in $\varepsilon$.
\end{proof}

The lower bound on $C_q$ does not identify a transition between fixed and growing alphabets. The following estimate makes explicit which part of the upper-bound dependence is controlled by our argument.

\begin{proposition}[Alphabet Dependence of the Present Proof]
\label{prop:alphabet-upper}
Let $K(q)\coloneqq\max\{1,K_{q,1/(4q)}\}$, where $K_{q,1/(4q)}$ is the approximation constant in Lemma~\ref{lem:profile}. The coefficient in Theorem~\ref{thm:main} can be chosen to satisfy
\begin{equation}\label{eq:alphabet-upper}
C_q\le 2^{O(q)}K(q)^2,
\end{equation}
where the constant in $O(q)$ is absolute.
\end{proposition}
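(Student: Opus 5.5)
The plan is to trace the constant $C_q=4(q-1)+K_qD_q$ from the proof of Theorem~\ref{thm:main}. There $D_q=\max\{1,4a_q^{-2}\}$ and $K_q$ is the classical constant of Theorem~\ref{thm:classical}. It therefore suffices to show two things: $K_q\le 2^{O(q)}$, and $a_q^{-1}\le 2^{O(q)}K(q)$ (a polynomial factor in $q$ is absorbed into $2^{O(q)}$). Squaring the second bound gives $D_q\le 2^{O(q)}K(q)^2$, and the additive term $4(q-1)$ is harmless.

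\textbf{Classical constant.} In the proof of Theorem~\ref{thm:classical}, $L\le 2^{2q-1}$ and $\delta=1/(1000L)$. I will need the dependence of $\kappa$ on $\delta$ in Lemma~\ref{lem:scale}. I expect the nested-sampling argument of Appendix~\ref{sec:aux-scale} to give either $\kappa=O(1)$ with $O(\log(1/\delta)\log n)$ bits, or $\kappa=\mathrm{poly}(1/\delta)$. Either way, $\kappa\le 2^{O(q)}$ and the scale cost is $2^{O(q)}\log n$.

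Then $A_q=6400L\kappa q^2=2^{O(q)}$ and $D_q'=\kappa/\delta+1=2^{O(q)}$. The per-event cost is $O(\Phi_n(B)\log q)$, and there are $L$ events. So the total is $2^{O(q)}\Phi_n(s)\le 2^{O(q)}(A_q+1)\Phi_n(h^{-2})=2^{O(q)}\Phi_n(h^{-2})$, giving $K_q\le 2^{O(q)}$.

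\textbf{Lower-bound constant.} From the proof of Theorem~\ref{thm:inverse}, $a_q=\min\{1/(64q^2),\,b_q/(4\sqrt2 q)\}$, so I need $b_q^{-1}\le 2^{O(q)}K(q)$. In Lemma~\ref{lem:single-cycle} the chain is:
\begin{itemize}
\item $D_q(c+1)\ge c'_q\sqrt{(a+1)(b+1)}$ with $D_q=K(q)\,(1+\log(24q^3))=O(K(q)\log q)$;
\item $\sqrt{(a+1)(b+1)}\ge\sqrt{(B+g)\Lambda}/(16g)$.
\end{itemize}
Hence $b_q=c'_q/(16D_q)$. The crux is $c'_q=c_{M/\Delta}$ with $M/\Delta=24q^3$ in Lemma~\ref{lem:adjacent-grid}. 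Amplifying a gap $\Delta$ on a range $M$ up to the constant thresholds of Lemma~\ref{lem:nayak-wu} costs a degree factor polynomial in $M/\Delta$: compose with a polynomial of degree $O(M/\Delta)$ or $O((M/\Delta)^2)$ that maps the two separated values to opposite sides. So I would verify from Appendix~\ref{sec:aux-grid} that $c_{R}^{-1}\le\mathrm{poly}(R)$. This gives $c'^{-1}_q=\mathrm{poly}(q)$, $b_q^{-1}=\mathrm{poly}(q)K(q)$, and $a_q^{-1}=\mathrm{poly}(q)K(q)$, hence $D_q\le\mathrm{poly}(q)K(q)^2$.

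\textbf{Main obstacle.} The main obstacle is extracting explicit dependence from the two appendix lemmas, Lemmas~\ref{lem:scale} and~\ref{lem:adjacent-grid}, whose constants were stated only as existing. For Lemma~\ref{lem:adjacent-grid}, I would first redo the amplification with an explicit majority polynomial $\Maj_\ell$ of degree $\ell=O(\log)$ after an affine rescaling. I would check that the rescaled interval length changes only by a $\mathrm{poly}(R)$ factor. Any polynomial-in-$q$ losses are then absorbed into $2^{O(q)}$, which yields \eqref{eq:alphabet-upper}.
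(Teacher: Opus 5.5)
Your proposal is correct and follows essentially the same route as the paper: it traces $C_q=4(q-1)+K_qD_q$, bounds the classical constant by $2^{O(q)}$ using $L\le 2^{2q-1}$ and $\kappa=2^{a_0+1}\le 12/\delta=O(L)$, and bounds $a_q^{-1}$ by $\mathrm{poly}(q)K(q)$ (the paper gets $O(q^7K(q)\log(2q))$) via $b_q=c'_q/(16D_q)$ with $c'_q=c_{\mathrm{NW}}/\ell$. One small correction: the majority amplification in Appendix~\ref{sec:aux-grid} has degree $\ell=\Theta((M/\Delta)^2)=O(q^6)$, not anything logarithmic, and the affine rescaling acts only on the values of $F$, not on the interval length $a+b+1$; since you only need $c_{M/\Delta}^{-1}\le\mathrm{poly}(q)$, your conclusion is unaffected.
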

Appendix~\ref{sec:aux-alphabet} tracks the explicit classical parameters and the losses in the quantum lower bound to prove this estimate.

\paragraph{The Remaining Analytic Dependence.}
The estimate~\eqref{eq:alphabet-upper} isolates the dependence that is not made explicit as a function of $q$ in this paper. For $\beta=1/(4q)$, the proof of Lemma~\ref{lem:operator-contraction} can use $\alpha_q=(4q)^{-(2q+1)}$ in the multislice contraction theorem. If $K_{\mathrm{ext}}\ge1$ and $\eta_{\mathrm{ext}}>0$ are its constants at this admissibility parameter, chosen uniformly over active alphabet sizes at most $q$, the $q$th-power argument gives contraction constants $K_0=K_{\mathrm{ext}}^{1/(2q)}$ and $\gamma=\ln(1+\eta_{\mathrm{ext}})/(2q)$. The truncation in Lemma~\ref{lem:profile} consequently allows
$$
K(q)=O\!\left(1+\frac{q+\ln K_{\mathrm{ext}}}{\ln(1+\eta_{\mathrm{ext}})} \right).
$$
Thus a fully explicit growth bound for $C_q$ would require tracking the quantitative dependence of these external contraction constants. We do not extract that dependence here, and make no claim that \eqref{eq:alphabet-upper} matches the necessary growth in Proposition~\ref{prop:alphabet-obstructions}.

\paragraph{Very Slowly Growing Alphabets.}
The existence of finite constants $C_q$ for every $q$ does already allow some unbounded alphabet growth. To see this, let $\widehat C_r\coloneqq\max_{2\le j\le r}\max\{1,C_j\}$ and, for all sufficiently large $n$, choose the largest integer $2\le r\le\lfloor\log n\rfloor$ with $\widehat C_r\le\log n$; call it $q(n)$. Every fixed $r$ eventually satisfies these inequalities, so $q(n)\to\infty$. For alphabets of size at most $q(n)$, Theorem~\ref{thm:main} gives
$$
\Rpub(f)=O\!\left(\Qstar(f)^2(\log n)^2\right),
$$
with an absolute implied constant. This argument supplies no explicit rate of growth for $q(n)$. In particular, it does not determine which specified polylogarithmic alphabet sizes admit a uniform simulation polynomial in quantum communication and $\log n$.

\section{Conclusion}\label{sec:conclusion}

For permutation-invariant partial functions over a fixed alphabet, we established a classical simulation with a quadratic dependence on entanglement-assisted quantum communication complexity and a logarithmic overhead in the input length. The refined bound is $\Rpub(f)\le C_q t\log(2+n/t)$, where $t=\min\{n,\Qstar(f)^2\}$. We also characterized quantum communication complexity by a combinatorial separation parameter up to one logarithmic factor. The corresponding quantum protocol uses neither prior entanglement nor shared randomness. These results extend and sharpen the binary-alphabet bounds of Guan et al.~\cite{GHYY}. The proofs combine batch recovery and coherent count estimation with a reduction from multidimensional joint types to one-variable polynomial approximation.

The dependence on input length remains necessary even over a binary alphabet. The encoding observation of~\cite{GKS16}, applied to known communication separations, yields functions with quantum communication complexity $O_\varepsilon(\log\log n)$ and randomized communication complexity $\Omega_\varepsilon((\log n)^{1-\varepsilon})$, for every fixed $\varepsilon\in(0,1)$. Thus no fixed polynomial in quantum communication complexity alone bounds randomized communication complexity, and the logarithmic exponent in the simulation cannot be decreased by a fixed positive amount.

Growing alphabets and graph symmetries permit exponential quantum advantages. For every fixed $\varepsilon\in(0,1)$, our permutation-invariant functions over an $n$-symbol alphabet have quantum communication complexity $O_\varepsilon(\log n)$ and randomized communication complexity $\Omega_\varepsilon(n^{1-\varepsilon})$, with every local histogram fixed. A graph encoding preserves such a separation even when both local inputs are copies of a single rigid tree. Allowing the graph isomorphism types to vary gives connected, independently relabelling-invariant graph problems with quantum communication complexity $O_\varepsilon(\log v)$ and randomized communication complexity $\Omega_\varepsilon(v^{2-\varepsilon})$. 

\section*{Acknowledgments}
\addcontentsline{toc}{section}{Acknowledgments}
We thank Boji Xu, Penghui Yao, and Jiadong Zhu for helpful discussions. Yunqi Huang was supported by the Guangdong Provincial Quantum Science Strategic Initiative (Grants No.\ GDZX2503001). Zekun Ye was supported by start-up research funding from Fuzhou University (Grant No.\ XRC-26004). The authors used ChatGPT (OpenAI) to assist with developing and refining constructions and proof arguments, checking derivations, and drafting and revising the manuscript. The authors take full responsibility for the correctness of all results and proofs, the attribution of prior work, and the final content of the paper.

\clearpage
\appendix
\addtocontents{toc}{\protect\newpage}

\section{Quantum Upper Bound}\label{sec:quantum-upper}

This appendix proves Theorem~\ref{thm:quantum-upper}, the quantum analogue of the classical count-estimation protocol. The construction uses neither prior entanglement nor free shared randomness and is independent of the lower-bound argument in Section~\ref{sec:quantum}.

\subsection{Randomness Reduction and Amplitude Estimation}

We use two standard estimates, stated below with their original sources.Hoeffding's inequality supplies the concentration bound for a finite random-tape table; amplitude estimation then estimates the table's output probabilities coherently.

\begin{lemma}[Hoeffding's Inequality {\cite{Hoeffding63}}]
\label{thm:hoeffding}
Let $X_1,\ldots,X_K$ be independent random variables taking values in $[0,1]$. For every $\eta>0$,
$$
\Prb\!\left[\left|\frac1K\sum_{i=1}^K(X_i-\E X_i)\right|>\eta\right] \le2\exp(-2K\eta^2).
$$
\end{lemma}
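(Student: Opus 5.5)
This is the classical Hoeffding bound, stated here as a cited tool. I would still prove it by the exponential-moment (Chernoff) method, treating the upper and lower tails separately and combining them with a union bound.

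First, center the variables: put $Y_i\coloneqq X_i-\E X_i$, so that $\E Y_i=0$ and $Y_i\in[a_i,a_i+1]$ with $a_i\coloneqq-\E X_i$. The key step is \emph{Hoeffding's lemma}: a mean-zero variable supported in an interval of length one satisfies
$$
\E e^{sY_i}\le e^{s^2/8}\qquad(s\in\mathbb R).
$$
To prove it, I would use convexity of $y\mapsto e^{sy}$ on the interval to bound $e^{sY_i}$ by the chord through the endpoints. Taking expectations and using $\E Y_i=0$ gives an explicit two-point expression. Writing $p\coloneqq-a_i\in[0,1]$, its logarithm is
$$
\varphi(s)=-ps+\ln(1-p+pe^{s}).
$$
One checks that $\varphi(0)=\varphi'(0)=0$. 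Moreover, $\varphi''(s)=\theta(1-\theta)\le1/4$, where $\theta\coloneqq pe^s/(1-p+pe^s)$. Taylor's theorem with remainder then gives $\varphi(s)\le s^2/8$. I expect this to be the only step needing genuine computation, and the second-derivative bound is what settles it.

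Next, for the upper tail, fix $s>0$ and write $S\coloneqq\sum_iY_i$. By independence and Markov's inequality,
$$
\Prb[S>K\eta]\le e^{-sK\eta}\prod_{i=1}^K\E e^{sY_i}\le\exp\!\left(-sK\eta+\frac{Ks^2}{8}\right).
$$
Choosing $s\coloneqq4\eta$ optimizes the exponent and gives the bound $\exp(-2K\eta^2)$. Applying the same argument to the variables $-Y_i$, which satisfy the same hypotheses, gives the same bound for the lower tail. A union bound over the two tails yields the factor $2$, which proves the stated inequality.
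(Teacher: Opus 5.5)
Your proof is correct and is the standard exponential-moment argument, essentially Hoeffding's own: chord bound from convexity, $\varphi''=\theta(1-\theta)\le 1/4$ giving $\E e^{sY_i}\le e^{s^2/8}$, the choice $s=4\eta$, and a union bound over the two tails. The paper gives no proof of this lemma and only cites \cite{Hoeffding63}, so your route agrees with the cited source and there is nothing further to compare.
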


\begin{lemma}[Finite Random-Tape Tables, Adapted from {\cite{Newman91}}]\label{lem:finite-tape}
Consider a randomized experiment on input pairs $(x,y)\in[q]^n\times[q]^n$, with output in a finite set $\mathcal Z$. For every $0<\eta<1$, there is a fixed list of $K$ random tapes, where $K$ is a power of two and
$$
K=O\!\left(\eta^{-2}(n\log q+\log|\mathcal Z|)\right),
$$
such that choosing a uniform tape from this list changes the probability of any output by at most $\eta$, simultaneously for all input pairs. Any event that is impossible for every original tape remains impossible on this list.
\end{lemma}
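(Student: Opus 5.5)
This is Newman's argument: sample $K$ tapes at random, apply Hoeffding to each input pair and output, and take a union bound.

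\textbf{Setup.} Let $R$ denote the original tape distribution. For each input pair $(x,y)$ and each output $z\in\mathcal Z$, let
\[
P_{x,y}(z)\coloneqq \Prb_{r\sim R}[\text{output }z\text{ on }(x,y)\text{ with tape }r].
\]

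\textbf{Sampling and concentration.} Draw $K$ tapes $r_1,\ldots,r_K$ independently from $R$. For a fixed $(x,y,z)$, set $X_i\coloneqq\mathbf 1[\text{output }z\text{ on }(x,y,r_i)]$. These variables are independent, take values in $[0,1]$, and have mean $P_{x,y}(z)$. The empirical frequency of $z$ on the list is $\frac1K\sum_i X_i$. Lemma~\ref{thm:hoeffding} shows that this frequency deviates from $P_{x,y}(z)$ by more than $\eta$ with probability at most $2\exp(-2K\eta^2)$.

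\textbf{Union bound.} There are $q^{2n}|\mathcal Z|$ triples $(x,y,z)$. The union bound gives total failure probability at most
\[
2q^{2n}|\mathcal Z|\exp(-2K\eta^2).
\]
This is below one once
\[
K\ge \eta^{-2}\bigl(n\ln q+\tfrac12\ln|\mathcal Z|+1\bigr),
\]
which is $O(\eta^{-2}(n\log q+\log|\mathcal Z|))$. Round $K$ up to the next power of two; this at most doubles $K$ and keeps the bound. Then some realization of the list satisfies all the deviation bounds simultaneously, and we fix that list.

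\textbf{Impossible events.} Suppose an event has probability zero under the original distribution for every input pair. Tapes on which it occurs form a null set. Modify the sampling so each $r_i$ is drawn from $R$ conditioned on lying outside the union of these finitely many null sets, indexed over inputs and outputs. Conditioning on a probability-one set leaves every $P_{x,y}(z)$ unchanged, so the Hoeffding argument is unaffected. The selected list then contains no tape on which any impossible event occurs.

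\textbf{Main obstacle.} The only delicate step is that last one: making the null-set exclusion compatible with the random-sampling argument. Conditioning on a probability-one set handles it cleanly. If instead the tapes are already finite strings, each impossible event is literally empty on the support, so no exclusion is needed at all.
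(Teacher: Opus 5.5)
Your proposal is correct and follows the paper's proof: sample $K$ tapes independently, apply Hoeffding's inequality to each of the $q^{2n}|\mathcal Z|$ input--output triples, take a union bound, and round $K$ up to a power of two. The paper handles the last clause in one line: every list entry is an original tape, so an event that occurs on no original tape cannot occur on the list. Your null-set conditioning is needed only for a weaker ``probability zero'' reading of that clause, so it is a harmless extra.
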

\begin{proof}
Sample $K$ independent tapes. For each input pair and output $z$, Hoeffding's inequality bounds the probability that its empirical frequency differs from its original probability by more than $\eta$ by $2e^{-2K\eta^2}$. A union bound over all $q^{2n}|\mathcal Z|$ choices therefore bounds the probability of any violation by
$$
2q^{2n}|\mathcal Z|e^{-2K\eta^2}<1
$$
for sufficiently large $K$ of the stated order. Rounding up to a power of two costs at most a factor of two. Thus a suitable list exists; impossible events stay impossible because all entries are original tapes.
\end{proof}

\begin{lemma}[Amplitude Estimation {\cite{BHMT}}]
\label{thm:amplitude-estimation}
Let $U$ be a state-preparation unitary and let $\Pi$ be the orthogonal projection onto a specified outcome. Let $a\coloneqq\|\Pi U|0\rangle\|^2$ be the probability of that outcome. Assume access to $U$, $U^{-1}$, the reflections $I-2|0\rangle\langle0|$ and $I-2\Pi$, and their controlled versions. For every integer $t\ge1$, amplitude estimation uses $O(t)$ calls to these operations and produces an estimate $\widehat a\in[0,1]$ such that, with probability at least $8/\pi^2$,
\begin{equation}
|\widehat a-a|\le \frac{2\pi\sqrt{a(1-a)}}t+\frac{\pi^2}{t^2}.
\label{eq:qupper-28}
\end{equation}
\end{lemma}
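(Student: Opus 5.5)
The plan is to follow the original argument of~\cite{BHMT}: reduce amplitude estimation to phase estimation of a Grover-type iterate whose eigenphases encode $a$. First I would define the iterate
$$
G\coloneqq -U(I-2|0\rangle\langle0|)U^{-1}(I-2\Pi),
$$
which uses exactly the allowed operations, with one call each to $U$ and $U^{-1}$. Write $U|0\rangle=\sin\theta\,|\psi_1\rangle+\cos\theta\,|\psi_0\rangle$, where $|\psi_1\rangle\propto\Pi U|0\rangle$, $|\psi_0\rangle\propto(I-\Pi)U|0\rangle$, and $\theta\in[0,\pi/2]$ satisfies $\sin^2\theta=a$.

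Next I would check that the two-dimensional span of $|\psi_0\rangle,|\psi_1\rangle$ is invariant under $G$, and that $G$ acts there as a rotation by angle $2\theta$. Its eigenvectors are $|\psi_\pm\rangle=(|\psi_1\rangle\pm i|\psi_0\rangle)/\sqrt2$, with eigenvalues $e^{\pm2i\theta}$. Consequently $U|0\rangle$ is a superposition of $|\psi_+\rangle$ and $|\psi_-\rangle$. The degenerate cases $a\in\{0,1\}$ should be handled separately; there the span is one-dimensional and the output is exact.

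The main step is phase estimation. Take $M\ge t$ a power of two with $M\le2t$. Prepare a control register in uniform superposition over $\{0,\ldots,M-1\}$ and apply controlled $G^j$, which costs $O(M)=O(t)$ calls to the controlled operations. Then apply the inverse quantum Fourier transform, measure an integer $y$, and output
$$
\widehat a\coloneqq\sin^2(\pi y/M).
$$
Because $\sin^2$ is symmetric under $\theta\mapsto\pi-\theta$, both eigencomponents give the same estimate. It therefore suffices to show the standard phase-estimation tail bound: with probability at least $8/\pi^2$, the measured $y$ satisfies $|\pi y/M-\theta'|\le\pi/M$ for $\theta'\in\{\theta,\pi-\theta\}$. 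This comes from summing the probabilities of the two integers nearest $M\theta'/\pi$. Those probabilities are given by the Fej\'er-kernel expression $\sin^2(M\Delta)/(M^2\sin^2\Delta)$, and their sum is at least $8/\pi^2$. I expect this kernel estimate to be the main technical obstacle, although it is routine trigonometry.

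Finally I would convert angle error into probability error. Set $\widetilde\theta\coloneqq\pi y/M$ and $\varepsilon\coloneqq|\widetilde\theta-\theta|\le\pi/M$. The identity
$$
\sin^2\widetilde\theta-\sin^2\theta=\sin(\widetilde\theta+\theta)\sin(\widetilde\theta-\theta)
$$
together with $|\sin(2\theta\pm\varepsilon)|\le2\sin\theta\cos\theta+\varepsilon$ gives
$$
|\widehat a-a|\le2\varepsilon\sqrt{a(1-a)}+\varepsilon^2\le\frac{2\pi\sqrt{a(1-a)}}t+\frac{\pi^2}{t^2},
$$
using $M\ge t$. This proves~\eqref{eq:qupper-28}.
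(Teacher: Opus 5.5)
Your proposal is correct and is essentially the standard Brassard--H{\o}yer--Mosca--Tapp argument: a Grover iterate with eigenphases $\pm2\theta$, phase estimation with the $8/\pi^2$ Fej\'er-kernel bound, and the identity $\sin^2\widetilde\theta-\sin^2\theta=\sin(\widetilde\theta+\theta)\sin(\widetilde\theta-\theta)$. The paper relies on exactly this, stating the lemma with a citation to \cite{BHMT} and giving no proof of its own; the one point to tighten is that your phase-estimation guarantee should be read modulo $\pi$ (in the $\pi-\theta$ branch with $\theta$ near $0$, the outcome $y=0$ stands for $y=M$), which is harmless because $\sin^2$ is $\pi$-periodic.
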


\begin{lemma}[Amplitude Estimation in Square-Root Distance]
\label{lem:sqrt-amplitude}
Under the assumptions of Lemma~\ref{thm:amplitude-estimation}, the same estimate satisfies
$$
|\sqrt{\widehat a}-\sqrt a|\le C/t
$$
with probability at least $8/\pi^2$, for a universal constant $C$. If each call to $U$, its inverse, or their controlled versions costs $c$ qubits and the reflections are local, the total communication is $O(tc)$.
\end{lemma}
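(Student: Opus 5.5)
The plan has two parts: derive the square-root error bound from Lemma~\ref{thm:amplitude-estimation} by a case split on the size of $a$, then count the communication per call. Set $t$ as given and work on the event of probability at least $8/\pi^2$ on which \eqref{eq:qupper-28} holds. On that event,
$$
|\widehat a-a|\le\frac{2\pi\sqrt a}{t}+\frac{\pi^2}{t^2},
$$
using $1-a\le1$.

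\textbf{Case $\sqrt a\le 1/t$.} Here $a\le1/t^2$, so the bound gives $\widehat a\le a+(2\pi+\pi^2)/t^2\le(1+2\pi+\pi^2)/t^2$. Both $\sqrt a$ and $\sqrt{\widehat a}$ are then at most $(1+\pi)/t$, and their difference is at most $(1+\pi)/t$.

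\textbf{Case $\sqrt a>1/t$.} Rationalize:
$$
|\sqrt{\widehat a}-\sqrt a|=\frac{|\widehat a-a|}{\sqrt{\widehat a}+\sqrt a}\le\frac{|\widehat a-a|}{\sqrt a}\le\frac{2\pi}{t}+\frac{\pi^2}{t^2\sqrt a}\le\frac{2\pi+\pi^2}{t}.
$$
The last step uses $t\sqrt a>1$. Taking $C\coloneqq 2\pi+\pi^2$ covers both cases, and the success probability is unchanged because the same estimate is used.

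\textbf{Communication.} Amplitude estimation makes $O(t)$ calls to $U$, $U^{-1}$, the two reflections, and their controlled versions. By hypothesis each call to $U$, $U^{-1}$, or a controlled version costs $c$ qubits. The reflections are local, so they cost no communication. Hence the total communication is $O(t)\cdot c=O(tc)$. The controlled versions are assumed available at the same cost $c$ (for instance, by sending the control qubit alongside the message, which adds at most a constant factor absorbed into $O(tc)$).

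The only step requiring care is the small-$a$ case. There the relative error is not controlled, and we must bound both square roots directly rather than rationalizing; the case split handles this.
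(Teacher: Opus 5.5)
Your proof is correct and follows the paper's argument: the same case split at $a=t^{-2}$ (bounding $\widehat a\le(1+\pi)^2t^{-2}$ directly when $a$ is small), rationalization of the square-root difference when $a>t^{-2}$, and the communication bound by counting calls. You also make the constant explicit, $C=2\pi+\pi^2$, which dominates the $(1+\pi)/t$ from the small case. Your remark about implementing controlled calls is not needed, since the statement already assumes they cost $c$ qubits.
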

\begin{proof}
Condition on the event in~\eqref{eq:qupper-28}. If $a\le t^{-2}$, that inequality gives $\widehat a\le(1+2\pi+\pi^2)t^{-2}$, so the square-root difference is $O(t^{-1})$. If $a>t^{-2}$, rationalizing the difference gives
$$
|\sqrt{\widehat a}-\sqrt a| =\frac{|\widehat a-a|}{\sqrt{\widehat a}+\sqrt a} \le\frac{2\pi}{t}+\frac{\pi^2}{t^2\sqrt a} \le\frac{2\pi+\pi^2}{t}.
$$
The communication bound follows by counting the calls.
\end{proof}

\subsection{Quantum Protocol and Communication Cost}

\begin{quantumupperrestatement}[Quantum Upper Bound, Restated]
For every fixed $q\ge2$, there is a constant $C_q>0$ such that every fixed-margin restriction $g\coloneqq f_{\rho,\sigma}$ of a permutation-invariant partial function $f:[q]^n\times[q]^n\to\{-1,+1,*\}$ satisfies
$$
Q(g)\le C_q\min\{n,h_g^{-1}\log n\}.
$$
\end{quantumupperrestatement}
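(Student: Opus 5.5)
The plan mirrors the classical proof of Theorem~\ref{thm:classical}, replacing batch recovery by coherent estimation of sample-outcome probabilities. The $n$ branch is trivial: Alice sends her $n\lceil\log q\rceil$-bit input. Hence assume $h\coloneqq h_g$ satisfies $h^{-1}\log n\le n$, and that $g$ has both labels. For each cut event $E\in\calE$ with encodings $\alpha,\beta$, first run Lemma~\ref{lem:scale} to get a scale $\widetilde k$ with $k/\kappa\le\widetilde k\le\kappa k$. This costs $O_q(\log n)$ bits, and its public coins are replaced by Alice sending an $O_q(\log n)$-bit index into a finite tape table from Lemma~\ref{lem:finite-tape}. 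If $\widetilde k=0$, set $\widehat J_E=\delta_\bot$.

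\textbf{Sampling experiment.} For $\widetilde k>0$, consider retaining each coordinate independently with probability $p=\min\{1,1/\widetilde k\}$, so that $pk=\Theta_q(1)$. Let the outcome be:
\begin{itemize}
\item $\emptyset$ if no retained coordinate lies in the disagreement set;
\item $e$ if exactly one retained disagreement occurs and it has pair $e\in E$;
\item ``other'' otherwise.
\end{itemize}
Then $P_\emptyset=(1-p)^k$ and $P_e=pT_e(1-p)^{k-1}$, so $T_e=\frac{1-p}{p}\cdot P_e/P_\emptyset$, with $P_\emptyset$ bounded below by a constant depending on $q$. Discretize the experiment with Lemma~\ref{lem:finite-tape}: choose a power-of-two list of $K=\mathrm{poly}(n)$ tapes at accuracy $\eta=\Theta_q(h^2)$. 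A tape index then has $O_q(\log n)$ bits, since $h^{-1}\le n$.

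\textbf{Coherent implementation.} The state preparation $U$ works as follows. Alice prepares a uniform superposition over tape indices and sends the index register to Bob. Both parties coherently test, on the sampled set, whether $\alpha(x_i)\ne\beta(y_i)$ and compute the outcome. The batch-recovery idea with capacity one, or an $O(\log n)$-bit fingerprint, lets them agree on the outcome class and the position of a singleton. Bob then returns the registers, and both parties uncompute. Each call costs $O_q(\log n)$ qubits, and the reflections about the outcome flag and the initial state are local to Alice. Lemma~\ref{lem:sqrt-amplitude} with $t=\Theta_q(h^{-1})$ estimates $\sqrt{P_\emptyset}$ and each $\sqrt{P_e}$ to additive error $O(h)$, using $O_q(h^{-1}\log n)$ qubits in total. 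Median-of-means over $O_q(1)$ repetitions boosts the success probability enough for a union bound over the $O_q(1)$ events and cells.

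\textbf{Decoding and accuracy.} Bob forms $\widehat T_e=\frac{1-p}{p}\widehat P_e/\widehat P_\emptyset$ and $\widehat J_E=J_{\widehat T}$, and applies Lemma~\ref{lem:type-decoder}. The key estimate, which I expect to be the main obstacle, is converting square-root errors into the Hellinger bound $H(J_T^E,\widehat J_E)\le h/8$. By Lemma~\ref{lem:hellinger}, $H^2$ is $\sum_e(\sqrt{T_e}-\sqrt{\widehat T_e})^2/(k+\widehat k)$. Normalizing, $\sqrt{T_e/k}$ is proportional to $\sqrt{P_e}/\sqrt{pk\,P_\emptyset}$. The constants $pk$ and $P_\emptyset$ are $\Theta_q(1)$, and their errors enter multiplicatively, so square-root errors $O(h)$ in $P_e$ and $P_\emptyset$ give $H=O_q(h)$. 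One must check that the mismatch in total mass $|k-\widehat k|/k=O_q(h)$ contributes only $O(h)$ to $H$. This follows from the scaling identity $H(J_W,J_{cW})^2=(1-\sqrt c)^2/(1+c)$. The tape-table error $\eta$ must be $o(h^2)$ in probability so that it stays below the square-root accuracy; this is why $\eta=\Theta_q(h^2)$, which still costs only logarithmically many index bits.

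Summing, the communication is $O_q(\log n+h^{-1}\log n)=O_q(h^{-1}\log n)$, and no entanglement or shared randomness is used, since all seeds are sent explicitly.
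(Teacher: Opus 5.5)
Your outline follows the paper's proof step for step. It uses scale estimation with a communicated tape-table index, the empty-or-singleton experiment with $T_e=\frac{1-p}{p}P_e/P_\emptyset$, a syndrome test against the candidates $\{0,e_1,\ldots,e_n\}$, and finite tape tables at accuracy $\Theta_q(h^2)$. The coherent calls are made clean so both reflections act on Alice's registers, and the proof finishes with square-root amplitude estimation at $t=\Theta_q(h^{-1})$ and the Hellinger decoder. Your branch assumption $h^{-1}\log n\le n$ also neatly replaces the paper's separate bound $h\ge1/(\sqrt8\,n)$.

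One step fails as written: the sampling rate $p=\min\{1,1/\widetilde k\}$. Lemma~\ref{lem:scale} returns $\widetilde k=1$ whenever only level $0$ has a nonzero fingerprint. For $k=1$ this happens with constant probability. Then $p=1$, so $P_\emptyset=(1-p)^k=0$ and the prefactor $(1-p)/p$ vanishes. Your estimator $\widehat T_e=\frac{1-p}{p}\widehat P_e/\widehat P_\emptyset$ therefore carries no information. In particular, your claim that $P_\emptyset$ is bounded below by a $q$-dependent constant is false in exactly this case. For $\widetilde k\ge2$ your choice works, since $p\le1/2$ and $pk\le\kappa$ give $P_\emptyset\ge4^{-\kappa}$, though with poor constants.

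The fix is to keep $p$ bounded away from one. The paper takes $p=2^{-\lceil\log(4\kappa\widetilde k)\rceil}$, so that $1/(8\kappa^2)<pk\le1/4$ on the good scale event and hence $P_\emptyset\ge1-pk\ge3/4$.

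With that change your accuracy argument goes through, but it is simpler to skip the shape/total-mass split. Bound each $|\sqrt{\widehat T_e}-\sqrt{T_e}|$ directly by $\sqrt{(1-p)/p}$ times $O(1)$ times the square-root accuracy $O(h)$, then divide by $k+\widehat k\ge k$ and use $pk=\Theta_q(1)$.

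Also state explicitly that the singleton fingerprint's misidentification probability must be $O_q(h^2)$, not just a constant, because it enters the square-root error as $\sqrt{\cdot}$, exactly like the tape-table error. $O(\log n)$ fingerprint bits suffice here because $h^{-1}\le n$.
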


\begin{proof}
If the promise has at most one output label, no communication is needed. Otherwise put $h\coloneqq h_g\in(0,1]$, and use the cut family $\mathcal E$ and auxiliary distributions $J_W$ from Section~\ref{sec:prelim}. Fix binary encodings $(\alpha,\beta)$ for each cut as in~\eqref{eq:cuts}. Recall that $|\mathcal E|\le2^{2q-1}-1$ and that their Hellinger distance is given by~\eqref{eq:hellinger}. The distributions are used only for analysis; the protocol below samples finitely many random bits.

\paragraph{Precision Required by Integer Joint Types.}
Two distinct integer types $T,U$ have an entry differing by at least one. Since all entries lie between $0$ and $n$, that entry satisfies
$$
(\sqrt{T_e}-\sqrt{U_e})^2 =\frac{(T_e-U_e)^2}{(\sqrt{T_e}+\sqrt{U_e})^2} \ge\frac1{4n}.
$$
The event containing all ordered symbol pairs belongs to $\mathcal E$ and has denominator $2n$. Consequently,
\begin{equation}
h\ge\frac1{\sqrt8\,n}, \qquad \log(1/h)=O(\log n).
\label{eq:qupper-22}
\end{equation}

\paragraph{Estimating the Scale of Each Event.}
Let $T$ be the true type. Apply Lemma~\ref{lem:scale} to each $E\in\mathcal E$, with failure probability at most $1/(40|\mathcal E|)$, to estimate $k\coloneqq T(E)$. By the union bound, with total failure probability at most $1/40$, its estimates satisfy
$$
k=0\ \Longrightarrow\ \widetilde k=0, \qquad k>0\ \Longrightarrow\ k/\kappa\le\widetilde k\le\kappa k,
$$
for some $\kappa\ge1$ depending only on $q$. The total communication is $O_q(\log n)$. The zero case is one-sided: a zero count is never reported as positive, for any choice of the public random tape.

This preprocessing can be implemented without shared randomness. Apply Lemma~\ref{lem:finite-tape} to its failure indicator, uniformly over all input pairs. For constant additional error $1/40$, a table of $O_q(n)$ full random tapes suffices; its size may be chosen to be a power of two. Alice chooses a uniform table index using private randomness and sends that index to Bob at cost $O_q(\log n)$. The zero-case guarantee survives the restriction to any table. Thus the total probability of a bad scale estimate is at most $1/20$.

If $\widetilde k=0$, set the estimated count vector $\widehat W^E\coloneqq0$. Otherwise clamp $\widetilde k$ to $[1/\kappa,\kappa n]$, which preserves every correct estimate, and set
$$
p\coloneqq2^{-\lceil\log(4\kappa\widetilde k)\rceil}, \qquad \frac1{8\kappa\widetilde k}<p\le\frac1{4\kappa\widetilde k}.
$$
Thus $p$ is dyadic, and on the good scale event,
\begin{equation}
\frac1{8\kappa^2}<pk\le\frac14, \qquad p\le\frac14.
\label{eq:qupper-23}
\end{equation}
The clamping also bounds the number of possible choices of $p$ and the cost of every branch, including those on which preprocessing fails.

\paragraph{An Empty-Or-Singleton Experiment.}
Retain each coordinate independently with probability $p$, and let $D$ be the set of retained coordinates for which $\alpha(x_i)\ne\beta(y_i)$. Consider the ideal experiment with output $0$ if $D=\varnothing$, output $e=(x_i,y_i)$ if $D=\{i\}$, and output $\bot$ if $|D|\ge2$. Write $P_0$ and $P_e$ for the probabilities of outputs $0$ and $e\in E$. Independence gives
$$
P_0=(1-p)^k, \qquad P_e=T_ep(1-p)^{k-1}.
$$
On a correct scale estimate, in particular,
\begin{equation}
T_e=\frac{1-p}{p}\frac{P_e}{P_0}, \qquad P_0\ge1-pk\ge\frac34.
\label{eq:qupper-24}
\end{equation}
This identity recovers the unnormalized counts while keeping the denominator uniformly bounded away from zero.

We implement the experiment with error at most $\eta$, where $0<\eta<1/4$ will be fixed below. Let $A$ be a uniform random matrix in $\mathbb F_2^{b\times n}$, chosen independently of the retained set, with
$$
b\coloneqq\left\lceil\log\frac{n+1}{\eta}\right\rceil.
$$
Both parties mask their locally recoded binary strings by the retained set. Alice sends the syndrome of her masked string, and Bob adds his own syndrome to obtain $A\mathbf1_D$. Bob searches the candidate set $\{0,e_1,\ldots,e_n\}$ for a unique vector having this syndrome, where $e_i$ is the $i$th standard basis vector of $\mathbb F_2^n$. If there is no unique match, he outputs $\bot$. A unique match at zero produces output $0$. A match at $e_i$ is followed by communication of $i$ and the original symbol $x_i$, so that the ordered pair $(x_i,y_i)$ is known to Bob. He reports this pair if it belongs to $E$, and reports $\bot$ otherwise.

Conditional on any retained set, Lemma~\ref{lem:linear-hash} shows that any fixed wrong candidate collides with $\mathbf1_D$ with probability $2^{-b}$. A union bound over the $n+1$ candidates bounds the error by $\eta$. This argument also handles overflow: if $|D|\ge2$, then $\mathbf1_D$ is outside the candidate set, and a false empty or singleton report still has probability at most $\eta$. Pad the messages with a dummy coordinate and a dummy symbol whenever needed. The syndrome, coordinate, and symbol messages then have fixed lengths, with total communication
\begin{equation}
O_q\bigl(\log n+\log(1/\eta)\bigr).
\label{eq:qupper-25}
\end{equation}

\paragraph{Compressing and Coherently Using the Sampling Randomness.}
Each random tape specifies the retained set and the matrix $A$. For every possible $E,p$, apply Lemma~\ref{lem:finite-tape} to the syndrome experiment, whose output set has at most $q^2+2$ elements, and fix a table of
\begin{equation}
K=O\!\left(\eta^{-2}n\log q\right)
\label{eq:qupper-26}
\end{equation}
full tapes, with $K$ a power of two. Each table approximates every output probability to additive error $\eta$ simultaneously for all input pairs; its guarantee therefore survives the input-dependent choice of $p$ in preprocessing. These tables belong to the protocol description, with no efficient construction or local decoding assumed. A communicated table index has length $\log K=O_q(\log n+\log(1/\eta))$ and supplies the randomness for the coherent experiment below.

Fix the classical preprocessing transcript for the remainder of this construction; its records and the input registers are kept unchanged. For a fixed input pair, let $Z(s)$ be the deterministic experiment output on table entry $s$, encoded in a fixed-length binary register. We use Bennett's reversible computation and uncomputation method~\cite{Bennett73}, with the communication accounted for explicitly. Alice prepares a uniform superposition of table indices and transmits a computational-basis coherent copy of the seed. The parties reversibly execute the fixed-message protocol controlled by that seed, retaining the work needed to undo its steps. They transmit the output to Alice, XOR it into a retained target register, and reverse the entire computation, including the output transmission and the seed-copy transmission. Alice then uncomputes the seed copy. Bob's work space and seed register return to their initial states. This implements the clean map, on Alice's registers,
\begin{equation}
|s\rangle|z\rangle \longmapsto|s\rangle|z\oplus Z(s)\rangle.
\label{eq:qupper-27}
\end{equation}
Every auxiliary register returns to zero. This is an exact unitary implementation of the table-defined function for the fixed input pair; the sampling and recovery errors concern its output distribution relative to the ideal experiment. Its inverse and its controlled version have the same asymptotic communication cost as \eqref{eq:qupper-25}, including the seed cost. To implement a controlled version, Alice sends a computational-basis copy of the control qubit along with the seed copy. Both parties use a fixed communication schedule and control their local computations on this copy, then return and uncompute both copies. This adds only $O(1)$ qubits per call. Composing this map with the local uniform seed preparation gives the state-preparation unitary required by amplitude estimation. The initial-state reflection and the output-predicate reflection can both be performed on Alice's registers, since Bob's auxiliary registers are clean between calls. In particular, the construction uses no free distributed reflection and no prior entanglement.

For each event with positive estimated scale, define the actual outcome probabilities for the fixed input pair and preprocessing transcript by
$$
\bar P_z\coloneqq \frac1K\bigl|\{s\in[K]:Z(s)=z\}\bigr|, \qquad z\in\{0\}\cup E.
$$
These are the output probabilities of the coherent experiment. The syndrome recovery error and the finite-table approximation each contribute at most $\eta$, so
$$
|\bar P_z-P_z|\le2\eta.
$$

\paragraph{Estimating Outcome Probabilities.}
Apply Lemma~\ref{lem:sqrt-amplitude} to estimate $\bar P_0$ and every $\bar P_e$ for each event with positive estimated scale. There are at most $|\mathcal E|(q^2+1)$ estimates. For each outcome, let $\widehat P_z$ be the median of an odd number of independent repetitions depending only on $q$. Each repetition succeeds with probability at least $8/\pi^2>1/2$. Hoeffding's inequality and a union bound therefore give, with probability at least $19/20$,
$$
\bigl|\sqrt{\widehat P_z}-\sqrt{\bar P_z}\bigr|
\le \frac{C_q}{t}
$$
simultaneously for all relevant outcomes and events. This guarantee holds conditional on any fixed preprocessing transcript and for every fixed input pair. Since the square root is monotone, medians preserve the square-root error bound.

Set $\tau\coloneqq C_q/t+\sqrt{2\eta}$. Using $|\sqrt u-\sqrt v|\le\sqrt{|u-v|}$ for nonnegative $u,v$, the triangle inequality gives
\begin{equation}
\begin{aligned}
\bigl|\sqrt{\widehat P_z}-\sqrt{P_z}\bigr|
&\le \bigl|\sqrt{\widehat P_z}-\sqrt{\bar P_z}\bigr| +\bigl|\sqrt{\bar P_z}-\sqrt{P_z}\bigr|\\
&\le \frac{C_q}{t}+\sqrt{2\eta} =\tau
\end{aligned}
\label{eq:qupper-29}
\end{equation}
simultaneously for all relevant outcomes and events. For the following count-error bounds, assume also that all scale estimates are correct. Clip $\widehat P_0$ to $[1/2,1]$, which cannot increase its square-root error on a correct scale estimate, and define the nonnegative, not necessarily integer, count estimates
$$
\widehat W_e^E\coloneqq\frac{1-p}{p} \frac{\widehat P_e}{\widehat P_0}.
$$
Using \eqref{eq:qupper-24}, split each square-root difference as
$$
\sqrt{\widehat W_e^E}-\sqrt{T_e} =\sqrt{\frac{1-p}{p}}\left[ \frac{\sqrt{\widehat P_e}-\sqrt{P_e}}{\sqrt{\widehat P_0}} +\sqrt{P_e}\left( \frac1{\sqrt{\widehat P_0}}-\frac1{\sqrt{P_0}}\right) \right].
$$
Since $P_0\ge3/4$, $\widehat P_0\ge1/2$, and $|\sqrt{\widehat P_0}-\sqrt{P_0}|\le\tau$, the difference of reciprocal square roots is at most $2\tau$. Apply $(u+v)^2\le2u^2+2v^2$, use $|E|\le q^2$ and $\sum_eP_e\le1$, and sum to obtain
\begin{equation}
\sum_{e\in E}(\sqrt{\widehat W_e^E}-\sqrt{T_e})^2 \le\frac{(4q^2+8)\tau^2}{p}.
\label{eq:qupper-30}
\end{equation}
The denominator in \eqref{eq:hellinger} is at least $k$. Consequently, \eqref{eq:qupper-23} implies
\begin{equation}
H(J_{\widehat W^E},J_T^E)^2 \le\frac{(4q^2+8)\tau^2}{pk} \le8\kappa^2(4q^2+8)\tau^2.
\label{eq:qupper-31}
\end{equation}
Choose $t\coloneqq\lceil A_q/h\rceil$ and $\eta\coloneqq c_qh^2$, with respectively sufficiently large and sufficiently small positive constants depending only on $q$, where $A_q$ is chosen after the constant $C_q$ in \eqref{eq:qupper-29}. Equations \eqref{eq:qupper-29}--\eqref{eq:qupper-31} then make every event distance at most $h/8$. Events with $k=0$ have exact zero count vectors on the good scale event.

\paragraph{Decoding and Communication Cost.}
Alice applies Lemma~\ref{lem:type-decoder} to the estimates $\widehat J_E\coloneqq J_{\widehat W^E}$. She enumerates all promised types with the known margins, approximates each Hellinger distance to additive error $h/32$, and minimizes the resulting score. The lemma applies with $\varepsilon=h/8$ and $\zeta=h/32$, so she sends the label of a minimizer to Bob. As in the classical protocol, this is finite local computation, and the guarantee holds for the entire promise without a union bound over types.

The total scale-estimation and amplitude-estimation failure probability is at most $1/20+1/20<1/3$. The number of events and estimated outcomes depends only on $q$, and every coherent call and repetition has a predetermined cost. By \eqref{eq:qupper-22}, $\log(1/\eta)=O_q(\log n)$. Thus each coherent experiment costs $O_q(\log n)$ qubits, and there are $O_q(h^{-1})$ calls in total. Preprocessing has the same or smaller order. This proves the second bound in \eqref{eq:qupper-20}; sending a full input gives the $O_q(n)$ alternative.
\end{proof}

\clearpage
\section{Uniform Polynomial Approximation of Quantum Protocols}
\label{sec:polynomial}

This appendix proves Lemma~\ref{lem:profile}, the uniform polynomial approximation used in the single-cycle lower bound. We fix the margins and require large counts on a connected spanning subgraph of the active symbols.

The reusable steps are to take a fixed power of the averaging operator so that every one-coordinate marginal atom is bounded below, and to truncate the acceptance matrix using degree projections determined by the fixed margins; the resulting polynomial is uniform over all eligible types.

\subsection{Multislices and Conditional Averaging Operators}

For a finite alphabet $\Sigma$ and an integer histogram $\rho=(\rho_a)_{a\in\Sigma}$ of total $N\ge1$, the \emph{multislice} is the set of strings with that histogram:
$$
X_\rho\coloneqq\{x\in\Sigma^N:|\{i:x_i=a\}|=\rho_a
\text{ for every }a\in\Sigma\}.
$$
Equip it with the uniform probability measure and write $X\coloneqq X_\rho$. A symbol is active if its count is positive; inactive symbols can be discarded. A function depending on a specified set of at most $d$ coordinates is a $d$-junta. Let $V_{X,\le d}\subseteq L^2(X)$ be the span of these functions, $P_{X,d}$ its orthogonal projection, and $V_{X,>d}\coloneqq V_{X,\le d}^{\perp}$. These are the standard degree spaces on a multislice used in~\cite{BKLM}.

We record the content of the multislice contraction theorem that will be used. A multislice is $\alpha$-balanced if each active symbol has frequency at least $\alpha$. An invariant probability measure on a product of multislices is $\alpha$-admissible if every nonzero atom of its one-coordinate marginal has probability at least $\alpha$ \cite{BKLM}. Invariance here is under common permutations of coordinates. For a measure $\nu$ on $X\times X$, connectivity means that the bipartite graph on two copies of $X$, with an edge for every pair of positive probability, is connected \cite{BKLM}. This connectivity condition concerns complete length-$N$ strings, rather than individual alphabet symbols.

The product analogue of $\nu$ is the law in which the coordinate pairs are independent and each has the one-coordinate marginal of $\nu$. For two invariant laws on strings of length $N$ over the same finite alphabet, an $(\alpha,\zeta)$-coupling in the sense of \cite{BKLM} is a joint law of $(W,W')$ with the specified marginals, invariant under common coordinate permutations, such that for every coordinate $i$ and every $\xi>0$,
$$
\Prb[W_i\ne W'_i]\le\zeta,\qquad
\Prb[|\{i:W_i\ne W'_i\}|\ge\xi N]
\le\alpha^{-1}e^{-\alpha\xi^2N}.
$$
To couple laws of several strings, regard the entire coordinate tuple as one symbol in the product alphabet. We construct the coupling in this sense below; its component couplings satisfy these bounds as well.

\begin{theorem}[Multislice Contraction, {\cite{BKLM}}]
\label{thm:bklm}
For fixed alphabet size and balance/admissibility constant $\alpha>0$, there are constants $K<\infty$ and $\eta>0$ as follows. Let $X$ be an $\alpha$-balanced multislice and let $\nu$ be an \textit{$\alpha$-admissible}, connected measure on $X\times X$ with uniform marginals and an $(\alpha,\zeta)$-coupling to its product analogue for some $\zeta>0$. Its conditional averaging operator, defined by
$$
(T_\nu f)(x')\coloneqq\E_{(x,x')\sim\nu}[f(x)\mid x'],
$$
satisfies
$$
\|T_\nu f\|_2\le K(1+\eta)^{-d}\|f\|_2
\qquad(f\in V_{X,>d},\ d\ge1).
$$
\end{theorem}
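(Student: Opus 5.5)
This statement is imported from~\cite{BKLM}, so the plan is mainly to check that our normalization matches theirs. The only substantive step to reconstruct is how the contraction of the product analogue transfers to $\nu$ through the coupling. I would proceed in three stages, tracking only that every constant depends on the alphabet size and $\alpha$ alone.

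\textbf{Stage one: the product analogue.} Let $\mu$ be the product analogue of $\nu$: coordinate pairs are i.i.d.\ with the one-coordinate marginal $\nu_1$ of $\nu$. Admissibility says every nonzero atom of $\nu_1$ has mass at least $\alpha$. Connectivity, together with uniform marginals on an $\alpha$-balanced multislice, should force the bipartite symbol graph of $\nu_1$ to be connected on the active symbols. That graph has $O(1)$ vertices and atoms of mass at least $\alpha$, so a Cheeger or spectral-gap argument gives the single-coordinate Markov operator of $\nu_1$ a gap $1-\theta$ with $\theta<1$ depending only on $\alpha$ and the alphabet size.

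Tensorization then shows that $T_\mu$ multiplies the Efron--Stein component of degree exactly $j$ by a factor of norm at most $\theta^{j}$. Since $V_{X,>d}$ is orthogonal to all $d$-juntas, $T_\mu$ should contract it by $\theta^{d}$, up to the usual issue that product-space and multislice degree decompositions differ. I would handle this by extending $f$ to a function on the product space that is supported on, or averaged onto, the multislice. Balancedness ensures the multislice carries a $1/\mathrm{poly}(N)$ fraction of the product measure, which costs only a polynomial factor.

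\textbf{Stage two: transfer to $\nu$ through the coupling.} Use the $(\alpha,\zeta)$-coupling $(W,W')$, where $W\sim\nu$ and $W'\sim\mu$. Write $f$ as a sum of degree-$\ge d$ pieces and bound
\[
\bigl|\langle T_\nu f,g\rangle-\langle T_\mu f,g\rangle\bigr|.
\]
This is the step I expect to be the main obstacle. The two operators do not act on the same space, and the per-coordinate disagreement bound $\zeta$ alone does not give exponential decay in $d$. The intended route, following BKLM, is a hybrid or restriction argument. First condition on the set of disagreeing coordinates, which has linear size $\xi N$ only with probability at most $\alpha^{-1}e^{-\alpha\xi^{2}N}$. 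Off that set the two laws agree. A function of degree $>d$ restricted to a random subcube keeps most of its high-degree mass on the agreeing coordinates, so the error is at most $\zeta^{\Omega(d)}$ plus an exponentially small tail in $N$. The $N$-tail is absorbed by noting that $d\le N$.

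\textbf{Stage three: constants and powers.} Choose $1+\eta$ below $\min\{\theta^{-1},\zeta^{-c}\}$ and take $K$ large enough to absorb the polynomial losses from stages one and two and the finitely many small $d$. If the direct bound only gives decay for $d$ above a threshold, iterate: $T_\nu^{*}T_\nu$ is again an admissible, connected, invariant averaging operator. In practice I would cite~\cite{BKLM} for stage two verbatim and verify only that their definitions of admissibility, connectivity and coupling coincide with those stated above.
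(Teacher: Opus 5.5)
The paper does not prove this statement. It imports it from \cite{BKLM} as a black box. Its own work is only the hypothesis check in Lemma~\ref{lem:operator-contraction}: balance, admissibility after passing to $L_S^r$, full-support connectivity via swaps, and an explicit invariant coupling. Your fallback, citing \cite{BKLM} and confirming that the definitions of balance, admissibility, connectivity and $(\alpha,\zeta)$-coupling agree, is therefore exactly the paper's route. On that basis the proposal is acceptable.

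The three-stage reconstruction around the citation would not prove the statement, however, and should not be offered as justification.

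\textbf{The main gap: uniformity in $N$.} In Stage one you accept a loss polynomial in $N$, because the multislice has mass $N^{-\Theta(1)}$ under the product measure. In Stage three you absorb that loss into $K$. But $K$ and $\eta$ must be uniform in $N$. A bound of the form $N^{C}\theta^{d}\|f\|_2$ is vacuous for every $d\lesssim\log N$. Your remark about iterating $T_\nu^*T_\nu$ does not, as written, remove an $N$-dependent threshold of this kind. That small-$d$ regime is exactly the one the paper uses: Lemma~\ref{lem:profile} truncates at $d_0=O_{q,\beta}(c+\log(1/\varepsilon))$, which is independent of $N$.

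\textbf{Stage two is the whole theorem.} The $N$-uniform comparison of $T_\nu$ with its product analogue on high-degree functions is, as you suspected, the entire content of the result. Your bound ``error at most $\zeta^{\Omega(d)}$'' is asserted without any mechanism.

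\textbf{A lesser, repairable issue.} Choosing $\eta$ in terms of $\zeta$ reverses the quantifiers, since $K$ and $\eta$ are fixed before $\zeta$. This can be fixed. Let $D$ count the mismatched coordinates. Permutation invariance of the coupling together with the tail condition already force $\Prb[W_i\ne W'_i]=\E D/N=O_\alpha(N^{-1/2})$. So $\zeta$ carries no independent information, and the decay in $d$ cannot come from it.

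The honest proof here is the citation, not the sketch.
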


Theorem~\ref{thm:bklm} is the cited external result. Below we verify its hypotheses for the operators used in our argument, using couplings with $\zeta=O_{q,\alpha}(N^{-1/2})$. The cited theorem imposes no additional smallness condition on $\zeta$.

Fix row and column margins $\rho,\sigma$, and write $X\coloneqq X_\rho$, $Y\coloneqq X_\sigma$. For a type $S$ with these margins, let $(x,y)$ be uniform on all input pairs of type $S$. Both marginals are uniform: the permutation group is transitive on each multislice and leaves the joint law invariant. Define
\begin{equation}\label{eq:type-operator}
B_S:L^2(X)\longrightarrow L^2(Y),\qquad (B_S f)(y)\coloneqq\E[f(x)\mid y].
\end{equation}
Here $\|\cdot\|_{2\to2}$ denotes the operator norm induced by the $L^2$ norms under the uniform measures. The adjoint $B_S^*$ averages in the reverse direction. If $f$ depends only on positions in $I$, then $B_S f$ also depends only on positions in $I$: for each column symbol, the row symbols at its positions are sampled without replacement with multiplicities prescribed by $S$. The law on $I$ uses only the column symbols on $I$. The same argument applies to $B_S^*$. Consequently both the low-degree spaces and their orthogonal complements are preserved, and
\begin{equation}\label{eq:projections-commute}
P_{Y,d}B_S=B_SP_{X,d}.
\end{equation}
For example, preservation of the orthogonal complement follows from $\langle B_Sf,g\rangle=\langle f,B_S^*g\rangle=0$ when $f\in V_{X,>d}$ and $g\in V_{Y,\le d}$.

\subsection{Operator Contraction from Large Joint-Type Counts}

We use the following standard martingale concentration bound.

\begin{lemma}[Azuma--Hoeffding Inequality, {\cite{Azuma67,Hoeffding63}}]\label{lem:azuma}
Let $(M_0,\ldots,M_t)$ be a real martingale and suppose $|M_j-M_{j-1}|\le b_j$ almost surely, where the $b_j$ are deterministic. Then, for $u>0$,
$$
\Prb[|M_t-M_0|\ge u] \le2\exp\left(-\frac{u^2}{2\sum_{j=1}^t b_j^2}\right).
$$
If every $b_j=0$, the difference is zero and the tail bound is interpreted accordingly.
\end{lemma}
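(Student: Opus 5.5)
The statement to prove is the Azuma--Hoeffding inequality (Lemma~\ref{lem:azuma}). I would use the standard exponential-moment (Chernoff) method applied to the martingale differences $D_j\coloneqq M_j-M_{j-1}$. By the martingale property these satisfy $\E[D_j\mid\mathcal F_{j-1}]=0$ and $|D_j|\le b_j$ almost surely, where $\mathcal F_{j-1}$ is the natural filtration generated by $M_0,\ldots,M_{j-1}$. The degenerate case where all $b_j=0$ is immediate: then every $D_j=0$, so $M_t=M_0$ almost surely and the probability is zero. Hence I assume $\sum_j b_j^2>0$.

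The key step is Hoeffding's lemma in conditional form. If $D$ has conditional mean zero and $|D|\le b$, then $\E[e^{\lambda D}\mid\mathcal F]\le e^{\lambda^2b^2/2}$ for every real $\lambda$. To prove this, use convexity of $x\mapsto e^{\lambda x}$ on $[-b,b]$: $e^{\lambda x}\le\frac{b-x}{2b}e^{-\lambda b}+\frac{b+x}{2b}e^{\lambda b}$. Taking conditional expectations and using the zero mean gives $\E[e^{\lambda D}\mid\mathcal F]\le\cosh(\lambda b)$. Comparing Taylor series termwise, $(2k)!\ge 2^kk!$, gives $\cosh(\lambda b)\le e^{\lambda^2b^2/2}$.

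Next I iterate by the tower property. Conditioning on $\mathcal F_{t-1}$, then $\mathcal F_{t-2}$, and so on, gives
$$
\E\bigl[e^{\lambda(M_t-M_0)}\bigr]\le\exp\Bigl(\tfrac{\lambda^2}{2}\textstyle\sum_{j=1}^tb_j^2\Bigr).
$$
Here it matters that the $b_j$ are deterministic, so each bound factors out of the remaining expectation. Markov's inequality then gives $\Prb[M_t-M_0\ge u]\le\exp(-\lambda u+\lambda^2\sum_j b_j^2/2)$. I optimize with $\lambda=u/\sum_jb_j^2$, which yields $\exp(-u^2/(2\sum_j b_j^2))$. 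Applying the same argument to the martingale $-M$ bounds the lower tail, and a union bound over the two tails supplies the factor $2$.

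The only real content is the conditional Hoeffding lemma, and in particular the inequality $\cosh x\le e^{x^2/2}$; the rest is routine bookkeeping. Since the lemma is standard, in the paper I would simply cite~\cite{Azuma67,Hoeffding63}, and at most include the sketch above.
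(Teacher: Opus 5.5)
Your proof is correct. It is the standard exponential-moment argument behind the cited results~\cite{Azuma67,Hoeffding63}: the conditional Hoeffding lemma via convexity and $\cosh x\le e^{x^2/2}$, iterated by the tower property, followed by Markov's inequality and a union bound over the two tails. The paper gives no proof of its own and simply cites these sources. One small detail: when an individual $b_j=0$ but $\sum_j b_j^2>0$, the convexity step divides by $2b_j$ and needs a separate remark. In that case $D_j=0$ almost surely, so its factor is trivially $1$.
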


We now prove the contraction estimate for type-averaging operators.

\begin{lemma}\label{lem:operator-contraction}
Fix an integer $q\ge2$ and $\beta>0$. Suppose a type $S$ of total $N\ge1$, with at most $q$ active symbols on each side, has a connected spanning bipartite subgraph $G$ such that
\begin{equation}\label{eq:large-entries}
S_e\ge\beta N\qquad\text{for every }e\in G.
\end{equation}
Here $S_e\coloneqq S_{ab}$ for an edge $e=(a,b)$, and $e\in G$ means that $e$ is an edge of $G$. Then there are $K_{q,\beta}<\infty$ and $\gamma_{q,\beta}>0$ such that, for every integer $d\ge0$,
\begin{equation}\label{eq:operator-contraction}
\|B_S(I-P_{X,d})\|_{2\to2} \le K_{q,\beta}e^{-\gamma_{q,\beta}d}.
\end{equation}
The constants are uniform over $N$, the margins, and eligible $S,G$.
\end{lemma}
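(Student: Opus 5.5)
We reduce to Theorem~\ref{thm:bklm} applied to a self-map built from $B_S$. The operator $B_S$ maps $L^2(X)$ to $L^2(Y)$, while the cited theorem concerns a measure on $X\times X$. We therefore consider $A\coloneqq B_S^*B_S$, the conditional averaging operator $T_\nu$ of the measure $\nu$ on $X\times X$ obtained by sampling $y$ uniformly from $Y$ and then drawing $x,x'$ independently from the conditional law given $y$ under type $S$. Since $B_S$ commutes with the degree projections by~\eqref{eq:projections-commute}, for $f\in V_{X,>d}$ we have $\|B_Sf\|_2^2=\langle Af,f\rangle\le\|Af\|_2\|f\|_2$. A contraction $\|Af\|\le K(1+\eta)^{-d}\|f\|$ then gives~\eqref{eq:operator-contraction} with $K_{q,\beta}=K^{1/2}$ and $\gamma=\ln(1+\eta)/2$. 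For $d=0$ the bound is trivial, since $\|B_S\|\le1$. If the one-coordinate marginal of $\nu$ does not directly give admissibility, we pass to a fixed power $A^{q}$, which is again an averaging operator $T_{\nu^{(q)}}$, and take a $2q$-th root as indicated in the paper's remark.

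The hypotheses are verified in order. \emph{Balance:} every active row symbol $a$ touches some edge of $G$, so $\rho_a\ge\beta N$; hence $X$ is $\beta$-balanced. \emph{Invariance and uniform marginals:} these follow from the invariance of the type-uniform law. \emph{Admissibility:} the one-coordinate marginal of $\nu$ gives $(a,a')$ probability $\sum_b S_{ab}S_{a'b}/(N\sigma_b)$ up to $O(1/N)$ without-replacement corrections, and this is at least $\beta^2$ whenever $a,a'$ share a $G$-neighbour. Pairs at larger $G$-distance appear only in powers. In $\nu^{(q)}$, any pair of active row symbols is joined by a $G$-path of length at most $2q$; lazy steps, using $(a,a)$ with mass at least $\beta^2$, pad shorter paths. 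So every nonzero atom has probability at least roughly $\beta^{2q}$, up to constants, giving $\alpha$ of order $(4q)^{-(2q+1)}$ when $\beta=1/(4q)$. Atoms with tiny positive mass coming from non-$G$ edges are the issue here. Because the paths through $G$ already make every active pair heavy in the power, each nonzero atom is bounded below. \emph{Connectivity:} since $G$ is connected and spanning, any two strings in $X$ are linked by transpositions. A transposition of two positions carrying symbols $a,a'$ adjacent through a common column $b$ is realised with positive probability, and longer swaps are composed along $G$-paths. Hence the support graph on $X$ is connected.

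The main obstacle is the \emph{coupling} hypothesis: we need an $(\alpha,\zeta)$-coupling of $\nu^{(q)}$, a law on strings in the product alphabet, to its product analogue with $\zeta=O(N^{-1/2})$ and the stated tail. We plan to generate both laws sequentially, coordinate by coordinate. Under the product analogue each coordinate pair is i.i.d.\ from the marginal; under $\nu$ the pairs are drawn without replacement from the type counts. We couple the two step by step by a maximal coupling of the conditional one-step laws. The discrepancy at step $i$ is controlled by how far the running empirical counts have drifted from their means. Azuma--Hoeffding (Lemma~\ref{lem:azuma}) bounds this drift by $O(\sqrt N)$ with subgaussian tails, so each coordinate disagrees with probability $O(N^{-1/2})$ and the number of disagreements has the required $e^{-\alpha\xi^2N}$ tail. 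Permuting coordinates uniformly at the end makes the coupling invariant. For the $q$-th power, the coupling is composed across stages by treating the coordinate tuple as one symbol, with losses absorbed into $\alpha$. Uniformity in $N$, the margins, $S$ and $G$ then follows, because all constants depend only on $q$ and $\beta$.
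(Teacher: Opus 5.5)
\textbf{Gap in the coupling step.} Your reduction follows the paper's route. You pass to $L_S=B_S^*B_S$ and take the $q$-th power so that every active row pair has one-coordinate mass at least $\beta^{2q+1}$. You then apply Theorem~\ref{thm:bklm} and take a $2q$-th root, using positivity and the invariance of $V_{X,>d}$. Balance and admissibility are handled as in the paper. Two small corrections:
\begin{itemize}
\item The one-coordinate marginal $\sum_b S_{ab}S_{a'b}/(N\sigma_b)$ is exact, since $x_i,x'_i$ are conditionally independent given $y$. There are no without-replacement corrections.
\item For connectivity you need the loops $L_S(x,x)>0$. They make the bipartite graph on two copies of $X$ connected, and they ensure the support of $L_S^q$ contains that of $L_S$.
\end{itemize}

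The genuine gap is the coupling, which is the one place where real work is required. Your plan rests on the claim that under $\nu$ ``the pairs are drawn without replacement from the type counts.'' That is false. Under $\nu$, and a fortiori under $\nu^{(q)}$, the table $R_{aa'}=|\{i:x_i=a,\,x'_i=a'\}|$ is random with fluctuations of order $\sqrt N$. For $q=2$ and the type with all four entries $N/4$, $R_{11}$ is a sum of two independent hypergeometric overlaps. Only the $(x,y)$ and $(y,x')$ tables are fixed. So when you compare the conditional one-step laws of $\nu^{(q)}$ with $\theta$, the discrepancy depends on $|R-N\theta|$. Azuma applied to the running counts of a without-replacement draw says nothing about that quantity.

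What is needed is a subgaussian bound of the form $\Prb[|R_{aa'}-N\theta_{aa'}|\ge t]\le2e^{-ct^2/N}$ for the endpoint pair table of the $q$-step chain. ``Composing across stages by treating the coordinate tuple as one symbol'' does not supply it, because the tuple table is random too. The paper gets this bound as follows:
\begin{itemize}
\item It writes the endpoint pair as $(gx_0,\,gu_1v_1\cdots u_qv_qx_0)$, with $u_j,v_j$ uniform in the stabilizers $K_X,K_Y$.
\item It exposes these block permutations one image at a time. A change of one image is a transposition, so it moves each $R_{aa'}$ by at most two.
\item It applies Azuma with at most $2qN$ increments.
\end{itemize}
It then draws $R$ and a multinomial $R'$ independently, matches equal pairs maximally, and applies a common uniform permutation. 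The mismatch count is then exactly $\frac12\sum_{a,a'}|R_{aa'}-R'_{aa'}|$. This count-matching also avoids the end-of-urn problem of a coordinatewise maximal coupling, where the last conditional laws can be at total-variation distance of order one from $\theta$.

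Your approach could be repaired by a genuinely stage-wise conditional coupling. Given the previous string, each stage is a within-block draw without replacement from a row or column of $S$. You would couple it to the product transition on the coordinates where the two chains agree, with mismatches accumulating over the stages. That construction and its tail bound are not in your proposal.
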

\begin{proof}
Discard inactive row and column symbols, so every margin appearing in a denominator below is positive. If there is only one active row symbol, $X$ is a singleton and the left-hand side is zero. Otherwise let $L_S\coloneqq B_S^*B_S$. This is a positive semidefinite, self-adjoint Markov operator on $X$, describing the transition $x\to y\to x'$. Every active margin is at least $\beta N$, since every vertex is incident to an edge of $G$.

At a fixed coordinate, the row-symbol transition matrix for $L_S$ is
\begin{equation}\label{eq:one-coordinate-transition}
H(a,a')\coloneqq\sum_b\frac{S_{ab}}{\rho_a}\frac{S_{a'b}}{\sigma_b}.
\end{equation}
Indeed, the first factor selects the intermediate column symbol conditional on the initial row symbol, and the second selects the final row symbol. This formula holds conditional on the current length-$N$ string, so it can be iterated for repeated steps.

Connect two row symbols if they share a neighbor in $G$. This row-symbol graph is connected. Along any such adjacency the corresponding summand in~\eqref{eq:one-coordinate-transition} is at least $\beta^2$, because both numerators are at least $\beta N$ and both denominators are at most $N$. Every row symbol also has a self-transition of probability at least $\beta^2$. Put $r\coloneqq q$. A path between two row symbols has length at most $q-1$; pad it with self-transitions to length $r$. It follows that
\begin{equation}\label{eq:positive-marginal}
H^r(a,a')\ge\beta^{2q},\qquad
\theta_{aa'}\coloneqq\frac{\rho_a}{N}H^r(a,a')\ge\beta^{2q+1}.
\end{equation}
Let $\nu_r$ be the stationary endpoint law of $L_S^r$. It has uniform multislice marginals, is invariant under common coordinate permutations, and is symmetric under exchanging the endpoints. Its one-coordinate marginal is $\theta$. Thus, after taking the fixed power, every row-pair atom has a lower bound independent of $N$. This step is needed because entries outside $G$ in $S$ may be small.

We next verify connectivity of the full support graph. Suppose distinct row symbols $a,a'$ share a column neighbor $b$ in $G$, and positions $i,j$ in a string $x$ carry $a,a'$. There is a compatible column string $y$ with $y_i=y_j=b$: assign one occurrence from each of the positive integer cells $S_{ab}$ and $S_{a'b}$ to these positions, and fill the remaining positions according to $S$. Swapping $x_i,x_j$ produces a string $x'$ also compatible with that same $y$. Therefore the transition $L_S(x,x')$ has positive probability.

These swaps connect the entire multislice. To exchange two specified tokens with different symbols, take a simple path between their symbols in the connected row-symbol graph and choose one token of each intermediate symbol. Continue to track the identities of these chosen tokens as they move. Swap the first two chosen tokens, then the second and third, and continue along the path; reverse the sequence except for the last swap. This exchanges the endpoint tokens and restores the intermediate ones. Each swap is between tokens whose symbols are adjacent in the row-symbol graph, wherever the tokens currently lie, so each is an allowed transition. Thus arbitrary exchanges of two different symbols can be implemented, and these generate all arrangements with the fixed histogram. Also $L_S(x,x)>0$ for every $x$. By padding an allowed transition with self-transitions, the support of $L_S^r$ contains that of $L_S$, including the loops. A connected graph with all loops gives a connected bipartite graph on its two copies: a loop links each left copy to its right copy and the other edges connect different strings. Hence $\nu_r$ satisfies the required full support connectivity.

It remains to construct the coupling. Fix one pair $(x_0,y_0)$ of type $S$. Let $K_X,K_Y\le S_N$ be the stabilizers of $x_0,y_0$, respectively; they permute positions separately within the row-symbol and column-symbol classes. Let $g$ be uniform in $S_N$, and independently choose uniform $u_j\in K_X$, $v_j\in K_Y$ for $1\le j\le r$.
Then
\begin{equation}\label{eq:group-endpoints}
\bigl(gx_0,\;gu_1v_1u_2v_2\cdots u_rv_rx_0\bigr)
\end{equation}
has law $\nu_r$. To check this, set $h_0\coloneqq g$ and $h_j\coloneqq h_{j-1}u_jv_j$. Conditional on the current representative $h_{j-1}$, the intermediate string $h_{j-1}u_jy_0$ is uniform among column strings compatible with $h_{j-1}x_0$: applying a uniform element of $K_X$ to $y_0$ gives a uniform compatible column string, since all fibers of this group action have the same size. Conditional also on $u_j$, applying a uniform element of $K_Y$ to $x_0$ then makes $h_{j-1}u_jv_jx_0$ uniform among row strings compatible with this intermediate column string. Iterate these two steps $r$ times.

Let $R_{aa'}$ count the endpoint symbol pairs in~\eqref{eq:group-endpoints}. The common permutation $g$ does not affect $R$. Expose the images of the independent block permutations $u_j,v_j$, one at a time, using at most $2rN$ exposures. Two possible images at a particular exposure have completions in bijection by a transposition of the remaining images. In the final product of permutations, this transposition becomes a conjugate transposition. It changes the final string at at most two positions, and thus changes any fixed $R_{aa'}$ by at most two. Coupling the remaining completions by this bijection shows that the associated Doob martingale differences have absolute value at most two. There are at most $2rN$ martingale increments, each bounded in absolute value by two, so
$$
2\sum_j b_j^2\le2(2rN)\cdot2^2=16rN.
$$
Lemma~\ref{lem:azuma} therefore gives
\begin{equation}\label{eq:endpoint-concentration}
\Prb[|R_{aa'}-N\theta_{aa'}|\ge t] \le2\exp\bigl(-t^2/(16rN)\bigr).
\end{equation}
The mean is $N\theta_{aa'}$ by~\eqref{eq:positive-marginal} and coordinate symmetry.

For the product analogue, let $R'$ be its table of coordinate-pair counts. These counts have a multinomial distribution with cell probabilities $\theta$. Exposing its $N$ independent coordinate pairs gives martingale differences bounded by one and hence
\begin{equation}\label{eq:product-concentration}
\Prb[|R'_{aa'}-N\theta_{aa'}|\ge t] \le2\exp\bigl(-t^2/(2N)\bigr).
\end{equation}
Draw $R,R'$ independently. Match as many occurrences of each ordered symbol pair as possible, match the remaining occurrences arbitrarily, and apply a common uniform permutation to the two lists of coordinate pairs. Conditional on either table, each marginal string of ordered symbol pairs is uniform among its arrangements. This holds for $\nu_r$ by permutation invariance and for the product law by its product formula. The construction therefore has precisely the two desired marginal distributions.

The number $D$ of mismatched coordinate pairs in this coupling is
\begin{equation}\label{eq:coupling-distance}
D=\frac12\sum_{a,a'}|R_{aa'}-R'_{aa'}|.
\end{equation}
There are at most $q^2$ entries. Integrating the tails in \eqref{eq:endpoint-concentration}--\eqref{eq:product-concentration} and using the triangle inequality in~\eqref{eq:coupling-distance} yields $\E D=O_q(\sqrt N)$. For example, $\int_0^\infty2e^{-t^2/(16rN)}\,dt=4\sqrt{\pi rN}$. If $D\ge\xi N$, one of the at most $2q^2$ deviations $|R_{aa'}-N\theta_{aa'}|$, $|R'_{aa'}-N\theta_{aa'}|$ is at least $\xi N/q^2$. A union bound consequently gives
\begin{equation}\label{eq:coupling-tail}
\Prb[D\ge\xi N]\le4q^2 \exp\bigl(-\xi^2N/(16q^5)\bigr).
\end{equation}
The common random permutation makes the coupling invariant. Each coordinate has mismatch probability $\E D/N=O_q(N^{-1/2})$. A mismatch in either component string is bounded by the mismatch in the coordinate pair, so the component couplings satisfy the same bounds.

Choose a positive constant
$$
\alpha\le\min\{\beta,\beta^{2q+1},(4q^2)^{-1},(16q^5)^{-1},1/2\}.
$$
The multislice is $\alpha$-balanced, $\nu_r$ is $\alpha$-admissible and connected, and the constructed coupling has the required $(\alpha,\zeta)$ properties, with $\zeta\le\min\{1,C_qN^{-1/2}\}$. The symmetry of the stationary endpoint law gives $T_{\nu_r}=L_S^r$. Thus Theorem~\ref{thm:bklm} applies and gives
$$
\bigl\|L_S^r\big|_{V_{X,>d}}\bigr\|_{2\to2}
\le K(1+\eta)^{-d}.
$$
All constants depend only on $q,\beta$; one can take the worst of the finitely many possible active alphabet sizes. The case $d=0$, if excluded from the convention for natural numbers in the cited statement, is covered by increasing $K$ to at least one, since a Markov averaging operator is an $L^2$ contraction by conditional Jensen's inequality.

By~\eqref{eq:projections-commute}, $V_{X,>d}$ is invariant under $L_S$. Positivity and self-adjointness, followed by the spectral theorem, give
$$
\bigl\|B_S\big|_{V_{X,>d}}\bigr\|_{2\to2}^{2r}
=\bigl\|L_S^r\big|_{V_{X,>d}}\bigr\|_{2\to2}.
$$
Taking the $2r$-th root proves~\eqref{eq:operator-contraction}.
\end{proof}

\subsection{From Acceptance Matrices to Joint-Type Polynomials}

We use the trace-norm consequence of the Linial--Shraibman $\gamma_2$ factorization bound~\cite{LinialShraibman09}. For a matrix $A$, the factorization norm $\gamma_2(A)$ is the minimum, over factorizations $A=UV$, of the product of the largest Euclidean row norm of $U$ and the largest Euclidean column norm of $V$. The following Frobenius-norm formulation is stated in~\cite{SS}.

\begin{theorem}[Quantum Acceptance-Matrix Factorization ($\gamma_2$ Method),
{\cite{SS}}]\label{thm:factorization} Let $X,Y$ be finite input sets. If $A(y,x)$ is the acceptance matrix of a protocol communicating $c$ qubits, with arbitrary prior entanglement allowed, then $A=UV$ for real matrices satisfying
$$
\|U\|_{\mathrm F}\le2^c\sqrt{|Y|},\qquad
\|V\|_{\mathrm F}\le2^c\sqrt{|X|}.
$$
Here $\|\cdot\|_{\mathrm F}$ denotes the Frobenius norm.
\end{theorem}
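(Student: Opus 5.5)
We follow the Linial--Shraibman factorization argument for entanglement-assisted protocols, in the Frobenius-norm form recorded in~\cite{SS}. First we reduce to a protocol without entanglement but with a possibly large shared state. Let $|\psi\rangle$ be the prior entangled state. Fixing it, the protocol consists of alternating local unitaries on Alice's and Bob's registers interleaved with $c$ one-qubit messages, followed by Bob's two-outcome measurement. A standard induction on the number of messages, following Yao and Kremer, then gives a decomposition of the final (unnormalized) global state. It is a sum over the $2^c$ message transcripts $m\in\{0,1\}^c$ of terms $|A_m(x)\rangle\otimes|B_m(y)\rangle$, where each Alice- or Bob-vector has norm at most one.

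The acceptance probability is $\langle\Phi_{x,y}|\,I\otimes\Pi_y\,|\Phi_{x,y}\rangle$. Expanding over pairs of transcripts gives
$$
A(y,x)=\sum_{m,m'}\langle A_m(x)|A_{m'}(x)\rangle\,\langle B_m(y)|\Pi_y|B_{m'}(y)\rangle,
$$
which is an inner product of two vectors indexed by $(m,m')$ and by the Bob-space basis. The Alice vector is $u(x)=(\langle A_m(x)|A_{m'}(x)\rangle)$, with norm at most $\sum_{m,m'}\|A_m\|\|A_{m'}\|\le 2^c$. The Bob vector similarly has norm at most $2^c$. Splitting Alice's and Bob's contributions therefore gives $\gamma_2(A)\le 2^{c}$. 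Some care is needed to get $2^c$ rather than $4^c$. The Yao--Kremer decomposition gives $2^{c}$ terms with norm bounds whose product is at most one, and Cauchy--Schwarz over the $2^c$ terms then yields row norm times column norm at most $2^c$. If only $4^c$ came out, it suffices to note that the later use only needs $2^{O(c)}$. Real factors are obtained by splitting into real and imaginary parts, which costs at most a constant factor absorbed by the same induction bookkeeping, or equivalently by using the real part, since $A$ is real.

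From $\gamma_2(A)\le 2^c$, take a factorization $A=UV$ with each row of $U$ of norm at most $\sqrt{2^c}\cdot r$ and each column of $V$ of norm at most $\sqrt{2^c}/r$. After rescaling, both maxima can be taken equal, each at most $2^{c/2}$, or at most $2^c$ by the convention above. Summing over the $|Y|$ rows gives $\|U\|_{\mathrm F}^2\le|Y|\max_y\|U_y\|^2\le 4^c|Y|$, and summing over columns gives $\|V\|_{\mathrm F}^2\le 4^c|X|$, which is the stated bound.

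The main obstacle is the entanglement: the shared state may be infinite-dimensional. We handle it by approximation with finite-dimensional states, with $\gamma_2$ bounds closed under limits on the finite matrix $A$. Alternatively we invoke the Linial--Shraibman theorem directly, which already allows arbitrary entanglement, and keep only the conversion to Frobenius form above.
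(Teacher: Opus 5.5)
The paper gives no proof of this statement; it cites it from \cite{SS}. Your fallback (invoke Linial--Shraibman and do only the balancing step) therefore matches the paper, and that balancing step is fine. Your self-contained argument, however, has a genuine gap exactly where entanglement enters.

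\textbf{The gap.} The Yao--Kremer lemma yields $2^c$ terms $|A_m(x)\rangle\otimes|B_m(y)\rangle$ with norms at most one only because the protocol starts in a product state. With prior entanglement this is false. Already for $c=0$ and $\psi$ an EPR pair, the final state $(U_x\otimes V_y)|\psi\rangle$ is entangled, so it is not a single product vector. Expanding $\psi$ in its Schmidt basis does produce product terms, but it multiplies their number and inflates the norm bound by a factor as large as the Schmidt rank, so no entanglement-independent bound survives. Infinite dimensionality is not the obstacle, and finite-dimensional approximation does not touch this problem.

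\textbf{The missing idea.} Run the Kremer induction at the level of operators. Writing $\Psi_0$ for the initial state including $\psi$, the final state is
\[
\sum_m\bigl(\mathcal A_m(x)\otimes\mathcal B_m(y)\bigr)\Psi_0\otimes|m_c\rangle ,
\]
where each $\mathcal A_m(x)$, $\mathcal B_m(y)$ is a product of blocks $\langle m_j|U_j|m_{j-1}\rangle$ of the local unitaries, hence a contraction. Each term of $A(y,x)$ then has the form
\[
\langle\Psi_0|E\otimes F|\Psi_0\rangle=\langle(E^\dagger\otimes I)\Psi_0,(I\otimes F)\Psi_0\rangle,
\qquad \|E\|,\|F\|\le1 .
\]
This is the Tsirelson-style step. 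It gives vectors $a(x)$ and $b(y)$ with $\|a(x)\|^2\le4^c$ and $\|b(y)\|^2\le4^c$. Alternatively, teleportation converts the protocol to $2c$ classical bits with shared entanglement. Infinite-dimensional $\Psi_0$ is then harmless: project onto the span of the finitely many vectors $a(x)$ and $b(y)$.

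\textbf{The constants.} Your estimate $\sum_{m,m'}\|A_m\|\|A_{m'}\|\le2^c$ is wrong, since that sum equals $(\sum_m\|A_m\|)^2\le4^c$. The correct bound is
\[
\Bigl(\sum_{m,m'}|\langle A_m|A_{m'}\rangle|^2\Bigr)^{1/2}\le\sum_m\|A_m\|^2\le2^c .
\]
Both sides together give $\gamma_2\le4^c$, not $2^c$. That is exactly enough: balancing makes all row and column norms at most $2^c$, which yields the stated Frobenius bounds. So there is no need to retreat to ``$2^{O(c)}$'', which would not prove the stated constants anyway.

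\textbf{Real factors.} Passing to real factors costs no constant factor. Since $A$ is real, it equals $\mathrm{Re}\langle a,b\rangle$, the real inner product of the realified vectors, and realification preserves norms.
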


For our normalization, this gives the following trace-norm bound, the form of the method used in the polynomial approximation below.

\begin{corollary}[Normalized Trace-Norm Bound]\label{cor:trace-bound} Under the hypotheses of Theorem~\ref{thm:factorization}, put $\overline A\coloneqq A/\sqrt{|X||Y|}$. Then
\begin{equation}\label{eq:trace-bound}
\|\overline A\|_*\le4^c,
\end{equation}
where $\|\cdot\|_*$ denotes the trace norm (also called the nuclear norm), the sum of singular values. The same bound holds for convex mixtures of acceptance matrices of protocols with cost at most $c$.
\end{corollary}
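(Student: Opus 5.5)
We use Theorem~\ref{thm:factorization} as given: it supplies real matrices with $A=UV$, $\|U\|_{\mathrm F}\le2^c\sqrt{|Y|}$ and $\|V\|_{\mathrm F}\le2^c\sqrt{|X|}$. The plan is to bound the trace norm of a product by the product of Frobenius norms and then normalize. The key inequality is
$$
\|UV\|_*\le\|U\|_{\mathrm F}\|V\|_{\mathrm F}.
$$
It follows from the variational formula $\|M\|_*=\max_{\|W\|_{2\to2}\le1}\operatorname{tr}(W^{\mathsf T}M)$. For any contraction $W$,
$$
\operatorname{tr}(W^{\mathsf T}UV)=\operatorname{tr}\bigl((U^{\mathsf T}W)^{\mathsf T}V\bigr)\le\|U^{\mathsf T}W\|_{\mathrm F}\|V\|_{\mathrm F}\le\|U\|_{\mathrm F}\|V\|_{\mathrm F}.
$$
The first inequality is Cauchy--Schwarz for the Frobenius inner product. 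The second uses $\|U^{\mathsf T}W\|_{\mathrm F}\le\|U\|_{\mathrm F}\|W\|_{2\to2}$.

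Combining the two steps gives
$$
\|A\|_*\le2^c\sqrt{|Y|}\cdot2^c\sqrt{|X|}=4^c\sqrt{|X||Y|}.
$$
Dividing by the scalar $\sqrt{|X||Y|}$ scales the trace norm linearly, which yields~\eqref{eq:trace-bound}.

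For convex mixtures, let $A=\sum_i\mu_iA_i$ with $\mu_i\ge0$ and $\sum_i\mu_i=1$. Each $A_i$ is the acceptance matrix of a protocol of cost at most $c_i\le c$. The previous paragraph gives $\|\overline{A_i}\|_*\le4^{c_i}\le4^c$. The trace norm is a norm, so
$$
\|\overline A\|_*\le\sum_i\mu_i\|\overline{A_i}\|_*\le4^c.
$$
One point needs care. A mixture of protocols that is realized by shared randomness is itself an entanglement-assisted protocol of cost at most $c$, so Theorem~\ref{thm:factorization} could be applied to it directly. Using the triangle inequality instead avoids having to check this, including cases where protocols of different lengths are padded to length $c$. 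Padding is harmless in any case, since $4^{c_i}\le4^c$.

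Once Theorem~\ref{thm:factorization} is taken as given there is no real obstacle. The only step that needs a justification, rather than being mere arithmetic, is the inequality $\|UV\|_*\le\|U\|_{\mathrm F}\|V\|_{\mathrm F}$, and the duality argument above settles it.
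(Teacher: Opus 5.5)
Your proposal is correct and has the same structure as the paper's proof: bound $\|UV\|_*$ by $\|U\|_{\mathrm F}\|V\|_{\mathrm F}$, insert the factorization bounds from Theorem~\ref{thm:factorization}, normalize, and use convexity of the trace norm for mixtures. The one difference is how you justify $\|UV\|_*\le\|U\|_{\mathrm F}\|V\|_{\mathrm F}$: you use the duality between the trace and operator norms, whereas the paper writes $UV$ as a sum of outer products of the columns of $U$ with the rows of $V$ and applies the triangle inequality and Cauchy--Schwarz; both arguments are standard and valid.
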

\begin{proof}
Let $UV$ be a sum of outer products of columns of $U$ and rows of $V$. The trace norm of an outer product is the product of its Euclidean norms. The triangle inequality and Cauchy--Schwarz therefore give $\|UV\|_*\le\|U\|_{\mathrm F}\|V\|_{\mathrm F}$. By Theorem~\ref{thm:factorization},
$$
\|\overline A\|_*
\le\frac{\|U\|_{\mathrm F}\|V\|_{\mathrm F}}{\sqrt{|X||Y|}}
\le4^c.
$$
Convexity of the trace norm gives the assertion for mixtures, including public randomization.
\end{proof}

\begin{lemma}[Probabilities of Local Patterns]\label{lem:local-pattern}
Let $(x,y)$ be uniform among input pairs of joint type $S$ and total length $N$. Fix $\ell$ distinct coordinate positions and specify an ordered symbol pair at each position. If $m_{ab}$ of these positions specify $(a,b)$, then the probability of this pattern is
\begin{equation}\label{eq:falling-factorial}
\frac{\prod_{a,b}\fall{S_{ab}}{m_{ab}}}{\fall{N}{\ell}},
\qquad
\ell=\sum_{a,b}m_{ab},\qquad 0\le\ell\le N,
\end{equation}
where
$$
\fall{z}{k}\coloneqq z(z-1)\cdots(z-k+1)\quad(k\ge1),
\qquad
\fall{z}{0}\coloneqq1.
$$
For fixed $N$ and a fixed pattern, this probability is a polynomial of total degree at most $\ell$ in the entries of $S$.
\end{lemma}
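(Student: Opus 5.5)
The plan is to count arrangements directly. The uniform law on input pairs of type $S$ is the uniform law on strings $w\in([q]\times[q])^N$ of ordered symbol pairs whose histogram is $S$. The coordinate description of Definition~\ref{def:types} identifies the two descriptions bijectively. Hence the number of such pairs is the multinomial coefficient $N!/\prod_{a,b}S_{ab}!$.

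Next I count completions of the pattern. Fix the $\ell$ specified positions and their pairs. A compatible string must place the prescribed pairs at those positions. On the remaining $N-\ell$ positions it must realize the residual histogram $S_{ab}-m_{ab}$. If some $m_{ab}>S_{ab}$, there are no completions, so the probability is zero. In that case the formula also gives zero, because $\fall{S_{ab}}{m_{ab}}$ contains the factor $0$ when $S_{ab}$ is a nonnegative integer with $S_{ab}<m_{ab}$. Otherwise the number of completions is $(N-\ell)!/\prod_{a,b}(S_{ab}-m_{ab})!$. Dividing by the total count gives
$$
\frac{(N-\ell)!}{N!}\prod_{a,b}\frac{S_{ab}!}{(S_{ab}-m_{ab})!}=\frac{\prod_{a,b}\fall{S_{ab}}{m_{ab}}}{\fall N\ell},
$$
which is \eqref{eq:falling-factorial}. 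Here $\ell\le N$ ensures that $\fall N\ell$ is a positive integer.

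For the polynomial claim, $N$ and the pattern are fixed, so the denominator $\fall N\ell$ is a nonzero constant. Each factor $\fall{S_{ab}}{m_{ab}}$ is a polynomial of degree $m_{ab}$ in the single variable $S_{ab}$. Their product therefore has total degree $\sum m_{ab}=\ell$.

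There is no real obstacle. The only point needing care is the case $m_{ab}>S_{ab}$, which the falling-factorial convention handles as described. The polynomial identity is valid on all integer types with total $N$, including types outside any promise, which is what the later averaging argument requires.
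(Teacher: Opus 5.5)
Your proposal is correct and follows the paper's proof essentially verbatim: the same multinomial count of type-$S$ pairs and of completions, the same observation that the formula still gives zero when some $m_{ab}>S_{ab}$, and the same degree count with the fixed positive denominator $\fall{N}{\ell}$. Your remark that the identity holds on every integer type with total $N$, including unpromised ones, is exactly the property the later truncation argument uses.
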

\begin{proof}
There are $N!/\prod_{a,b}S_{ab}!$ input pairs of type $S$. If $S_{ab}\ge m_{ab}$ for every pair, fixing the specified positions leaves $(N-\ell)!/\prod_{a,b}(S_{ab}-m_{ab})!$ possible completions. Their ratio is \eqref{eq:falling-factorial}. If some $S_{ab}<m_{ab}$, the pattern is impossible and its probability is zero; the same formula still holds because the corresponding falling factorial is zero. The denominator is a fixed positive number because $0\le\ell\le N$, and the numerator has total degree $\sum_{a,b}m_{ab}=\ell$.
\end{proof}

We now combine the operator contraction and trace-norm bounds with the local-pattern formula to prove the uniform approximation result used in Lemma~\ref{lem:single-cycle}.

\begin{profilelemma}[Uniform Polynomial Approximation, Restated]
Fix an integer $q\ge2$, $\beta>0$, total length $N\ge1$, and row and column margins with at most $q$ active symbols per side. Let $G$ be a connected spanning bipartite subgraph on the active symbols. For any $c$-qubit protocol with prior entanglement, let $p(S)$ be its average acceptance probability on type $S$. For each $\varepsilon\in(0,1)$, there is a single real polynomial $P$ in the entries of $S$, of total degree at most
$$
K_{q,\beta}\bigl(c+\log(1/\varepsilon)\bigr),
$$
such that $|p(S)-P(S)|\le\varepsilon$ for every integer type with the fixed margins satisfying
$$
S_e\ge\beta N\qquad\text{for every }e\in G.
$$
Entries outside $G$ may be zero. Here $K_{q,\beta}>0$ depends only on $q$ and $\beta$.
\end{profilelemma}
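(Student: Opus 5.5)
Write the type-averaged acceptance probability as a trace against the acceptance matrix. Let $A(y,x)$ be the acceptance matrix on $Y\times X$, where $X=X_\rho$ and $Y=X_\sigma$, and let $\mu_X,\mu_Y$ be the uniform measures. For a type $S$, uniform pairs of type $S$ have uniform marginals, so
$$
p(S)=\E_{y\sim\mu_Y}\bigl[(B_S\text{ applied to }A(y,\cdot))(y)\bigr]
=\operatorname{tr}\bigl(\overline A^{\top}\,\mathcal B_S\bigr)
$$
after suitable normalization. Here $\mathcal B_S$ is the kernel of $B_S$ in \eqref{eq:type-operator}, rescaled so that $B_S$ is an operator between the normalized $L^2$ spaces, and $\overline A=A/\sqrt{|X||Y|}$. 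Equivalently, $p(S)=\langle \overline A, B_S\rangle$ in the Hilbert--Schmidt pairing of operators $L^2(X)\to L^2(Y)$. This pairing is bounded by $\|\overline A\|_*\,\|B_S\|_{2\to2}$.

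Next, split $B_S=B_SP_{X,d}+B_S(I-P_{X,d})$. Corollary~\ref{cor:trace-bound} gives $\|\overline A\|_*\le4^c$. Lemma~\ref{lem:operator-contraction} applies to every eligible $S$ with $G$ fixed, and gives $\|B_S(I-P_{X,d})\|_{2\to2}\le K_{q,\beta}e^{-\gamma d}$. The high-degree contribution is therefore at most $4^cK_{q,\beta}e^{-\gamma d}\le\varepsilon$ once $d=\lceil(c\ln4+\ln(K_{q,\beta}/\varepsilon))/\gamma\rceil=O_{q,\beta}(c+\log(1/\varepsilon))$. This $d$ is chosen uniformly over eligible types, and it is the only place where $S_e\ge\beta N$ on $G$ is used. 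Note that $P_{X,d}$ depends only on the margins, not on $S$. This is the main point giving uniformity over all eligible types, including types outside the promise.

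Then show that $S\mapsto\langle\overline A,B_SP_{X,d}\rangle$ is a polynomial of degree at most $d$ in the entries of $S$. Expand $P_{X,d}$, or rather the function $x\mapsto$ (the projection of the rows of $A$), in a fixed spanning set of $d$-juntas. Concretely, write the low-degree part as a finite linear combination $\sum_{I,\,\xi}c_{I,\xi}(y)\mathbf 1[x_I=\xi]$ with $|I|\le d$ and coefficients independent of $S$; such a representation exists because $V_{X,\le d}$ is spanned by indicator juntas. Then
$$
\langle\overline A,B_SP_{X,d}\rangle=\sum_{I,\xi}\E_{(x,y)\sim S}\bigl[c_{I,\xi}(y)\mathbf 1[x_I=\xi]\bigr].
$$
Expanding $c_{I,\xi}(y)$ over $y$ is not low-degree in $y$. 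Instead, use \eqref{eq:projections-commute} to move the projection to the $Y$ side as well: $B_SP_{X,d}=P_{Y,d}B_SP_{X,d}$. The pairing then becomes $\langle P_{Y,d}\overline AP_{X,d},B_S\rangle$. The operator $P_{Y,d}\overline AP_{X,d}$ is a fixed combination of products of a $d$-junta in $y$ with a $d$-junta in $x$. Each product term is the indicator of a local pattern on at most $2d$ coordinates, and its expectation under the uniform type-$S$ law is given by Lemma~\ref{lem:local-pattern} as a polynomial of degree at most $2d$ in $S$ (for fixed $N$). Summing these gives the single polynomial $P$, of degree at most $2d=O_{q,\beta}(c+\log(1/\varepsilon))$, with $|p(S)-P(S)|\le\varepsilon$ on all eligible $S$.

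The main obstacle is this bookkeeping step. We must check that inserting $P_{Y,d}$ costs nothing, which holds because $B_S$ maps $V_{X,\le d}$ into $V_{Y,\le d}$, so $(I-P_{Y,d})B_SP_{X,d}=0$. We must also check that the junta expansion of $P_{Y,d}\overline AP_{X,d}$ has coefficients depending only on the protocol and the margins. The coefficient sizes do not matter, since we only need exact polynomial identity, not bounds. Finally, the first identity requires that the normalizations of $B_S$ and the trace pairing match, so that $p(S)$ is exactly $\langle\overline A,B_S\rangle$ with $\|B_S\|$ measured as in Lemma~\ref{lem:operator-contraction}.
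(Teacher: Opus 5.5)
Your proposal is correct and follows essentially the same route as the paper. Both arguments bound the normalized acceptance matrix in trace norm by $4^c$ (Corollary~\ref{cor:trace-bound}), control the discarded high-degree part with Lemma~\ref{lem:operator-contraction}, use \eqref{eq:projections-commute} to truncate on both the $X$ and $Y$ sides, and convert the two-sided truncation into one polynomial of degree at most $2d$ via Lemma~\ref{lem:local-pattern}. The differences are cosmetic: the paper writes out the singular-value decomposition and applies Cauchy--Schwarz term by term where you invoke trace-norm/operator-norm duality directly. The paper also disposes of $\varepsilon\ge1/2$ separately (taking $P=1/2$), so that the additive constant in $2d$ is absorbed into $K_{q,\beta}(c+\log(1/\varepsilon))$; your $O_{q,\beta}$ skips this detail.
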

\begin{proof}
If $\varepsilon\ge1/2$, take $P\coloneqq1/2$; if no integer type satisfies the hypotheses, take $P\coloneqq0$. Otherwise assume $0<\varepsilon<1/2$ and write $X\coloneqq X_\rho$, $Y\coloneqq X_\sigma$ for the fixed-margin multislices, equipped with uniform probability measures. Thus
$$
\langle u,v\rangle_Y=\frac1{|Y|}\sum_{y\in Y}u(y)v(y),
\qquad \|u\|_{L^2(Y)}^2=\langle u,u\rangle_Y,
$$
and analogously on $X$. Lemma~\ref{lem:operator-contraction} supplies $K_0\ge1$ and $\gamma>0$, depending only on $q,\beta$, such that
$$
\|B_S(I-P_{X,d})\|_{2\to2}\le K_0e^{-\gamma d}
$$
for every eligible $S$ and every integer $d\ge0$.

Let $A(y,x)$ be the protocol's acceptance matrix and put $\overline A\coloneqq A/\sqrt{|X||Y|}$. Take a real Euclidean singular-value decomposition $\overline A=\sum_j s_j\mathbf u_j\mathbf v_j^{\mathsf T}$, with unit singular vectors, and set $u_j(y)\coloneqq\sqrt{|Y|}\,\mathbf u_j(y)$ and $v_j(x)\coloneqq\sqrt{|X|}\,\mathbf v_j(x)$. Then
$$
A(y,x)=\sum_j s_ju_j(y)v_j(x),\qquad \|u_j\|_{L^2(Y)}=\|v_j\|_{L^2(X)}=1,\qquad \sum_j s_j=\|\overline A\|_*\le4^c,
$$
where the last inequality is Corollary~\ref{cor:trace-bound}. In particular, $s_j\ge0$ are the singular values of the normalized matrix $\overline A$. Let $\E_S$ denote expectation under the uniform type-$S$ distribution. Its marginals are uniform, so conditional expectation gives
$$
\E_S[u(y)v(x)]=\langle u,B_Sv\rangle_Y, \qquad p(S)=\sum_j s_j\langle u_j,B_Sv_j\rangle_Y.
$$

For an integer $0\le d\le N$, define
\begin{equation}\label{eq:low-degree-acceptance}
P_d(S)\coloneqq\sum_j s_j\, \E_S\!\left[(P_{Y,d}u_j)(y)(P_{X,d}v_j)(x)\right].
\end{equation}
The commuting-projection identity~\eqref{eq:projections-commute} and $P_{X,d}^2=P_{X,d}$ imply $P_{Y,d}B_SP_{X,d}=B_SP_{X,d}$. Since $P_{Y,d}$ is self-adjoint,
$$
\langle P_{Y,d}u_j,B_SP_{X,d}v_j\rangle_Y =\langle u_j,P_{Y,d}B_SP_{X,d}v_j\rangle_Y =\langle u_j,B_SP_{X,d}v_j\rangle_Y.
$$
Consequently the truncation error has the exact expression
$$
p(S)-P_d(S)
=\sum_j s_j\langle u_j,B_S(I-P_{X,d})v_j\rangle_Y.
$$
Cauchy--Schwarz, the unit norms, and the contraction estimate therefore give, simultaneously for every eligible $S$,
\begin{align}
|p(S)-P_d(S)|
&\le\sum_j s_j\|u_j\|_{L^2(Y)}
\|B_S(I-P_{X,d})\|_{2\to2}\|v_j\|_{L^2(X)}\notag\\
&\le K_0e^{-\gamma d}\sum_j s_j
\le4^cK_0e^{-\gamma d}.
\label{eq:truncation-error}
\end{align}

We next represent~\eqref{eq:low-degree-acceptance} by one polynomial in $S$. Each projected function is a finite linear combination of $d$-juntas, so each product in its expansion depends on at most $\min\{2d,N\}$ coordinate positions. Expand these products into indicators of specified symbol patterns. If only one party's symbol is specified at a position, sum over the other symbol, using, for example,
$$
\mathbf1_{\{x_i=a\}} =\sum_b\mathbf1_{\{(x_i,y_i)=(a,b)\}}.
$$
Each resulting ordered-pair pattern on $\ell$ positions has, by Lemma~\ref{lem:local-pattern}, an expectation that is a polynomial of total degree at most $\ell\le\min\{2d,N\}$. The matrix $A$, its singular-value decomposition, the projections, and all expansion coefficients depend only on the fixed protocol, margins, and $d$, not on $S$. Their finite linear combination therefore gives a \emph{single} real polynomial $P_d$, of degree at most $\min\{2d,N\}$, representing~\eqref{eq:low-degree-acceptance} on every integer type with these margins. The error bound~\eqref{eq:truncation-error} holds on the eligible types.

Finally, set
$$
d_0\coloneqq\left\lceil \gamma^{-1}\bigl(c\ln4+\ln(K_0/\varepsilon)\bigr) \right\rceil,\qquad d\coloneqq\min\{d_0,N\},\qquad P\coloneqq P_d.
$$
If $d_0\le N$, then~\eqref{eq:truncation-error} is at most $4^cK_0e^{-\gamma d_0}\le\varepsilon$. If $d_0>N$, every function on either multislice is an $N$-junta, so $P_{X,N}=P_{Y,N}=I$ and \eqref{eq:low-degree-acceptance} gives $P_N(S)=p(S)$ on every type with the fixed margins. In either case,
$$
\deg P\le2d_0 \le2\gamma^{-1}\bigl(c\ln4+\ln K_0+\ln(1/\varepsilon)\bigr)+2 \le K_{q,\beta}\bigl(c+\log(1/\varepsilon)\bigr),
$$
after increasing $K_{q,\beta}$, since $K_0,\gamma$ depend only on $q,\beta$ and $\log(1/\varepsilon)\ge1$.
\end{proof}

\clearpage

\section{Auxiliary Estimates and Alphabet Constants}
\label{sec:auxiliary}

This appendix collects the auxiliary proofs for the upper and lower bounds: scale estimation, batch recovery, sampled-count error, and the adjacent-integer degree estimate. It also tracks the constants in Proposition~\ref{prop:alphabet-upper} and compares our parameter with the binary jump measure of Guan et al.~\cite{GHYY}. The classical protocol and its overall error analysis remain in Section~\ref{sec:classical}; quantum resource accounting appears in Appendix~\ref{sec:quantum-upper}.

\subsection{Disagreement-Count Scale Estimation}\label{sec:aux-scale}

We restate the scale-estimation guarantee before proving it. \begin{scalerestatement}[Estimating the Disagreement Count, Restated] 
For each fixed $\delta\in(0,1)$, there is a constant $\kappa\ge1$ and a public-coin protocol using $O_\delta(\log n)$ bits with the following property. On binary strings with $k$ disagreements it outputs the same value $\widetilde k$ to both parties. If $k=0$, the output is always zero. If $k>0$, then
$$
\Prb[k/\kappa\le\widetilde k\le\kappa k]\ge1-\delta.
$$
Each nonzero output is an integer power of two.
\end{scalerestatement}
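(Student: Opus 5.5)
We plan to use nested coordinate samples together with the one-sided fingerprint of Lemma~\ref{lem:linear-hash}. Let $d\coloneqq\xi\oplus\eta$ be the disagreement vector, where $|d|=k$. With public randomness, draw a uniform random ordering of priorities, equivalently independent uniform reals truncated to finite precision, and form nested sets $[n]=A_0\supseteq A_1\supseteq\cdots\supseteq A_J$ with $J=\lceil\log n\rceil$. Each coordinate survives from $A_{j-1}$ to $A_j$ independently with probability $1/2$, so $A_j$ is a $2^{-j}$-sample. For each level $j$ and each of $r=O_\delta(1)$ independent repetitions, the parties use a fresh matrix $H\in\F_2^{m\times n}$ with $m=O_\delta(1)$ rows. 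They compare $H(\xi|_{A_j})$ with $H(\eta|_{A_j})$ by having Alice send her $m$-bit sketch. This costs $O_\delta(1)$ bits per level and $O_\delta(\log n)$ in total.

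The test at level $j$ declares the level nonempty if some repetition shows unequal sketches. Because the map is linear, equal restricted strings always give equal sketches. Hence if $d|_{A_j}=0$ the level is never declared nonempty. In particular, $k=0$ forces every level to look empty, and the protocol outputs zero. This is the one-sided zero case, and it holds deterministically for every public tape. If $d|_{A_j}\ne0$, each repetition detects the difference with probability $1-2^{-m}$. The protocol outputs $\widetilde k\coloneqq2^{j^*}$, where $j^*$ is the largest level declared nonempty, and outputs zero if none is. Both parties see all sketches, so they output the same value, and it is a power of two.

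For accuracy when $k>0$, choose a constant $c_\delta$ with $2^{c_\delta}$ large. First consider levels $j\ge\log k+c_\delta$. The probability that $A_j$ contains a disagreement is at most $k2^{-j}$. Summing this geometric series over these levels gives at most $2^{-c_\delta+1}\le\delta/2$, and false positives are impossible. So with probability at least $1-\delta/2$ we get $j^*<\log k+c_\delta$, hence $\widetilde k\le2^{c_\delta}k$. Next consider the level $j_0=\max\{0,\lfloor\log k\rfloor-c_\delta\}$. There $A_{j_0}$ misses all $k$ disagreements with probability at most $(1-2^{-j_0})^k\le e^{-2^{c_\delta}}$. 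Otherwise the level is detected except with probability $2^{-mr}$. Choosing $c_\delta$, $m$ and $r$ so that both failures are at most $\delta/4$ gives $j^*\ge j_0$, hence $\widetilde k\ge k/2^{c_\delta+1}$. We then take $\kappa=2^{c_\delta+1}$.

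The main obstacle is bookkeeping rather than concept. The one-sided property must hold on every tape, which is why we compare linear sketches of restrictions and do not use any estimator that could produce spurious positives. The upper-tail argument relies on nestedness: if the levels were sampled independently, a union over $\log n$ levels could accumulate false detections. The sampling also needs only finite public randomness, since $J$ fair bits per coordinate suffice. As a result, the error probability is independent of $n$, and $\kappa$ depends only on $\delta$.
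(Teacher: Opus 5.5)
Your proposal is correct and follows essentially the paper's proof: nested $2^{-j}$-samples, one-sided linear fingerprints of the restricted disagreement vector, output $2^{j^*}$ for the deepest detecting level, and thresholds $\lfloor\log k\rfloor-c_\delta$ and $\log k+c_\delta$ (the paper's $j_-$ and $j_+$, with $a_0$ in place of $c_\delta$). There are two small corrections. First, only Alice sends sketches, so she does not ``see all sketches''; Bob must send back $j^*$ or a zero marker, costing an extra $O(\log\log n)$ bits, as the paper does. Second, your geometric union bound over the levels $j\ge\log k+c_\delta$ does not actually use nesting and would also work for independently sampled levels, so your closing remark about independence is inaccurate; the paper uses nesting only to reduce the upper tail to the single level $j_+$.
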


\begin{proof}
Let $D\coloneqq\{i:\xi_i\ne\eta_i\}$, $k\coloneqq|D|$, and $J\coloneqq\lceil\log n\rceil$. Independently choose each $U_i$ uniformly from $\{1,\ldots,2^J\}$ and set
$$
A_j\coloneqq\{i:U_i\le2^{J-j}\},\qquad 0\le j\le J.
$$
These sets are nested, and each coordinate belongs to $A_j$ with probability $2^{-j}$. Choose a positive integer $a_0$ satisfying
\begin{equation}\label{eq:scale-constants}
2^{1-a_0}+e^{-2^{a_0}}\le\delta;
\end{equation}
for example, $a_0=\lceil\log(3/\delta)\rceil$ suffices. Independently choose uniform public vectors $r_i\in\F_2^{a_0}$. At each level $j$, Alice sends $\bigoplus_{i\in A_j:\xi_i=1}r_i$, from which Bob obtains
$$
F_j\coloneqq\bigoplus_{i\in D\cap A_j}r_i.
$$
Conditional on all $U_i$, this fingerprint is zero deterministically when $D\cap A_j=\varnothing$, and otherwise is zero with probability $2^{-a_0}$ by Lemma~\ref{lem:linear-hash}. Thus
$$
\Prb[F_j=0]\le\Prb[D\cap A_j=\varnothing]+2^{-a_0}.
$$
Bob returns $\widetilde k=2^{j^*}$ for the largest level $j^*$ with $F_{j^*}\ne0$, or zero if no such level exists, and sends Alice that index or the zero marker. If $k=0$, every fingerprint is zero.

For $k>0$, define the two analysis thresholds
$$
j_-\coloneqq\max\{0,\lfloor\log k\rfloor-a_0\},\qquad j_+\coloneqq\lceil\log k\rceil+a_0.
$$
When $j_->0$, independent coordinate sampling gives
$$
\Prb[D\cap A_{j_-}=\varnothing] =(1-2^{-j_-})^k\le e^{-k2^{-j_-}}\le e^{-2^{a_0}};
$$
when $j_-=0$, this probability is zero. Except with probability $e^{-2^{a_0}}+2^{-a_0}$, therefore, $F_{j_-}\ne0$ and
$$
\widetilde k\ge2^{j_-}\ge2^{\lfloor\log k\rfloor-a_0} \ge k/2^{a_0+1}.
$$
For the upper tail, let $B_+$ be the event that some level $j\in\{0,\ldots,J\}$ with $j\ge j_+$ has $F_j\ne0$. If $j_+\le J$, nesting implies $B_+\subseteq\{D\cap A_{j_+}\ne\varnothing\}$, so a union bound over $D$ gives
$$
\Prb[B_+]\le\sum_{i\in D}\Prb[i\in A_{j_+}] =k2^{-j_+}\le2^{-a_0}.
$$
If $j_+>J$, the event is empty. Outside $B_+$, either $\widetilde k=0$ or $j^*<j_+$; hence in both cases $\widetilde k\le2^{j_+}\le2^{a_0+1}k$. A union bound and~\eqref{eq:scale-constants} prove the guarantee with $\kappa=2^{a_0+1}$. The argument uses nesting of the sampled sets, not monotonicity or independence of the fingerprints. Communication is $a_0(J+1)+\lceil\log(J+2)\rceil$ bits, and all random choices use finite sample spaces.
\end{proof}

\subsection{Recovering the Disagreement Set}\label{sec:aux-batch}

\begin{proof}[Proof of Lemma~\ref{lem:batch-recovery}]
Work over $\F_2$, put $d_0\coloneqq\xi+\eta$, and fix a public enumeration of
$$
\calD_B\coloneqq\{d\in\F_2^n:|d|\le B\}, \qquad N_B\coloneqq|\calD_B|=\sum_{j=0}^B\binom nj,
$$
where $|d|$ is Hamming weight. Choose a uniform public matrix $H\in\F_2^{m\times n}$ with $m\coloneqq\lceil\log N_B+\log(1/\varepsilon)\rceil$. Alice sends $H\xi$, allowing Bob to compute $Hd_0=H\xi+H\eta$. Bob searches $\calD_B$ for vectors with this syndrome. If there is exactly one candidate, he sends a success bit and its enumeration index; otherwise he sends a failure bit. In the success case both parties return the candidate's support.

If $|d_0|\le B$, the true vector is a candidate. By Lemma~\ref{lem:linear-hash} and a union bound, the probability of any other candidate having the same syndrome is at most
$$
(N_B-1)2^{-m}\le\varepsilon.
$$
Outside this event, both parties recover $D=\supp(d_0)$. Regardless of the input or recovery outcome, communication is at most
$$
m+\lceil\log N_B\rceil+1 \le2\log N_B+\log(1/\varepsilon)+3.
$$
Every returned candidate has weight at most $B$, even when $|d_0|>B$; this latter case need not be detected.

For completeness, put $t=B/n\in(0,1/2]$. Since $t^j\ge t^B$ for $0\le j\le B$, the binomial theorem gives
$$
t^B N_B\le\sum_{j=0}^B\binom nj t^j \le(1+t)^n\le e^{nt}=e^B.
$$
Thus $\log N_B\le B\log(en/B)=O(B\log(n/B))$, proving the cost bound. Only the finitely many bits of $H$ are random; exhaustive local search is allowed in the communication model.
\end{proof}

\subsection{Error of Sampled Counts}\label{sec:aux-count}

\begin{proof}[Proof of Lemma~\ref{lem:count-error}]
Independent coordinate sampling gives $Z_e\sim\operatorname{Bin}(T_e,p)$, hence
$$
\E\widehat T_e=T_e, \qquad \Var(\widehat T_e)=\frac{T_e(1-p)}p.
$$
For $T_e>0$, rationalizing the square-root difference and using $\widehat T_e\ge0$ yields
$$
\E(\sqrt{T_e}-\sqrt{\widehat T_e})^2 =\E\frac{(T_e-\widehat T_e)^2} {(\sqrt{T_e}+\sqrt{\widehat T_e})^2} \le\frac{\Var(\widehat T_e)}{T_e} =\frac{1-p}{p}.
$$
For $T_e=0$, both $Z_e$ and $\widehat T_e$ vanish, so the contribution is zero. Since $\widehat k\ge0$ and $k>0$, Lemma~\ref{lem:hellinger} and linearity of expectation give
$$
\begin{aligned}
\E H(J_T^E,\widehat J_E)^2 &=\E\frac{\sum_{e\in E}(\sqrt{T_e}-\sqrt{\widehat T_e})^2} {k+\widehat k}\\
&\le\frac1k\sum_{e\in E} \E(\sqrt{T_e}-\sqrt{\widehat T_e})^2 \le\frac{|E|(1-p)}{pk} \le\frac{q^2(1-p)}{pk}.
\end{aligned}
$$
The Hellinger identity includes $\widehat k=0$. When $p=1$, every count is exact. All expectations here concern the ideal coordinate sample, without conditioning on a recovery event.
\end{proof}

\subsection{The Adjacent-Integer Degree Estimate}\label{sec:aux-grid}

The following restatement specifies the boundedness and separation conditions used in the reduction to Lemma~\ref{lem:nayak-wu}.
\begin{adjacentrestatement}[Degree Needed to Separate Adjacent Integers, Restated]
Let $a,b$ be nonnegative integers. Suppose a real polynomial $F$ satisfies $|F(j)|\le M$ for all integers $-a\le j\le b+1$, and $|F(1)-F(0)|\ge\Delta>0$. Then
$$
\deg F\ge c_{M/\Delta}\sqrt{(a+1)(b+1)},
$$
where $c_{M/\Delta}>0$ depends only on the indicated ratio.
\end{adjacentrestatement}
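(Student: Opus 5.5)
The plan is to normalize $F$, push it through a fixed univariate ``squashing'' polynomial so that it meets the hypotheses of Lemma~\ref{lem:nayak-wu}, and then compose with the Hamming-weight map so that the integer interval $[-a,b+1]$ becomes the weight range of a Boolean cube.

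\textbf{Normalization and amplification.} Put $G(z)\coloneqq(F(z)-F(0))/(F(1)-F(0))$. Then $G(0)=0$, $G(1)=1$, $\deg G=\deg F$, and $|G(j)|\le R\coloneqq 2M/\Delta$ at every integer $-a\le j\le b+1$. The values of $G$ on the grid may leave $[-1/3,4/3]$, so I next fix a univariate polynomial $p=p_R$ with three properties: $p(0)=0$, $p(1)=1$, and $-1/3\le p(t)\le4/3$ for all $t\in[-R,R]$.

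Such a $p$ exists with degree $d_R$ depending only on $R$. First rescale $t\mapsto t/R\in[-1,1]$. Then take a polynomial approximation, within $1/6$ in sup norm on $[-1,1]$, of a fixed continuous function that equals $0$ near $0$ and $1$ near $1/R$ and stays in $[0,1]$; Weierstrass or Jackson gives this. Finally correct affinely so that the values at $0$ and $1/R$ are exact. That correction changes the bounds by a controlled amount, so I would shrink the approximation error to absorb it.

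With this $p$, the composition $H\coloneqq p\circ G$ satisfies $\deg H\le d_R\deg F$, $H(0)=0$, $H(1)=1$, and $H(j)\in[-1/3,4/3]$ at every grid integer. Constructing $p$ with only constant-size bookkeeping is the one step that needs care, but it involves no dependence on $a$ or $b$.

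\textbf{Change of variables.} Let $L\coloneqq a+b+1$ and define $A(w)\coloneqq H(w_1+\cdots+w_L-a)$ for $w\in\{0,1\}^L$. This is a polynomial in $w$ of degree at most $\deg H$. As $|w|$ ranges over $0,\ldots,L$, the argument $|w|-a$ ranges exactly over the integers $-a,\ldots,b+1$, so $A(w)\in[-1/3,4/3]$ on the whole cube. Taking $r\coloneqq a$ and $s\coloneqq a+1$, we have $A\equiv0$ on weight $r$ and $A\equiv1$ on weight $s$. Lemma~\ref{lem:nayak-wu} therefore gives
$$
\deg A\ge c_{\mathrm{NW}}\Bigl(\sqrt{L}+\sqrt{u(L-u)}\Bigr),
$$
where $u\in\{a,a+1\}$.

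\textbf{Comparing with the target.} For $u=a$ the product is $u(L-u)=a(b+1)$, and for $u=a+1$ it is $(a+1)b$. If $a,b\ge1$, each of these is at least $(a+1)(b+1)/4$, since $a\ge(a+1)/2$ and $b\ge(b+1)/2$. If $a=0$ or $b=0$, then $(a+1)(b+1)\le a+b+1=L$, so the $\sqrt L$ term suffices.

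Hence in all cases $\deg A\ge (c_{\mathrm{NW}}/2)\sqrt{(a+1)(b+1)}$. Combining with $\deg A\le d_R\deg F$ gives the lemma with $c_{M/\Delta}\coloneqq c_{\mathrm{NW}}/(2d_{2M/\Delta})$.
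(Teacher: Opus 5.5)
Your proof is correct. It differs from the paper's only in how $F$ is pushed into the range required by Lemma~\ref{lem:nayak-wu}.

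\textbf{The paper's route.} It centres and rescales, setting $Z(t)=\tfrac12+(F(t)-t_0)/(4M)$ with $t_0=(F(0)+F(1))/2$. Then $Z$ maps the grid into $[0,1]$, with a gap of $\eta=\Delta/(8M)$ on either side of $1/2$ at $t=0$ and $t=1$. It then composes with the explicit majority polynomial $\Maj_\ell$, where $\ell=O(1/\eta^2)$. Boundedness on the grid is automatic because $\Maj_\ell$ maps $[0,1]$ into itself, and Chebyshev's inequality handles the two special points.

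\textbf{Your route.} You normalize so that $G(0)=0$ and $G(1)=1$ exactly. You then build a squashing polynomial on $[-R,R]$ from Weierstrass approximation plus an affine correction. This works once the approximation error is small enough, e.g.\ $1/20$ rather than $1/6$. It also needs $R=2M/\Delta\ge1$, so that $1/R\in(0,1]$; this holds because $\Delta\le|F(1)-F(0)|\le2M$, and you should state it.

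\textbf{What each buys.} The paper's route is fully explicit. It uses this in Proposition~\ref{prop:alphabet-upper} to track the $q$-dependence, where the amplifier has degree $O((M/\Delta)^2)=O(q^6)$. Your route is non-quantitative as written. If you quantify it with Jackson's theorem applied to the Lipschitz step function, it gives $d_R=O(M/\Delta)$. That is a slightly better dependence than the paper's, though immaterial for the $2^{O(q)}$ conclusion.

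\textbf{Cosmetic differences.} You skip multilinearization. This is harmless, since Lemma~\ref{lem:nayak-wu} is stated for arbitrary real polynomials and $\operatorname{ml}$ never raises degree. Your final case split loses a factor of $1/2$. The paper avoids this loss with $u(L-u)\ge ab$ and $\sqrt L+\sqrt{u(L-u)}\ge\sqrt{L+ab}=\sqrt{(a+1)(b+1)}$.
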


\begin{proof}
The hypotheses imply $M\ge\Delta/2>0$. Negate $F$ if necessary so that $F(1)>F(0)$, and set
$$
t_0\coloneqq\frac{F(0)+F(1)}2,\qquad Z(t)\coloneqq\frac12+\frac{F(t)-t_0}{4M},\qquad \eta\coloneqq\frac{\Delta}{8M}.
$$
Since $|t_0|\le M$, we have $Z(j)\in[0,1]$ at every integer grid point, with $Z(0)\le1/2-\eta$ and $Z(1)\ge1/2+\eta$. Choose the smallest odd $\ell\ge3/(4\eta^2)$ and put $R\coloneqq\Maj_\ell\circ Z$. For Bernoulli trials with success probability at most $1/2-\eta$, a strict majority deviates from its mean by at least $\ell\eta$; its probability is at most $1/(4\ell\eta^2)\le1/3$ by Chebyshev's inequality, since the variance is at most $\ell/4$. The symmetric estimate applies above $1/2+\eta$. Consequently,
$$
R(j)\in[0,1],\qquad R(0)\le1/3,\qquad R(1)\ge2/3, \qquad \deg R\le\ell\deg F,
$$
where the first assertion holds throughout the integer grid and $\ell$ depends only on $M/\Delta$.

Set $L\coloneqq a+b+1$ and
$$
\widetilde R(w)\coloneqq \operatorname{ml}\bigl(R(w_1+\cdots+w_L-a)\bigr).
$$
Affine substitution and multilinearization do not increase degree, so $\deg\widetilde R\le\ell\deg F$. On the Boolean cube, $\widetilde R(w)=R(|w|-a)\in[0,1]$, because $|w|-a$ ranges from $-a$ to $b+1$. At weights $a$ and $a+1$, its values are respectively at most $1/3$ and at least $2/3$. Lemma~\ref{lem:nayak-wu} therefore gives
$$
\ell\deg F\ge\deg\widetilde R \ge c_{\mathrm{NW}}\left(\sqrt L+\sqrt{u(L-u)}\right),
$$
where $u\in\{a,a+1\}$ maximizes $|u-L/2|$. For either choice, $u(L-u)\ge ab$: the two possibilities are $a(b+1)$ and $(a+1)b$. Hence
$$
\sqrt L+\sqrt{u(L-u)} \ge\sqrt{L+ab}=\sqrt{(a+1)(b+1)}.
$$
Dividing by $\ell$ proves the result with $c_{M/\Delta}=c_{\mathrm{NW}}/\ell>0$.
\end{proof}

\subsection{Tracking the Alphabet Constants}\label{sec:aux-alphabet}

We now track the constants to prove the following estimate. 
\begin{alphabetrestatement}[Alphabet Dependence of the Present Proof, Restated]
Let $K(q)\coloneqq\max\{1,K_{q,1/(4q)}\}$, where $K_{q,1/(4q)}$ is the approximation constant in Lemma~\ref{lem:profile}. The coefficient in Theorem~\ref{thm:main} can be chosen to satisfy
$$
C_q\le 2^{O(q)}K(q)^2,
$$
where the constant in $O(q)$ is absolute.
\end{alphabetrestatement}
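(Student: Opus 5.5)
The plan is to track constants through the proof of Theorem~\ref{thm:main}, which gave $C_q=4(q-1)+K_qD_q$. Here $K_q$ is the constant of Theorem~\ref{thm:classical} and $D_q=\max\{1,4a_q^{-2}\}$ comes from Theorem~\ref{thm:inverse}. It therefore suffices to prove two estimates:
$$
K_q\le 2^{O(q)},\qquad a_q^{-1}\le 2^{O(q)}K(q).
$$
The term $4(q-1)$ is clearly absorbed.

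For the classical constant, I would follow the proof of Theorem~\ref{thm:classical}. From \eqref{eq:cut-count}, $L\le 2^{2q}$, so $\delta=1/(1000L)$ gives $\log(1/\delta)=O(q)$. The appendix proof of Lemma~\ref{lem:scale} then gives $\kappa=2^{a_0+1}$ with $a_0=\lceil\log(3/\delta)\rceil$, so $\kappa=2^{O(q)}$. Consequently $A_q=6400L\kappa q^2$ and $D_q^{\mathrm{cl}}=\kappa/\delta+1$ are both $2^{O(q)}$. The per-event cost is $O(B\log(n/B)+B\log q+\log(1/\delta))$, and summing over $L$ events and rescaling via Lemma~\ref{lem:phi} multiplies by factors $L$, $D_q^{\mathrm{cl}}$, $\log q$ and $A_q+1$. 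The scale-estimation cost $O(a_0\log n)$ per event is also $2^{O(q)}\Phi_n(s)$. Altogether $K_q\le 2^{O(q)}$.

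For the quantum constant, I would start from $a_q=\min\{h_0,b_q/(4\sqrt2 q)\}$ with $h_0=1/(64q^2)$. This reduces the task to showing $b_q^{-1}\le 2^{O(q)}K(q)$, which requires tracing the proof of Lemma~\ref{lem:single-cycle}.
\begin{itemize}
\item With $\beta=1/(4q)$ and $\varepsilon=1/(24q^3)$, Lemma~\ref{lem:profile} gives $\deg P\le K(q)(c+\log(24q^3))$, hence $D_q=O(K(q)\log q)$.
\item Lemma~\ref{lem:adjacent-grid} is applied with $M/\Delta=24q^3$. The majority amplification uses odd $\ell=O((M/\Delta)^2)=O(q^6)$, so $c'_q=c_{\mathrm{NW}}/\ell=\Omega(q^{-6})$.
\item The final step of Lemma~\ref{lem:single-cycle} loses only an absolute factor $1/16$.
\end{itemize}
Thus $b_q\ge c'_q/(16D_q)=\Omega\bigl(q^{-6}/(K(q)\log q)\bigr)$, giving $a_q^{-1}=O(q^{7}\log q\,K(q))=2^{O(q)}K(q)$. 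Squaring, $D_q=4a_q^{-2}\le 2^{O(q)}K(q)^2$, and so $K_qD_q\le 2^{O(q)}K(q)^2$, as required.

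The main obstacle is the bookkeeping of the gap thresholds in Theorem~\ref{thm:inverse}. The gaps $1/(3q^2)$ and $1/(6q^3)$, and the assumption that the cycle and non-tree-edge counts are at most $q^2$ and $2q$, are all polynomial in $q$, so they only contribute $2^{O(q)}$. The point to watch is that no hidden dependence on $q$ enters through $\beta$ other than via $K_{q,1/(4q)}$, which is absorbed into $K(q)$ by definition. One must also check that $c_{\mathrm{NW}}$ is absolute.
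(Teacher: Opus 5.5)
Your proposal is correct and follows essentially the paper's own proof: bound the classical constant through $L\le 2^{2q-1}$, $\kappa=O(L)$, $A_q$ and $B$, bound $a_q^{-1}=O(q^7K(q)\log(2q))$ using the degree-$O(q^6)$ majority amplification in Lemma~\ref{lem:adjacent-grid}, and combine the two through $C_q=4(q-1)+C_q^{\mathrm{cl}}\max\{1,4a_q^{-2}\}$. The only difference is that you leave the classical constant as $2^{O(q)}$, whereas the paper records the explicit bound $C_q^{\mathrm{cl}}=O(q^2L^5\log(2q))$; this does not affect the conclusion.
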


\begin{proof}
Let $L\coloneqq|\calE|\le2^{2q-1}-1$. In the classical protocol, $\delta=1/(1000L)$ and the explicit choice in Lemma~\ref{lem:scale} gives $\kappa=O(L)$. Consequently, $A_q=O(L^2q^2)$, $s=O(L^2q^2h_f^{-2})$, and $B=O(L^2s)$. For each of the $L$ events, recovery and transmission of the original symbols cost at most $O(B\log(n/B)+B\log(2q)+\log(1/\delta))$ in the sparse branch. The scale-estimation cost is $O(L\log(2L)\log n)$. Applying Lemma~\ref{lem:phi}, as in the proof of Theorem~\ref{thm:classical}, and including the full-input branch, shows that the constant in that theorem can be chosen as
$$
C_q^{\mathrm{cl}}=O\!\left(q^2L^5\log(2q)\right).
$$

In Lemma~\ref{lem:single-cycle}, the polynomial has degree at most $K(q)(c+\log(24q^3))$. The proof of Lemma~\ref{lem:adjacent-grid}, with $M=2$ and $\Delta=1/(12q^3)$, uses a majority polynomial of degree $O(q^6)$. The constant in its degree lower bound can therefore be chosen as $\Omega(q^{-6})$, with an absolute implied constant. Following~\eqref{eq:interval-arms} and the final step of Theorem~\ref{thm:inverse}, we may choose its lower-bound constant so that
$$
a_q^{-1}=O\!\left(q^7K(q)\log(2q)\right).
$$
The proof of Theorem~\ref{thm:main} then permits
$$
C_q=4(q-1)+C_q^{\mathrm{cl}}\max\{1,4a_q^{-2}\} \le O\!\left(q^{16}L^5\log^3(2q)K(q)^2\right).
$$
Since $L\le2^{2q-1}-1$, this implies~\eqref{eq:alphabet-upper}.
\end{proof}
\subsection{Comparison with the Binary Jump Measure}
\label{sec:aux-binary-comparison}

The comparison below shows that our parameter retains the binary jump measure up to absolute constants.

\begin{proposition}[Comparison with the Binary Jump Measure]
\label{prop:binary-parameter-comparison}
Let $g$ be a binary permutation-invariant partial function with fixed weights $|x|=a$ and $|y|=b$. Its feasible intersection counts form $I=\{\max\{0,a+b-n\},\ldots,\min\{a,b\}\}$, and write $\varphi(t)$ for its value at intersection count $t$. For a jump $(c,d)$, the integers $c-d<c+d$ have opposite promised labels and every integer strictly between them is unpromised. Let $w_1(c)\le w_2(c)$ be the two smallest numbers, counted with multiplicity, among
$$
c,\quad a-c,\quad b-c,\quad n-a-b+c.
$$
Following~\cite{GHYY}, set
$$
m_{\mathrm{GHYY}}(g)\coloneqq \max_{(c,d)\text{ a jump of }\varphi} \frac{\sqrt{w_1(c)w_2(c)}}{d},
$$
with the maximum over an empty set equal to zero. Then
\begin{equation}\label{eq:binary-parameter-comparison}
\max\{1,m_{\mathrm{GHYY}}(g)\} \le h_g^{-1} \le\sqrt2\,\max\{1,m_{\mathrm{GHYY}}(g)\}.
\end{equation}
Here $c,d$ may be half-integers; only the two endpoints must be integers.
\end{proposition}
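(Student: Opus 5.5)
The plan is to prove the two inequalities of \eqref{eq:binary-parameter-comparison} separately. Both rest on the explicit binary types and on the fact that, for binary margins, \emph{every} pair of cells is one of the events in $\calE$. With rows indexed by $x_i$ and columns by $y_i$, the type with intersection count $t$ is
$$
T_t=\begin{pmatrix} n-a-b+t & b-t\\ a-t & t\end{pmatrix},
$$
so each cell is affine in $t$ with slope $\pm1$. At a jump $(c,d)$ the endpoints are $T_{c-d},T_{c+d}$, every cell changes by $2d$, and the midpoint cell values are exactly $c,a-c,b-c,n-a-b+c$. The seven events are the two rows, the two columns, the two perfect matchings and all cells. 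The six two-cell events are precisely the six pairs of cells. Throughout I use $(\sqrt u-\sqrt v)^2=(u-v)^2/(\sqrt u+\sqrt v)^2$ together with $u+v\le(\sqrt u+\sqrt v)^2\le2(u+v)$.

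\emph{Lower bound $h_g^{-1}\ge\max\{1,m_{\mathrm{GHYY}}(g)\}$.} Since $h\le1$ always, it suffices to show $h(T_{c-d},T_{c+d})\le d/\sqrt{w_1(c)w_2(c)}$ for each jump; then $h_g\le$ this value because $h_g$ is a minimum over opposite pairs. Write $m_e$ for the midpoint value of cell $e$, so the endpoint sum of that cell is $2m_e$. Each cell term is at most $(2d)^2/(2m_e)=2d^2/m_e$. For a two-cell event, $h_E^2\le d^2(1/m_1+1/m_2)/(m_1+m_2)=d^2/(m_1m_2)\le d^2/(w_1w_2)$. For the all-cells event, sort the four midpoints and split them into the pairs $\{1,2\}$ and $\{3,4\}$. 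By the mediant inequality the ratio is at most $\max\{1/(m_1m_2),1/(m_3m_4)\}=1/(w_1w_2)$. Zero endpoint counts cause no trouble because only $u+v$ is used.

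\emph{Upper bound $h_g^{-1}\le\sqrt2\max\{1,m\}$.} I must show $h(T_s,T_r)\ge\tfrac1{\sqrt2}\min\{1,d/\sqrt{w_1w_2}\}$ for every oppositely labelled promised pair $s<r$ and a suitable jump $(c,d)$. Walking along the promised intersection counts between $s$ and $r$ produces consecutive promised counts of opposite labels, that is, a jump with $[c-d,c+d]\subseteq[s,r]$. For the jump itself, take $E$ to be the two cells of smallest midpoint value, which is a legal event. Each term is at least $(2d)^2/(4m_e)=d^2/m_e$, so
$$
h_E^2\ge\frac{d^2(1/m_1+1/m_2)}{2(m_1+m_2)}=\frac{d^2}{2w_1w_2}.
$$
Combined with $h\le1$, this gives the claim for jump pairs.

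The main obstacle is passing from the jump pair to the enclosing pair $(s,r)$. I will first try to prove that $h_E(T_s,T_r)$ is monotone under nesting of the interval for each two-cell event. The numerator is termwise monotone, since each cell is affine in $t$ and $\sqrt{\cdot}$ is monotone, so each $(\sqrt{u_s}-\sqrt{u_r})^2$ dominates the corresponding jump term. The denominator is constant for rows and columns, because their two cells have opposite slopes. So the difficulty is the matching events, whose denominators grow. For those I would instead try to choose the jump adaptively: pick the jump in $[s,r]$ nearest the endpoint where the relevant small cells are smallest. If that fails, I would rechoose the event for the pair $(s,r)$ among the row and column cuts containing the smallest cell, so that a constant-sum denominator suffices. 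A fallback estimate uses the sum $\sigma_e\le2m_e+(r-s)$ together with monotonicity of $D^2/((2m_1+D)(2m_2+D))$ in $D=r-s\ge2d$. Its weakness is that it loses a factor when $m_2\gg m_1$, which is exactly where the choice of cut will have to be made carefully.
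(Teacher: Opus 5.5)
\textbf{There is a gap: the second inequality is proved only for jump pairs.} Your first inequality, $h_g^{-1}\ge\max\{1,m_{\mathrm{GHYY}}(g)\}$, is correct and matches the paper's argument: you bound each two-cell event by $(\sqrt u-\sqrt v)^2\le(u-v)^2/(u+v)$, and treat the all-cells event as a mediant of two complementary pair ratios. Your jump-pair bound $h_E^2\ge d^2/(2w_1w_2)$ is also correct.

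The second inequality, however, needs a lower bound on $h(T_s,T_r)$ for \emph{every} oppositely labelled pair $s<r$, and your proposal stops at jumps. You list three strategies, but none is carried out:
\begin{itemize}
\item Event-by-event monotonicity is left unproved for the matching events.
\item Re-choosing a row or column cut through the smallest cell can lose an unbounded factor. This happens when the two smallest midpoint cells form a matching, because such a cut pairs the smallest cell with a possibly much larger one.
\item The fallback estimate, as you note yourself, loses when $m_2\gg m_1$.
\end{itemize}

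\textbf{The missing idea.} Your jump-pair computation never uses the jump property. For any feasible $u<v$, set $c=(u+v)/2$ and $d=(v-u)/2$, and let $w_e$ be the midpoint cell values, so $w_e\ge d$. Each cell term then lies between $d^2/w_e$ and $2d^2/w_e$. Hence the pair $(s,r)$ itself satisfies $h(T_s,T_r)^{-1}\le\sqrt2\,\sqrt{w_1(c)w_2(c)}/d$.

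What remains is monotonicity of the \emph{score} $\sqrt{w_1w_2}/d$, not of $h_E$. Let $[c'-d',c'+d']\subseteq[s,r]$ be an enclosed jump. Then $d'\le d$ and $|c'-c|\le d-d'$. Every cell has slope $\pm1$ and $w_e\ge d$, so
$(T_{c'})_e\ge w_e-(d-d')\ge(d'/d)\,w_e$.
Applying this to the two smallest entries gives $\sqrt{w_1(c')w_2(c')}/d'\ge\sqrt{w_1(c)w_2(c)}/d$. Therefore $h(T_s,T_r)^{-1}\le\sqrt2\,m_{\mathrm{GHYY}}(g)$, which is the paper's route.

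Your first strategy can also be completed. For a same-slope (matching) pair with fixed endpoint values $\alpha_1,\alpha_2$ and length $\Delta$, we have $h_E^2=1-2\Phi$ with $\Phi=\sum_i\sqrt{\alpha_i(\alpha_i+\Delta)}/\sum_i(2\alpha_i+\Delta)$. The claim $\Phi'\le0$ reduces to $(p_1-p_2)(\alpha_1-\alpha_2)+(p_1+p_2)\Delta\ge0$, where $p_i=\sqrt{\alpha_i/(\alpha_i+\Delta)}$; this holds because $p_i$ increases with $\alpha_i$. The symmetric computation handles moving the other endpoint. As written, though, this step is only conjectured, so the upper bound $h_g^{-1}\le\sqrt2\max\{1,m_{\mathrm{GHYY}}(g)\}$ is not yet established.
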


\begin{proof}
If at most one promised label occurs, there are no jumps and $h_g=1$, so the result follows. This includes the cases in which a margin is zero or $n$. Suppose both labels occur. For $t\in I$, the type is
$$
T_t=\begin{pmatrix}n-a-b+t&b-t\\a-t&t\end{pmatrix}.
$$
Fix two distinct feasible counts $u<v$, set $c=(u+v)/2$ and $d=(v-u)/2$, and denote the four entries of $T_c$ by $w_e$. The endpoint entries in each cell are $w_e-d$ and $w_e+d$ in some order, so $w_e\ge d>0$. Rationalization gives
$$
\frac{d^2}{w_e} \le (\sqrt{w_e+d}-\sqrt{w_e-d})^2 =\frac{2d^2}{w_e+\sqrt{w_e^2-d^2}} \le\frac{2d^2}{w_e}.
$$
For a binary alphabet, the nonempty cut events are the two rows, two columns, two perfect matchings, and all four cells. Thus every pair of distinct cells is a cut event. For $E=\{e,e'\}$, its denominator is $2(w_e+w_{e'})$, whence
$$
\frac{d^2}{2w_ew_{e'}} \le h_E(T_u,T_v)^2 \le\frac{d^2}{w_ew_{e'}}.
$$
The all-cell ratio is a weighted average of the ratios for any two complementary pairs. It therefore cannot exceed the largest pair ratio. Writing $w_1(c),w_2(c)$ for the two smallest cell counts gives
$$
\frac{\sqrt{w_1(c)w_2(c)}}d \le h(T_u,T_v)^{-1} \le\sqrt2\,\frac{\sqrt{w_1(c)w_2(c)}}d.
$$

It remains to compare arbitrary opposite-label pairs with jumps. Every interval $[u,v]$ whose endpoints have opposite labels contains two consecutive promised counts $u'<v'$ with opposite labels. They form a jump. Set $c'=(u'+v')/2$ and $d'=(v'-u')/2$. Interval inclusion implies $d'\le d$ and $|c'-c|\le d-d'$. Since every entry of $T_t$ has slope $+1$ or $-1$, for every cell $e$,
$$
(T_{c'})_e\ge w_e-(d-d')\ge\frac{d'}d\,w_e,
$$
where the last inequality uses $w_e\ge d$. The first and second smallest entries therefore satisfy the same componentwise bounds, and hence
$$
\frac{\sqrt{w_1(c')w_2(c')}}{d'} \ge\frac{\sqrt{w_1(c)w_2(c)}}d.
$$
The maximum of this score over all opposite-label pairs consequently equals its maximum over jumps. Since $h_g^{-1}$ is the maximum of $h(T_u,T_v)^{-1}$ over opposite-label pairs, the preceding pairwise bounds prove $m_{\mathrm{GHYY}}(g)\le h_g^{-1}\le\sqrt2\,m_{\mathrm{GHYY}}(g)$. Finally, $w_1(c),w_2(c)\ge d$ imply that every jump has score at least one, giving the stated formula in all cases.
\end{proof}

\clearpage
\phantomsection
\addcontentsline{toc}{section}{References}
\bibliographystyle{alpha}
\bibliography{references}
\end{document}